\newcommand{\ketbra}[2]{\lvert #1 \rangle \! \langle #2 \rvert} %
\newcommand{\kb}[1]{\ketbra{#1}{#1}} 
\newcommand{\norm}[1]{\lVert#1\rVert}
\newcommand{\code}{\mathsf{code}}
\definecolor{corlinks}{RGB}{200,0,0}
\definecolor{cormenu}{RGB}{200,0,0}
\definecolor{corurl}{RGB}{200,0,0}
\definecolor{mygreen}{rgb}{0.0, 0.5, 0.0}
\def\01{\{0,1\}}
\newtheorem{question}{Question}
\newtheorem{theorem}{Theorem}
\numberwithin{theorem}{section}
\newtheorem{lemma}[theorem]{Lemma}
\newtheorem{corollary}[theorem]{Corollary}
\newtheorem{claim}[theorem]{Claim}
\newtheorem{definition}[theorem]{Definition}
\newtheorem{construction}[theorem]{Construction}
\newtheorem*{mainresult}{Main result}
\newtheorem{maintheorem}{Theorem}
\theoremstyle{definition}
\newtheorem{remark}[theorem]{Remark}
\DeclareMathOperator{\poly}{poly}
\def\colorful{1}
\newcommand{\negl}{\mathsf{negl}}
\newcommand{\calA}{\mathcal{A}}
\newcommand{\Is}{\mathcal{I}}
\newcommand{\calB}{\mathcal{B}}
\newcommand{\calS}{\mathcal{S}}
\newcommand{\calC}{\mathcal{C}}
\newcommand{\calP}{\mathcal{P}}
\newcommand{\calD}{\mathcal{D}}
\newcommand{\calG}{\mathcal{G}}
\newcommand{\De}{\mathcal{D}}
\newcommand{\calE}{\mathcal{E}}
\newcommand{\calU}{\mathcal{U}}
\newcommand{\nat}{\mathbb{N}}
\newcommand{\id}{\mathsf{Id}}
\newcommand{\NW}{\mathsf{NW}}
\newcommand{\PRG}{\mathsf{PRG}}
\newcommand{\eps}{\varepsilon}
\DeclareMathOperator*{\E}{\mathbb{E}}
\newcommand{\BQTIME}{\mathsf{BQTIME}}
\newcommand{\BQP}{\mathsf{BQP}}
\newcommand{\BQE}{\mathsf{BQE}}
\newcommand{\Lstar}{\ensuremath{L^\star}}
\newcommand{\eqdef}{\stackrel{\rm def}{=}}
\newcommand{\B}{\{-1,+1\}}
\newcommand{\NB}{\ensuremath{B' \setminus A}}
\newcommand{\NBp}{\ensuremath{B'' \setminus A}}
\newcommand{\correctCircuitA}{\widetilde{C}_A}
\begin{document}

\title{Quantum learning algorithms imply circuit lower bounds\vspace{0.5cm}}
%
%
%
%
%
%
%


\author{
Srinivasan Arunachalam \vspace{0.1cm} \\\small  IBM T.~J. Watson Research Center \\ \vspace{0.1cm} {\small  \texttt{srinivasan.arunachalam@ibm.com}} 
\and
Alex B. Grilo \vspace{0.1cm} \\\small Sorbonne Universit\'{e}, CNRS, LIP6 \\ \vspace{0.1cm} {\small  \texttt{Alex.Bredariol-Grilo@lip6.fr}}
\vspace{0.4cm}
\and
Tom Gur \vspace{0.1cm}
\\\small  University of Warwick\\ \vspace{0.1cm} {\small \texttt{tom.gur@warwick.ac.uk}}
\and
Igor C. Oliveira \vspace{0.1cm} \\\small  University of Warwick \\ \vspace{0.1cm} {\small \texttt{igor.oliveira@warwick.ac.uk}}
\and
Aarthi Sundaram \vspace{0.1cm} \\\small Microsoft Quantum \\ \vspace{0.1cm} {\small \texttt{aarthi.sundaram@microsoft.com}}\vspace{0.5cm}
}
\maketitle

\vspace{-0.6cm}

\begin{abstract}
We establish the first general connection between the design of quantum algorithms and  circuit lower bounds.  Specifically, let $\mathfrak{C}$ be a class of polynomial-size concepts, and suppose that $\mathfrak{C}$ can be PAC-learned with membership queries under the uniform distribution with error $1/2 - \gamma$ by a time $T$ quantum algorithm.
We prove that if $\gamma^2 \cdot T \ll 2^n/n$, then $\mathsf{BQE} \nsubseteq \mathfrak{C}$, where $\mathsf{BQE} = \mathsf{BQTIME}[2^{O(n)}]$ is an exponential-time analogue of~$\mathsf{BQP}$. This result is optimal in both $\gamma$ and $T$, since it is not hard to learn any class $\mathfrak{C}$ of functions in (classical) time $T = 2^n$  (with no error), or in quantum time $T = \mathsf{poly}(n)$ with error at most $1/2 - \Omega(2^{-n/2})$ via Fourier sampling. In other words, even a marginal quantum speedup over these generic learning algorithms would lead to major consequences in complexity lower bounds. As a consequence, our result shows that the study of quantum learning speedups is intimately connected to fundamental open problems about algorithms, quantum computing, and complexity theory.

\vspace{1.5mm}

Our proof builds on several works in learning theory, pseudorandomness, and computational complexity, and on a connection between non-trivial classical learning algorithms and circuit lower bounds established by Oliveira and Santhanam (CCC 2017). Extending their approach to quantum learning algorithms turns out to create significant challenges, since extracting computational hardness from a quantum computation is inherently more complicated. To achieve that, we show among other results how pseudorandom generators imply learning-to-lower-bound connections in a generic fashion, construct the first conditional pseudorandom generator secure against uniform quantum computations, and extend the local list-decoding algorithm of Impagliazzo, Jaiswal, Kabanets and Wigderson (SICOMP 2010) to quantum circuits via a delicate analysis. We believe that these contributions are of independent interest and might find other~applications.
\end{abstract}

\newpage

\setcounter{tocdepth}{3}
\tableofcontents

\newpage



%
%
%

\section{Introduction}
One of the salient goals of quantum computing is to understand which computational problems admit quantum speedups over classical algorithms. 
The canonical example is Shor's algorithm~\cite{DBLP:conf/focs/Shor94} for factoring, which achieves an exponential speedup over the best known classical algorithms. 
Another notable example is Grover's algorithm~\cite{DBLP:conf/stoc/Grover98}, which sheds light on quantum complexity theory by showing that expressive languages such as the quintessential
$\mathsf{NP}$-complete problem $\mathsf{Formula}$-$\mathsf{SAT}$ admits quantum algorithms of time complexity  $\widetilde{O}(2^{n/2})$, whereas in general, there are no known classical algorithms that outperform the $\widetilde{O}(2^n)$-time brute force search. This paper investigates the implications of quantum speedups within the setting of \emph{learning theory}.

Quantum learning theory forms the theoretical foundations which allow us to understand the potential power and limitations of quantum machine learning. At its core, this field studies quantum algorithms that are given quantum access (typically, quantum queries or quantum samples) to an unknown circuit $f$ from a fixed concept class $\mathfrak{C}$, and the goal is to output a hypothesis $h$ that well-approximates $f$, in which case we say that $\mathfrak{C}$ can be quantumly learned.
Due to the massive success of machine learning and the great potential of quantum computing, quantum learning received much attention over the last two decades~\cite{bshouty1998learning,servedio2004equivalences,AS05,ABG06,AS07,OW16,OW17,AW18,GriloKZ19,arunachalamtwo} (cf. survey~\cite{arunachalam2017guest} and references therein).%

In this setting, quantum algorithms have two main advantages over their classical counterparts: making queries in superposition, and using quantum computation to process the information obtained from these queries. Note that a large number of negative results about the power of \emph{classical} learning algorithms  do not extend to the quantum setting (e.g., \cite{kharitonov1993cryptographic,naor1999synthesizers}), since the underlying hardness assumptions, based on problems such as factoring and discrete logarithm, break for quantum computations. In addition, in some learning models and for some learning tasks, quantum algorithms are strictly faster than classical algorithms \citep{servedio2004equivalences}, under standard cryptographic assumptions.
While it is possible to rule out polynomial-time or quasi-polynomial time quantum learning algorithms for some concept classes using stronger cryptographic assumptions~\cite{AGS:shallow}, our understanding of the possibilities and limitations of \emph{sub-exponential} time quantum learnability is still limited. This motivates the following fundamental question:

\begin{center}
    \emph{Are quantum speedups for learning expressive concept classes possible?}
\end{center}

Our main result exposes an intrinsic connection between complexity theory and quantum learning theory, showing that obtaining any quantum learner that does marginally better than certain ``trivial'' learners (see \cref{sec:main}) would imply circuit lower bounds for languages computable in $\mathsf{BQE}$, the quantum analogue of $\mathsf{E}$.\footnote{Recall that $\mathsf{E} = \mathsf{DTIME}[2^{O(n)}]$, and analogously $\mathsf{BQE} = \mathsf{BQTIME}[2^{O(n)}]$.} To our knowledge, this is the first result connecting the design of general quantum algorithms to proving lower bounds:

\begin{mainresult}[Informal]
\label{mainresult}
If a class $\mathfrak{C}$ of polynomial-size concepts can be learned under the uniform distribution with membership queries and with error at most $1/2 - \gamma$ in quantum time $o(\gamma^2 \cdot 2^n/n)$, then $\mathsf{BQE} \nsubseteq \mathfrak{C}$.
\end{mainresult}

While it seems extremely unlikely that a large class such as $\mathsf{BQE}$ can be simulated by classical Boolean circuits of polynomial size, showing this and similar results for exponential time classes seems to be out of reach of current techniques. For comparison, the recent $\mathsf{NE} \nsubseteq \mathsf{ACC^0}$ lower bound due to Williams  \cite{DBLP:journals/jacm/Williams14} is widely recognized to be a milestone in complexity theory.

Our result admits two possible interpretations. For a pessimist, it explains the difficulty of designing provably correct quantum learning algorithms for expressive concept classes, given that establishing non-uniform circuit lower bounds is a notoriously difficult problem. 
Contrarily, for an optimist, it indicates a potential path to new lower bounds by exploring the power of quantum computation. For instance, if depth-$2$ threshold circuits of polynomial size can be learned in $o(2^n/n)$ quantum time under the uniform distribution, a new complexity lower bound would follow. %

Our starting point in the proof of this result is a connection of a similar nature between classical learning algorithms and circuit lower bounds due to Oliveira and Santhanam \cite{DBLP:conf/coco/OliveiraS17}. The extension to quantum learning algorithms turns out to require significant technical work. En route to that, we obtain new results concerning local list-decoding for quantum circuits, construct the first $\PRG$ secure against uniform quantum computations, and provide a new general method to establish learning-to-lower-bound connections. We next describe our contributions in more detail.
\subsection{Main result}
\label{sec:main}

Before we proceed to formally state our result, we first discuss the model of quantum learning and provide a brief overview of circuit lower bounds in complexity theory.

\paragraph{Quantum learning.} We consider the standard PAC learning model under the uniform distribution with quantum membership queries. Our main result derives a consequence from the \emph{existence} of learning algorithms, and restricting our model to the uniform distribution and allowing queries only makes it stronger (since learning algorithms are easier to design under these assumptions). Although we consider quantum learning algorithms, we emphasize that our results are concerned with the learnability of (classical) \emph{Boolean functions} $f \colon \{0,1\}^n \to \{0,1\}$ from a fixed concept class $\mathfrak{C}$. 
Here, we use $\mathfrak{C}[s(n)]$ to refer to concepts defined over $n$-bit inputs and of size at most  $s(n)$.

In more detail, a quantum learning algorithm $\mathcal{A}$ for $\mathfrak{C}$ running in time $T$ is described by a uniform sequence of quantum oracle circuits $Q_n$ 
of size at most $T(n)$. We say that $\mathcal{A}$ learns $\mathfrak{C}[s]$ with probability $\delta$ and error $\varepsilon$ if for every $n$ and every  $f\in \mathfrak{C}[s(n)]$, $Q_n$ with oracle access to $f$ outputs with probability at least $\delta$ the description of a (classical) Boolean circuit $C$ such that $\Pr_{x}[C(x) \neq f(x)] \leq \varepsilon$. Note that having oracle access to $f$ means that $Q_n$ can query $f$ in superposition via a unitary map $O_{f}$ whose action on basis states is defined as $O_{f}:\ket{x,b}\rightarrow \ket{x,b\oplus f(x)}$, for every $x\in \01^n$ and $b\in \01$. We stress here that we do not require the hypothesis circuit  $C$ to be from the same class, i.e, the learning algorithm can be improper (also known as representation independent). %
(Our result also holds for learners that output a \emph{quantum circuit} that approximates $f$ on average to error at most $\varepsilon$, and we refer to the body of the paper for details.)

It is instructive to contrast our main result with two learners that are ``trivial'' in a certain sense, i.e., they do not exploit the structure of the concept class. The first one is simply a brute-force (classical) learner that queries all possible inputs of the unknown function, and outputs a hypothesis consisting of its truth-table. This learner can be implemented in time $T = \widetilde{O}(2^n)$ and achieves optimal error parameter $\varepsilon = 0$. On the other hand, the second  algorithm explores the basic fact that any function $f \colon \{0,1\}^n \to \{0,1\}$ can be approximated by a parity function (or its negation) with  advantage $\gamma = \Omega(2^{-n/2})$ (i.e., with error $\varepsilon \leq 1/2 - \Omega(2^{-n/2})$). For this reason, any class~$\mathfrak{C}$ of Boolean concepts can be learned with probability $\delta = \Omega(1)$  and error $\varepsilon \leq 1/2 - \Omega(2^{-n/2})$ using Fourier sampling in time $T = \mathsf{poly}(n)$  (see \cref{app:Fsampling} for details). (We stress that this highly efficient learner is only available in the quantum setting, thanks to Fourier sampling.) 

\paragraph{Complexity Theory.} Establishing (non-uniform) circuit lower bounds is one of the holy grails of the theory of computation. Despite over 50 years of extensive research, we still have a very poor understanding of the limitations of Boolean circuits. While significant progress has been made with respect to very restricted circuit models, such as small-depth circuits with OR, AND, and NOT gates, the power of more expressive circuit classes remains mysterious.  

To give two concrete examples, firstly, we don't know how to rule out that every algorithm running in time $2^{O(n)}$ can be simulated by (classical) Boolean circuits of linear size.
Secondly, as mentioned above, it was not long ago that Williams \citep{DBLP:journals/jacm/Williams14} obtained the first separation between non-deterministic exponential time $\mathsf{NE} = \mathsf{NTIME}[2^{O(n)}]$ and the class $\mathsf{ACC}^0[m]$ of polynomial-size constant-depth circuits consisting of AND, OR, NOT, and MOD$_m$ gates. This illustrates
how difficult it is to understand the power of \emph{non-uniform} computations, even in the setting of exponential time classes such as $\mathsf{E}, \mathsf{NE}$ and $\BQE$.\footnote{Note that existing circuit lower bounds against circuits of size $n^k$, such as the result from \citep{DBLP:journals/siamcomp/Santhanam09} showing that $\mathsf{MA}/1 \nsubseteq \mathsf{SIZE}[n^k]$, do not provide a hard language in $\mathsf{BQE}$. The naive simulation of a language $L \in \mathsf{MA}$ in exponential time results in an algorithm that runs in time $2^{n^c}$ for some constant $c \geq 1$ that can be larger than $k$ (and we can easily diagonalize againts circuits of size $n^k$ in time $2^{n^c}$ when $c > k$). Additionally, we do not consider in this paper the computation of a hard problem with advice, as in the $\mathsf{MA}/1$ lower bound, which requires $1$ bit of non-uniform advice.}

Indeed, while the possibility that a large class such as $\mathsf{BQE}$ can be simulated by classical Boolean circuits of linear size seems unlikely, proving such statements remains notoriously hard. 
Given the scarcity of techniques to establish such 
complexity separations, it is of interest  to obtain new approaches for circuit lower bounds and to connect this question to other research areas in algorithms and~complexity.\\

In the following, we restrict our attention to reasonable classes $\mathfrak{C}$ of circuits %
that can be efficiently simulated by general Boolean circuits and are closed under restrictions, i.e., if  $f$ is computable by $\mathfrak{C}$-circuits of size $s(n)$, then $f$ is computable by general Boolean circuits of size $\mathsf{poly}(s(n))$, and the function obtained by restricting some of the variables of $f$ to constants $0$ and $1$ is also computable by $\mathfrak{C}$-circuits of size $s(n)$. Note that virtually any circuit class of interest, including depth-$2$ threshold circuits, $\mathsf{ACC}^0$, and polynomial-size formulas,  satisfies these~properties.

\begin{maintheorem}[Non-trivial quantum learning algorithms yield non-uniform complexity lower bounds]\label{thm:main} There is a universal constant $\lambda \geq 1$ for which the following holds. Let  $\mathfrak{C}$ be a concept class. Let $\gamma \colon \mathbb{N} \to [0,1/2]$ and  $T:\mathbb{N}\rightarrow \mathbb{N}$ satisfy $\gamma(n) \geq \lambda\cdot 2^{-n/2}$ and $T(n)\leq \gamma(n)^2\cdot 2^n/\lambda n$. Suppose that, for every $k \geq 1$, the class $\mathfrak{C}[n^k]$ can be learned in quantum time $T(n)$ with probability~$\geq~1/100$ and error $\eps \leq 1/2 - \gamma(n)$. Then, for every $k \geq 1$, we have $\mathsf{BQE} \nsubseteq \mathfrak{C}[n^k]$.
\end{maintheorem}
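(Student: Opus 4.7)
The plan is to argue by contradiction: assume $\mathsf{BQE} \subseteq \mathfrak{C}[n^k]$ for some $k$, and use the hypothesized quantum learner together with a pseudorandom generator to exhibit a specific function in $\mathsf{BQE}$ that cannot lie in $\mathfrak{C}[n^k]$.

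The core construction is a natural-property-style diagonalization driven by the learner $A$. Call a truth table $f \in \{0,1\}^{2^n}$ \emph{good} if $A$, given quantum oracle access to $f$, outputs with probability at least $1/100$ a hypothesis of advantage $\geq \gamma(n)$. By hypothesis every $f \in \mathfrak{C}[n^k]$ is good; on the other hand, a uniformly random truth table is bad with overwhelming probability, since a union bound over the at most $2^{O(T(n)\log T(n))}$ circuits of size $T(n)$ that could be output by $A$, each of which has probability $2^{-\Theta(\gamma^2 2^n)}$ of achieving advantage $\gamma$ against a random function, is controlled by the regime $\gamma^2 T \ll 2^n/n$. Given a pseudorandom generator $G : \{0,1\}^{s(n)} \to \{0,1\}^{2^n}$ of seed length $s(n) = O(n)$ that fools uniform quantum computations of size $\mathrm{poly}(T(n))$ with error $o(\gamma)$, I would define $f^* \in \mathsf{BQE}$ as $G(\sigma^*)$, where $\sigma^*$ is the lexicographically first seed such that $G(\sigma)$ is bad, certified by amplitude-estimating $A$'s success probability on the oracle $G(\sigma)$. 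By PRG security such a seed exists, and enumerating all $2^{O(n)}$ seeds while evaluating each test in quantum time $\mathrm{poly}(T(n))$ fits in $\mathsf{BQE}$. Since $f^*$ is bad, $A$ fails on it; yet $\mathsf{BQE} \subseteq \mathfrak{C}[n^k]$ forces $f^* \in \mathfrak{C}[n^k]$ and hence $A$ must succeed on it, a contradiction.

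The remaining task is to produce the required quantum-secure PRG $G$. I would follow the Nisan--Wigderson paradigm, instantiating it with a seed function that is average-case hard against uniform quantum circuits of size $\mathrm{poly}(T(n))$. The seed function is obtained from a worst-case hard function in $\mathsf{BQE}$---whose existence against quantum circuits of the relevant size follows from a quantum time hierarchy / diagonalization argument---via a quantum version of the Impagliazzo--Jaiswal--Kabanets--Wigderson local list-decoding hardness amplification: from oracle access to a quantum circuit that computes the hard function slightly better than chance, extract a short list of quantum circuits one of which computes the function almost everywhere.

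The main obstacle I anticipate is the quantum extension of the IJKW reconstruction. Classically, the analysis adaptively queries the distinguisher on correlated inputs and aggregates outcomes by majority; in the quantum setting, each invocation measures the distinguisher and collapses its state, and repeated queries interact nontrivially, so the success probabilities across decoding rounds do not compose by a clean union bound. I would address this by replacing classical sampling-and-majority with quantum amplitude estimation to extract bias information with controlled state disturbance, and by a careful accounting of how errors propagate across reconstruction layers so as to preserve the tight $\gamma^2 T \ll 2^n/n$ regime. Balancing PRG seed length, learner complexity, and list-decoding success margin simultaneously within this parameter window will be the most delicate quantitative part of the argument.
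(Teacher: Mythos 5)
Your outer skeleton (learner $\Rightarrow$ natural-property-style test that rejects $\mathfrak{C}[n^k]$ and accepts most truth tables; PRG $\Rightarrow$ some pseudorandom truth table must be "bad"; hence a hard function) matches the paper's strategy, but two steps in your plan have genuine gaps. First, your source of the PRG does not work. You propose an \emph{unconditional} quantum-secure NW generator built from a function that is hard by "a quantum time hierarchy / diagonalization argument". A hierarchy theorem only gives hardness against \emph{uniform} quantum algorithms, whereas the standard NW/GL/IJKW reconstruction is inherently \emph{non-uniform} (it hardwires designs, lookup tables, consistency strings, etc.), so uniform hardness does not transfer to PRG security without extra structure; and assuming hardness against non-uniform quantum circuits of the relevant size would be assuming essentially the kind of circuit lower bound you are trying to prove. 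The paper avoids this circularity with a win-win argument: either $\mathsf{PSPACE}\subseteq\mathsf{BQSUBEXP}$, in which case classical diagonalization in $\mathsf{PSPACE}$ against size-$n^k$ circuits plus the subexponential quantum simulation already puts a hard language in $\mathsf{BQE}$; or $\mathsf{PSPACE}\nsubseteq\mathsf{BQSUBEXP}$, and then a \emph{conditional} PRG (of only subexponential stretch, secure infinitely often) is built from the Trevisan--Vadhan $\mathsf{PSPACE}$-complete language, whose downward and random self-reducibility are exactly what makes the Impagliazzo--Wigderson-style reconstruction \emph{uniform}. Your sketch contains none of this self-reducibility machinery, and without it the "hard function from a hierarchy theorem" route collapses.

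Second, your definition of the hard language via "the lexicographically first seed $\sigma^*$ such that $G(\sigma)$ is bad, certified by amplitude-estimating $A$'s success probability" is not well-defined. The test derived from the learner is only a \emph{promise} natural property: on functions near the border of $\mathfrak{C}[n^k]$ the acceptance probability of the test (and the learner's success probability, which is only guaranteed $\geq 1/100$ on concepts in the class) can sit arbitrarily close to any threshold, so amplitude estimation cannot stably decide "bad", and different runs of your $\mathsf{BQE}$ machine could commit to different seeds, i.e.\ it would not compute a fixed language with bounded error. This is precisely issue (\emph{ii}) identified in the paper, and the fix there is to avoid selecting a single seed altogether: define $L$ on pairs $(w,x)$ so that it simultaneously computes every candidate function $f_w = \mathsf{fnc}^{G(w)}$, and argue that pseudorandomness against the promise test forces at least one restriction $L(w,\cdot)$ outside $\mathfrak{C}$ (using closure under restrictions and a padding/translation argument to absorb the weaker stretch $2^{n^\lambda}$ and the infinitely-often guarantee). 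Your anticipated difficulty with quantizing IJKW is real, but the paper's resolution is not amplitude estimation with disturbance control; it treats the distinguisher as an inherently probabilistic black box accessed classically and redoes the IJKW analysis (good/excellent edges, with a reweighted distribution $\mathcal{W}_\mathcal{I}(A)$) in that model, precisely because the correlation of a quantum circuit with $g^k$ cannot be "fixed" to a deterministic sub-circuit.
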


The confidence probability $1/100$ is not essential, and we adopted this constant in order to simplify the statement and focus on the trade-off between running time and accuracy. Two interesting settings of parameters are (1) $\gamma(n) = 0.49$ and $T(n) = o(2^n/n)$, and (2) $\gamma(n) = n^{\omega(1)} \cdot 2^{-n/2}$ and $T(n) = n^{\omega(1)}$.
The first case shows that strong learners that beat the trivial brute-force learning algorithm by a polynomial factor (with respect to the running time) imply lower bounds, while the second setting shows that polynomial-time learners that perform marginally better than a Fourier-sampling based learner (with respect to the error parameter) also imply lower bounds. For this reason, we view \cref{thm:main} as a result essentially stating that \emph{non-trivial quantum learnability of a class of polynomial-size circuits yields complexity lower bounds}.

As alluded to above, the connection established in \cref{thm:main} can be interpreted in two ways. On the one hand, it provides an explanation for why it is difficult to design provably correct
non-trivial quantum learners, as they would imply dramatic consequences to complexity theory, showing new circuit lower bounds that are notoriously hard to prove. On the other hand, this connection significantly strengthens the paradigm of proving circuit lower bounds via (classical) learning algorithms \cite{DBLP:conf/coco/OliveiraS17} by capitalizing on the power of quantum learning algorithms, which might be vastly stronger than their classical counterparts.\footnote{Even if in the short term we are not able to design new quantum learning algorithms, the mere existence of a \emph{connection} between learning and lower bounds has been used to establish \emph{unconditional} complexity lower bounds (see Section \ref{sec:related_work} and \citep[Section 3.1]{DBLP:conf/icalp/Oliveira19}). Thus, if the ``pessimist'' interpretation is the correct one, it is still possible that the connection established in Theorem \ref{thm:main} can be indirectly used as a key ingredient of a lower bound proof.}

In the subsequent section, we discuss our techniques in detail and explain our additional contributions.

\subsection{Techniques}
\label{sec:tech}
The first results showing that learning algorithms imply lower bounds appeared in the pioneering work of Fortnow and Klivans \citep{DBLP:journals/jcss/FortnowK09}. These initial results, however, required a strong assumption on the resources of the learning algorithm. For instance, for randomized learners, lower bound consequences could only be obtained from \emph{polynomial-time} learners. 
Different learning assumptions were explored in a sequence of subsequent works \citep{DBLP:journals/toct/HarkinsH13, DBLP:conf/coco/KlivansKO13, DBLP:conf/icalp/Volkovich14, DBLP:conf/colt/Volkovich16, DBLP:conf/coco/OliveiraS17, DBLP:conf/approx/OliveiraS18, CRTY20}. We review these works in detail in \cref{sec:related_work}, where we present a self-contained exposition of existing connections between learning algorithms and lower bounds. Here we focus on one of the strongest (and most relevant in our context) connections obtained before this work. 

Oliveira and Santhanam \citep[Theorem 14]{DBLP:conf/coco/OliveiraS17} showed that if a class $\mathcal{C}$ of polynomial-size concepts can be PAC-learned under the uniform distribution to error $\varepsilon \leq 1/2 - \gamma$ by a \emph{randomized} algorithm running in time $\gamma^2 \cdot 2^n/n^{\omega(1)}$, then for every $k \geq 1$ we have $\mathsf{BPE} \nsubseteq \mathcal{C}[n^k]$. In contrast, \cref{thm:main} can be seen as an analogue of this connection in the \emph{quantum} setting, where algorithms can be  more powerful. Intuitively, this means that the task of extracting a computational lower bound from a learning algorithm becomes more delicate. (Indeed, as mentioned above, there is even another ``trivial'' learner that is only available in the quantum setting and proceeds via Fourier sampling.) To accomplish that, we make additional conceptual and technical contributions of independent~interest:

\begin{itemize}
  \setlength{\itemsep}{1pt}
    \item[--] Our paper introduces a \emph{new paradigm} to establish learning-to-lower-bound connections that employs a pseudorandom generator ($\PRG$) in a black-box way. Thus, even without its quantum counterpart, we simplify and extend existing results. 
    \item[--] We propose and analyze the \emph{first $\PRG$} with sub-exponential stretch that is secure against uniform quantum computations. Moreover, we base its security on a weaker \emph{uniform} hardness~assumption. 
    \item[--] We prove a near-optimal uniform hardness amplification result for quantum circuits by a delicate extension of techniques and analysis from Impagliazzo, Jaiswal, Kabanets, and Wigderson \citep{impagliazzo2010uniform}.
    \item[--] We introduce a \emph{new computational model} between classical and quantum computation: \emph{inherently probabilistic~circuits}. It highlights an important difference between classical and quantum algorithms, and provides a way to mitigate the complex task of analyzing quantum computations. 
\end{itemize}
In the next sections, we provide more details about the proof of Theorem \ref{thm:main}, explain the role of the contributions mentioned above, and contrast our techniques to  prior work.

\subsubsection{The classical proof and our new perspective}\label{sec:classical_vs_quantum}
Before discussing the quantum perspective, it is instructive to review the approach for showing the classical connection between randomized learners and lower bounds.

\paragraph{Randomized learners.} 

Techniques from computational complexity theory and from the theory of pseudorandomness play a key role in the proof from \citep{DBLP:conf/coco/OliveiraS17} that non-trivial randomized learners yield complexity lower bounds.  In a bit more detail, their argument proceeds roughly as follows: 
\begin{itemize}
    \item[1.] First, they show that \emph{sub-exponential} time randomized learners for a class $\mathfrak{C}$ imply that $\mathsf{BPE} \nsubseteq \mathfrak{C}$. This part refines ideas from \citep{DBLP:journals/jcss/FortnowK09, DBLP:conf/coco/KlivansKO13} that rely on results from structural complexity theory (e.g.,~a special $\mathsf{PSPACE}$-complete language \citep{DBLP:journals/cc/TrevisanV07} and diagonalization).
    \item[2.] This is followed by a proof that the existence of ``\emph{non-trivial}'' randomized learners for a class $\mathfrak{C}$ of polynomial-size concepts implies the existence of \emph{sub-exponential} time randomized learners for $\mathfrak{C}$. This implication relies in part on connections between learning theory and the theory of pseudorandomness introduced by \citep{DBLP:conf/coco/CarmosinoIKK16}.
\end{itemize}
The proof of the learning-to-lower-bound connection  immediately follows from Items 1 and 2 above.

\paragraph{The quantum case: A new perspective.} In contrast to \citep{DBLP:conf/coco/OliveiraS17}, here we take a more direct path to show that ``non-trivial'' quantum learning algorithms for $\mathfrak{C}$ imply lower bounds against $\mathfrak{C}$. While several technical tools from the theory of pseudorandomness that we use still correspond to quantum extensions of core results behind \citep{DBLP:conf/coco/OliveiraS17}, our approach is conceptually very different. In more detail, we are able to show that $\PRG$s against \emph{uniform} (classical or quantum) computations can be used to establish a learning-to-lower-bound connection in a \emph{black-box} way. In particular, we do not follow the 2-step approach outlined above. 

The benefits of our new perspective are twofold: (a) on the one hand, stronger $\PRG$ statements immediately lead to stronger connections between learning algorithms and lower bounds, and (b) it allows for a more modular proof of the learning-to-lower-bound connection. In particular, with our perspective the argument becomes more manageable in the quantum setting, where new technical difficulties are present compared with the randomized case.

At a very high level, we use a $\PRG$ to generate a ``hard'' function that is not correctly learned by the quantum learning algorithm. Consequently, this function does not belong to the circuit class~$\mathfrak{C}$, and it can be used to define a language that cannot be computed by $\mathfrak{C}$-circuits of bounded size.  What makes the approach viable is that a $\PRG$ that fools \emph{uniform} computations is sufficient. We discuss our proof in more detail in \cref{sec:overview_proof}. Before that, at a more conceptual level, we explain some of the challenges associated with the transition from randomized  to  quantum~computations. 

\subsubsection{Challenges in the quantum setting}\label{sec:quantum_challenges}

Our goal in Theorem \ref{thm:main} is to show that the existence of a quantum learning algorithm for a class $\mathfrak{C}$ of polynomial-size circuits can be used to construct a function $h \in \mathsf{BQE}$ that cannot be computed by $\mathfrak{C}$-circuits of size $n^k$. As mentioned above, and explained in more detail in Section \ref{sec:overview_proof} below, this will be achieved through the design of a $\PRG$ against uniform quantum computations, along with other ideas. In more detail, we show that if a certain language $L$ is hard for sub-exponential time uniform quantum computations, i.e., $L \notin \mathsf{BQTIME}[2^{n^{\gamma}}]$ for some $\gamma > 0$, then there is a generator $G \colon \{0,1\}^{\ell} \to \{0,1\}^m$ computable in deterministic time~$2^{O(\ell)}$, where $\ell = \mathsf{poly}(n)$ and $m = 2^{n^{\Omega(1)}}$, that is able to ``fool'' uniform quantum circuits of size $\mathsf{poly}(m)$.

Generators of this form are know in the realm of classical computations (see \citep{DBLP:journals/jcss/ImpagliazzoW01, DBLP:journals/cc/TrevisanV07, CRTY20}), i.e., under an appropriate hardness assumption against $\mathsf{BPTME}[2^{n^\gamma}]$, it is possible to fool uniform probabilistic computations running in time $\mathsf{poly}(m)$. The proof of these results, essentially, proceeds as follows. If there is a sequence of (uniform) circuits $\{C_m\}$ of bounded complexity that distinguish the $m$-bit output of $G$ from a random $m$-bit string, then there is a faster uniform algorithm for the hard problem $L$, which leads to a contradiction. 
In order to quantize this argument, we aim to construct a $\PRG$ secure against uniform \emph{quantum} circuits. This leads to the natural question:
    \emph{What can go wrong in these classical proofs when each $C_m$ is a quantum circuit?} 

\paragraph{Randomized circuits versus quantum circuits.} It turns out that there is a general property of probabilistic computations that is not available to quantum computations. To be precise, consider the standard model of randomized Boolean circuits, i.e., a Boolean circuit $C(x,y)$, where $x$ is the input string and $y$ is the random string. We say that~$C$ computes a Boolean function $g \colon \{0,1\}^m \to \{0,1\}^k$ on an input $x \in \{0,1\}^m$ if $\Pr_y[C(x,y) = g(x)] \geq~2/3$. Similarly, we can also discuss the correlation between $C$ and $g$, captured by  $\eta = \Pr_{x,y}[C(x,y) = g(x)]$. An extremely useful property of this model is that, by an averaging argument, there exists a fixed string $y'$ such that $\Pr_x[C(x,y') = g(x)] \geq \eta$, i.e., there is a \emph{deterministic} circuit $C_{y'}(x) = C(x,y')$ that correctly computes $g$ on an $\eta$-fraction of inputs. This often allows one to reduce the analysis of randomized Boolean circuits to the deterministic case. Additionally, $C_{y'}$ ``forces'' $C(x,y)$ to commit to a single consistent output on every $x$, which can be relevant if $C$ is used as a subroutine in other computations.

As a concrete example, note that this idea is crucial in the proof of $\mathsf{BPP} \subseteq \mathsf{P}/\mathsf{poly}$, which is a simple combination of  reducing the failure probability of a randomized circuit and fixing its random input. (In contrast, it is open if $\mathsf{BQP}  \subseteq \mathsf{P}/\mathsf{poly}$.) Importantly, while quantum computations share superficial similarities with the model of randomized computations, 
there is no obvious way of reducing a quantum computation to a distribution of ``deterministic'' quantum computations. This has an important effect on the design of algorithms as well as on their analysis, creating several difficulties in our proof. We explain one of the most significant of these below. 

\paragraph{Uniform hardness amplification for quantum circuits.} A core component in several  $\PRG$~constructions is to amplify the hardness of a function $f \colon \{0,1\}^n \to \{0,1\}$ that is only assumed to be mildly hard. This is a standard element of $\PRG$s that follow the hardness versus randomness~framework of Nisan and Wigderson \citep{nisan1994hardness}, and we need it here as well. To be precise, suppose we start with  $f$  that is weakly hard for quantum circuits of bounded size, i.e., for every circuit $\mathcal{A}$ we have
$$
\E_{x \sim \{0,1\}^n}\Big [\Pr_\mathcal{A}\big[\mathcal{A}(x) = f(x)\big]\Big]  \eqdef \E_{x \sim \{0,1\}^n}\Big[ \norm{\Pi_{f(x)}\mathcal{A}\ket{x,0^{\ell}}}^2\Big ] \;\leq\; 1 - \delta,
$$
where $\delta = 1/n$ for instance.
In the notation above, $\norm{\Pi_{f(x)}\mathcal{A}\ket{x,0^{\ell}}}^2$ is the probability of obtaining~$f(x)$ when measuring the first qubit of the circuit $\mathcal{A}$ on input $\ket{x,0^{\ell}}$, and we average its success probability over a random input $x$. Our goal is to define a function $g \colon \{0,1\}^{n'} \to \{0,1\}^k$ from $f$ that is exponentially harder than $f$, i.e., for every quantum circuit $\mathcal{B}$ of bounded size, we have
$$
\E_{y \sim \{0,1\}^{n'}}\Big [\Pr_\mathcal{B}\big[\mathcal{B}(y) = g(y)\big]\Big]  \eqdef \E_{y \sim \{0,1\}^{n'}}\Big[ \norm{\Pi_{g(y)}\mathcal{B}\ket{y,0^{\ell'}}}^2\Big ] \;\leq\; \varepsilon,
$$
where $\varepsilon = 2^{-n^{\Omega(1)}}$ for instance.  We have two additional requirements beyond the generalization from classical to quantum computation: (1) we need a hardness amplification scheme with a near-optimal setting of parameters (so that our $\PRG$ achieves good parameters), and (2) since we are constructing a $\PRG$ under a \emph{uniform} hardness assumption, it is also crucial to control the amount of ``non-uniformity'' in the argument that establishes the correctness of the construction (i.e., how a circuit $\mathcal{B}$ violating the conclusion can be used to construct a circuit $\mathcal{A}$ that violates the hypothesis).

Our main technical contribution is to prove a quantum analogue of the near-optimal (uniform) hardness amplification result from Impagliazzo, Jaiswal, Kabanets, and Wigderson \citep{impagliazzo2010uniform} which, in the language of coding theory, can be interpreted as a local-list decoding algorithm for direct product codes. In other words, their work considers $g = f^k$, i.e., $g \colon \{0,1\}^{n'} \to \{0,1\}^k$ is defined as the computation of $k$ independent copies of $f$ (thus $n' = kn$, and one can think of $k = \mathsf{poly}(n)$). 

The extension of \citep{impagliazzo2010uniform} to quantum circuits is challenging for the following reason. In the classical setting, given (without loss of generality) a \emph{deterministic} circuit $B$ such that
\begin{equation}\label{eq:intro_IJKW}
    \Pr_{y \sim \{0,1\}^{n'}}[B(y) = g(y)] > \varepsilon,
\end{equation}
there is only one way to distribute the correlation between $B$ and $g$: there is a set $S \subseteq \{0,1\}^{n'}$ such that $B$ is correct on $S$ and wrong on $\overline{S}$, where $|S| > \varepsilon \cdot 2^{n'}$. On the other hand, for a quantum circuit $\mathcal{B}$, the correlation between $\mathcal{B}$ and $g$ can be distributed in an arbitrary way among the inputs. For instance, perhaps on every input string $y \in \{0,1\}^{n'}$, $\Pr_{\mathcal{B}}[\mathcal{B}(y) = g(y)] = 2 \varepsilon$. Going beyond that, we might also have quantum circuits that ``interpolate'' between these two extreme examples in an arbitrary way. Unfortunately, as opposed to randomized circuits, there is no way for us to ``fix the quantumness'' of $\mathcal{B}$ and reduce the analysis to the deterministic case, i.e., to the simpler setting of Equation (\ref{eq:intro_IJKW}). (For instance, one could query $\mathcal{B}$ on each input only once, memorizing its output after that. This, however, would not provide a \emph{fixed} hard language if we employ $\mathcal{B}$ as a sub-routine when extracting lower bounds from quantum learning algorithms. Hence this trick would not work.) This more general scenario in the quantum setting affects the original analysis from \citep{impagliazzo2010uniform} and introduces several fundamental ``asymmetries'' in the argument.

\paragraph{An inherently random ``channel'' and a quantization of \citep{impagliazzo2010uniform}.} From the perspective of coding theory, what we are trying to achieve is local list-decoding in a setting where querying a coordinate (an input for $\mathcal{B}$) does not produce a deterministic outcome.
It turns out that this problem can be investigated without reference to quantum circuits. To achieve that, we explore a computational model that captures the scenario described above: \emph{inherently probabilistic circuits} (see Section \ref{ss:inherently_probabilistic} for the definition and discussion of related work). Roughly speaking, this is a model for randomized computations where ``the random input cannot be fixed'', i.e., the random input remains inaccessible and the randomized circuit is only accessed as a black-box. It can be seen as an intermediate model between classical randomized circuits and quantum circuits. Most importantly for us, this model captures nearly all the difficulties when quantizing building blocks of the~$\PRG$.

Employing an intricate analysis that extends~\citep{impagliazzo2010uniform}, we are able to show that their local list decoding algorithm (with a natural modification) works in the setting of inherently probabilistic circuits, with a minor loss in the parameters. The result can then be translated to the setting of quantum computations without much effort, which provides the near-optimal uniform hardness amplification needed in our $\PRG$ construction.
We refer the reader to Section \ref{sec:IJKW} for the technical details. Having established the above, we are ready for a high-level proof overview of Theorem~\ref{thm:main}.

\subsubsection{Overview of the proof of Theorem \ref{thm:main}}\label{sec:overview_proof}

Let $\mathfrak{C}$ be a circuit class, and suppose that polynomial-size concepts from $\mathfrak{C}$ can be learned with error parameter $\varepsilon \leq 1/2 - \gamma$ in quantum time $o(\gamma^2 \cdot 2^n/n)$. For a given $k \geq 1$, we argue that there is a language  $L \in \mathsf{BQE} = \mathsf{BQTIME}[2^{O(n)}]$ such that $L \notin \mathfrak{C}[n^k]$. 

In contrast with known proofs of existing learning-to-lower-bound connections, our black-box ``$\PRG$-based'' approach relies on the following basic principle from the theory of pseudorandomness: If we have a property $\mathcal{P}$ of objects that is ``dense'' in the set of all possible objects, and a $\PRG$ $G$ that is able to ``fool'' a class of ``tests'' that contain $\mathcal{P}$ (e.g., when $\mathcal{P}$ is easy to compute and $G$ fools predicates of bounded complexity), then some output of $G$ must be an object that satisfies $\mathcal{P}$. For us, $\mathcal{P}$ is simply the class of functions that are not in $\mathfrak{C}[n^k]$. We would like to find a (fixed) function in $\mathcal{P}$ and compute according to it in $\mathsf{BQTIME}[2^{O(n)}]$.  We start off with the ``ideal'' plan for the proof, then discuss difficulties when implementing this strategy and how to overcome them.\\

\noindent\textbf{Ideal plan for the proof of Theorem \ref{thm:main}:}
\begin{itemize}
  \setlength{\itemsep}{-1pt}
    \item [1.] From a quantum learning algorithm $\mathcal{A}$ that \emph{finds} ``structure'', we get a (uniform) quantum algorithm $\mathcal{B}$ that \emph{decides} the ``absence of structure'' on an input string $f \in \{0,1\}^{2^n}$. (Since a typical \emph{random} string is ``unstructured'', a \emph{pseudorandom} string is likely to be accepted by~$\mathcal{B}$.)
    \item [2.] Assuming that there exists a $\PRG$ $G \colon \{0,1\}^{O(n)} \to \{0,1\}^{2^n}$ secure against uniform quantum computations, we use it with $\mathcal{B}$ as a ``test'' to find a function $h_n \colon \{0,1\}^n \to \{0,1\}$ that is not in $\mathfrak{C}[n^k]$, i.e., a function that ``lacks structure''. Given an $n$-bit input $x$, the hard language $L$ outputs $h_n(x)$.
    \item [3.] We show that a pseudorandom generator $G$ with the desired parameters and guarantees exists.
\end{itemize}

\noindent Intuitively, if we could implement these steps then $L$ would indeed be a hard function, and the time complexity of $L$ can be bounded using upper bounds on the complexities of computing $G$ and $\mathcal{B}$.

\paragraph{Implementing Step 1: (promise) quantum natural properties.} In order to formalize Step~1, we introduce the notion of (promise) \emph{quantum natural properties}, a generalization of the central concept from \citep{DBLP:journals/jcss/RazborovR97} to quantum computations. Informally, a natural property (useful)  against a class $\mathfrak{C}$ of functions is an algorithm $B$ that: (a) rejects every function $g \in \mathfrak{C}$, when $g$ is viewed as a $2^n$-bit string; (b) the set of strings accepted by $B$ is dense in $\{0,1\}^{2^n}$; and (c) $B$ runs in time $\mathsf{poly}(2^n) = 2^{O(n)}$ on an input string $f \in \{0,1\}^{2^n}$. Due to these properties, algorithm $B$ can be seen as a way to efficiently tell apart ``structured'' strings (those that encode functions from~$\mathfrak{C}$) from a dense set of ``unstructured'' strings. A \emph{quantum} natural property  against~$\mathfrak{C}$ is simply a natural property against~$\mathfrak{C}$ that is decided by a quantum algorithm. To our knowledge, this is the first time that quantum natural properties are considered in the literature. 

We show that quantum learners with parameters as in Theorem \ref{thm:main} imply the existence of quantum natural properties against $\mathfrak{C}[n^k]$. The argument follows an idea that has appeared in a few different works (e.g.,~\citep{DBLP:journals/eccc/Oliveira13, DBLP:conf/icalp/Volkovich14, DBLP:conf/coco/OliveiraS17,AGS:shallow}): to test if an input string $f \in \{0,1\}^{2^n}$ is not in  $\mathfrak{C}[n^k]$, one can simulate the learning algorithm $\mathcal{A}$ when its oracle computes according to $f$ (now viewed as function), and accepts if the hypothesis output by $\mathcal{A}$ and $f$ have agreement estimated to be less than $1/2 + \gamma$. For a function $f \in \mathfrak{C}[n^k]$, when~$\mathcal{A}$ learns $f$ its hypothesis will have a larger correlation with $f$, and so $f$ is rejected. On the other hand, it is possible to prove that if~$\mathcal{A}$ runs in quantum time $o(\gamma^2 \cdot 2^n/n)$, then a dense set of Boolean functions are not learned by $\mathcal{A}$ even with a large error $\varepsilon = 1/2 - \gamma$ (simply because such functions cannot be approximated by hypotheses of size $o(\gamma^2 \cdot 2^n/n)$). Consequently, any such function will pass the low-correlation test with high probability and be accepted, no matter how many times the learner is simulated. 

We directly extend this idea to the quantum setting in~\Cref{sec:qnp-from-ql} while also addressing a subtlety that arises in the case of randomized and quantum learners. The resulting quantum algorithm $\mathcal{B}$ computing a natural property against $\mathfrak{C}[n^k]$ might accept some input strings $f \in \{0,1\}^{2^n}$ with a probability that is not bounded away from $1/2$. This happens because we cannot control the behavior of the algorithm $\mathcal{A}$ on functions near the ``border'' of $\mathfrak{C}[n^k]$ (i.e., those inputs that are not as hard as a ``random'' function but are not in $\mathfrak{C}[n^k]$). Hence, we only get a \emph{promise} quantum natural property: we are guaranteed that strings from the dense set are accepted with high probability and strings corresponding to functions in $\mathfrak{C}[n^k]$ are rejected with high probability.

\paragraph{Challenges in implementing Step 2.} Unfortunately, the strategy described in Step~2 is problematic for a number of reasons:  
\begin{itemize}
  \setlength{\itemsep}{1pt}
    \item [(\emph{i})] The $\PRG$ $G$ constructed in Step 3 requires a hardness assumption, while Step 2 needs a language $L$ that is \emph{unconditionally} not in $\mathfrak{C}[n^k]$. This can only be achieved if $G$ provably works (i.e.~it does not depend on an unproven assumption).
    \item [(\emph{ii})] Since $\mathcal{B}$ only computes a \emph{promise} quantum natural property, it is not immediately clear how to use the output of $G$ and the test $\mathcal{B}$ to fix with high probability a \emph{unique} ``hard'' function $h_n$. This is needed so the language $L$ is well defined.
    \item [(\emph{iii})] The (conditional) $\PRG$ $G$ that we are able to construct in Step 3 is much weaker: it will stretch $\mathsf{poly}(n)$ bits to $m(n) = 2^{n^\lambda}$ bits for some $\lambda > 0$, and it is only guaranteed to fool uniform quantum computations of time $\mathsf{poly}(m)$ on infinitely many choices of $n$.
\end{itemize}
These issues create technical difficulties that are addressed in~\Cref{sec:qnp-lb} with some modifications to our original plan, as we explain next.\\ %

\noindent \textbf{Issue (\emph{i}).} To resolve this and relax the conditions on the $\PRG$, we consider two possible~scenarios:
\begin{itemize}
  \setlength{\itemsep}{1pt}
    \item [(\emph{a})] Classical computations performed in polynomial space can be simulated by quantum algorithms running in sub-exponential time $2^{n^{o(1)}}$.
    \item [(\emph{b})] Item (\emph{a}) does not hold, i.e., there is a language $L \in \mathsf{PSPACE}$ and $\alpha>0$ such that  $L \notin~\mathsf{BQTIME}[2^{n^{\alpha}}]$.  
\end{itemize}
While we cannot currently decide which of the two scenarios holds, we obtain lower bounds against $\mathfrak{C}[n^k]$ in \emph{each scenario}, which is sufficient for the purpose of proving Theorem~\ref{thm:main}.
(We note that such ``win-win'' arguments have appeared in many previous works, including \citep{DBLP:journals/jcss/FortnowK09, DBLP:conf/coco/KlivansKO13, DBLP:conf/coco/OliveiraS17}.) In more detail, if (\emph{a}) holds, we employ standard diagonalization techniques from complexity theory to get a language $L \in \mathsf{PSPACE} \setminus \mathfrak{C}[n^k]$, then show that $L \in \mathsf{BQE}$ via a sub-exponential time quantum simulation that is granted to exist by assumption (\emph{a}). We can therefore assume for the rest of the proof that (\emph{b}) holds. In other words, we have a hardness hypothesis against quantum computations that we can hope to use to construct a pseudorandom generator.\\

\noindent \textbf{Issue (\emph{ii}).} 
We fix this issue as follows. Suppose, for simplicity, that we did manage to construct a $\PRG$ $G \colon \{0,1\}^{O(n)} \to \{0,1\}^{2^n}$ of exponential stretch that is computable in time exponential in $n$. 
A $2^n$ bit string output by $G$ can be seen as the truth table for a function $h_n$. Since each seed for $G$ produces a corresponding function, $G$ encodes a collection of functions. What we know from the pseudorandomness of $G$ and the existence of the promise quantum natural property is that at least one of these functions must lie outside $\mathfrak{C}[n^k]$. 
Then, the hard language $L$ could be defined by, say, the ``first" of the $h_n$ that lies outside of $\mathfrak{C}[n^k]$ -- the set of ``easy functions". However, finding the $h_n$ that satisfies this condition in $\BQTIME[2^{O(n)}]$ remains problematic.\footnote{This is because $\mathcal{B}$ computes a \emph{promise} natural property, and we cannot easily estimate the exact probability that $\mathcal{B}$ accepts a string to consistently compute the ``first'' such string.}
Instead, we define a language $L$ over $O(n) + n$ input bits that \emph{simultaneously} computes according to all output functions of $G$. 
More precisely, the hard language $L$ is defined over a pair of input strings $w$ and $x$, where $w$ is a seed for $G$ of length $O(n)$, and $x$ is an $n$-bit string. We then set $L(w, x) = f_w(x)$, where $f_w$ is the function defined by $G(w) \in \{0,1\}^{2^n}$. 
As explained above, given that the $\PRG$ $G$ is secure against the uniform quantum computation performed by $\mathcal{B}$, it must produce at least one string $y = G(w)$ encoding a function $h_n$  which lies outside the set $\mathfrak{C}[n^k]$. Since $L$ computes $h_n$ when we fix its first input string to its corresponding string $w$, $L$ cannot be computed by $\mathfrak{C}$-circuits of size $n^k$.\\

\noindent \textbf{Issue (\emph{iii}).} Finally, this issue is handled using a more careful description of the hard language $L$. The proof makes use of the fact that we are able to get a quantum natural property against $\mathfrak{C}[n^d]$ for each fixed choice of $d$, since by assumption we have learning algorithms for arbitrarily large polynomial-size concepts from $\mathfrak{C}$.
Thus, a potential loss in the stretch of the generator (from $2^n$ to just $2^{n^{\lambda}}$ for some $\lambda > 0$) can be compensated by considering a natural property against harder functions, together with standard translation arguments from complexity theory. The weaker infinitely often guarantee of the generator can also be addressed with a careful implementation, and we refer to the formal proof for these technical details. We  highlight that the construction of better (conditional) pseudorandom generators immediately leads to a tighter connection between the circuit size in the lower bound and the circuit size in the learning assumption.\footnote{Indeed, this has happened in practice, where techniques employed to design a better $\PRG$ also led to a new learning-to-lower-bound connection \citep{CRTY20}. Our work shows that this is not a coincidence, i.e., better $\PRG$s lead to better connections in a black-box way.} 

\paragraph{Revised Step 3: (conditional) $\PRG$ against uniform quantum computations.} 
This leaves us with the last and most technical part of the proof: the construction of a $\PRG$ of sub-exponential stretch $2^{n^{\Omega(1)}}$ that fools uniform quantum computations infinitely often, assuming $\mathsf{PSPACE} \nsubseteq \mathsf{BQSUBEXP}$ (obtained from Item (\emph{b}) above). In Section \ref{sec:classical_vs_quantum}, we explained that establishing this result in the quantum setting is more delicate than in the classical scenario. Here we focus on the different components employed in the proof and how they fit together in the $\PRG$ construction.

First, we stress that our $\PRG$ $G_n \colon \{0,1\}^{\ell(n)} \to \{0,1\}^{m(n)}$ is defined \emph{classically}, i.e., it is computed by a \emph{deterministic} algorithm running in time exponential in $\ell$. The extension to 
quantum circuits lies in its security analysis, where we  argue that if there is a sequence $\{C_m\}$~of uniform quantum circuits of size $\mathsf{poly}(m)$ that distinguish the output of $G$ from random, then $\mathsf{PSPACE} \subseteq \mathsf{BQSUBEXP}$.

As alluded to above, we employ the hardness versus randomness paradigm. More precisely, our construction relies on the beautiful insight of Impagliazzo and Wigderson \citep{DBLP:journals/jcss/ImpagliazzoW01} on how to extend this paradigm to the \emph{uniform} case, where the non-uniform advice appearing in the security proof can be eliminated via a clever recursive approach. This allows one to prove security based on a \emph{uniform} hardness assumption. Following \citep{DBLP:journals/jcss/ImpagliazzoW01}, in order to achieve this we also base the generator  on a problem that is \emph{downward-self-reducible} and \emph{self-correctable} (such properties are useful to eliminate advice). However, we deviate from their work in terms of some other technical tools and parameters of our $\PRG$ construction, which is formally shown in~\Cref{sec:prg-construction}.

In more detail, we construct a family of generators $G_n^\lambda$, each parameterized by a fixed $\lambda > 0$. Each $G_n^\lambda$ employs a special $\mathsf{PSPACE}$-complete language $L^\star$ on $n$ input bits from Trevisan and Vadhan \citep{DBLP:journals/cc/TrevisanV07} (which has the self-reducibiliy properties cited above),  and applies the well-known Nisan-Wigderson generator \citep{nisan1994hardness} with sub-exponential stretch $m(n) = 2^{n^{\lambda}}$ to an ``amplified'' version  $\mathsf{Amp}(L^\star)$ of $L^\star$ defined over $\mathsf{poly}(n)$ input bits. $\mathsf{Amp}(L^\star)$ is obtained from $L^\star$ as follows. First, we consider its $k$-product $(L^\star)^k \colon \{0,1\}^{kn} \to \{0,1\}^k$, which is discussed in Section \ref{sec:quantum_challenges} in the context of hardness amplification and a result from \citep{impagliazzo2010uniform}. Then, we convert $(L^\star)^k$ into a Boolean function by a standard application of the Goldreich-Levin construction \citep{DBLP:conf/stoc/GoldreichL89}, which XORs the output of $(L^\star)^k$ with a new input string $r \in \{0,1\}^k$. This completes the sketch of the definition of $G_n^\lambda$. By a careful choice of parameters $k = \mathsf{poly}(n)$ and seed length $\ell(n) = \mathsf{poly}(n)$, it is not hard to prove that $G_n^\lambda \colon \{0,1\}^{\ell(n)} \to \{0,1\}^{m(n)}$ can be computed in time $2^{O(\ell(n))}$. 

The non-trivial aspect here is to prove that $G_n^\lambda$ is quantum secure for some choice of $\lambda$. It is enough to argue that, if for every $\lambda > 0$ the generator $G_n^\lambda$ can be broken, then $L^\star \in \mathsf{BQTIME}[2^{n^{o(1)}}]$. Since this language is $\mathsf{PSPACE}$-complete, this violates our  uniform hardness assumption. So we focus next on a fixed $\lambda > 0$ and its corresponding generator  $G_n^\lambda$, and assume that there is a uniform sequence of quantum circuits $\{C_m\}$ of size $\mathsf{poly}(m)$ that distinguishes its output from random. Our goal is to conclude that $L^\star \in \mathsf{BQTIME}[2^{n^{O(\lambda)}}]$. %
To achieve this, we need \emph{quantum} analogues~of the ``reconstruction'' procedures of~\citep{nisan1994hardness,DBLP:conf/stoc/GoldreichL89,impagliazzo2010uniform} implemented in a uniform way as in~\citep{DBLP:journals/jcss/ImpagliazzoW01}.\\

\noindent \emph{Quantum Nisan-Wigderson reconstruction.} 
We observe that the original analysis of~\cite{nisan1994hardness} can be adapted without difficulty in our setting. In~\Cref{sec:nw}, we first verify the result in the intermediate model of inherently probabilistic circuits and then extend it to the quantum case.\\

\noindent  \emph{Quantum Goldreich-Levin reconstruction.} 
We observe that a quantum analogue of~\cite{DBLP:conf/stoc/GoldreichL89}~was established by Adcock and Cleve \citep{adcock2002quantum}. In~\Cref{sec:gl}, we adapt their argument to match our~notation and setting of parameters. For this reason, we do not use inherently probabilistic computations here, which are convenient in the investigation of the other building blocks of our~$\PRG$~construction.\\

\noindent  \emph{Quantum Impagliazzo-Jaiswal-Kabanets-Wigderson reconstruction.} As explained above, this turns out to be the most complicated aspect of the security proof, since it is non-trivial to extend the original analysis from \citep{impagliazzo2010uniform} to the quantum setting. We refer to Section \ref{sec:quantum_challenges} above for an informal explanation, and to Section \ref{sec:IJKW} for more details. We remark that it is not hard to quantize the well-known XOR Lemma for hardness amplification (e.g., Impagliazzo's proof \citep{impagliazzo1995hard}), but in this work we need a \emph{uniform} hardness amplification result with near-optimal parameters.\\

It remains to put together these tools to conclude that the existence of a quantum distinguisher~$\{C_m\}$ implies that $L^\star \in \mathsf{BQTIME}[2^{n^{O(\lambda)}}]$. As explained above, this is done in a uniform way following the approach of \citep{DBLP:journals/jcss/ImpagliazzoW01}. However, controlling the \emph{uniformity} of the final sequence of quantum circuits that compute $L^\star$ is  delicate. Recall that to show that $L^\star \in \mathsf{BQTIME}[T]$ we need to produce in uniform \emph{deterministic} time $T$, the description of a quantum circuit $Q_n$ for $L^\star$ on inputs of length $n$. In the recursive construction of \citep{DBLP:journals/jcss/ImpagliazzoW01} that provides an algorithm to compute $L^\star$ on an input string $x \in \{0,1\}^n$, one ``learns'' how to compute $L^\star$ on every input length from $1$ to $n$, by producing a sequence of \emph{randomized} circuits $D_1, \ldots, D_n$ that compute $L \cap \{0,1\}^i$ for $i \in \{1, \ldots, n\}$. In order to compute $D_{i}$ from $D_{i - 1}$, it is necessary to simulate $D_{i - 1}$ on some inputs, which requires \emph{randomness}. Similarly, the natural way to proceed in the quantum case is by generating a uniform sequence of quantum circuits $Q_1, \ldots, Q_n$ as above. Note that simulating $Q_{i - 1}$ to produce $Q_i$ now requires a \emph{quantum} computation. However, we must be able to \emph{deterministically} produce the code of quantum circuits in order to show that $L^\star \in \mathsf{BQTIME}[T]$. We address this issue in~\Cref{sec:self_reduc} by going to another ``meta-level'' in this simulation, where the circuit $Q_i$ ``incorporates''  this recursive process from $i$ to $1$ by manipulating the \emph{codes} of quantum circuits $Q_{i-1}$ to $Q_1$. (Formally, this is not too different from \citep{DBLP:journals/jcss/ImpagliazzoW01}, which also manipulates descriptions of circuits, but in the quantum case one needs to be more explicit.) We observe that it is not hard to implement this idea if the uniform versions of the quantum reconstruction procedures described above are stated in a convenient way. 

This completes the sketch of Step 3, and our overview of the proof of Theorem \ref{thm:main}.

\subsection{Directions and open problems}

The most ambitious direction is to address the possibility that quantum computation might lead to faster learning algorithms for expressive classes of concepts.

\begin{question}
Is there a quantum learning algorithm for Boolean circuits of size $O(n)$ that runs in time $o(2^n/n)$ and with constant probability achieves error $\varepsilon \leq 0.49$ under the uniform distribution?
\end{question}
\noindent Addressing this and related problems might be out of reach given our current techniques. We focus below on directions that we find particularly interesting and possibly fruitful. 

It has been suggested to us that it might also be possible to extend existing algorithms for local-list decoding of Reed-Muller codes to the inherently probabilistic setting. (This would provide an alternative presentation of our PRG construction.) While we have not verified all the details, we believe that this is quite plausible. It would be interesting to understand if there is a more general phenomenon in place here, i.e., whether certain classes of ECCs and decoding algorithms can be extended to the inherently random setting in a generic fashion, and to investigate further applications of such codes.

Our results offer the exciting possibility that new circuit lower bounds might follow through the design of quantum algorithms. Recall that Williams \citep{DBLP:journals/jacm/Williams14} established via the satisfiability-to-lower-bound connection that $\mathsf{NE} \nsubseteq \mathsf{ACC}^0$. Similarly, can we use our new learning-to-lower-bound connection to show that $\mathsf{ACC}^0$ circuits cannot compute all functions in $\mathsf{BQE}$? Using our techniques (cf.~Corollary \ref{cor:lbs-from-qnp}), it would be sufficient to provide a positive answer to the following question.

\begin{question}
Is there a \emph{(}promise\emph{)} quantum natural property useful against $\mathsf{ACC}^0$ circuits?
\end{question}

Note that in order to achieve this it suffices to construct a quantum natural property for the larger class $\mathsf{SYM}^+$ of quasi-polynomial size depth-$2$ circuits consisting of an arbitrary symmetric gate at the top layer fed with $\mathsf{AND}$ gates of poly-logarithmic fan-in at the bottom layer \citep{DBLP:journals/cc/BeigelT94}. Similarly, it is enough to get a quantum natural property for the class of functions that can be approximated by a torus polynomial of bounded degree \citep{DBLP:conf/innovations/BhrushundiHLR19}, or for Boolean matrices of bounded ``symmetric rank'' \citep{DBLP:journals/toc/Williams18}. Perhaps quantum computations can be helpful in the design of natural properties for these or for other circuit classes (e.g.,~Boolean formulas of size $n^{3.01}$)?

We are also curious about the prospects of designing non-trivial quantum learners for restricted circuit classes. Servedio and Tan \citep{DBLP:conf/innovations/ServedioT17} explored this possibility in the context of classical computation, showing several examples where non-trivial savings can be achieved compared to the trivial ``brute-force'' learning algorithm that runs in time $O(2^n)$. As alluded to above, another regime of parameters in the quantum setting that might be interesting to explore is that of polynomial-time learners that achieve a non-trivial advantage $\gamma \gg 2^{-n/2}$. 

Finally, the investigation of the classical learning-to-lower-bound connection established in \citep{DBLP:conf/coco/OliveiraS17} led to many other results, such as the existence of a learning speedup phenomenon \citep[Lemma 1]{DBLP:conf/coco/OliveiraS17} and unconditional complexity lower bounds \citep{DBLP:conf/icalp/Oliveira19}. It would be interesting to see if new consequences in quantum complexity theory and quantum learning theory can be obtained using the techniques introduced in this work.

\subsection{Related work}
\label{sec:related_work}

There is a rich history of connections between circuit complexity theory and the investigation of learning algorithms. In many cases, \emph{specific} circuit lower bound techniques have been used to design new algorithms. For instance, Linial, Mansour, and Nisan \citep{DBLP:journals/jacm/LinialMN93} relied on the method of random restrictions to show that constant-depth polynomial-size circuits can be PAC-learned in quasi-polynomial time under the uniform distribution from random examples. In a more recent development, Carmosino, Impagliazzo, Kabanets, and Kolokolva \citep{DBLP:conf/coco/CarmosinoIKK16} showed that learning algorithms can be obtained from \emph{any} lower bound technique that is ``\emph{constructive}'' in the sense of the theory of natural proofs of Razborov and Rudich \citep{DBLP:journals/jcss/RazborovR97}. This allowed them to show that constant-depth polynomial-size circuits augmented with parity gates can be PAC learned in quasi-polynomial time under the uniform distribution from membership queries. These, and several other results (cf.~Servedio and Tan \citep{DBLP:conf/innovations/ServedioT17}), show that a circuit lower bound against a circuit class $\mathcal{C}$, in most cases, can be converted into a learning algorithm for $\mathcal{C}$. 

The first results in the opposite direction, i.e., showing that learning algorithms imply circuit lower bounds, were established by Fortnow and Klivans \citep{DBLP:journals/jcss/FortnowK09}. In their work, among other results, they proved that any \emph{sub-exponential time} (i.e.,~$2^{n^{o(1)}}$), \emph{deterministic} exact learning algorithm for $\mathcal{C}$ (using  membership and equivalence queries) implies the existence of a function in $\mathsf{E}^{\mathsf{NP}}$ that is not in $\mathcal{C}$. They also showed that if $\mathcal{C}$ is PAC learnable with membership queries under the uniform distribution or exact learnable in \emph{randomized polynomial-time}, there is a function in $\mathsf{BPE}$ (an exponential time analog of $\mathsf{BPP}$) that is not in $\mathcal{C}$.  Note that these results have the appealing feature that they make no assumption about the techniques employed in the design of the learning algorithm. Nevertheless, there is an important drawback in the initial results of \citep{DBLP:journals/jcss/FortnowK09}: they make strong assumptions about the \emph{resources} of the learning algorithm. For example, many learning algorithms are randomized and require at least quasi-polynomial time (e.g.,~\citep{DBLP:journals/jacm/LinialMN93, DBLP:conf/coco/CarmosinoIKK16}), and the results from \citep{DBLP:journals/jcss/FortnowK09} do not apply in this case. 

Over the last decade several works have addressed this and other aspects of the learning-to-lower-bound connection. Harkins and Hitchcock  \citep{DBLP:journals/toct/HarkinsH13} eliminated the $\mathsf{NP}$ oracle and showed that $\mathsf{E} \nsubseteq \mathcal{C}$ from the existence of exact deterministic sub-exponential time learning algorithms with membership and equivalence queries. Shortly after, Klivans, Kothari, and Oliveira \citep{DBLP:conf/coco/KlivansKO13}  simplified these proofs, strengthened a few existing learning-to-lower-bound connections, and obtained results for additional learning models (e.g.,~learning from statistical queries). In particular, \citep{DBLP:conf/coco/KlivansKO13} proved that $\mathsf{E} \nsubseteq \mathcal{C}$ if there are exact learners for $\mathcal{C}$ running in deterministic time $o(2^n)$, showing that ``non-trivial'' \emph{deterministic} learners yield lower bounds.

In a different direction, Volkovich \citep{DBLP:conf/icalp/Volkovich14} showed how to extract lower bounds in polynomial-time classes from randomized learners running in polynomial time. More precisely, \citep{DBLP:conf/icalp/Volkovich14} shows that $\mathsf{BPP}/1$ ($\mathsf{BPP}$ with $1$ bit of non-uniform advice per input length) is not contained in $\mathcal{C}[n^k]$ for all $k\geq 1$ if $\mathcal{C}$ can be PAC learned with membership queries under the uniform distribution by randomized learners running in polynomial time. In another work, Volkovich \citep{DBLP:conf/colt/Volkovich16} explores connections between algebraic complexity lower bounds and learning algorithms.

While extracting circuit lower bounds from non-trivial \emph{deterministic} learners can be done using elementary techniques \citep{DBLP:conf/coco/KlivansKO13}, extending the result to \emph{randomized} learners required advanced tools from complexity theory. This was first achieved by Oliveira and Santhanam \citep{DBLP:conf/coco/OliveiraS17}, who proved that if for every $k \geq 1$ the class $\mathcal{C}[n^k]$ can be PAC learned under the uniform distribution with membership queries by a randomized algorithm running in time $2^{n}/n^{\omega(1)}$, then for every $k \geq 1$ we have $\mathsf{BPE} \nsubseteq \mathcal{C}[n^k]$. Their result admits an extension to faster learners with a weaker advantage $\gamma$, with parameters similar to our \cref{thm:main}.  A simpler proof that non-trivial randomized learning yields lower bounds, with a stronger consequence, was obtained by Oliveira and Santhanam \citep{DBLP:conf/approx/OliveiraS18} under the additional assumption that the learner has ``pseudo-deterministic'' behaviour. We refer to their work for details.

In a more recent work, Oliveira \citep{DBLP:conf/icalp/Oliveira19} explored the learning-to-lower-bound connection in an \emph{indirect} way to show that a natural problem in time-bounded Kolmogorov complexity cannot be solved in probabilistic polynomial time. This can be achieved by proving a lower bound \emph{under the assumption} that learning algorithms do not exist (since otherwise a useful lower bound immediately follows from an existing learning-to-lower-bound connection).

Chen, Rothblum, Tell, and Yogev \citep{CRTY20} established a ``fine-grained'' learning-to-lower-bound connection with respect to circuit size in the setting of randomized learners. More precisely, they show that learning general Boolean circuits of size $n \cdot \mathsf{poly}(\log n)$ in randomized time $2^{n/\mathsf{poly}(\log n)}$ yields a function in $\mathsf{BPE}$ that cannot be computed by circuits of size $n \cdot \mathsf{poly}(\log n)$. While their running time assumption is much more stringent than $2^n/n^{\omega(1)}$, it is not necessary to learn circuits of large polynomial size to get interesting lower bound consequences.

In a parallel line of work, Williams \citep{DBLP:journals/siamcomp/Williams13} employed completely different methods to establish that ``non-trivial'' (running in time $2^n/n^{\omega(1)}$) deterministic \emph{satisfiability} algorithms for a class $\mathcal{C}$ of polynomial-size circuits imply that $\mathsf{NE} \nsubseteq \mathcal{C}$, where $\mathsf{NE}$ is an exponential time analogue of $\mathsf{NP}$. The satisfiability-to-lower-bound connection and its extensions have been highly successful in establishing new circuit lower bounds for a variety of circuit classes (see e.g.,~\citep{DBLP:journals/jacm/Williams14, DBLP:conf/stoc/MurrayW18, DBLP:conf/stoc/ChenR20, CLW20} and references therein). 
Chen, Oliveira, and Santhanam \citep{DBLP:conf/latin/ChenOS18} combined \emph{learning} and \emph{satisfiability} algorithms to strengthen lower bound consequences from learning algorithms in the particular case of $\mathcal{C} = \mathsf{ACC}^0$ (constant-depth polynomial-size circuits extended with modular gates). 

In contrast to all these works, here we establish the first general connection between the design of \emph{quantum} (learning) algorithms for an arbitrary class $\mathcal{C}$ of polynomial-size circuits and the existence of lower bounds against $\mathcal{C}$.

\paragraph{Organization.} In \cref{sec:prelim}, we fix notation and review a few basic definitions and results, as well as formalize our quantum learning model and the useful concept of inherently random circuits. In \cref{sec:lbs_from_learning}, we establish that non-trivial quantum learners imply lower bounds (\cref{thm:main}), under the assumption that a conditional $\PRG$ against quantum computations exists. \cref{sec:tech-tools} develops the tools that are needed to establish the correctness of our $\PRG$ construction. Finally, \cref{sec:prg-construction} defines and analyses a $\PRG$ with the desired properties, which completes the proof of \cref{thm:main}.

\paragraph{Acknowledgements.} We thank Lijie Chen (MIT) for several comments and feedback on a preliminary version of this paper, and the anonymous reviewers of QIP'2021 and FOCS'2021 for many useful remarks about the presentation.  S.A. was partially supported by the IBM Research Frontiers Institute  and acknowledges support from the Army Research Laboratory,  the Army Research Office under grant number W911NF-20-1-001. A.S. was partially supported by the Joint Centre for Quantum Information and Computer Science, University of Maryland, USA and acknowledges support from the Department of Defense. 
Part of this work was done while A.G. was affiliated to CWI and QuSoft. Part of this work was done while S.A, A.G.  and A.S were participating in the program ``The Quantum Wave in Computing" at Simons Institute for the Theory of Computing. 
This work received support from the Royal Society University Research Fellowship URF$\setminus$R1$\setminus$191059 and the UKRI Future Leaders Fellowship MR/S031545/1.
 
\section{Preliminaries}\label{sec:prelim}

\subsection{Basic definitions and notation}\label{sec:prelim_notation}

We begin with standard notation that will be employed throughout this work.\\ 

\noindent -- We use $\mathbb{N}$ to denote the set $\{1,2,3,\ldots\}$ of positive integers.\\
\noindent -- We denote by $\calU_n$ the uniform distribution over $n$-bit strings.\\
\noindent -- When sampling a uniformly random element $a$ from a set $A$, we might use $a \in A$ or $a \sim A$.\\
\noindent -- We say that two Boolean functions $f, g \colon \{0,1\}^n \to \{0,1\}$ are $\lambda$-close if $\Pr_{x \sim \mathcal{U}_n}[f(x) \neq g(x)] \leq \lambda$.\\
\noindent -- We say that $f$ computes $g$ with advantage $\gamma$ if $\Pr_{x \sim \mathcal{U}_n}[f(x) = g(x)] \geq 1/2 + \gamma$.\\
\noindent -- We use $\mathsf{negl}(n)$ to denote a function $g \colon \mathbb{N} \to \mathbb{N}$ such that, for every univariate polynomial $p(\cdot)$ with positive coefficients, there exists $n_0 \in \mathbb{N}$ such that $g(n) \leq 1/p(n)$ for every $n \geq n_0$.\\

We assume the familiarity of the reader with standard complexity classes (e.g.~$\mathsf{PSPACE}$) and basic notions from classical computational complexity theory and refer to a textbook such as \citep{book_complexity} for more information. Some notions from quantum computing and quantum complexity theory are reviewed in Section \ref{sec:prelim-quantum}.

\addtocontents{toc}{\setcounter{tocdepth}{-10}}

\subsubsection{Useful results} 

The following standard concentration bound will be used in some proofs.

\begin{lemma}[{Chernoff bound; see e.g.~\cite[Theorem~A.1.15]{alon2016probabilistic}}]
\label{lem:chernoff}
  Let $X_1,\ldots ,X_k$ be i.i.d.~$\01$-valued random variables each taking value $1$ with probability $p$. Let  $X = \sum_{i} X_i$ and $\mu=p k$. Then for any $\delta \in (0,1)$ it follows that
  \[
  \Pr\left[X \leq (1-\delta) \mu\right] \leq e^{-\delta^2\mu/2}.
  \]
\end{lemma}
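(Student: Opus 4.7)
The plan is to run the standard Chernoff--Cram\'er argument on the lower tail of $X$. First, I would fix a parameter $t > 0$ (to be optimized later) and apply Markov's inequality to the nonnegative random variable $e^{-tX}$, writing
\[
\Pr[X \leq (1-\delta)\mu] = \Pr\!\left[e^{-tX} \geq e^{-t(1-\delta)\mu}\right] \leq e^{t(1-\delta)\mu}\cdot \mathbb{E}[e^{-tX}].
\]
Since the $X_i$'s are i.i.d.\ $\{0,1\}$-valued with mean $p$, the moment generating function factors as $\mathbb{E}[e^{-tX}] = (1 - p + p e^{-t})^k = \bigl(1 - p(1-e^{-t})\bigr)^k$, and the elementary inequality $1 + z \leq e^z$ (applied to $z = -p(1-e^{-t})$) upgrades this to $\mathbb{E}[e^{-tX}] \leq e^{-\mu(1 - e^{-t})}$, using $\mu = pk$. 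Combining the two bounds gives
\[
\Pr[X \leq (1-\delta)\mu] \leq \exp\!\bigl(t(1-\delta)\mu - \mu(1-e^{-t})\bigr).
\]

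The next step is to optimize the exponent in $t$. The natural choice $t := \ln\tfrac{1}{1-\delta} > 0$ makes $1 - e^{-t} = \delta$ and collapses the exponent to $\mu\bigl[(1-\delta)\ln\tfrac{1}{1-\delta} - \delta\bigr]$, yielding the classical tight form $\Pr[X \leq (1-\delta)\mu] \leq \bigl((1-\delta)^{-(1-\delta)} e^{-\delta}\bigr)^{\mu}$. To recover the cleaner bound $e^{-\delta^2 \mu/2}$ stated in the lemma, all that remains is the one-variable analytic inequality $(1-\delta)\ln\tfrac{1}{1-\delta} - \delta \leq -\delta^2/2$ for $\delta \in (0,1)$.

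I expect the Markov/MGF/optimization steps to be routine; the only step requiring any care is verifying this final inequality. The quickest route is to set $h(\delta) := (1-\delta)\ln\tfrac{1}{1-\delta} - \delta + \delta^2/2$, observe $h(0) = 0$, and compute $h'(\delta) = \ln(1-\delta) + \delta \leq 0$ on $(0,1)$ (itself an instance of $\ln(1+x) \leq x$ at $x = -\delta$), which forces $h \leq 0$ on the whole interval. Since the lemma is cited as a black-box from Alon--Spencer, nothing beyond these standard manipulations is needed.
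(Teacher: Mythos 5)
Your proposal is correct: the Markov bound on $e^{-tX}$, the factorization of the moment generating function with $1+z\leq e^{z}$, the choice $t=\ln\frac{1}{1-\delta}$, and the monotonicity argument for $h(\delta)=(1-\delta)\ln\frac{1}{1-\delta}-\delta+\frac{\delta^{2}}{2}$ all go through exactly as you describe. The paper does not prove this lemma at all --- it is cited as a black box from Alon--Spencer --- and your argument is precisely the standard exponential-moment derivation underlying that reference, so there is nothing further to reconcile.
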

An alternate version of this bound will also be useful.
\begin{lemma}[{Chernoff bound; see e.g.~\cite[Theorem~2.1]{janson2011random}}]
\label{lem:chernoff2}
  Let $X_1,\ldots ,X_k$ be i.i.d.~$\01$-valued random variables each taking value $1$ with probability $p$. Let  $X = \sum_{i} X_i$ and $\mu=p k$. Then for any $t\geq 0$ it follows that
  \[
  \Pr\left[|X -\E[X]|\geq t\right] \leq \exp\Big(-\frac{t^2}{2(\mu+t/3)}\Big).
  \]
\end{lemma}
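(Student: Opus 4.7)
The plan is to prove this Bernstein-type concentration bound by the classical Chernoff method based on the moment generating function (MGF), followed by optimization. Since the inequality is two-sided, I would first prove the upper-tail version $\Pr[X - \mu \geq t] \leq \exp(-t^2/(2(\mu + t/3)))$, then obtain the lower tail by applying the same argument to the variables $1 - X_i$ (which are also $\{0,1\}$-valued Bernoullis), and finally combine using a union bound with a factor of $2$ that gets absorbed into the exponent in the standard statement (or prove both tails with a slightly larger constant that matches the cited reference).

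For the upper tail, I would start from Markov's inequality in the form $\Pr[X - \mu \geq t] \leq e^{-\lambda t}\,\E[e^{\lambda(X-\mu)}]$ for $\lambda > 0$. By independence, $\E[e^{\lambda(X-\mu)}] = \prod_i \E[e^{\lambda(X_i - p)}]$, and the elementary computation gives $\E[e^{\lambda(X_i - p)}] = p e^{\lambda(1-p)} + (1-p)e^{-\lambda p}$. The key analytic step is the bound
\[
\log\E[e^{\lambda(X_i - p)}] \;\leq\; p\,(e^\lambda - 1 - \lambda) \;\leq\; \frac{p \lambda^2}{2(1 - \lambda/3)},
\]
valid for $0 < \lambda < 3$. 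This uses the Taylor expansion $e^\lambda - 1 - \lambda = \sum_{k\geq 2} \lambda^k/k!$ together with the crude comparison $k! \geq 2 \cdot 3^{k-2}$ for $k \geq 2$, which yields the geometric-series bound $\sum_{k\geq 2} \lambda^k/k! \leq \lambda^2/(2(1-\lambda/3))$.

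Summing over $i$, I would obtain
\[
\Pr[X - \mu \geq t] \;\leq\; \exp\!\Big(-\lambda t + \frac{\mu \lambda^2}{2(1 - \lambda/3)}\Big).
\]
Then I would optimize in $\lambda$: setting $\lambda = t/(\mu + t/3)$ (which indeed lies in $(0,3)$ since $t \geq 0$) and simplifying yields the claimed bound $\exp(-t^2/(2(\mu + t/3)))$. For the lower tail $\Pr[\mu - X \geq t]$, I would rerun exactly the same argument with $Y_i = 1 - X_i$, which has mean $1-p$; the same MGF bound goes through and produces the same right-hand side. Taking a union bound gives the two-sided statement as written (absorbing the factor of $2$ into constants if needed, or stating the bound with that factor, matching the cited reference).

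The only part that requires any care is the analytic inequality $e^\lambda - 1 - \lambda \leq \lambda^2/(2(1 - \lambda/3))$ on $(0,3)$, which is the source of the $\mu + t/3$ denominator and distinguishes this Bernstein-style bound from the weaker multiplicative Chernoff bound used as Lemma~\ref{lem:chernoff}. Everything else is bookkeeping. Since the paper cites this as a known result from \cite{janson2011random}, I do not expect any novel ideas to be needed; I would simply present the standard derivation above with the constants tracked carefully to match the stated inequality.
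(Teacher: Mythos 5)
The paper does not prove this lemma at all---it is quoted directly from the cited reference \cite{janson2011random}---so the only question is whether your derivation is sound. Your upper-tail argument is the standard Bernstein computation and it checks out: Markov plus independence, $\log\E[e^{\lambda(X_i-p)}]\le p(e^{\lambda}-1-\lambda)$ via $\log(1+x)\le x$, the series bound $e^{\lambda}-1-\lambda\le \lambda^2/(2(1-\lambda/3))$ from $k!\ge 2\cdot 3^{k-2}$, and the choice $\lambda=t/(\mu+t/3)$ does give exactly $\Pr[X-\mu\ge t]\le \exp\big(-t^2/(2(\mu+t/3))\big)$.

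The gap is in your lower-tail step. Rerunning the identical argument on $Y_i=1-X_i$ does \emph{not} ``produce the same right-hand side'': the relevant mean is now $k(1-p)$, not $\mu=kp$, so complementation only yields $\Pr[\mu-X\ge t]\le \exp\big(-t^2/(2(k(1-p)+t/3))\big)$, which is strictly weaker than the claimed bound whenever $p<1/2$ (and uselessly weak when $p$ is tiny and $t\approx\mu$). The standard repair is to treat the lower tail directly with the negative exponential: $\E[e^{-\lambda X_i}]\le \exp\big(p(e^{-\lambda}-1)\big)$ together with $e^{-\lambda}-1+\lambda\le \lambda^2/2$, optimized at $\lambda=t/\mu$, gives $\Pr[X\le \mu-t]\le \exp(-t^2/(2\mu))\le \exp\big(-t^2/(2(\mu+t/3))\big)$, which is exactly how the cited reference states its lower tail. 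Separately, your handling of the union bound is too casual: the factor $2$ cannot be ``absorbed into the exponent,'' and indeed the two-sided form in the reference carries a factor $2$ that the paper's statement omits; that is an imprecision of the paper (harmless where the lemma is applied, since the bound is used with enormous slack), but your write-up should either retain the factor $2$ or state the two one-sided bounds separately.
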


We will also need the following form of the Hoeffding bound \citep{Hoeffding63}.

\begin{lemma}[Hoeffding bound~\citep{Hoeffding63}]
\label{lem:hoeffding}
Let $F \colon \mathcal{V} \to [0,1]$ be a function over a finite set $\mathcal{V}$ with expectation $\E_{x \sim \mathcal{V}}[F(x)] = \alpha \in [0,1]$. Let $R$ be a random subset of $\mathcal{V}$ of size $t$, and consider the random variable $X = \sum_{x \in R} F(x)$. Then the expectation of $X$ is $\mu = \alpha t$, and for any $0 \leq \gamma \leq 1$, $$\Pr\big [|X - \mu| \geq \gamma \mu \big ] \leq 2 \cdot e^{-\gamma^2 \mu/3}.$$ 
\end{lemma}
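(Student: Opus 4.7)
The plan is to reduce this statement to the standard multiplicative Chernoff bound (\Cref{lem:chernoff} or \Cref{lem:chernoff2}) for sums of independent $[0,1]$-valued random variables, treating the without-replacement subtlety via a classical reduction of Hoeffding.

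First I would compute $\E[X] = \sum_{x \in R} \E[F(x)]$; by linearity of expectation and by the fact that each element of $\mathcal{V}$ appears in $R$ with probability $t/|\mathcal{V}|$, this gives $\E[X] = t \cdot \alpha = \mu$, matching the claim. The remaining task is to prove a tail bound when $R$ is drawn uniformly at random as a size-$t$ subset of $\mathcal{V}$ (i.e.\ sampling without replacement), so that $X$ is \emph{not} a sum of independent random variables.

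To deal with this, I would invoke the classical reduction due to Hoeffding~\citep{Hoeffding63}, which states that if $x_1, \ldots, x_t$ are drawn from $\mathcal{V}$ without replacement and $y_1, \ldots, y_t$ are drawn from $\mathcal{V}$ independently with replacement, then for any convex function $\phi \colon \mathbb{R} \to \mathbb{R}$,
\[
\E\!\left[\phi\!\left(\sum_{i=1}^t F(x_i)\right)\right] \;\leq\; \E\!\left[\phi\!\left(\sum_{i=1}^t F(y_i)\right)\right].
\]
Applying this to the exponential function $\phi_\lambda(z) = e^{\lambda z}$, every moment generating function bound that one would obtain in the i.i.d.\ (with-replacement) setting transfers to our $X$. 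In particular, the Markov-inequality-based derivation of the multiplicative Chernoff bound goes through unchanged for $X = \sum_{x \in R} F(x)$.

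It then remains to apply the standard multiplicative Chernoff bound to $Y = \sum_{i=1}^t F(y_i)$, where $y_1, \ldots, y_t$ are i.i.d.\ uniform over $\mathcal{V}$, so that each $F(y_i) \in [0,1]$ is independent with mean $\alpha$ and $\E[Y] = \mu$. The one-sided multiplicative Chernoff bound for sums of independent $[0,1]$-valued variables (a routine generalization of \Cref{lem:chernoff} beyond the $\{0,1\}$ case via the usual MGF bound $\E[e^{\lambda F(y_i)}] \leq \exp\!\bigl(\alpha(e^\lambda-1)\bigr)$) yields $\Pr[Y \geq (1+\gamma)\mu] \leq e^{-\gamma^2 \mu/3}$ and $\Pr[Y \leq (1-\gamma)\mu] \leq e^{-\gamma^2 \mu/3}$ for $\gamma \in [0,1]$. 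A union bound gives the two-sided estimate $2 e^{-\gamma^2 \mu/3}$, and by Hoeffding's reduction this bound holds for $X$ as well. The only subtle point is the reduction step, but this is entirely standard; the rest is a direct invocation of the Chernoff machinery already stated in the preliminaries.
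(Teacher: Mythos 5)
Your proposal is correct; note that the paper does not prove this lemma at all but imports it as a black box from~\citep{Hoeffding63}, and your argument is exactly the standard derivation underlying that citation: Hoeffding's convex-ordering reduction from sampling without replacement to i.i.d.\ sampling (applied to $z \mapsto e^{\lambda z}$ for both signs of $\lambda$, so both tails transfer), followed by the multiplicative Chernoff bound for independent $[0,1]$-valued variables, whose exponents $\gamma^2\mu/(2+\gamma)$ and $\gamma^2\mu/2$ are both at most $\gamma^2\mu/3$ in absolute value for $\gamma \in [0,1]$, giving the stated two-sided bound $2e^{-\gamma^2\mu/3}$.
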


\subsubsection{Circuit and concept classes}
\label{sec:prelim-circuits}

For a typical Boolean circuit class $\mathcal{C}$ such as $\mathcal{AC}^0$, $\mathcal{TC}^0$, $\mathsf{Formulas}$, etc., and for a size function $s \colon \mathbb{N} \to \mathbb{N}$ measuring number of gates, we use $\mathcal{C}[s]$ to denote the set of languages $L \subseteq \{0,1\}^*$ that can be computed by a sequence of $\mathcal{C}$-circuits of size $s(n)$. For circuit classes of bounded depth, we might use $\mathcal{C}_d$ when referring to circuits of depth at most $d$. We stress that our notation refers to \emph{non-uniform} circuit classes, and consequently the complexity lower bound appearing in Theorem \ref{thm:main} is a non-uniform circuit complexity lower bound.

We will also employ circuit classes in the context of learning algorithms, where they are often referred to as concept classes. In this case, we will abuse notation and view $\mathcal{C}[s(n)]$ as the class of Boolean functions $f \colon \{0,1\}^n \to \{0,1\}$ that can be computed by $\mathcal{C}$ circuits over $n$ input variables of size at most $s(n)$. For convenience, we use $\mathsf{size}_\mathcal{C}(f)$ to denote the minimum number of gates in a $\mathcal{C}$-circuit computing $f$. We omit the subscript $\mathcal{C}$ in $\mathsf{size}_\mathcal{C}(f)$ when we refer to general Boolean circuits. For concreteness, in this case circuit size refers to number of gates in the  model consisting of fan-in two circuits over $\mathsf{AND}$, $\mathsf{OR}$, and $\mathsf{NOT}$ gates.

Theorem \ref{thm:main} applies to 
a broad family of circuit classes investigated in computational complexity theory, including fixed-depth classes such as depth-$2$ circuits consisting of majority gates. The only assumptions needed on the circuit class $\mathcal{C}$ are that:
\begin{itemize}
    \item[(\emph{i})] Any function $f_n \in \mathcal{C}[s(n)]$ can be computed by a general Boolean circuit of size $\mathsf{poly}(s(n))$. This is the case, for instance, if every gate allowed in $\mathcal{C}$ can be simulated by a Boolean circuit of size polynomial in the number of input wires of the gate.
    \item[(\emph{ii})] The class $\mathcal{C}$ is closed under restrictions of input variables to constants $0$ and $1$. In other words, if $f_n \in \mathcal{C}[s(n)]$ and we fix some input variables of $f_n$ to obtain a function $f'_n \colon \{0,1\}^{n'} \to \{0,1\}$, then $f'_n$ is also computed by a $\mathcal{C}$-circuit of size at most $s(n)$.
\end{itemize}

These assumptions are needed only in Section \ref{sec:qnp-lb}, and we rely on each of them as follows. We use Item $(\emph{i}\,)$ to show that if a language $L$ is not computable by (unrestricted) Boolean circuits of size $n^\alpha$ for some $\alpha \geq 1$, then it cannot be computed by $\mathcal{C}$-circuits of size $n^\beta$, where $\beta = \alpha/C'$ for some universal positive constant $C'$ that is independent of $\alpha$. On the other hand, we will rely on Item (\emph{ii}) to say that if a function $f \colon \{0,1\}^n \to \{0,1\}$ can be computed by $\mathcal{C}$-circuits of size polynomial in $n$, then any sub-function of $f$ defined over $n' = n^{\Omega(1)}$ input variables via a restriction also admits $\mathcal{C}$-circuits of size polynomial in $n'$. 

Note that Theorem \ref{thm:main} indeed applies to virtually any  class of polynomial size circuits.

We refer to a standard reference such as \citep{book_circuit} for more background on Boolean functions and circuit complexity theory.

\subsubsection{Classical learning algorithms}
\label{sec:classicallearning}
For the convenience of the reader, we review in this section (classical) learnability under the uniform distribution in the Probably Approximately Correct (PAC) model extended with membership queries. The quantum learning model that we consider in this work and its extensions are discussed in Section \ref{sec:prelimquantumlearn}.

\begin{definition}
Let $\mathcal{C}_n$ be a family of Boolean functions on $n$ input variables, $\mathcal{C}=\cup_{n\geq 1} \mathcal{C}_n$, $T \colon \nat\rightarrow \nat$, and $\varepsilon, \delta \colon \mathbb{N} \to [0,1]$. We say that $\mathcal{C}$ can be $(\varepsilon,\delta)$-learned in time $T$ under the uniform distribution with membership queries if there exists a randomized algorithm $\mathcal{A}$ that satisfies the~following guarantees: 
\begin{quote}
     For all $n\in \nat$, for every $f\in \mathcal{C}_n$, given $n$ and  query access to $f$, $\mathcal{A}$ runs in time $T(n)$ and with probability at least $1-\delta(n)$ over its internal randomness outputs a hypothesis~$h$ \emph{(}encoded as a general Boolean circuit\emph{)} that satisfies
    $$
    \Pr_{x\sim \calU_n}[h(x)=f(x)]\geq 1-\varepsilon(n).
    $$
\end{quote}
\end{definition}

It is also possible to focus on a fixed circuit class $\mathcal{C}$, and to consider learnability of functions computed by $\mathcal{C}$-circuits of an arbitrary size $s(n)$, for a fixed learning algorithm $\mathcal{A}$ that is independent of $s$ but that is given the value $s_f = \textsf{size}_\mathcal{C}(f)$ (or an upper bound on $s_f$) as part of its input. In this case, we allow the running time of $\mathcal{A}$ to depend on $s_f$. Similarly, we can provide the learner with $\delta$ and $\varepsilon$, and allow its running time to depend on these parameters.

We stress that the output hypothesis $h$ produced by $\mathcal{A}$ is not required to be a ``circuit'' from~$\mathcal{C}$, when $\mathcal{C}$ explicitly refers to a circuit class and a size measure.

We refer to a standard reference such as \citep{book_learning} for more background in computational learning theory.

\addtocontents{toc}{\setcounter{tocdepth}{3}} 
\subsection{Quantum computation}\label{sec:prelim-quantum}

We assume the reader is familiar with the quantum computing framework and notation, such as Dirac's bra-kets. We refer to a standard text such as~\cite{NC2010} for a detailed introduction to quantum computation. Here, we discuss concepts and notation of specific relevance to this paper.

A quantum circuit $U$ on $n$-qubits is a sequence of quantum gates (i.e.,~unitary matrices). Throughout this paper we will assume our quantum gates are restricted to the one-qubit Hadamard gate $\textsf{H}$ and $3$-qubit Toffoli gate $\textsf{Toff}$ defined as follows: 
$$
\textsf{H}:\ket{b_1} \rightarrow \frac{1}{\sqrt{2}}\big(\ket{0}+(-1)^{b_1}\ket{1}\big), \qquad \textsf{Toff}:\ket{b_1,b_2,b_3}\rightarrow \ket{b_1,b_2,b_3\oplus b_1\cdot b_2}
$$
for $b_1,b_2,b_3\in \01$. The Toffoli gate is particularly useful for us as any classical circuit can be implemented as a quantum circuit using only Toffoli gates. Additionally, we choose this set of gates because $\{\textsf{H},\textsf{Toff}\}$ is universal for approximating unitaries with only real entries,  i.e., every unitary with real entries can be approximated arbitrarily well with only $\textsf{H}$ and $\textsf{Toff}$ gates~\cite{Aharonov03}. In fact, our results still go through as long as we have a gate set whose size is constant.
In our case, the size of a quantum circuit is the number of Hadamard and Toffoli gates in the circuit. 
The classical description of a quantum circuit $U$ on $n$-qubits, denoted as $\code(U)$, is a canonical encoding of the sequence of gates (and the qubits they act on) in the circuit, represented as a binary string. We will repeatedly use the result that, given the description $\code(U)$ of a quantum circuit $U$ of a predetermined size, there exists an \emph{efficient} universal quantum circuit $\mathscr{U}$ that can simulate the action of $U$ on any $n$-qubit input $\ket{y}$. Formally, we have the following definition and result.

\begin{definition}[Universal quantum circuit]
Fix $n \in \mathbb{N}$ and let $\calC$ be the collection of quantum circuits on $n$-qubits of size $s(n)$. An $(n+m)$-qubit quantum circuit $\mathscr{U}$ is universal for $\calC$ if for every circuit $U \in \calC$ and associated $\code(U) \in \01^*$,
$$
\mathscr{U} (\ket{y} \otimes \ket{\code(U)}) = (U \ket{y}) \otimes \ket{\code(U)}, \quad \text{for every } y \in \01^n.
$$
\end{definition}
\noindent
There exist efficient constructions of $\mathscr{U}$ whose size has only a log-factor blow-up in the parameters $n$ and $s(n)$ (see~\citep{BFGH10universal}).  (We note that a polynomial blow-up would still be sufficient in our constructions.)

 We say that a quantum circuit $U$ computes a function $f:\01^n\rightarrow \01$ if the following holds: there exists $m\geq 0$ such that, for every $x\in \01^n$, when measuring the first qubit of $U$ on input $\ket{x,0^m}$, we get $f(x)$ with probability at least $2/3$, i.e.,
$$
\big\|\Pi_{f(x)} U\ket{x,0^m}\big\|^2\geq 2/3,
$$
where $\Pi_{f(x)} \eqdef \kb{f(x)}\otimes \id$. The extra $m$ qubits are called auxiliary qubits. As in probabilistic computing, the constant $2/3$ is not important here as one can use standard amplification techniques, such as taking the majority vote over many repetitions, to amplify this probability to $1-\delta$ for some $\delta > 0$. 

Often in this paper, we will abuse notation and write $\Pr[U(x)=f(x)]$ when referring to $\|\Pi_{f(x)} U\ket{x,0^m}\|^2$ -- the probability that the quantum circuit $U$ on input $x$ outputs $f(x)$ after measurement. Similarly, we might write
$$
\Pr_{x \sim \{0,1\}^n,\,U}[U(x) = f(x)] \eqdef 
\E_{x \sim \{0,1\}^n}\left [\big\|\Pi_{f(x)} U\ket{x,0^m}\big\|^2 \right ]
$$
to refer to the \emph{expected agreement} between the output of $U$ and $f$ over a random input string.

These definitions extend naturally to the case of functions with a non-Boolean output. For instance, if $g \colon \{0,1\}^n \to \{0,1\}^\ell$, then we use
$$
\Pr_{x \sim \{0,1\}^n,\,U}[U(x) = g(x)] \eqdef 
\E_{x \sim \{0,1\}^n}\left [\big\|\Pi_{g(x)} U\ket{x,0^m}\big\|^2 \right ],
$$
where $\Pi_{w}$ for a string $w \in \{0,1\}^\ell$ is defined as $=\kb{w}\otimes \id$.

Unlike classical probabilistic computation, quantum interference could result in the output qubit being entangled with junk values left in the auxiliary qubits when they are used in intermediate computations. This could impact the probability that $[U(x) = f(x)]$ on measurement -- especially when $U$ is used as a sub-routine in larger computations. However, by relying on the reversible nature of unitary computations, we can ``\emph{remove the garbage}'' via standard techniques. In particular, given $U$ on $n+m$ qubits that computes $f$,\footnote{We remark that in Section~\ref{sec:self_reduc} when we need to remove garbage in unitary computations, we will always amplify the success probability of $U$, so as to ensure that $U$ computes $f$ with overwhelming probability before constructing $\tilde{U}$.}  construct the garbage free version $\tilde{U}$ on $n+m+1$ qubits as follows: 
\begin{itemize}
    \item[(\emph{i})] Compute $U$ on input $\ket{x, 0^m}$;
    \item[(\emph{ii})] Copy the first qubit of $U$ into the $(n+m+1)$-th qubit of $\tilde{U}$ with a quantum gate;
    \item[(\emph{iii})] Un-compute $U$ by applying $U^{\dag}$ on the first $n+m$ qubits -- i.e., apply each gate of $U$ in reverse order. This will return $\ket{x, \textsf{junk}}$ back to $\ket{x, 0^m}$; 
    \item[(\emph{iv})] Measure the $(n+m+1)$-th qubit to check if $\tilde{U}(x) = f(x)$ with $\widetilde{\Pi}_{f(x)} = \kb{f(x)} \otimes \kb{x} \otimes \kb{0^m}$. 
\end{itemize}

The lower bounds in this paper involve quantum complexity classes, which are defined next. Owing to the probabilistic nature of quantum circuits, all classes of interest are in the bounded error paradigm. We first consider the uniform class $\BQTIME$ (bounded-error quantum time). 
\begin{definition}[$\BQTIME$]
Let $f : \01^* \rightarrow \01$ and $t: \nat \rightarrow \nat$. We say that $f$ is computable in bounded-error quantum $t(n)$-time if there exists a deterministic algorithm $A$ which on input $1^n$ runs in time $\poly(n, t(n))$ and outputs the description of a quantum circuit $U$, represented as a string $\code(U)$, with at most $t(n)$ quantum gates such that $U$ computes $f$. In this case, we write $f \in \BQTIME[t(n)]$.
\end{definition}

We also fix notation for the complexity classes that arise when $t(n)$ scales polynomially or exponentially in $n$. 

\begin{definition}[Bounded-error quantum complexity classes]
Let $n \in \nat$ and $\nu, c > 0$ be constants. Then, quantum polynomial time refers to the class $\BQP := \bigcup_{c > 0} \BQTIME[n^c]$. Similarly, $\BQE := \bigcup_{c > 0} \BQTIME[2^{c \cdot n}]$ refers to the class of languages computable in quantum $2^{O(n)}$-time. Quantum sub-exponential time is denoted as $\mathsf{BQSUBEXP} := \bigcap_{0 < \nu < 1} \BQTIME\big[ 2^{n^\nu} \big]$.
\end{definition}

\subsection{Quantum learning algorithms and extensions} \label{sec:prelimquantumlearn}

 In this section we define the quantum learning model. In contrast to the classical model, a quantum learner is given quantum oracle access to a concept $f$. This means the learner is allowed to perform a \emph{quantum membership query}, which is defined by a unitary map $O_{f}$ acting on $n+1$ qubits whose action on basis states is defined as
$$
O_{f}:\ket{x,b}\rightarrow \ket{x,b\oplus f(x)},
$$
for every $x\in \01^n$ and $b\in \01$. Naturally, a quantum learner can perform a quantum computation in between quantum queries, and its goal is the same as for a classical learner (defined in Section~\ref{sec:classicallearning}), i.e., to output a hypothesis $h$ that approximates the target concept $f\in \mathcal{C}$. 

We can formalise this model as follows. A quantum learning algorithm $\mathcal{L}$ running in time $G$ is described by a uniform sequence of quantum circuits $Q_n$, where each $Q_n$ contains at most $G(n)$ gates. The quantum circuit $Q_n$ expects as input $\ket{0^m}$ for some $m = m(n)$, and consists of a sequence of gates from $\{\mathsf{H}, \mathsf{Toff}, \mathcal{O}\}$, where $\mathcal{O}$ is defined over $n + 1$ qubits and is considered as applying a single gate. In other words, when learning an unknown Boolean function $f \colon \{0,1\}^n \to \{0,1\}$, $\mathcal{O}$ computes as the unitary map $O_f$ described above.  Finally, the output hypothesis of $Q_n$ when computing with quantum membership query access to $f$ (referred to as $Q_n^f$) is described by the string corresponding to the output measurement of $Q_n^f$. We note that intermediate measurements are also allowed in $Q_n$.

\begin{definition}[Standard quantum learners]
Let $\mathcal{C}_n$ be a family of Boolean functions on $n$ input variables and $\mathcal{C}=\cup_{n\geq 1} \mathcal{C}_n$. Let $G:\nat\rightarrow \nat$. We say $\mathcal{C}$ can be  $(\varepsilon(n),\delta(n))$-learned in time $G(n)$ if there exists a quantum learning algorithm $\mathcal{L}$ that satisfies the following: 
\begin{quote}
     For all $n\in \nat$, for every $f\in \mathcal{C}_n$, given quantum oracle access to $f$, $\mathcal{L}$ uses at most $G(n)$ gates and with probability at least $1-\delta(n)$ \emph{(}over the output measurement of $\mathcal{L}$\emph{)}  outputs a \emph{(}classical\emph{)} hypothesis $h$ that satisfies
    $$
    \Pr_{x\sim \calU_n}[h(x)=f(x)]\geq 1-\varepsilon(n).
    $$
\end{quote}
\end{definition}

We remark that one distinct advantage of having quantum oracle access to $f$ is that one can efficiently \emph{Fourier sample} for the squared Fourier distribution $\{\widehat{f}^2(S)\}_S$,  which might be computationally hard for classical learners (we discuss this in more detail in Appendix~\ref{app:Fsampling}).
For more on this  quantum learning model, we refer the interested reader  to~\cite{servedio2004equivalences,arunachalam2017guest}. %

In this paper, we consider a natural generalization of this model: we allow the quantum learner to output a {\em quantum} hypothesis, meaning that $\mathcal{L}$ is allowed to output a classical description of a quantum circuit $\widetilde{U_f}$ that \emph{approximates} the function $f$. Recall that \emph{computing} $f$ on a given input $x$ means that measuring the first qubit of $U$ acting on an input $\ket{x}$ and auxiliary qubits set to $\ket{0}$, in the computational basis, produces $f(x)$ with probability $\geq 2/3$. On the other hand, our notion of a quantum circuit $\widetilde{U_f}$ that approximates $f$ refers to the \emph{expected agreement} between the output of $\widetilde{U_f}$ and $f$ on a random input $x$, as defined in Section~\ref{sec:prelim-quantum}. 

We formally define this model below.\footnote{We only define the uniform distribution learning model below. Note that one can naturally define a quantum learner with quantum hypothesis under an \emph{arbitrary} distribution by changing $\calU_n$ to an arbitrary distribution $\De$.}

\begin{definition}[Quantum learning with quantum hypothesis]
Let $\mathcal{C}_n$ be a family of Boolean functions on $n$ input variables and $\mathcal{C}=\cup_{n\geq 1} \mathcal{C}_n$. Let $G:\nat\rightarrow \nat$. We say $\mathcal{C}$ can be  $(\varepsilon(n),\delta(n))$-learned in time $G(n)$ if there exists a quantum learning algorithm $\mathcal{L}$ that satisfies the following: 
\begin{quote}
    For all $n\in \nat$, for every $f\in \mathcal{C}_n$, given quantum oracle access to $f$, $\mathcal{L}$ uses at most $G(n)$ gates and with probability at least $1-\delta(n)$ \emph{(}over the output measurement of $\mathcal{L}$\emph{)}  outputs the description of a quantum circuit $\widetilde{U_f}$ that satisfies
      \[
 \E_{x\sim \calU_n} \big[ \norm{\Pi_{f(x)} \widetilde{U_f}\ket{x}\ket{0}}^2 \big] \geq     1-\eps(n) \;,
  \]
  where $\Pi_{f(x)} = \kb{f(x)} \otimes \id$. 
\end{quote}
\end{definition}

\subsection{Quantum natural properties}\label{sec:nat_prop}

Let $\mathcal{F}_n$ be the family of all Boolean functions on $n$ input bits. We say that $\Gamma = \{\Gamma_n\}_{n \geq 1}$ is a combinatorial property of Boolean functions if $\Gamma_n \subseteq \mathcal{F}_n$ for every $n \geq 1$. For every function $f \in \mathcal{F}_n$ and $N=2^n$, let $\mathsf{tt}(f)\in \{0,1\}^N$ be the truth table representing $f$. 
 We associate with $\Gamma$ a language $L_\Gamma \subseteq \{0,1\}^*$ defined as follows. A string $x$ is in $L_\Gamma$ if and only if $x = \mathsf{tt}(f)$ for some $n$ and $f \in \Gamma_n$. Conversely, given a string $w \in \01^N$, let $\mathsf{fnc}^w$ denote the boolean function $f : \01^n \rightarrow \01$ such that $\mathsf{tt}(f) = w$.

\begin{definition}[Natural property~\cite{DBLP:journals/jcss/RazborovR97}] \label{def:nat_prop}
Let $\Gamma$ be a combinatorial property, $\mathcal{C}$ be a circuit class, $\mathfrak{D}$ be a \emph{(}uniform or non-uniform\emph{)} complexity class, and $s \colon \mathbb{N} \to \mathbb{N}$. We say that $\Gamma$ is a $\mathfrak{D}$-natural property useful against $\mathcal{C}[s]$ if there exists $n_0 \in \mathbb{N}$ for which the following holds:
\begin{itemize}
\item \textsf{Constructivity}: $L_\Gamma \in \mathfrak{D}$, i.e., for every $n$, $f_n \stackrel{?}{\in}\Gamma_n$ is decidable in $\mathfrak{D}$.
\item \textsf{Largeness}: For every $n \geq n_0$, $\Pr_{f \sim \mathcal{F}_n}[f \in \Gamma_n] \geq 1/2$.
\item \textsf{Usefulness}: For every $n \geq n_0$, $\mathcal{C}_n[s(n)] \cap \Gamma_n = \emptyset$.
\end{itemize}
If $\mathfrak{D} = \mathsf{BQP}$, we say that $\Gamma$ is a \emph{quantum} natural property.
\end{definition}

Note for instance that for $\mathsf{BPP}$ and $\mathsf{BQP}$-natural properties the corresponding algorithm (that decides $L_\Gamma$)  is allowed to run in time polynomial in the input length $N = 2^n$.  We will need to slightly relax the guarantees offered by a natural property in order to establish a connection to~learning.

\begin{definition}[Informal]
We say that a circuit class $\mathcal{C}$ admits a \emph{promise} natural property if the underlying algorithm \emph{(}that decides $L_\Gamma$\emph{)} in Definition \ref{def:nat_prop} accepts every string in $L_\Gamma$ with probability $\geq 2/3$ and rejects every string encoding a function in $\mathcal{C}_n[s(n)]$ with probability $\geq 2/3$. \emph{(}For instance, the algorithm might accept some strings with probability close to $1/2$.\emph{)}
\end{definition} 

\begin{definition}[Promise $\mathsf{BQP}$-natural property]
\label{defn:qnaturalprop}
We say that there is a \emph{(}promise\emph{)} $\mathsf{BQP}$-quantum natural property against $\mathcal{C}$-circuits of size $s$ if there is a quantum algorithm $\calD$ operating over inputs of the form $N = 2^n$ for which the following holds for every large enough parameter $n$:
\begin{itemize}
\item \textsf{Constructiveness:} $\calD$ runs in time polynomial on its input length $N$.
\item \textsf{$s(n)$-hardness:} For every 
$f \in \mathcal{C}_n[s(n)]$, $\calD$ accepts $\mathsf{tt}(f)$ with probability at most $1/3$.
\item \textsf{Density:} There is a set $\Gamma_n \subseteq \{0,1\}^N$ of density $|\Gamma_n|/2^N \geq 1/2$ such that, for every $w \in \Gamma_n$, $\mathsf{fnc}^w \notin \mathcal{C}_n[s(n)]$ and $\calD$ accepts $w$ with probability at least $2/3$.
\end{itemize}
\end{definition}

\begin{remark}
We stress that by the definition of $\mathsf{BQP}$, a promise $\mathsf{BQP}$-natural property implies a {\em classical} algorithm $A$ such that $A(1^N)$ computes the quantum circuit $D_N$ that deals with inputs of size $N$.
\end{remark}

\subsection{Self-reducibility}

\begin{definition} [Downward self-reducibility]\label{def:downward-selfreducible}
A function $f \colon \{0,1\}^* \to \{0,1\}$ is said to be downward self-reducible if there is a deterministic polynomial-time oracle procedure $A^f$ such that\emph{:}
\begin{enumerate}
 \item[\emph{1.}] On any input $x$ of length $n$, $A^f(x)$ only makes queries of length $< n$.
 \item[\emph{2.}] For every input $x$, $A^f(x) = f(x)$.
\end{enumerate}
\end{definition}

\begin{definition} [Random self-reducibility]\label{def:random-selfreducible}
A function $f\colon \{0,1\}^* \to \{0,1\}$ is said to be random self-reducible if there are constants $a, b, c \geq 1$ and polynomial-time computable functions $g \colon \{0,1\}^* \to \{0,1\}^*$ and $h \colon \{0,1\}^* \to \{0,1\}$ satisfying the following conditions\emph{:}

\begin{enumerate}
\item[\emph{1.}] For large enough $n$, for every $x \in \{0,1\}^n$ and for each $i \in \mathbb{N}$ such that $i\in [n^a]$, $g(i,x,r) \sim~\mathcal{U}_n$ when $r \sim \mathcal{U}_{n^{c}}$.
\item[\emph{2.}] For large enough $n$ and for every function $\tilde{f}_n \colon \{0,1\}^n \to \{0,1\}$ that is $(1/n^b)$-close to $f$ on $n$-bit strings, for every $x \in \{0,1\}^n$\emph{:}
$$f(x) = h\big(x, r, \tilde{f}_n(g(1,x,r)), \tilde{f}_n(g(2,x,r)), \ldots , \tilde{f}_n(g(n^a, x, r))\big)$$
with probability $\geq 1-2^{-2n}$ when $r \sim \mathcal{U}_{n^{c}}$.
\end{enumerate}
\end{definition}

Trevisan and Vadhan \citep{DBLP:journals/cc/TrevisanV07} construted a language in $\mathsf{PSPACE}$ that simultaneously satisfies the following conditions.\footnote{The proof of this result in \citep{DBLP:journals/cc/TrevisanV07} refers to ``self-correction'' instead of ``random-self-reducibility'', but they ultimately rely on a decoding algorithm for polynomials making queries that are uniformly distributed on each coordinate.}

\begin{theorem}[Trevisan and Vadhan \citep{DBLP:journals/cc/TrevisanV07}] \label{t:TV_language} There is a language $L^\star \subseteq \{0,1\}^*$ satisfying the following properties:
\begin{itemize}
    \item $L^\star \in \mathsf{PSPACE}$. In particular, there is a positive constant $d_\star$ such that $L$ on inputs of length $n$ can be computed in time $O(2^{n^{d_\star}})$. 
    \item $L^\star$ is complete for $\mathsf{PSPACE}$ with respect to deterministic polynomial-time reductions.
    \item $L^\star$ is downward-self-reducible and random-self-reducible, with parameters $a_\star$, $b_\star$, and $c_\star$ in Definition \ref{def:random-selfreducible}.
\end{itemize}
\end{theorem}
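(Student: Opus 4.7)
\textbf{Proof plan for Theorem \ref{t:TV_language}.} The plan is to follow the construction of Trevisan and Vadhan, which mirrors the arithmetization used in Shamir's proof that $\IP = \PSPACE$ and the LFKN interactive protocol for $\#\mathsf{P}$. Starting from the canonical $\PSPACE$-complete language $\TQBF$, I would arithmetize a quantified Boolean formula $\phi = Q_1 x_1 \ldots Q_n x_n \, \psi(x_1,\ldots,x_n)$ over a finite field $\mathbb{F}$ of size $\mathsf{poly}(n)$: replace $\forall x_i$ with $\prod_{x_i \in \{0,1\}}$, replace $\exists x_i$ with the dual (using $1 - \prod(1-\cdot)$), and replace $\psi$ by its multilinear arithmetization $\tilde{\psi}$. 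To keep the resulting polynomials of degree $\mathsf{poly}(n)$, I would interleave the quantifier operators with degree-reduction (linearization) operators $R_i$ that multilinearize a polynomial in the variable $x_i$ via $R_i P(x) = (1 - x_i) P(x)|_{x_i=0} + x_i P(x)|_{x_i=1}$. Each of $\forall$, $\exists$, $R_i$ lowers the complexity of the expression in a structured way, and evaluating the full expression at any point of $\mathbb{F}^n$ solves $\TQBF$.

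Next, I would define $L^\star$ to capture the \emph{intermediate} polynomial evaluations produced by this arithmetization, in the spirit of Trevisan--Vadhan. A string of length $n$ for $L^\star$ encodes (with appropriate padding) a specification of: the formula $\phi$, a prefix of quantifier/linearization operators still to be applied, a partial assignment of the first few variables to field elements, and a target bit. Acceptance is defined by whether the corresponding partial arithmetization, once fully evaluated, produces the target bit. Membership in $\PSPACE$ follows from the fact that the whole expression can be evaluated in polynomial space by a straightforward recursion, yielding in particular a deterministic algorithm of runtime $2^{n^{d_\star}}$ for some constant $d_\star$. $\PSPACE$-completeness is immediate by the Karp reduction from $\TQBF$ that writes down the top-level instance.

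For downward self-reducibility, I would exhibit the oracle procedure $A^{L^\star}$ that on input $x$ of length $n$ identifies the outermost remaining operator (a $\forall$, an $\exists$, an $R_i$, or a completed polynomial to sum/multiply) and reduces its evaluation to a small constant number of queries that encode ``strip off the outermost operator and continue''. Crucially, the encoding is chosen so that each such sub-instance has strictly smaller length than $n$ (this is why the encoding padding matters). Since there are only $\mathsf{poly}(n)$ operators to peel off and each peel costs polynomial time and a constant number of oracle calls, $A^{L^\star}$ runs in polynomial time. For random self-reducibility, I would exploit that once the operators and the formula are fixed, $L^\star$ computes a low-degree polynomial $P$ over $\mathbb{F}$ in the remaining ``input'' coordinates; this is exactly the Reed--Muller self-correction setting. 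Given $x$, the procedure $g(i,x,r)$ samples a random degree-$a_\star$ curve through $x$ (using the randomness $r$) and lets $g(i,x,r)$ be its $i$-th point, so each $g(i,x,r)$ is marginally uniform over $\{0,1\}^n$ (after standard encoding); $h$ then applies Berlekamp--Welch or a similar polynomial-interpolation / majority decoder to the $n^{a_\star}$ oracle answers to reconstruct $P(x)$. Choosing $|\mathbb{F}|$ and $a_\star$ large enough relative to the degree of $P$ and the noise rate $1/n^{b_\star}$ guarantees the required $1 - 2^{-2n}$ success probability.

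The main conceptual obstacle is getting the encoding of $L^\star$ right so that all three properties hold \emph{simultaneously}. The downward self-reducibility wants each recursive query to be strictly shorter; the random self-reducibility wants each randomized query to have the exact same length $n$ as the original input and to be uniformly distributed over $\{0,1\}^n$; and $\PSPACE$-completeness wants a clean Karp reduction from $\TQBF$. Balancing these requires a careful padding convention (so that ``peel an operator'' genuinely shrinks length, but ``move to a random field point'' preserves length) together with a uniform way of encoding points of $\mathbb{F}^n$ as $\{0,1\}^n$ strings so that the $g(i,x,r)$ are truly uniform --- all details that Trevisan and Vadhan work out by choosing $|\mathbb{F}|$ to be a power of two and embedding partial field-element assignments into the padded input in a length-monotone fashion. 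Everything else (polynomial-space evaluation, constant-degree curve self-correction, $\PSPACE$-hardness) is standard.
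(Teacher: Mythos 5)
The paper does not prove this theorem itself---it imports it from Trevisan and Vadhan---and your plan is essentially a faithful reconstruction of their construction: arithmetized $\TQBF$ with interleaved linearization operators over a field of characteristic two, a padded encoding of the intermediate polynomial evaluations, downward self-reducibility by peeling the outermost operator, and random self-reducibility via low-degree (Reed--Muller-type) self-correction with uniformly distributed queries. So your approach matches the cited proof, and the obstacles you flag (length-monotone padding versus length-preserving random queries) are exactly the encoding issues Trevisan and Vadhan resolve.
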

\subsection{Pseudorandomness}
\label{sec:def-prg}

For $s \in \mathbb{N}$ and $\varepsilon \in [0,1]$, we say that a distribution $\mathcal{D}$ supported over $\{0,1\}^m$ is $(s, \varepsilon)$-\emph{pseudorandom against quantum circuits} if for every quantum circuit $C$ of size $s$ defined over $m$ input bits, we have 
\[ 
\left |  \Pr_{x \sim \{0,1\}^m,\,C}[C(x) = 1] - \Pr_{y \sim \mathcal{D},\,C}[C(y) = 1]  \right | \leq \eps.
\]
Note that each probability above refers to an appropriate random input and the randomness present in the output of $C$. 

We will consider weaker forms of pseudorandomness that hold against \emph{uniformly constructed} families of quantum circuits. For functions $s \colon \mathbb{N} \to \mathbb{N}$ and $\varepsilon \colon \mathbb{N} \to [0,1]$, we say that an ensemble $\{\mathcal{D}_m\}_{m \geq 1}$ of distributions $\mathcal{D}_m$ supported over $\{0,1\}^m$ is $(s,\varepsilon)$-\emph{pseudorandom against uniform quantum circuits} if for every deterministic algorithm $A(1^m)$ that runs in time $s(m)$ and outputs a quantum circuit $C_m$ over $m$ input variables and of size at most $s(m)$,
\[ 
\left |  \Pr_{x \sim \{0,1\}^m,\,C_m}[C_m(x) = 1] - \Pr_{y \sim \mathcal{D}_m,\,C_m}[C_m(y) = 1]  \right | \leq \eps(m).
\]

For technical reasons, the pseudorandom distributions we construct are supported over strings of length $m(n) = \lfloor 2^{n^\lambda} \rfloor$, for $n \in \mathbb{N}$ and a fixed $\lambda > 0$. (The index parameter $n$ will play an important role in our constructions.) Moreover, we will only be able to show that the distributions  $\mathcal{D}_{m(n)}$ are pseudorandom for infinitely many values of $n$. For these reasons, we refine the previous definition in the following way.

Let $m \colon \mathbb{N} \to \mathbb{N}$, $s \colon \mathbb{N} \to \mathbb{N}$ and $\varepsilon \colon \mathbb{N} \to [0,1]$.
 We say that an ensemble $\{\mathcal{D}_{m(n)}\}_{n \geq 1}$ of distributions $\mathcal{D}_m$ supported over $\{0,1\}^{m(n)}$ is \emph{infinitely often} $(s, \varepsilon)$-\emph{pseudorandom against uniform quantum circuits} if for every deterministic  algorithm $A(1^m)$ that runs in time $s(m)$ and outputs a quantum circuit $C_m$ over $m$ input variables and of size at most $s(m)$, for infinitely many values of $n$ and $m \eqdef m(n)$,
 \[ 
\left |  \Pr_{x \sim \{0,1\}^m,\,C_m}[C_m(x) = 1] - \Pr_{y \sim \mathcal{D}_m,\,C_m}[C_m(y) = 1]  \right | \leq \eps(m).
 \]

Finally, our distributions are obtained from \emph{pseudorandom generators}. Let $\ell \colon \mathbb{N} \to \mathbb{N}$. We say that a family of functions $\{G_n\}_{n \geq 1}$ is a \emph{quick infinitely often} $(s, \varepsilon)$-\emph{pseudorandom generator against uniform quantum circuits} of \emph{seed length} $\ell(n)$ and \emph{output length} $m(n)$ if the following conditions hold:
\begin{itemize}
    \item[(\emph{i})] \textsf{Stretch}: Each function \emph{stretches} an $\ell(n)$-bit input to $m(n)$-bits i.e., $G_n \colon \{0,1\}^{\ell(n)} \to \{0,1\}^{m(n)}$.
    \item[(\emph{ii})]  \textsf{Uniformity and Running Time}: There is a deterministic algorithm $A$ that when given $1^n$ and $x \in \{0,1\}^{\ell(n)}$ runs in time $O(2^{\ell(n)})$ and outputs $G_n(x)$.
    \item[(\emph{iii})]  \textsf{Pseudorandomness}:  For each $n \geq 1$, let $\mathcal{D}_{m(n)} = G_n(\mathcal{U}_{\ell(n)})$ be the distribution over $m(n)$-bit strings induced by evaluating $G_n$ on a random $\ell(n)$-bit string. Then the corresponding ensemble $\{\mathcal{D}_{m(n)}\}_{n \geq 1}$ is infinitely often $(s, \varepsilon)$-pseudorandom against uniform quantum circuits.
\end{itemize}
For convenience, we will also say in this case that 
$$
G = \{G_n\}_{n \geq 1}~\text{is an infinitely often}~(\ell, m, s,\varepsilon)\text{-generator}.
$$ 
Under our notation, observe that $\ell(n)$ and $m(n)$ are functions indexed by $n$ that control the ``stretch'' of $G_n$ (i.e.~$G_n$ maps $\ell$ bits to $m$ bits), and $s(m)$ and $\varepsilon(m)$ are pseudorandomness parameters of the induced distribution $\mathcal{D}_m = G_n(\mathcal{U}_\ell)$.

\subsection{Inherently probabilistic computations}\label{ss:inherently_probabilistic}

As an intermediate model between classical and quantum computations, it will be useful to consider \emph{inherently probabilistic circuits}. Roughly speaking, these are computational devices whose randomness remains ``inacessible'', in the sense that there is no simple way of decomposing an inherently probabilistic circuit as a distribution of deterministic circuits.

To make this point more precise, we review the discussion from Section \ref{sec:quantum_challenges}. Consider
the standard model of randomized Boolean circuits. In other words, we consider a Boolean circuit $C(x,y)$, where $x$ is the input string and $y$ is the random string, and  say that $C$ computes a Boolean function $g \colon \{0,1\}^m \to \{0,1\}^k$ on an input $x \in \{0,1\}^m$ if $\Pr_y[C(x,y) = g(x)] \geq 2/3$. Similarly, we can also discuss the correlation between $C$ and $g$, captured by $\gamma \eqdef \Pr_{x,y}[C(x,y) = g(x)]$. An extremely useful trick in this model is that, by an averaging argument, there must exist a fixed string $y'$ such that $\Pr_x[C(x,y') = g(x)] \geq \gamma$. In other words, there is a \emph{deterministic} circuit $C_{y'}(x) \eqdef C(x,y')$ that correctly computes $g$ on a $\gamma$ fraction of inputs. This often allows one to reduce the analysis of randomized Boolean circuits to the deterministic case. Additionally, $C_{y'}$ ``forces'' $C(x,y)$ to commit to a single consistent output on every $x$, which can be relevant if $C$ is needed as a subroutine in other computations.

As a concrete example, note that this idea is crucial in the proof of $\mathsf{BPP} \subseteq \mathsf{P}/\mathsf{poly}$, which is a simple combination of  reducing the failure probability of a randomized circuit and fixing its random~input. 

While quantum computations share superficial similarities with the model of randomized computations and their corresponding randomized circuits, there is no obvious way of reducing a quantum computation to a distribution of ``deterministic'' quantum computations. In some cases, this is part of the difficulty when extending classical results to the quantum setting. With this in mind, below, we introduce inherently probabilistic circuits, which serve as a useful intermediate model in our analysis of quantum computations in the context of error correction and hardness amplification.\\

\noindent \textbf{Inherently probabilistic circuits.} We adopt a definition of inherently probabilistic computations that is sufficient for our purposes. In order to be as general as possible and because of how we access these computations, we model an \emph{inherently probabilistic circuit} $\mathcal{A}$ over $m$ input bits and that produces $\ell$ output bits as a function $\mathcal{A} \colon \{0,1\}^m \to \mathfrak{F}$, where $\mathfrak{F}$ is the set of probability distributions supported over a fixed domain $\{0,1\}^\ell$. In other words, $\mathcal{A}$ assigns to each input $z \in \{0,1\}^m$ a distribution $\mathcal{A}(z)$ supported over $\{0,1\}^\ell$. We say that $\mathcal{A}$ computes a function $g \colon \{0,1\}^m \to \{0,1\}^\ell$ with probability at least $\varepsilon$ if 
$$
\Pr_{\substack{z \sim \{0,1\}^m,\\ v \sim \mathcal{A}(z)}}[v = g(z)] \geq \varepsilon.
$$
Note that nothing is assumed about the relation between distributions $\mathcal{A}(z_1)$ and $\mathcal{A}(z_2)$ for $z_1 \neq z_2$, and that there is no way of globally ``fixing'' the randomness of $\mathcal{A}$ across different input strings.

We will also need to employ inherently probabilistic computations as subroutines inside standard deterministic and randomized computations.  We model this situation using standard Boolean circuits and ``inherently probabilistic oracles gates'', described next.

Deterministic Boolean circuits with oracle gates are defined in the standard way. More precisely, we use $C^\mathcal{O}$ to represent a deterministic circuit that in addition to its original gates can label certain gates of fan-in $m$ and of fan-out $\ell$ by $\mathcal{O}$. The computation of $C^\mathcal{A}(x)$ on an input $x$ is defined once we set $\mathcal{O}$ to an inherently probabilistic circuit $\mathcal{A}$. Formally, each gate of $C^\mathcal{A}(x)$ is now a \emph{random variable} whose value depends on the input $x$ and on the outcomes of the calls to $\mathcal{A}$, each distributed according to $\mathcal{A}(z)$ when the input to $\mathcal{A}$ is the string $z$ (two calls of $\mathcal{A}$ over the same string $z$ are distributed independently). For convenience, if $C^\mathcal{O}$ is an oracle circuit with $k$ designated output gates, we use $C^\mathcal{A}(x)$ to denote the random variable supported over $\{0,1\}^k$ and distributed according to the bits produced by its output gates (under a fixed order of the output gates of $C^\mathcal{O}$).

If $g \colon \{0,1\}^n \to \{0,1\}^k$, $C^\mathcal{O}$ is a Boolean oracle circuit over $n$ input bits with oracle gates of fan-in $m$ and fan-out $\ell$, and $\mathcal{A}$ is inherently probabilistic, we say that $C^\mathcal{A}$ computes $g$ with probability at least $\varepsilon$ if $\Pr_{x \sim \{0,1\}^n, \mathcal{A}}[C^\mathcal{A}(x) = g(x)] \geq \varepsilon$.

It will also be useful to consider \emph{randomized} oracle circuits with access to an \emph{inherently probabilistic oracle} $\mathcal{A}$. In this setting, $C^\mathcal{O}$ has, in addition to its input string $x$, an extra input $y$ for random bits. The computation of $C^{\mathcal{A}}$ on an input pair $(x,y)$ is defined as above, i.e., the inputs $x$ and $y$ are considered as a single input string of $C^{\mathcal{A}}$. We can similarly consider the probability $\Pr_{x, y, \mathcal{A}}[C^{\mathcal{A}}(x,y) = g(x)]$ of computing $g$ using $C^{\mathcal{A}}$. Note in this case that $y$ can be fixed to a given string $y'$, but the randomness associated with each call to $\mathcal{A}$ cannot be fixed. %

When it is clear from the context, we might omit superscripts $\mathcal{O}$ and $\mathcal{A}$ when referring to such oracle circuit computations. 

A somewhat similar model is explored by Kearns and Schapire \cite{DBLP:journals/jcss/KearnsS94} (see also \citep{DBLP:journals/ml/Yamanishi92}) from a different perspective.\footnote{This reference has been brought to our attention by an anonymous reviewer.} Their paper focuses on the learnability of functions of the form $c \colon X \to [0,1]$, where $c(x)$ can be interpreted as a probability distribution over $\{0,1\}$. In contrast, in this work we assign to each input $x$ a \emph{probability distribution over strings}. Most importantly, we are concerned with the subtleties of \emph{computing} with such devices, which will be used as internal building blocks in other computations, while the focus of \cite{DBLP:journals/jcss/KearnsS94} and \citep{DBLP:journals/ml/Yamanishi92} is on the \emph{learnability} of probabilistic/stochastic functions.

We notice that any quantum circuit that is accessed through {\em classical queries} can be viewed as an inherently probabilistic circuit.

\begin{lemma}\label{lem:quantum_as_inherently_random}
Let $\mathcal{R}$ be an inherently probabilistic circuit and $\mathcal{Q}$ a quantum circuit such that on every input $x$, the output distributions of $\mathcal{R}(x)$ and $\mathcal{Q}(x)$ are exactly the same.\footnote{Notice that here $\mathcal{Q}(x)$ could also have a garbage register that is traced out.} Then for every probabilistic algorithm
 $\mathcal{A}$, which might have some classical input $w$, that is allowed to make \emph{(}classical\emph{)} queries to $\mathcal{R}$ or $\mathcal{Q}$, we have that for every $y$,
\[
\Pr_{\calA,\mathcal{R}}[\calA^{\mathcal{R}}(w) = y] = 
\Pr_{\calA,\mathcal{Q}}[\calA^{\mathcal{Q}}(w) = y].
\]  
\end{lemma}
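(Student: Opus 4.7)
The plan is to proceed by induction on the number of oracle queries, arguing that the joint distribution of $\mathcal{A}$'s entire transcript is identical whether the oracle is $\mathcal{R}$ or $\mathcal{Q}$. First I would fix the classical randomness $\rho$ of $\mathcal{A}$ and consider the transcript of query--answer pairs $(x_1,y_1),\ldots,(x_T,y_T)$ produced during a run on input $w$. Because $\mathcal{A}$ makes only \emph{classical} queries, each $x_i$ is a deterministic function of $(w,\rho,y_1,\ldots,y_{i-1})$, and the final output of $\mathcal{A}$ is a deterministic function of $(w,\rho,y_1,\ldots,y_T)$. It therefore suffices to show that for every fixed $\rho$ and every transcript $(y_1,\ldots,y_T)$, the probability of observing this transcript when using oracle $\mathcal{R}$ equals the probability when using oracle $\mathcal{Q}$; averaging the resulting per-$\rho$ equalities over $\rho$ then yields the claim.

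For the inherently probabilistic oracle $\mathcal{R}$, the definition from Section~\ref{ss:inherently_probabilistic} guarantees that each oracle call on input $z$ returns an independent sample from $\mathcal{R}(z)$, so the conditional distribution of $y_i$ given the past is exactly $\mathcal{R}(x_i)$. For the quantum oracle $\mathcal{Q}$, the crucial observation is that a \emph{classical} query is implemented by preparing fresh ancilla, applying $\mathcal{Q}$ to $x_i$, measuring the designated output register to obtain $y_i$, and discarding the remaining garbage register. Since each query is initialised from scratch and no quantum state is carried over to the next query, the call on input $x_i$ produces an independent sample from the measured output distribution of $\mathcal{Q}(x_i)$. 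By hypothesis, $\mathcal{R}(x_i)$ and $\mathcal{Q}(x_i)$ coincide as distributions, so the two transcript distributions have identical conditionals at every step; a one-line induction on $i$ establishes equality of the joint distributions.

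The main (and essentially only) delicate point is the independence of successive queries to $\mathcal{Q}$: this relies crucially on the hypothesis that $\mathcal{A}$ queries \emph{classically}, since a coherent quantum query could entangle the answer register with the garbage register across queries and destroy the product structure of the transcript distribution, making the reduction to $\mathcal{Q}(x_i)$ alone invalid. Once this is spelled out, the remainder of the argument is bookkeeping: combine the per-transcript equalities to get $\Pr[\mathcal{A}^{\mathcal{R}}(w)=y\mid\rho] = \Pr[\mathcal{A}^{\mathcal{Q}}(w)=y\mid\rho]$ for every $\rho$, and average over $\rho$ to conclude.
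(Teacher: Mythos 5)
Your proposal is correct and follows essentially the same reasoning as the paper: the paper's (very terse) proof simply observes that since $\mathcal{A}$ makes only classical queries, it only ever sees independent samples from the output distributions of the oracle, and these distributions coincide for $\mathcal{R}$ and $\mathcal{Q}$ by hypothesis. Your transcript-induction argument is just a more detailed write-up of that same observation, including the same key caveat that coherent quantum queries would break the argument.
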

\begin{proof}
Notice that since $\mathcal{A}$ is a classical algorithm and is only allowed to make classical queries to $\mathcal{Q}$, $\mathcal{A}$ only has access to samples of the output distribution of $\mathcal{Q}$. Since the output distribution of $\mathcal{Q}$ and $\mathcal{R}$ is exactly the same for every input, the result follows.
\end{proof}

\section{Lower bounds from learning algorithms: a modular approach via PRGs}\label{sec:lbs_from_learning}

In this section, we show how to derive circuit lower bounds from quantum learning algorithms via two steps: first, in \cref{sec:qnp-from-ql} we show that quantum learners imply quantum natural properties; then, in \cref{sec:qnp-lb} we show that given a pseudorandom generator against uniform quantum computation, quantum natural properties imply circuit lower bounds. 

\subsection{Quantum natural properties from quantum learning algorithms}
\label{sec:qnp-from-ql}

\begin{theorem}\label{lem:learning_to_natural} There is a universal constant $\lambda \geq 1$ for which the following holds.
 Let $\gamma \colon \nat \rightarrow [0,1/2]$ be a constructive function such that $\gamma(n) \geq \lambda \cdot  2^{-n/2}$, and let $s(n) = \gamma(n)^2\cdot 2^{n}/(\lambda\cdot n)$. If $\mathcal{C}[n^k]$ can be learned under the uniform distribution in quantum time at most $s(n)$ by a $(\delta, \varepsilon)$-learner with $\delta(n) \leq 0.99$ and $\varepsilon(n) \leq 1/2 - \gamma(n)$, then there exists a promise quantum natural property against~$\mathcal{C}[n^k]$. 
\end{theorem}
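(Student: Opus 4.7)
The plan is to use the quantum learner $\mathcal{L}$ itself as a subroutine inside the decision algorithm $\calD$ for the promise quantum natural property. Given an input $w \in \{0,1\}^N$ with $N = 2^n$, encoding a Boolean function $f = \mathsf{fnc}^w$, algorithm $\calD$ will: (i) simulate $\mathcal{L}$ on $f$ independently $T$ times (implementing the membership-query oracle $O_f$ as a quantum circuit of size $O(N)$ that reads the explicit truth table $w$), obtaining hypotheses $h_1, \ldots, h_T$; (ii) for each $h_i$, estimate the agreement $\alpha_i = \Pr_{x \sim \calU_n,\,h_i}[h_i(x) = f(x)]$ by drawing $M$ uniform inputs and averaging the empirical agreement $\hat\alpha_i$ (when $h_i$ is a quantum hypothesis $\widetilde U_f$, each evaluation is a single run-and-measure of the output circuit); and (iii) accept $w$ iff $\hat\alpha_i < 1/2 + \gamma/2$ for every $i \in \{1,\ldots,T\}$. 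With $T = O(1)$ and $M = O(\log T/\gamma^2) = \mathsf{poly}(n)$, each simulation of $\mathcal{L}$ takes time $s(n) \cdot O(N) = \mathsf{poly}(N)$, and so $\calD$ is a uniformly generated family of quantum circuits of size $\mathsf{poly}(N)$, establishing constructivity.

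For the hardness condition against $\mathcal{C}[n^k]$: if $f \in \mathcal{C}_n[n^k]$, each independent run of $\mathcal{L}$ outputs a hypothesis $h_i$ with true agreement $\alpha_i \geq 1/2 + \gamma$ with probability $\geq 1 - \delta \geq 0.01$. Taking $T$ to be a sufficiently large constant (e.g.\ $T = 700$), with probability $\geq 1 - (0.99)^T \geq 0.999$ at least one run $i^\star$ produces such a good hypothesis; and by Hoeffding combined with a union bound over $T$ hypotheses, all estimates $\hat\alpha_i$ are within $\gamma/4$ of $\alpha_i$ with probability $\geq 1 - 1/50$. On the joint event, $\hat\alpha_{i^\star} \geq 1/2 + 3\gamma/4 > 1/2 + \gamma/2$, so $\calD$ rejects with probability $\geq 2/3$.

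For density, define $\Gamma_n \subseteq \{0,1\}^N$ as the set of $w$ such that (a) $\mathsf{fnc}^w \notin \mathcal{C}_n[n^k]$, and (b) no quantum circuit of size $\leq s(n)$ has average agreement at least $1/2 + \gamma/4$ with $\mathsf{fnc}^w$. The fraction of $w$ violating (a) is at most $2^{O(n^{k+1})}/2^N = o(1)$. For (b), there are at most $2^{O(s(n) \log s(n))}$ quantum circuits of size $s(n)$ over the fixed universal gate set $\{\mathsf{H},\mathsf{Toff}\}$, and for any fixed such circuit $C$, Hoeffding's inequality applied to a uniformly random truth table yields $\Pr_f[\mathrm{agr}(C,f) \geq 1/2 + \gamma/4] \leq \exp(-\gamma^2 \cdot 2^n/8)$. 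Since $s(n) \log s(n) \leq O(\gamma^2 \cdot 2^n/\lambda)$ by the choice of $s$, a union bound shows (b) fails with probability at most $\exp(-\Omega(\gamma^2 \cdot 2^n)) \leq 1/4$ when $\lambda$ is a sufficiently large constant, so $|\Gamma_n|/2^N \geq 1/2$. For any $w \in \Gamma_n$, every output $h_i$ of $\mathcal{L}$ (being a circuit of size $\leq s(n)$) has $\alpha_i < 1/2 + \gamma/4$; Hoeffding on the $M$ samples then gives $\hat\alpha_i < 1/2 + \gamma/2$ simultaneously for all $i$ with probability $\geq 2/3$, and $\calD$ accepts.

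The main obstacle is calibrating the universal constant $\lambda$ together with $T$ and $M$ so that the density counting yields $\geq 1/2$: the union bound over all $2^{O(s(n) \log s(n))}$ possible hypotheses must be dominated by the Hoeffding exponent $\exp(-\gamma^2 \cdot 2^n/8)$, which is precisely why the theorem fixes the trade-off $s = \gamma^2 \cdot 2^n/(\lambda n)$. A secondary subtlety, absent in the classical randomized setting of Oliveira--Santhanam, is that a quantum hypothesis is only accessible through measurement statistics, so $\alpha_i$ must be estimated via black-box sampling; fortunately, the Hoeffding argument still applies once one models the outcome of running $\widetilde U_f$ on a given $x$ as a $\{0,1\}$-valued random variable, so no additional machinery (e.g., a garbage-removal step) is needed here.
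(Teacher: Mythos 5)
Your proposal is correct and follows essentially the same route as the paper's proof: simulate the learner on the explicit truth table, empirically test the agreement of each output hypothesis against a threshold of roughly $1/2+\Theta(\gamma)$, reject iff some run passes the test, and establish density by counting quantum circuits of size $s(n)$ and applying a concentration bound over a random truth table, with the union bound calibrated against the Hoeffding exponent exactly as you describe (the constant-repetition trick for $\delta \leq 0.99$ also matches the paper's handling). The only cosmetic differences are your explicit sample-complexity/union-bound bookkeeping in place of the paper's single $T=\gamma^{-3}+100n$ test, and your explicit counting argument for condition (a), which the paper obtains implicitly from the learnability of $\mathcal{C}[n^k]$.
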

\begin{proof} First, we prove the result under the assumption that $\delta \leq 1/10$. Then we show that this can be easily relaxed to learners that only succeed with probability $1/100$, as in the statement of the~result.

  By our assumption, there exists a quantum learning algorithm $\calA^{\ket{f}}$ that runs in quantum time $s(n)$ for a function $f : \01^n \rightarrow \01$ computed by a circuit $f \in \calC[n^k]$, and with probability at least $9/10$, the algorithm $\calA^{\ket{f}}$ outputs a quantum hypothesis $U$ such that
  \begin{align}
  \label{eq:conditionofgoodleanringNP}
      \E_{x} \big[ \norm{\Pi_{f(x)} U\ket{x,0}}^2 \big]\geq \frac{1}{2} + \gamma(n) \;.
    \end{align}
  We show now how to construct a quantum natural property from $\calA^{\ket{f}}$. Let $\calD$ be the quantum algorithm that receives as input $\mathsf{tt}(f)$, the truth-table of $f$, and performs the following steps.
  \begin{enumerate}
    \item  Simulate $\calA^{\ket{f}}$ by
  answering all the queries $\sum_x \alpha_x \ket{x,b} \to \sum_x \alpha_x \ket{x,b \oplus f(x)}$ for arbitrary amplitudes $\{\alpha_x\}_x$ (this can be performed efficiently since $\mathcal{D}$ has access to the truth table of $f$).

    \item If the simulation of $\calA^{\ket{f}}$ does not output a hypothesis $U$, output $1$.
    \item Otherwise, set $T=\gamma(n)^{-3}+100n$ 
    and choose $x_1, \ldots, x_T \in \01^n$ uniformly at random, invoke $U\ket{x_i,0}$ and measure the first bit for every $i\in [T]$; denote by $b_i\in \01$ the output of the $i$th invocation.
    \item Output $0$ \emph{if and only if}
\begin{equation}
\label{eq:satisfyingconditionforD}
    \frac{1}{T}\sum_{i \in [T]} [b_i=f(x_i)] \geq \frac{1}{2} +
  \frac{1}{4}\cdot \gamma(n)\;.
\end{equation}
  \end{enumerate}

We prove that $\calD$ satisfies all three conditions of \cref{defn:qnaturalprop}. Constructiveness follows by noting that simulating $\calA^{\ket{f}}$ can be done in $\poly(2^n)$ time,
and $\calD$ performs $T=\gamma(n)^{-3}+100n$ invocations of the hypothesis $U$ and a check that requires summing over $T$ terms.

We proceed to show the hardness condition. Fix $f \in \mathcal{C}[n^k]$. We show that the algorithm $\calD$ accepts $\mathsf{tt}(f)$ with probability at most $1/3$. By our assumption, with probability at least $9/10$, the learning algorithm $\calA$ outputs $U$ that satisfies \cref{eq:conditionofgoodleanringNP}. Suppose this is the case. Then, for every $i \in [T]$, let $q_i = \Pr[b_i=f(x_i)]$ (where the probability is taken over the randomness of $x_i$ and algorithm $\calA$)  and note that $q_i\geq 1/2+\gamma(n)$. By the Chernoff bound (in Theorem~\ref{lem:chernoff}), we have
 \begin{align}
       \Pr\Big[\frac{1}{T}\sum_i [b_i=f(x_i)]< \frac{1}{2}+\frac{1}{4}\cdot \gamma(n)\Big]\leq
       \exp\left(-\frac{\gamma(n)^2T}{T(\frac{1}{2} + \gamma(n)) + \frac{T\gamma(n)}{12}}\right) \leq
       \exp\big(-O(1/\gamma(n))\big) \;,
   \end{align}
   where in the last inequality we use the fact that $T = \gamma(n)^{-3}+100n$.  Thus, with probability at least $(9/10) \cdot \big(1-\exp\big(-1/\gamma(n)\big)\big)\geq 2/3$, the algorithms $\calD$ rejects.
  
We proceed to show density. For that, let us fix an arbitrary quantum circuit $U$ of size $s(n)$. Also, for a fixed $f, x$, let $W_{f, x} = \Pr_{U} [U(x) = f(x)]]$. Then, we have that 
\[
\E_{f, x}[W_{f, x}] = \frac{1}{2},
\]
since for any fixed $x$, we have that 
\begin{align*}
    \E_{f}\Big[\E_{U} [U(x) = f(x)]\Big] &=
\frac{1}{2}\E_{f}
\left[\E_{U} \big[U(x) = 0|f(x) = 0\big]+ 
\E_{U} \big[U(x) = 1|f(x) = 1\big]\right] \\
&= 
\frac{1}{2}
\E_{f}
\left[\E_{U} \big[U(x) = 0\big]+ 
\E_{U} \big[U(x) = 1\big]\right] \\
&= \frac{1}{2},
\end{align*} 
where in the second equality we have that the fact 
$\Pr_{U} [U(x) = b]$ is independent of $f$. 
We now consider drawing an independent random function $f$ and use the Chernoff Bound (see Lemma~\ref{lem:chernoff2}) to bound the random variable $\sum_{x} W_{f, x}$. 
\begin{align}
    \Pr_f\Big[\Big|\sum_{x\in \01^n} W_{f,x}- \E_{f}\big[\sum_{x\in \01^n}W_{f,x}\big]\Big|\geq t\Big]\leq \exp\Big(-\frac{t^2}{2(t/3+2^n\cdot \frac{1}{2})}\Big).
\end{align}
Setting $t = \gamma(n)\cdot 2^n/8$, we have,
\begin{align*}
    & \Pr_f\Big[\Big|\sum_x W_{f,x}- 2^n/2  \big]\Big|\geq \gamma(n) \cdot 2^n/8\Big] \leq \exp\Big(-\frac{\gamma(n)^2 2^{2n}/64}{2(\gamma(n)\cdot 2^n/24+2^{n}/2)}\Big),
\end{align*}
which implies
$$
 \Pr_f\Big[\Big| \E_x [W_{f, x}]- 1/2 \big]\Big| \geq \gamma(n)/8 \Big] \leq \exp\Big(-\frac{\gamma(n)^2 2^{n}}{64(\gamma(n)/12+ 1)}\Big).
$$
This means that for every quantum circuit $U$ (regardless its size), the fraction of functions that can be computed by it with advantage $\gamma(n)/8$ on a random $x$
is at most  $\exp\Big(-\frac{\gamma(n)^2 2^{n}}{64(\gamma(n)/12+ 1)}\Big)$.

We now count the number of unitaries of size $s(n)$. Observe that there exists a constant $\alpha < 50$ such that
there exist at most $2^{\alpha s(n)\log(s(n))}$ circuits of size $s(n)$ if we fix, for example, the universal gateset $\{\textsf{H}, \textsf{Toff}\}$ as described in \Cref{sec:prelim-quantum}: this is true because one can define,  for the $i$-th gate, if the gate is a Hadamard or a Toffoli, which requires just a single bit, and it takes $O(\log{s(n)})$ bits to denote the qubits on which the gate acts.  Therefore, using a union bound, we have that the fraction of functions that can be $\gamma(n)/8$ approximated by $s(n)$ circuits is at~most
$$
\exp\Big(
-\frac{\gamma(n)^2 2^{n}}{64(\gamma(n)/12+ 1)}
+ \alpha s(n)\log s(n)\Big) \leq
\exp\Big(\gamma(n)^2 2^{n}\left(-\frac{1}{64(\gamma(n)/2+1)} + \frac{O(n)}{\lambda\cdot n}\right)\Big) < 1/2,
$$
where the final inequality used that $\lambda \geq 1$ is a sufficiently large constant as in the theorem statement. 
To conclude the proof, finally observe that for a function $f$ that cannot be approximated by an $s(n)$-sized circuit $U$, i.e., for all $s(n)$-sized circuits $U$
\[
\Pr_{x,U}[U(x) = f(x)] < \frac{1}{2} + \frac{\gamma(n)}{8},
\]
  by the Chernoff bound (in Theorem~\ref{lem:chernoff}), we have that by picking $x_1,\ldots,x_T$ uniformly at random and $b_i \sim U(x)$, we obtain
 \begin{align}
       \Pr\Big[\frac{1}{T}\sum_i [b_i=f(x_i)] \geq  \frac{1}{2}+\frac{1}{4}\cdot \gamma(n)\Big]\leq  \exp\big(-O(1/\gamma(n))\big) \;. 
   \end{align}
Hence the distinguisher $\calD$ accepts it with overwhelming probability. This completes the proof under our initial assumption that $\delta \leq 1/10$. %

In order to obtain a promise quantum natural property from learners that  succeed with probability $1/100$, we can modify the construction above as follows. The  quantum algorithm $\De$ simply runs steps (1)-(3) above $\ell = O(1)$ times, obtaining hypotheses $U_1, \ldots, U_\ell$, and outputs 0 if and only if at least one of them satisfies the condition in step (4). It is not hard to see that the analysis presented above still holds under this modification, provided that $\ell$ is a large enough constant.
\end{proof}

\subsection{Circuit lower bounds from quantum natural properties}
\label{sec:qnp-lb}

In this section, we show that promise quantum natural properties imply circuit lower bounds. This connection between quantum natural properties and circuit lower bounds is demonstrated using pseudorandom generators against uniform circuits.

The key technical component we shall need is the following theorem, which we prove in \Cref{sec:prg-construction}, that shows a quantum-secure $\PRG$ conditioned on the assumption that $\mathsf{PSPACE} \nsubseteq \mathsf{BQSUBEXP}$.

\begin{theorem}[Conditional PRG against uniform quantum computations]\label{t:main_PRG}
Suppose that 
there is a language $L \in \mathsf{PSPACE}$ and $\gamma > 0$ such that $L \notin \mathsf{BQTIME}[2^{n^\gamma}]$. Then, for some choice of constants $\alpha \geq 1$ and $\lambda \in (0,1/5)$, there is an infinitely often $(\ell, m, s, \varepsilon)$-generator $G = \{G_n\}_{n \geq 1}$, where 
$\ell(n) \leq n^\alpha$, $m(n) = \lfloor 2^{n^\lambda} \rfloor$,  $s(m) = 2^{n^{2\lambda}} \geq \mathsf{poly}(m)$ \emph{(}for any polynomial\emph{)}, and $\varepsilon(m) = 1/m$.
\end{theorem}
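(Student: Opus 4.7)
The plan is to follow the hardness-versus-randomness paradigm in its uniform Impagliazzo--Wigderson incarnation, lifted to the quantum setting. I begin with the Trevisan--Vadhan language $L^\star$ from Theorem~\ref{t:TV_language}, which is $\mathsf{PSPACE}$-complete, downward-self-reducible, and random-self-reducible. Since the hypothesis gives $L \in \mathsf{PSPACE} \setminus \mathsf{BQTIME}[2^{n^\gamma}]$ and $L^\star$ is $\mathsf{PSPACE}$-complete under polynomial-time reductions, one has $L^\star \notin \mathsf{BQTIME}[2^{n^{\gamma'}}]$ for some $\gamma' > 0$. For a free parameter $\lambda \in (0,1/5)$, the plan is to define $G_n^\lambda \colon \{0,1\}^{\ell(n)} \to \{0,1\}^{m(n)}$ with $m(n) = \lfloor 2^{n^\lambda} \rfloor$ in three layers: (i) form the $k$-wise direct product $(L^\star)^k$ on $kn$ input bits, for a suitable $k = \poly(n)$; (ii) convert it to a Boolean function via Goldreich--Levin, i.e.\ $\mathsf{Amp}(L^\star)(x_1,\ldots,x_k,r) = \langle (L^\star(x_1), \ldots, L^\star(x_k)), r \rangle$, which is defined on $\poly(n)$ bits; (iii) apply the Nisan--Wigderson construction with a standard combinatorial design to $\mathsf{Amp}(L^\star)$, of stretch $m(n)$ and seed length $\ell(n) = \poly(n)$. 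Since $L^\star \in \mathsf{PSPACE}$, each output bit of $G_n^\lambda$ is computable in deterministic time $2^{O(\ell(n))}$, yielding the required quickness.

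For security, suppose for contradiction that for \emph{every} admissible $\lambda$, $G_n^\lambda$ is not infinitely often $(s,\varepsilon)$-pseudorandom against uniform quantum computations for $s(m) = 2^{n^{2\lambda}}$ and $\varepsilon(m) = 1/m$. Then there is a uniform family $\{C_m\}$ of quantum distinguishers which, almost everywhere, $\varepsilon$-distinguishes $G_n^\lambda$ from random. I would then chain three quantum reconstruction procedures to extract a fast uniform quantum algorithm for $L^\star$: first, the quantum Nisan--Wigderson reconstruction (Section~\ref{sec:nw}) turns $C_m$ into a quantum circuit that weakly approximates $\mathsf{Amp}(L^\star)$ on random $\poly(n)$-bit inputs; next, the Adcock--Cleve quantum Goldreich--Levin reconstruction (Section~\ref{sec:gl}) converts this into a circuit that agrees with the direct product $(L^\star)^k$ with inverse-polynomial probability; finally, the quantum IJKW direct-product decoder (Section~\ref{sec:IJKW}) extracts a quantum circuit that computes $L^\star$ on random inputs of length $n$ with probability $1 - 1/\poly(n)$. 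Tracing the circuit-size budget through the three steps, one obtains an algorithm for $L^\star$ of quantum time $2^{n^{O(\lambda)}}$. Since $\lambda$ can be taken arbitrarily small, this forces $L^\star \in \mathsf{BQTIME}[2^{n^{o(1)}}]$, contradicting the hypothesis and pinning down a working $\lambda < 1/5$.

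The subtle part is removing the non-uniform advice that classically accompanies each reconstruction step (NW error pattern, GL hardcore advice, and the list index returned by IJKW), so that the final contradiction lies in a \emph{uniform} quantum time class. Here I would follow the Impagliazzo--Wigderson recursion enabled by downward-self-reducibility and random-self-reducibility: inductively build quantum circuits $Q_1, \ldots, Q_n$ computing $L^\star$ at input length $i$, where $Q_i$ invokes (boosted versions of) $Q_{i-1}$ together with the reductions from the distinguisher, and random-self-reducibility is used to amplify intermediate circuits to overwhelming correctness before they are nested. The immediate quantum difficulty is that to \emph{deterministically} output $\code(Q_i)$ in time $2^{n^{O(\lambda)}}$ one would need to simulate $Q_{i-1}$ on various inputs, which a deterministic algorithm cannot do; as sketched in Section~\ref{sec:self_reduc}, I would instead have $Q_i$ itself manipulate $\code(Q_{i-1})$ through the universal quantum circuit from Section~\ref{sec:prelim-quantum}, pushing the simulation to the object level and keeping only the outer construction of the codes deterministic.

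I expect the principal obstacle to be the IJKW step. The classical local list-decoder for direct-product codes crucially uses that, for a deterministic distinguisher, correctness on each input is a $0/1$ event, so agreement can be redistributed by fixing the random tape; a quantum (or inherently probabilistic) oracle may instead assign nontrivial success probabilities to every input, with no global ``fixed'' realization to average over. To handle this I would formulate and analyze the decoder entirely in the inherently-probabilistic-circuit model of Section~\ref{ss:inherently_probabilistic}, then transfer the conclusion to quantum circuits accessed by classical queries via Lemma~\ref{lem:quantum_as_inherently_random}. Controlling the compounded error and size losses across the three reconstruction layers so that the final $L^\star$-algorithm still runs in time $2^{n^{O(\lambda)}}$, while simultaneously maintaining uniformity via the code-level recursion above, is what ultimately constrains $\lambda < 1/5$ and fixes the quantitative parameters $\ell, m, s, \varepsilon$ claimed in the statement.
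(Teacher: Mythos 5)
Your proposal matches the paper's proof essentially step for step: the same generator (the Nisan--Wigderson construction applied to the Goldreich--Levin Booleanization of the $k$-fold direct product of the Trevisan--Vadhan language $L^\star$), the same chained quantum NW/GL/IJKW reconstructions in the security analysis, the same Impagliazzo--Wigderson-style recursion exploiting downward and random self-reducibility with code-level manipulation of quantum circuits to keep the final algorithm uniform, and the same treatment of the IJKW step via inherently probabilistic circuits transferred to quantum circuits through classical-query access. This is precisely the route the paper takes, so the proposal is correct and not materially different.
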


We shall also need the following diagonalization lemma.
\begin{lemma}[[Diagonalization Lemma, see e.g.~{\cite[Corollary 2]{DBLP:conf/coco/OliveiraS17}}]
\label{lem:diag}
For every $k \in \mathbb{N}$, there exists a language $L_k \in \mathsf{PSPACE}$ such that $L_k$ cannot be computed by Boolean circuits of size $n^k$ whenever $n$ is sufficiently large.
\end{lemma}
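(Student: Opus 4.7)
The approach is to construct $L_k$ by padding a counting-based diagonalization into a polynomial-size object, and then exploit the fact that $\mathsf{PSPACE}$ can simulate constant-alternation quantification over polynomial-length strings. Fix a sufficiently large absolute constant $C$, and for each $n$ define $m = m(n) \eqdef \lceil (k+2)\log n \rceil$ and $s = s(m) \eqdef \lfloor 2^m/(Cm) \rfloor$. Standard circuit-counting gives at most $2^{O(s\log(s+m))} = 2^{O(2^m/C)}$ circuits of size $s$ on $m$ input bits, which is strictly less than $2^{2^m}$ for $C$ large enough. Hence there exists a lexicographically smallest truth table $T_m \in \{0,1\}^{2^m}$ that is not computed by any size-$s$ circuit on $m$ inputs; let $g_m \colon \{0,1\}^m \to \{0,1\}$ be the corresponding function. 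Define $L_k$ by $L_k(x) \eqdef g_{m(n)}(x_1, \ldots, x_{m(n)})$ for $x \in \{0,1\}^n$.

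To show $L_k \in \mathsf{PSPACE}$, observe that $m(n) = O(\log n)$ keeps every relevant object polynomial in~$n$: truth tables on $m(n)$ bits have length $2^{m(n)} = O(n^{k+2})$, and size-$s(m(n))$ circuits have descriptions of length $\poly(n)$. Computing the predicate ``$g_{m(n)}(y) = 1$'' is then captured by the constant-alternation formula
\[
\exists T\;\forall(C, T')\;\exists(y', C')\;\forall y''\;:\;\psi(T, T', C, C', y', y'', y),
\]
where $\psi$ enforces (i) $T[y] = 1$; (ii) $T[y'] \neq C(y')$; and (iii) either $T' \geq T$ in lex order or $T'[y''] = C'(y'')$. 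All quantified strings have length $\poly(n)$, so this predicate sits in $\Sigma_4^P \subseteq \mathsf{PH} \subseteq \mathsf{PSPACE}$.

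For the lower bound, suppose toward contradiction that some circuit $D_n$ of size at most $n^k$ computes $L_k$ on $\{0,1\}^n$ for all sufficiently large $n$. Hard-wiring the last $n - m(n)$ input bits of $D_n$ to $0$ yields a circuit on $m(n)$ inputs of size at most $n^k$ computing $g_{m(n)}$. By the choice $m(n) = \lceil (k+2)\log n\rceil$, we have $s(m(n)) \geq n^{k+2}/(2C(k+2)\log n) \geq n^k$ for $n$ sufficiently large, so this derived circuit has size at most $s(m(n))$, contradicting the defining property of $g_{m(n)}$. No step presents a serious obstacle; the only care required is in calibrating $C$ and the padding constant $k+2$ so that both the counting step and the final size comparison $n^k \leq s(m(n))$ go through, and in checking that every quantifier in the alternating formula operates on a string of length polynomial in~$n$.
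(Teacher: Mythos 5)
Your proof is correct. The paper does not prove this lemma itself (it imports it by citation from Oliveira--Santhanam), and your argument is exactly the standard one behind that result: pick $m(n)=O(k\log n)$ so that truth tables on $m$ bits and size-$s(m)$ circuit descriptions are polynomial in $n$, use circuit counting to get a lexicographically first hard function $g_m$, pad it to length-$n$ inputs, and kill a hypothetical size-$n^k$ circuit by restricting the padding bits; I checked your $\Sigma_4$ formula and it does correctly pin down the lex-minimal hard truth table, so membership follows from $\mathsf{PH}\subseteq\mathsf{PSPACE}$. Two minor remarks: the alternating-quantifier detour is not needed, since a $\mathsf{PSPACE}$ machine can simply enumerate truth tables of length $2^{m(n)}=\mathsf{poly}(n)$ in lexicographic order and, for each, enumerate all size-$s(m(n))$ circuits and evaluate them on all $2^{m(n)}$ inputs, all within polynomial space; and the contradiction is best stated pointwise (for each sufficiently large $n$, a size-$n^k$ circuit for $L_k$ on length-$n$ inputs already yields the contradiction), which is what your restriction argument actually gives and is what the lemma's ``whenever $n$ is sufficiently large'' conclusion requires.
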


Recall that $\mathsf{E} \eqdef \mathsf{DTIME}[2^{O(n)}]$ and $\mathsf{BQE} \eqdef \mathsf{BQTIME}[2^{O(n)}]$. Our goal for the rest of this section is to establish the following  %
connection between quantum natural properties and circuit lower bounds.

\begin{theorem}[Connection between natural properties and lower bounds]
\label{thm:quantum_tight_connection}
For every circuit class $\mathcal{C}$ that is closed under restrictions of input variables to constants $0$ and $1$, at least one of the following lower bounds holds:
\begin{enumerate}
\item For every $k \in \mathbb{N}$ there exists a language $L \in \mathsf{BQE}$ that cannot be computed by general Boolean circuits of size $n^k$, for sufficiently large input lengths $n$.
\item There exists a universal constant $\upsilon \geq 1$ for which the following holds. If there is a promise quantum natural property against $\mathcal{C}[n^a]$, for $a > 1$, then there exists a language $L \in \mathsf{E}$ that cannot be computed by $\mathcal{C}$-circuits of size $n^{a/\upsilon}$, for infinitely many input lengths $n$.
\end{enumerate}
\end{theorem}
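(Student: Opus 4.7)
The plan is a win-win argument based on whether $\mathsf{PSPACE} \subseteq \mathsf{BQSUBEXP}$. The first case is immediate: since $\mathsf{BQSUBEXP} \subseteq \mathsf{BQE}$, if $\mathsf{PSPACE} \subseteq \mathsf{BQSUBEXP}$ then $\mathsf{PSPACE} \subseteq \mathsf{BQE}$, and Lemma~\ref{lem:diag} directly provides, for every $k$, a language in $\mathsf{PSPACE} \subseteq \mathsf{BQE}$ that is not computable by general Boolean circuits of size $n^k$ on all large input lengths, establishing conclusion~(1). So I can assume henceforth that $\mathsf{PSPACE} \nsubseteq \mathsf{BQSUBEXP}$, i.e., the hardness hypothesis of Theorem~\ref{t:main_PRG} holds.

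Under this hypothesis, invoke Theorem~\ref{t:main_PRG} to obtain a quick infinitely often $(\ell, m, s, \varepsilon)$-generator $G = \{G_n\}_{n \geq 1}$ with $\ell(n) \leq n^\alpha$, $m(n) = \lfloor 2^{n^\lambda} \rfloor$, $s(m) \geq \mathsf{poly}(m)$ for any polynomial, and $\varepsilon(m) = 1/m$, where $\alpha \geq 1$ and $\lambda \in (0, 1/5)$ are universal constants. Set $\upsilon \eqdef \lceil 2\alpha/\lambda \rceil$, which is also a universal constant. Given a promise quantum natural property $\mathcal{D}$ useful against $\mathcal{C}[n^a]$, amplify $\mathcal{D}$ by a constant number of independent parallel repetitions followed by majority vote to obtain a uniform quantum algorithm $\mathcal{D}'$, still of polynomial size in its input length $m(n)$, that accepts strings in $\Gamma_{n^\lambda} \subseteq \{0,1\}^{m(n)}$ with probability $\geq 9/10$ and accepts truth tables of functions in $\mathcal{C}_{n^\lambda}[(n^\lambda)^a]$ with probability $\leq 1/10$. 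The next step is to define the hard language $L$. On an input $z$ of length $N = \ell(n) + n^\lambda$ (which, for large $n$, uniquely determines $n$ from $N$), parse $z = (w,x)$ with $|w| = \ell(n)$ and $|x| = n^\lambda$, compute $y \eqdef G_n(w) \in \{0,1\}^{m(n)}$ deterministically in time $2^{O(\ell(n))}$, and output $\mathsf{fnc}^y(x)$ i.e., the bit of $y$ indexed by $x$; on input lengths that are not of this form, set $L(z) = 0$. Since $N = \Theta(n^\alpha)$, the overall running time is $2^{O(N)}$, so $L \in \mathsf{E}$.

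To establish the lower bound, suppose for contradiction that $L$ is computable by $\mathcal{C}$-circuits of size $N^{a/\upsilon}$ on all large input lengths $N$. By closure of $\mathcal{C}$ under restrictions, for every seed $w$ the function $x \mapsto L((w,x)) = \mathsf{fnc}^{G_n(w)}(x)$ on $n^\lambda$ variables lies in $\mathcal{C}_{n^\lambda}[N^{a/\upsilon}]$. Since $N \leq 2n^\alpha$ and $\upsilon \geq 2\alpha/\lambda$, for large $n$ we have $N^{a/\upsilon} \leq (2n^\alpha)^{a/\upsilon} \leq (n^\lambda)^a$, so each $\mathsf{fnc}^{G_n(w)} \in \mathcal{C}_{n^\lambda}[(n^\lambda)^a]$. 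Consequently $\Pr_{w,\mathcal{D}'}[\mathcal{D}'(G_n(w)) = 1] \leq 1/10$, while the density property gives $\Pr_{y \sim \mathcal{U}_{m(n)}, \mathcal{D}'}[\mathcal{D}'(y) = 1] \geq (1/2)(9/10) = 9/20$. The distinguishing advantage $\geq 9/20 - 1/10 = 7/20$ vastly exceeds $\varepsilon(m(n)) = 1/m(n)$, contradicting the pseudorandomness of $G$ at all but finitely many of the infinitely many $n$ witnessing security. Therefore $L$ is not computable by $\mathcal{C}$-circuits of size $N^{a/\upsilon}$ on infinitely many $N$, which is conclusion~(2).

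The main obstacle is aligning the parameter scales: the seed length $\ell(n) \leq n^\alpha$, the stretch $m(n) = 2^{n^\lambda}$, the number of variables $n^\lambda$ of the encoded truth tables, and the input length $N = \Theta(n^\alpha)$ of $L$. The universal constant $\upsilon$ must be chosen large enough that $\mathcal{C}$-circuits of size $N^{a/\upsilon}$ for $L$, after fixing the seed bits by restriction, yield $\mathcal{C}$-circuits of size at most $(n^\lambda)^a$ for each $\mathsf{fnc}^{G_n(w)}$, so that the usefulness of $\mathcal{D}$ against $\mathcal{C}_{n^\lambda}[(n^\lambda)^a]$ can be applied. This forces $\upsilon \geq \alpha/\lambda$, and since $\alpha$ and $\lambda$ come from Theorem~\ref{t:main_PRG} as universal constants, so does $\upsilon$. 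Beyond this bookkeeping, the amplification of the promise natural property and the handling of the infinitely-often guarantee are routine.
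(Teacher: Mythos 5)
Your proposal is correct and follows essentially the same route as the paper's proof: a win-win case split on $\mathsf{PSPACE} \subseteq \mathsf{BQSUBEXP}$, with Lemma~\ref{lem:diag} handling the first case and, in the second, the conditional PRG of Theorem~\ref{t:main_PRG} combined with the (amplified) promise quantum natural property used as a uniform distinguisher against a seed-indexed language in $\mathsf{E}$, with $\upsilon$ on the order of $2\alpha/\lambda$. The only deviations are cosmetic: you amplify the natural property to constant error rather than the paper's near-perfect amplification (both suffice since the advantage only needs to beat $1/m$), and your encoding drops the paper's $1^t$ padding, implicitly relying on $n \mapsto \ell(n) + \lfloor n^\lambda \rfloor$ being injective for large $n$ (true for the concrete generator, and handled in the paper by the unary prefix), while tacitly applying the natural property to the $2^{\lfloor n^\lambda\rfloor}$-bit prefix of the $m(n)$-bit string, exactly as the paper does.
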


\begin{proof}
The proof is via a win-win argument. Starting with the easy case, suppose that for every $\gamma > 0$ it holds that $\mathsf{PSPACE} \subseteq \mathsf{BQTIME}[2^{n^\gamma}]$. Fix $k \in \mathbb{N}$. By~\Cref{lem:diag}, there exists a language $L_k$ that cannot be computed by Boolean circuits of size $n^k$ (for any sufficiently large input length $n$). By our assumption, the language $L_k$ can be computed by a quantum algorithm of time complexity $2^{n^\gamma}$, and so $L_k \in \mathsf{BQE}$, and the lower bound in Item~(1) of \cref{thm:quantum_tight_connection} holds.

Otherwise, suppose that there exists a language $L \in \mathsf{PSPACE}$ and $\gamma > 0$ such that $L \notin \mathsf{BQTIME}[2^{n^\gamma}]$. We show that this implies the lower bound in Item~(2). Indeed, under this assumption, by \cref{t:main_PRG}, there exist constants $\alpha \geq 1$ and $\lambda \in (0,1/5)$ such that there is an infinitely often $(\ell, m, s, \varepsilon)$-generator $\mathcal{G} = \{G_n\}_{n \geq 1}$, where 
$\ell(n) \leq n^\alpha$, $m(n) = \lfloor 2^{n^\lambda} \rfloor$,  $s(m) = 2^{n^{2\lambda}} \geq \mathsf{poly}(m)$, and $\varepsilon(m) = 1/m$; that is,
\begin{enumerate}[label=(\roman*)]
    \item Each $G_n$ is a function mapping $\{0,1\}^{\ell(n)}$ to $\{0,1\}^{m(n)}$.
    \item There is a deterministic algorithm $A$ that when given $1^n$ and $x \in \{0,1\}^{\ell(n)}$ runs in time $O(2^{\ell(n)})$ and outputs $G_n(x)$.
    \item For each $n \geq 1$, let $\mathcal{D}_{m(n)} = G_n(\mathcal{U}_{\ell(n)})$ be the distribution over $m(n)$-bit strings induced by evaluating $G_n$ on a random $\ell(n)$-bit string. Then the corresponding ensemble $\{\mathcal{D}_{m(n)}\}_{n \geq 1}$ is infinitely often $(s, \varepsilon)$-pseudorandom against uniform quantum circuits.
\end{enumerate}

We define a language $L^\mathcal{G}$ using the generator $\mathcal{G} = \{G_n\}_{n \geq 1}$ and argue that it is hard to decide this language due to the \emph{existence} of a quantum natural property, as assumed in Item (2) of the statement.
Our language $L^\mathcal{G}$ is defined by the following deterministic algorithm $D$:

\begin{enumerate}
    \item \textsf{Encoding check:} On an input $u \in \{0,1\}^n$, reject if $u$ is not of the form
    \begin{equation*}
        u = (1^t, w, x)
    \end{equation*}
    for some $t \in \mathbb{N}$ such that $|w| = \ell(t)$ and $|x| = t'$, where we denote $t' = \lfloor t^\lambda \rfloor$. (Note that $2^{t'}$ is the maximal power of $2$ that is not greater than $m(t)$.)\footnote{Notice that there exists at most one value of $t$ for which these constraints hold.}
    \item \textsf{PRG invocation:} Compute $ G_t(w) \in \{0,1\}^{m(t)}$ and let $g_w$ denote its leftmost $2^{t'}$ bits.
    \item \textsf{Emulate computation:} Let $h_w \colon \{0,1\}^{t'} \to \{0,1\}$ be the function determined by $g_w$ (i.e., $\mathsf{fnc}^{g_w}$), and output $h_w(x)$.
\end{enumerate}
To prove the lower bound in Item~(2) of \cref{thm:quantum_tight_connection}, we first observe that $D$ runs in deterministic time complexity $2^{O(n)}$, and hence $L^\calG \in \mathsf{E}$. To see that, observe that by construction, the runtime of $D$ is dominated by Step (2), which computes $G_t(y)$ given a correctly encoded input. Note that for valid inputs we have $n \geq t + \ell(t) + t'$. As $\ell(t) < n$, the time complexity of $D$ is at most $O(2^{n})$.

To complete the proof, we need to prove the following lemma, which shows that the language $L^\mathcal{G}$ is hard for $\mathcal{C}$-circuits. Recall that we are under the assumption that there is a promise quantum natural property against $\mathcal{C}[n^a]$, where $a > 1$ is arbitrary  but can be assumed to be sufficiently large  without loss of generality.

\begin{lemma}
There exists a constant $\upsilon = \upsilon(\alpha,\lambda) \geq 1$, where $(\alpha,\lambda)$ are the parameters of the $\PRG$ $\mathcal{G}$, such that $L^\mathcal{G} \notin \mathcal{C}\!\left[n^{\frac{a}{\upsilon}}\right]$ for infinitely many input lengths $n$.
\end{lemma}

\begin{proof}
Set $\upsilon = 2 \cdot  \frac{\alpha}{\lambda}$, and let $a>1$. We prove the lemma by contradiction: suppose that there exists a finite set $\mathcal{N}$ such that for every $n \in \mathbb{N} \setminus \mathcal{N}$ it holds that $L^\mathcal{G} \in \mathcal{C}\!\left[n^{\frac{a}{\upsilon}}\right] $ over length-$n$ inputs. We argue next that this circuit upper bound and the existence of a quantum natural property as in the statement of the theorem lead to a violation of property (\emph{iii}) of $\mathcal{G}$ stated above.

Let $F$ be a quantum algorithm that gets as input a string of length $N = 2^{n}$ and computes a (promise) quantum natural property in the sense of Definition \ref{defn:qnaturalprop} in $\poly(N)$ time. Recall that~$F$ accepts a dense set $\Gamma_{n}$ of inputs and rejects functions computable by $\calC$-circuits of size $n^a$, both with probability at least $2/3$. Performing error reduction on $F$ using amplitude amplification, we can assume without loss of generality that on each string $z \in \Gamma_{n} \subseteq \{0,1\}^N$, we have $\norm{\Pi_{1} F\ket{z,0}}^2 \geq 1 - 2^{-2N}$, and on each string $z \in \{0,1\}^N$ such that $\mathsf{fnc}^z \in \mathcal{C}_{n}[n^a]$, we have $\norm{\Pi_{1} F\ket{z,0}}^2 \leq 2^{-2N}$. 

We construct a deterministic algorithm $A(1^{m})$ to generate quantum circuits that  violate property (\emph{iii}) of $\mathcal{G}$ (parameterized here by a parameter $t$). 
On an input $1^m$, where $m = m(t) = \lfloor 2^{t^\lambda}\rfloor$, 
$A$ writes $m = 2^{t'} + r$, where $r \in [0,2^{t'} - 1]$. It outputs a quantum circuit $C_F$  on $m$ input bits %
that computes according to $F$ applied to the $2^{t'}$-bit prefix of the input and ignores the remaining $r$ input bits. To complete the proof of the lemma, we show that $A(1^m)$ outputs a quantum circuit $C_F$ of size at most $s(m) = 2^{t^{2\lambda}}$ that $\Omega(1)$-distinguishes $G_t(\mathcal{U}_{\ell(t)})$ from $\mathcal{U}_{m(t)}$ for every large enough $t$, thereby violating property $(iii)$.  %
In other words,
\begin{align}
\left| \E_{y \sim \mathcal{U}_{m(t)}} [C_F(y) = 1] - \E_{\beta \sim \mathcal{U}_{\ell(t)}} [C_F(G_t(\beta) ) = 1] \right| \geq \Omega(1). \label{eq:natpropdistinguisher}
\end{align}

For each $m = 2^{t'} + r$ of the form defined before, %
\begin{itemize}
\item[(a)] $C_F$ rejects with overwhelming probability every $m$-bit string $z$ whose $2^{t'}$-bit prefix encodes a function $f \in \mathcal{C}[(t')^a]$.
\item[(b)] $C_F$ accepts with overwhelming probability every $m$-bit string $z$ whose $2^{t'}$-bit prefix is in $\Gamma_{t'}$.
\item[(c)] $|C_F| = O(\poly(2^{t'})) < s(m)$ as its run time is dominated by the size of $F$ which runs in $\poly(2^{t'})$ time on inputs of size $2^{t'}$.
\end{itemize} 

Note that, as a consequence of the density of $\Gamma_{t'}$ and (b), we have $$
\E_{z \sim \mathcal{U}_{m(t)}}[C_F(z) = 1] = \E_{y \sim \mathcal{U}_{2^{(t')}}}[F(y) = 1] \geq \frac{1}{2} \cdot \E_{y \sim \mathcal{U}_{2^{(t')}}} \big[ F(y) = 1 \mid y \in \Gamma_{t'} \big] \geq \frac{1}{2} \cdot ( 1 - 2^{-2N}).
$$

To satisfy~\Cref{eq:natpropdistinguisher}, it is enough to argue that for each seed $w \in \{0,1\}^{\ell(t)}$ and the string $\tau \eqdef G_t(w)$, we have $\Pr_{\tau} [C_F(\tau) = 1] \leq 2^{-2N}$. According to Item~(a) above, it suffices to show that $g_{\tau}$, the $2^{t'}$-prefix of $\tau$, is the truth-table of a function $h_w \in \mathcal{C}[(t')^a]$. For this we rely on the assumption that $L^\mathcal{G} \in \mathcal{C}[n^{\frac{a}{\upsilon}}]$ for every input length $n \in \mathbb{N} \setminus \mathcal{N}$. Details follow. %

Let $n = t + \ell(t) + t' + O(1) \leq 2 t^\alpha < 2 (t')^{\alpha/\lambda}$ such that $n \in \mathbb{N} \setminus \mathcal{N}$, where $n$ is an input length of $L^\mathcal{G}$ that contains inputs of the following form: $(1^t, w, x)$, where $x$ is an arbitrary $t'$-bit string, and $w$ is the seed that generates $\tau$. Let $g_{\tau}$ be the $2^{t'}$-bit prefix of $\tau$ and $h_{\tau} \eqdef \mathsf{fnc}^{g_{\tau}}$ be a Boolean function on $t' < n$ inputs bits. Then, by the assumption that $L^\mathcal{G} \in \mathcal{C}[n^{\frac{a}{\upsilon}}]$ where $\calC$ is a circuit class closed under restriction of input variables to constants $0$ and $1$ (see Item (ii) in~\Cref{sec:prelim-circuits}), we obtain a $\mathcal{C}$-circuit that computes $h_{\tau}(x)$.  Furthermore, this circuit is of size at most 
\[
n^{a/\upsilon} \; \leq \left(2 t^\alpha \right)^{a/\upsilon} \; < \left( 2 \left(t'^{\frac{\alpha}{\lambda}} \right) \right)^{a/\upsilon} \leq 2^{a/\upsilon} \cdot (t')^{\frac{\alpha}{\lambda} \cdot \frac{a}{\upsilon}} < (t')^{2 \cdot \frac{\alpha}{\lambda} \cdot \frac{a}{\upsilon}} = (t')^a,
\]
where we have used that $2 z \leq z^2$ for $z \geq 2$ in the penultimate inequality and $2 \cdot \frac{\alpha}{\lambda} \cdot \frac{a}{\upsilon} \leq a$ as $\upsilon = 2 \cdot \frac{\alpha}{\lambda}$ in the last equality. %
\end{proof}
This concludes the proof of \cref{thm:quantum_tight_connection}.
\end{proof}

As an immediate corollary, we obtain the following theorem.

\begin{corollary}[Lower bounds from quantum natural properties]
\label{cor:lbs-from-qnp}
Let $\mathcal{C}$ be a circuit class. %
Suppose that for every $a \geq 1$ there is a promise quantum natural property against $\mathcal{C}[n^a]$. Then for every constant $b \geq 1$ we have $\mathsf{BQE} \nsubseteq \mathcal{C}[n^b]$.
\end{corollary}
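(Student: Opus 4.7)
The corollary follows by plugging the hypothesis into Theorem \ref{thm:quantum_tight_connection} and doing a small case analysis. Fix an arbitrary $b \geq 1$; the goal is to exhibit a language $L \in \mathsf{BQE}$ such that $L \notin \mathcal{C}[n^b]$.

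First, I would consider Case (1) of Theorem \ref{thm:quantum_tight_connection}, namely that for every $k \in \mathbb{N}$ there is a language $L_k \in \mathsf{BQE}$ not computable by general Boolean circuits of size $n^k$ on all sufficiently large input lengths. By the simulation assumption on $\mathcal{C}$ recorded in Section \ref{sec:prelim-circuits} (Item $(i)$), any function computed by a $\mathcal{C}$-circuit of size $n^b$ can also be computed by a general Boolean circuit of size $n^{Cb}$ for some universal constant $C \geq 1$ depending only on the class $\mathcal{C}$. Choosing $k = Cb + 1$, we obtain a language $L_k \in \mathsf{BQE}$ that lies outside $\mathsf{SIZE}[n^k]$ and hence outside $\mathcal{C}[n^b]$ for all large enough $n$.

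Next, I would handle Case (2) of Theorem \ref{thm:quantum_tight_connection}, where the universal constant $\upsilon \geq 1$ is provided by that theorem. Using the hypothesis of the corollary, instantiate the promise quantum natural property against $\mathcal{C}[n^a]$ with $a \eqdef \max(\upsilon \cdot b, 2)$ (so in particular $a > 1$ and $a / \upsilon \geq b$). Theorem \ref{thm:quantum_tight_connection} then yields a language $L' \in \mathsf{E}$ that is not computable by $\mathcal{C}$-circuits of size $n^{a/\upsilon}$ on infinitely many input lengths, so in particular not by $\mathcal{C}$-circuits of size $n^b$ on infinitely many input lengths. Since any deterministic computation running in time $2^{O(n)}$ can be simulated by a quantum algorithm within the same time bound, $\mathsf{E} \subseteq \mathsf{BQE}$, giving $L' \in \mathsf{BQE} \setminus \mathcal{C}[n^b]$.

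In either case, we conclude $\mathsf{BQE} \nsubseteq \mathcal{C}[n^b]$, which is the statement of the corollary. There is no genuine technical obstacle here beyond keeping the two closure facts straight: Case (1) needs the polynomial simulation of $\mathcal{C}$-circuits by general Boolean circuits in order to downgrade an unrestricted lower bound to a $\mathcal{C}$-lower bound, while Case (2) simply needs $a/\upsilon \geq b$ and the inclusion $\mathsf{E} \subseteq \mathsf{BQE}$. All the heavy lifting is absorbed into Theorem \ref{thm:quantum_tight_connection}.
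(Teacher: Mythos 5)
Your proposal is correct and follows essentially the same route as the paper: a case split on Theorem \ref{thm:quantum_tight_connection}, using the polynomial simulation of $\mathcal{C}$-circuits by general Boolean circuits in Case (1), and instantiating the natural property at exponent $a \approx \upsilon b$ together with $\mathsf{E} \subseteq \mathsf{BQE}$ in Case (2). Your extra care in taking $a = \max(\upsilon b, 2)$ to guarantee $a > 1$ is a harmless refinement of the paper's choice $a = b\upsilon$.
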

\begin{proof}
From~\Cref{thm:quantum_tight_connection}, note that there are two possibilities and at least one of them holds. 

In the first case, for every $k \in \mathbb{N}$, there exists a language $L \in \BQE$ that cannot be computed by general Boolean circuits of size $n^k$. Since by assumption $\mathcal{C}$-circuits can be simulated by general Boolean circuits with only a polynomial blowup on circuit size (Section \ref{sec:prelim-circuits}), it follows in this case that for every $b \geq 1$ we have $\BQE \not\subseteq \calC[n^b]$.

In the second case, given $b \geq 1$, we let $a = b \upsilon$. Since by assumption there is a natural property against $\mathcal{C}[n^a]$, the second case of Theorem \ref{thm:quantum_tight_connection} gives a language in $\mathsf{E}$ that is not in $\mathcal{C}[n^{a/\upsilon}]$. Since $\mathsf{E} \subseteq \mathsf{BQE}$, the result follows.
\end{proof}

\subsection{Non-trivial quantum learning yields non-uniform  circuit lower bounds}
\label{sec:learning-to-lbs}

\begin{theorem}[\cref{thm:main}, formally restated]
There is a universal constant $\lambda \geq 1$ for which the following holds. Let $\mathfrak{C}$ be a concept class. Let $\gamma \colon \mathbb{N} \to [0,1/2]$ and $T:\mathbb{N}\rightarrow \mathbb{N}$ be arbitrary constructive functions with $\gamma(n) \geq \lambda\cdot 2^{-n/2}$ and $T(n)\leq \gamma(n)^2\cdot 2^n/\lambda n$ for large enough $n$. Suppose that, for every $k \geq 1$, the class $\mathfrak{C}[n^k]$ can be learned in quantum time $T(n)$ with probability~$\geq~1/100$ and advantage $\gamma(n)$. Then, for every $k \geq 1$, we have $\mathsf{BQE} \nsubseteq \mathfrak{C}[n^k]$.
\end{theorem}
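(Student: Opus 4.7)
The plan is to obtain the theorem as an immediate chaining of the two main modular results already established in this section: \Cref{lem:learning_to_natural}, which converts a non-trivial quantum learner into a promise quantum natural property, and \Cref{cor:lbs-from-qnp}, which converts a family of such promise quantum natural properties (one for each polynomial size bound) into a $\mathsf{BQE}$ circuit lower bound. Concretely, I would fix the universal constant $\lambda \geq 1$ provided by \Cref{lem:learning_to_natural}, and reuse the same $\lambda$ as the constant appearing in the statement of the theorem, so that the quantitative hypotheses $\gamma(n) \geq \lambda\cdot 2^{-n/2}$ and $T(n)\leq \gamma(n)^2\cdot 2^n/(\lambda n)$ match verbatim the parameter requirements of \Cref{lem:learning_to_natural}.

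The first step is to fix an arbitrary $k \geq 1$ and invoke the learning hypothesis: the class $\mathfrak{C}[n^k]$ is learnable in quantum time $T(n) \leq \gamma(n)^2\cdot 2^n/(\lambda n) = s(n)$, with success probability at least $1/100$ (so the failure probability $\delta$ satisfies $\delta(n) \leq 0.99$) and with expected agreement at least $1/2 + \gamma(n)$ (so the error $\varepsilon$ satisfies $\varepsilon(n) \leq 1/2 - \gamma(n)$). These are exactly the hypotheses of \Cref{lem:learning_to_natural}, so we conclude that there is a promise quantum natural property against $\mathfrak{C}[n^k]$. Since $k$ was arbitrary, this holds for every $k \geq 1$, i.e., for every $a \geq 1$ there is a promise quantum natural property against $\mathfrak{C}[n^a]$.

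The second step is to feed this into \Cref{cor:lbs-from-qnp}, using the standing assumption that $\mathfrak{C}$ is a reasonable circuit class (closed under restrictions and simulable by polynomial-size general Boolean circuits, as discussed in \Cref{sec:prelim-circuits}). \Cref{cor:lbs-from-qnp} then yields, for every $b \geq 1$, that $\mathsf{BQE} \nsubseteq \mathfrak{C}[n^b]$, which is exactly the conclusion of the theorem.

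There is essentially no obstacle here beyond bookkeeping: all the heavy lifting (the construction of the promise quantum natural property from a non-trivial learner, the win--win structure involving $\mathsf{PSPACE} \subseteq \mathsf{BQSUBEXP}$ versus the conditional $\mathsf{PRG}$ from \Cref{t:main_PRG}, and the diagonalization argument) has already been done. The only subtle point to check is that a single universal constant $\lambda$ can simultaneously serve in the statement of the theorem and in the hypotheses of \Cref{lem:learning_to_natural}, which is fine because \Cref{lem:learning_to_natural} supplies a fixed universal $\lambda$ independent of $\gamma$, $T$, $k$, and of the class $\mathfrak{C}$.
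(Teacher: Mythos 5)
Your proposal is correct and matches the paper's own proof essentially verbatim: the paper likewise fixes the universal constant from \Cref{lem:learning_to_natural}, translates the learning hypothesis into the parameters $(\delta \leq 0.99,\ \varepsilon \leq 1/2 - \gamma)$ to obtain a promise quantum natural property against $\mathfrak{C}[n^k]$ for every $k$, and then applies \Cref{cor:lbs-from-qnp} to conclude $\mathsf{BQE} \nsubseteq \mathfrak{C}[n^b]$ for every $b \geq 1$. No gaps.
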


\begin{proof}
This proof combines results from the previous two sections to relate quantum learning algorithms to circuit lower bounds through the existence of quantum natural properties.
 The assumptions for the algorithm that learns $\mathfrak{C}[n^k]$ imply that it runs in time at most $T(n) \leq \gamma(n)^2 \cdot 2^n/\lambda n$, has a confidence probability $1 - \delta(n) \geq 1/100$ (i.e., $\delta \leq 0.99$) and error probability $\varepsilon(n) \leq 1/2 - \gamma(n)$. Then, from~\Cref{lem:learning_to_natural}, for every constant $k \geq 1$, there is a promise quantum natural property against $\mathfrak{C}[n^k].$ Following our notation from~\Cref{sec:prelim-circuits}, the concept class $\mathfrak{C}[n^k]$ also denotes a corresponding circuit class. %
Using~\Cref{cor:lbs-from-qnp}, the existence of a quantum natural property against $\mathfrak{C}[n^k]$ for every $k \geq 1$ implies that for every constant $b \geq 1$, $\BQE \nsubseteq \mathfrak{C}[n^b]$. This completes the proof.
\end{proof}

Similarly to previous work \citep{DBLP:conf/coco/OliveiraS17}, our arguments can be adapted to show a relation between the non-trivial learnability of a concept class $\mathcal{C}[s]$ of size-$s$ circuits, where $s(n) = n^{\omega(1)}$, and lower bounds against $\mathcal{C}[s']$, where $s, s' \colon \mathbb{N} \to \mathbb{N}$ and $s'$ is a function that depends on $s$. We have decided not to pursue the most general form of the result in this paper, as our proofs are already significantly involved and the relation between $s$ and $s'$ deteriorates as $s$ becomes super-quasi-polynomial, due to the use of win-win arguments.

\section{Technical tools}
\label{sec:tech-tools}
In this section, we discuss extensions to quantum computing of several fundamental results from complexity theory. This is needed to establish the correctness of our $\PRG$ construction.

In \Cref{sec:nw}, we provide a (fairly straightforward) quantization of \cite{nisan1994hardness}, where we show that a quantum distinguisher for the (candidate) $\PRG$ $\mathsf{NW}^f$ implies that $f$ can be non-trivially approximated by quantum algorithms.
In \Cref{sec:gl}, we give a self-contained exposition of a  quantum analogue of the Goldreich-Levin algorithm~\cite{DBLP:conf/stoc/GoldreichL89} discovered by \cite{adcock2002quantum}.
In \Cref{sec:IJKW}, we quantise the near-optimal uniform hardness amplification result of \cite{impagliazzo2010uniform}. We remark that this is the most technically involved result in this section. Finally, in \Cref{sec:self_reduc}, we show how to use downward and random self-reducibility of languages to devise quantum algorithms to compute them, adapting ideas from \citep{DBLP:journals/jcss/ImpagliazzoW01}.

\subsection{Nisan-Wigderson generator against quantum adversaries}

\label{sec:nw}

An ordered family $\mathcal{S} = (S_1, \ldots, S_t)$ of sets is a $(t, m, n, \alpha)$-\emph{design} if the following conditions are satisfied:
\begin{itemize}
    \item $|\mathcal{S}| = t$, and for each $i \in [t]$, $S_i \subseteq [m]$ and $|S_i| = n$.
    \item For every distinct pair $i, j \in [t]$, $|S_i \cap S_j| \leq \alpha$.
\end{itemize}

\begin{lemma}[{\citep[Lemma 2.5]{nisan1994hardness}}]\label{l:NW_design}
There is an absolute constant $c \geq 1$ for which the following holds. For any positive integers $n$ and $t$ such that $n \leq t \leq 2^n$, there is an ordered family $\mathcal{S}$ that is a $(t, cn^2, n, \log t)$-design. Moreover, given $n$, $t$, and an index $i \in [t]$, the set $S_i$ can be output in time $\mathsf{poly}(t)$.
\end{lemma}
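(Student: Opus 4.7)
The plan is to give the standard polynomial-based construction of a Nisan-Wigderson design: I would view the universe $[cn^2]$ as the product set $\mathbb{F}_q \times \mathbb{F}_q$ for an appropriately chosen prime power $q$, and take each set $S_i$ to be (a subset of) the graph of a low-degree univariate polynomial $p_i \in \mathbb{F}_q[x]$. The intersection property is then automatic, because two distinct polynomials of degree less than $d$ can coincide on at most $d-1$ inputs, so the graphs of $p_i$ and $p_j$ intersect in at most $d-1$ points of $\mathbb{F}_q \times \mathbb{F}_q$.

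To instantiate this, I would choose $q$ to be the smallest prime power with $q \geq n$; by Bertrand's postulate (which has an analogue for prime powers) one can take $q \leq 2n$, so $|\mathbb{F}_q \times \mathbb{F}_q| \leq 4n^2$, and the constant $c = 4$ suffices after fixing any canonical bijection between $\mathbb{F}_q \times \mathbb{F}_q$ and a subset of $[cn^2]$. Next, I would set the degree bound $d := \lceil \log t / \log q \rceil$; since $q \geq 2$ this yields $d \leq \log t + 1$, while the number of polynomials of degree less than $d$ over $\mathbb{F}_q$ is $q^d \geq t$. Using the base-$q$ digits of $i-1$ as the coefficients of $p_i$ then gives an injection from $[t]$ into this set of polynomials. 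I would define $S_i := \{(x, p_i(x)) : x \in X\}$ for any fixed $X \subseteq \mathbb{F}_q$ with $|X| = n$ (which exists because $q \geq n$), so that $|S_i| = n$; and for distinct $i, j$ the intersection $S_i \cap S_j$ is bounded by the number of roots of $p_i - p_j$, which is at most $d - 1 \leq \log t$, as required.

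The efficiency claim is then immediate: finding $q$ by trial takes $\mathsf{poly}(n)$ time, computing the coefficients of $p_i$ from the base-$q$ expansion of $i-1$ takes $\mathsf{poly}(\log t, \log n)$ time, and evaluating $p_i$ on the $n$ points of $X$ costs $\mathsf{poly}(n, d) = \mathsf{poly}(n, \log t) \leq \mathsf{poly}(t)$, using the hypothesis $t \geq n$. I do not foresee any real obstacle: the construction is elementary Reed-Solomon-style coding and every parameter choice above meets the stated constraints. The only mild care required is picking $q$ to be simultaneously a prime power, at least $n$, and at most $O(n)$, which is exactly where Bertrand's postulate (or any standard bound on gaps between consecutive prime powers) enters.
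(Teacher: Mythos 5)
Your construction is correct and is exactly the standard Reed--Solomon-style design construction (low-degree polynomial graphs over a field of size $\Theta(n)$) used to prove Lemma 2.5 of Nisan and Wigderson, which this paper invokes by citation rather than reproving; the parameter checks ($q \leq 2n$ giving $c = 4$, $q^d \geq t$, and $|S_i \cap S_j| \leq d-1 \leq \log t$) all go through as you state, and the $\mathsf{poly}(t)$ computability of each $S_i$ is immediate from the explicit arithmetic over $\mathbb{F}_q$.
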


\begin{definition}[Nisan-Wigderson generator~\cite{nisan1994hardness}]
\label{def:nwprg}
Let $f \colon \01^n\rightarrow \01$, $m=cn^2$, and $n \leq t \leq 2^n$. In addition, let $\mathcal{S}$ be a {$(t, m, n, \log t)$-design}. Then the \emph{Nisan-Wigderson generator} $\NW^f_\mathcal{S} \colon \01^m\rightarrow \01^t$ is the function defined as
$$
\NW^f_\mathcal{S}(z) \eqdef f(z\vert_{S_1})  f(z\vert_{S_2})\cdots f(z\vert_{S_t}),
$$
where $z\vert_{S_i}$ is the $n$-bit string formed by restricting $z \in \{0,1\}^m$ onto the coordinates given by the $i$-th set $S_i \in \mathcal{S}$.
\end{definition}

From now on we will only consider the designs given by Lemma \ref{l:NW_design}, so in order to simplify notation, we omit the underlying collection $\mathcal{S}$ and simply write $\mathsf{NW}^f$.

The next lemma verifies that the usual analysis of the Nisan-Wigderson generator extends to inherently probabilistic circuits.  

\begin{lemma}[Uniformity and correctness of the NW generator for inherently probabilistic circuits]\label{lem:nw}
Let $n$ and $t$ be positive integers such that $1 \leq n  \leq t \leq 2^n$, $f \colon \01^n\rightarrow \01$, and consider the corresponding function $\NW^f \colon \01^m\rightarrow \01^t$, where $m  = c n^2$. Let $\mathcal{D}$ be an inherently probabilistic circuit defined over $t$ input bits that satisfies
\begin{align}\label{eq:distinguisher-NW}
\left| \Pr_{s\in \01^m,\,\mathcal{D}} \big [\mathcal{D}(\NW^f(s)) = 1 \big ]
- \Pr_{y\in \01^t,\,\mathcal{D}} \big [\mathcal{D}(y) = 1\big ]
\right | \;\geq\; \gamma.
\end{align}
There exists a probabilistic algorithm $\calB$ with oracle access to $f$ that, given an input $1^t$, runs in time $\mathsf{poly}(t)$ and outputs a deterministic oracle circuit $\mathcal{A}^{\mathcal{O}}$ of size $O(t^2)$ for which the following holds. For any choice of $\mathcal{D}$ as above, with probability $\Omega(\gamma/t^2)$ over the internal randomness of $\mathcal{B}^f$, the generated circuit $\mathcal{A}^\mathcal{O}$ satisfies 
\begin{align}\label{eq:conclusion-nw}
\Pr_{x \in \{0,1\}^n,\,\mathcal{D}}\left[\calA^{\mathcal{D}}(x) = f(x)\right] 
 \geq \frac{1}{2} + \frac{\gamma}{2t}.
\end{align}
\end{lemma}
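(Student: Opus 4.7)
The plan is to follow the standard Nisan--Wigderson reconstruction via a hybrid argument plus Yao's next-bit predictor, while carefully taking all expectations over the internal randomness of $\mathcal{D}$ (which we cannot fix, since $\mathcal{D}$ is inherently probabilistic). First, for $j \in \{0,1,\dots,t\}$, I would define the hybrid distribution $H_j \eqdef (f(s|_{S_1}), \ldots, f(s|_{S_j}), y_{j+1}, \ldots, y_t)$, where $s \sim \mathcal{U}_m$ and $y_{j+1},\dots,y_t \sim \mathcal{U}_1$ are independent. Then $H_t = \mathsf{NW}^f(\mathcal{U}_m)$ and $H_0 = \mathcal{U}_t$, so \eqref{eq:distinguisher-NW} and a triangle inequality give some $i \in [t]$ with $\bigl|\Pr_{H_i,\mathcal{D}}[\mathcal{D}(H_i)=1] - \Pr_{H_{i-1},\mathcal{D}}[\mathcal{D}(H_{i-1})=1]\bigr| \geq \gamma/t$. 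All probabilities here include averaging over the internal randomness of $\mathcal{D}$, which is well-defined even though $\mathcal{D}$ is inherently probabilistic.

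Next, I apply the Yao predictor step. Writing $H_i'$ for the distribution identical to $H_i$ except that the $i$-th coordinate is $\overline{f(s|_{S_i})}$, the identity $H_{i-1} = \tfrac{1}{2}H_i + \tfrac{1}{2}H_i'$ converts the hybrid gap into $\bigl|\Pr[\mathcal{D}(H_i)=1]-\Pr[\mathcal{D}(H_i')=1]\bigr| \geq 2\gamma/t$; negating $\mathcal{D}$'s output if necessary, WLOG the quantity without absolute values is $\geq 2\gamma/t$. Consider the predictor $P$ that, on prefix $(z_1,\dots,z_{i-1})$, samples $y_i \sim \mathcal{U}_1$, queries $\mathcal{D}(z_1,\dots,z_{i-1},y_i,y_{i+1},\dots,y_t)$, and outputs $y_i$ if the answer is $1$ and $\overline{y_i}$ otherwise. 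A direct calculation, exactly as in the classical proof, shows that the overall agreement
\[
\Pr_{x,\,s|_{[m]\setminus S_i},\,y_i,\dots,y_t,\,\mathcal{D}}\bigl[P \text{ outputs } f(x)\bigr] \;\geq\; \tfrac{1}{2} + \gamma/t,
\]
where $x = s|_{S_i}$ is interpreted as the input to the predictor. Crucially, this entire computation commutes with the inherent randomness of $\mathcal{D}$: we are simply averaging the (well-defined) output distribution of $\mathcal{D}$ over independent classical randomness.

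The algorithm $\mathcal{B}^f(1^t)$ will then: (i) sample $i \in [t]$ uniformly, (ii) sample an ``auxiliary seed'' $\alpha = (s|_{[m]\setminus S_i},\, y_i, y_{i+1}, \dots, y_t)$ uniformly, and (iii) construct the deterministic oracle circuit $\mathcal{A}_{i,\alpha}^{\mathcal{O}}$ described next. The circuit $\mathcal{A}_{i,\alpha}^{\mathcal{O}}$ takes $x \in \{0,1\}^n$ (playing the role of $s|_{S_i}$), computes $z_j = f(s|_{S_j})$ for each $j<i$, queries $\mathcal{O}$ on $(z_1,\dots,z_{i-1},y_i,y_{i+1},\dots,y_t)$, and outputs $y_i$ or $\overline{y_i}$ according to the oracle's answer. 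The key combinatorial observation, which is where the $(t,cn^2,n,\log t)$-design is used, is that $|S_j \cap S_i| \leq \log t$ for $j\neq i$, so as a function of $x$ the string $s|_{S_j}$ depends on at most $\log t$ bits; hence $f(s|_{S_j})$ is computable by a lookup table of size $O(t)$ whose entries $\mathcal{B}$ obtains by making $O(t)$ membership queries to $f$. The total query cost and circuit size are thus $O(t^2)$, matching the statement.

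Finally, I convert the expected advantage $\gamma/t$ into a high-probability per-input-$\alpha$ guarantee by a standard Markov-type averaging. Define $Y(i,\alpha) \eqdef \Pr_{x,\mathcal{D}}[\mathcal{A}_{i,\alpha}^{\mathcal{D}}(x) = f(x)] - \tfrac{1}{2} \in [-\tfrac{1}{2},\tfrac{1}{2}]$. The good index $i$ has $\mathbb{E}_\alpha[Y(i,\alpha)] \geq \gamma/t$, which forces $\Pr_\alpha[Y(i,\alpha) \geq \gamma/(2t)] \geq \gamma/(2t)$; combining with the $1/t$ loss for guessing $i$ correctly yields the claimed success probability $\Omega(\gamma/t^2)$, and on such a good $(i,\alpha)$ the circuit satisfies \eqref{eq:conclusion-nw}. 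The main conceptual point to emphasize (and the only real obstacle compared to the classical proof) is that this entire argument is phrased at the level of \emph{expected} behaviour of $\mathcal{D}$; we never need to condition on or fix the internal randomness of $\mathcal{D}$, so it works verbatim in the inherently probabilistic oracle model -- each query that $\mathcal{A}^{\mathcal{D}}$ makes to $\mathcal{D}$ contributes fresh independent randomness, exactly as required.
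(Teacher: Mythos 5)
Your overall route is the same as the paper's: a hybrid argument over the $t$ output positions, a Yao next-bit predictor with the suffix randomness hardwired, lookup tables of size $O(t)$ for the other design sets (using $|S_i \cap S_j| \leq \log t$, filled in by $\mathcal{B}$ via membership queries to $f$), and a Markov-type averaging over the hardwired strings, with the unknown hybrid index guessed uniformly to pay the final $1/t$ factor. All of those steps, including your observation that everything can be phrased in terms of the expected output distribution of $\mathcal{D}$ without ever fixing its internal randomness, match the paper's argument and are correct.

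There is, however, one genuine flaw in your uniform construction: the sign of the advantage. You dispose of it with ``negating $\mathcal{D}$'s output if necessary, WLOG the quantity without absolute values is $\geq 2\gamma/t$,'' but this WLOG is only available to someone who knows $\mathcal{D}$. The algorithm $\mathcal{B}^f$ has no access to $\mathcal{D}$, its output circuits may not depend on $\mathcal{D}$, and the guarantee must hold for \emph{every} admissible $\mathcal{D}$; moreover the relevant sign is per hybrid index, not just the global sign in Eq.~\eqref{eq:distinguisher-NW}. As written, your circuit $\mathcal{A}_{i,\alpha}$ fixes one convention (output $y_i$ when the oracle answers $1$), so for a distinguisher whose gaps $p_{j+1}-p_j$ are all negative (e.g., take a good distinguisher and negate its output bit) every circuit you produce has expected agreement at most $1/2$, and no $\Omega(\gamma/t^2)$ success probability can be argued. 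The fix is exactly what the paper does: $\mathcal{B}$ also samples a bit $d \in \{0,1\}$ uniformly, hardwires it, and the circuit outputs $y_i \oplus d$ or its complement according to the oracle's answer; the guess is correct with probability $1/2$, which is absorbed into the $\Omega(\gamma/t^2)$ bound. With that one-line change your proof goes through.
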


\begin{proof}
We show that there exists a collection of deterministic oracle circuits $\calA^{\mathcal{O}}$ such that, for any choice of $\mathcal{D}$, a noticeable fraction of such oracle circuits provide the desired advantage. The argument also establishes the existence of a uniform algorithm $\calB^f$ with the required properties. Note that $\mathcal{B}^f$ does not have access to $\mathcal{D}$ and that the computation of each $\mathcal{A}^{\mathcal{O}}$ does not specify its oracle. However, in order to make the argument more concrete, in our exposition below we refer to the oracle $\mathcal{O}$ as $\mathcal{D}$, and consider a fixed $\mathcal{D}$ satisfying the assumption of the lemma. After that, we explain how the conclusion of the result follows from this argument.

We prove first that there exists a \emph{fixed} $j \in [t]$ (depending on $\mathcal{D}$) and a sub-collection of deterministic oracle ``next-bit predictor'' circuits $\mathcal{P}_{r,d}^{\calD}$ parameterized by $r \in \01^{t-j}$ and a \emph{fixed} $d \in \{0,1\}$ (depending on $\mathcal{D}$), each of size $O(t)$, such that
\begin{align}
\label{eq:constofP}
\Pr_{s\in \01^m, r, \calD}\left[\calP_{r,d}^\calD \big( \NW^f(s)_1,\ldots,\NW^f(s)_{j-1} \big) = \NW^f(s)_{j}\right] \geq 1/2 + \gamma/t.
\end{align}
For notational simplicity, from here onwards we abuse notation and write $z\sim \NW^f(s)$, meaning that $s\sim  \01^m$ is uniformly random and $z=\NW^f(s)$. Additionally, for $i\in [t]$ and $z\in \01^t$, we let ${z}^{(i)}$ be a \emph{random variable} with the first $i$ bits being equal to $z$ and the last $t - i$ bits being a uniformly random string $r \in \{0,1\}^{t - i}$. Also, let $$p_i = \Pr_{z \sim \NW^f(s),\calD}\big[\calD({z}^{(i)}) = 1\big],$$
and observe that 
$$
p_0 = \Pr_{y\in \01^t,\calD}[\mathcal{D}(y) = 1] \quad \text{and} \quad  p_t = \Pr_{z \sim \NW^f(s),\calD }[\mathcal{D}(z) = 1].
$$
Therefore, by Eq.~\eqref{eq:distinguisher-NW}, we have $|p_t - p_0| \geq \gamma$. By the triangle inequality, there exists $j \in \{0,1, \ldots, t-1\}$ such that $|p_{j+1} - p_j| \geq \gamma/t$. 
Setting $\hat{z} ={z}^{(j)}$ for convenience, notice that, %
\begin{align*}
\gamma/t ~&\leq
|p_{j+1} - p_j| \\
&=
\Bigg|\Bigg(\E_{z \sim \NW^f(s),\calD, z^{(j+1)} }\big[\mathcal{D}(z^{(j+1)}) =1\big] - \E_{z \sim \NW^f(s),\calD, \hat{z} }\big[\mathcal{D}(\hat{z}) =1\big]\Bigg)\Bigg| \\
&=
\Bigg|\Bigg(\E_{z \sim \NW^f(s),\calD, \hat{z} }\big[\mathcal{D}(\hat z) =1| \hat{z}_{j+1} = z_{j+1}\big] \\
&\quad - 
\frac{1}{2}\left(
\E_{z\sim \NW^f(s),\calD,\hat{z} }\big[\mathcal{D}(\hat z)  = 1| \hat{z}_{j+1} = z_{j+1}\big]
+\E_{z \sim \NW^f(s),\calD,\hat{z} }\big[\mathcal{D}(\hat z) = 1 | \hat{z}_{j+1} \ne z_{j+1}\big]\right)\Bigg)\Bigg|
\\  
&
=  
\left|\frac{1}{2}\left(
\E_{z\sim \NW^f(s),\calD,\hat{z} }\big[\mathcal{D}(\hat z) = 1 | \hat{z}_{j+1} = z_{j+1}\big]
-\E_{z\sim \NW^f(s),\calD,\hat{z} }\big[\mathcal{D}(\hat z) = 1 | \hat{z}_{j+1} \ne z_{j+1}\big]\right)\right|
\\
&=  \left| \frac{1}{2}\left(\E_{z \sim \NW^f(s),\calD,\hat{z}}\big[\mathcal{D}(\hat z) = 1 | \hat{z}_{j+1} = z_{j+1}\big] +
\E_{z\sim \NW^f(s), \calD, \hat{z} }\big[\mathcal{D}(\hat z) \ne 1 | \hat{z}_{j+1} \ne z_{j+1}\big] - 1 \right) \right|,%
\end{align*}
where in the second equality we use three simple facts:
\[\E_{z \sim \NW^f(s),\calD, z^{(j+1)} }\big[\mathcal{D}(z^{(j+1)}) =1\big] = \E_{z \sim \NW^f(s),\calD, \hat{z} }\big[\mathcal{D}(\hat z) =1| \hat{z}_{j+1} = z_{j+1}\big]\]
and 
\begin{align*}
&\E_{z \sim \NW^f(s),\calD, \hat{z} }\big[\mathcal{D}(\hat{z}) =1\big]\\
&= 
\Pr_{z \sim \NW^f(s),\calD, \hat{z}}[\hat{z}_{j+1} = z_{j+1}]
\E_{z \sim \NW^f(s),\calD, \hat{z} }\big[\mathcal{D}(\hat{z}) =1\big | \hat{z}_{j+1} = z_{j+1}] \\
&\quad+ 
\Pr_{z \sim \NW^f(s),\calD, \hat{z}}[\hat{z}_{j+1} \ne z_{j+1}]
\E_{z \sim \NW^f(s),\calD, \hat{z} }\big[\mathcal{D}(\hat{z}) =1\big | \hat{z}_{j+1} \ne z_{j+1}]
\end{align*}
and $\Pr_{z \sim \NW^f(s),\calD, \hat{z}}[\hat{z}_{j+1} = z_{j+1}]
 = \Pr_{z \sim \NW^f(s),\calD, \hat{z}}[\hat{z}_{j+1} \ne z_{j+1}]
 = \frac{1}{2}$, since $\hat{z}_{j+1} = r_1$ is a uniformly random bit.

The inequality above motivates the following definition. Let $d \in \01$ be such that 
$(-1)^d(p_{j+1} - p_j) > 0$.
We define the ``next-bit predictor'' $\mathcal{P}^\calD_{r,d}$, for $r \in \{0,1\}^{t-j}$, as follows. On input $z_1,\ldots,z_{j}$ (where $z\sim \NW^f(s)$), $\mathcal{P}_{r,d}^\calD$ makes an oracle call to~$\mathcal{D}$ on input $z_1,...,,z_{j},r_1,...,r_{t-j}$ and let $b$ be its output. If the output $b = 1$, then $\mathcal{P}$ outputs $r_1 \oplus d$, otherwise $\mathcal P$ outputs $r_{1}\oplus d \oplus 1$. It follows from the inequality above, the definition of $\mathcal{P}_{r,d}^\calD$, and a simple manipulation (see e.g.,~\citep[Proof of Theorem 9.11]{book_complexity}) that 
\begin{align*}
&\E_{z \sim \NW^f(s), r, \calD}\big[\calP_{r,d}^\calD(z_1,\ldots,z_{j}) = z_{j+1}\big] - \frac{1}{2}\\
&\;=\;
\frac{(-1)^d}{2}\left(\E_{z \sim \NW^f(s),\calD,\hat{z}}\big[\mathcal{D}(\hat z) = 1 | \hat{z}_{j+1} = z_{j+1}\big] +
\E_{z\sim \NW^f(s), \calD, \hat{z} }\big[\mathcal{D}(\hat z) \ne 1 | \hat{z}_{j+1} \ne z_{j+1}\big] -1 \right)\\
&\;\geq\; \gamma/t.
\end{align*}
This concludes the construction of a collection of deterministic oracle circuits $\calP_{r,d}^\calD$ satisfying   Eq.~\eqref{eq:constofP}.

We now construct an algorithm $\calA_{j,r,r',d}^\calD(x)$ that has hardwired on its code the $(t-j)$-bit (random) string $r$ from above and another  $(m - n)$-bit (random) string $r'$ introduced next. We will show that, on average over the choices of $r$ and $r'$, the algorithm computes $f(x)$ on a random $x$ with noticeable probability. For each $i \neq j+ 1$, we now assume (and subsequently prove) the existence of a deterministic circuit $C_{i,r'}(x)$ of size $O(t)$ that computes $\NW^f(s)_{i}$, where $s \in \01^m$ is defined as  $s|_{S_{j+1}} = x$ and $s|_{\overline{S_{j+1}}} = r'$.

The circuit $\calA_{j,r,r',d}^\calD$ on an input $x$ invokes $\calP^{\mathcal{D}}_{r,d}$ as follows. $\calA_{j,r,r',d}^\calD(x)$ fixes seed $s$ consisting of $s|_{S_{j+1}} = x$ and $s|_{\overline{S_{j+1}}} = r'$, computes
$\NW^f(s)_1,\ldots,\NW^f(s)_{j}$  using the corresponding circuits $C_{i,r'}(x)$, and outputs $\calP^{\calD}_{r,d}\big(\NW^f(s)_1,\ldots,\NW^f(s)_{j}\big)$.

Notice that, averaging over the random choices of $r$ and $r'$, and writing $s$ as the random string obtained from random choices of $x$ and $r'$, we have:
\begin{align}
\E_{r,r'}\left[\Pr_{x,\calD}[\calA_{j,r,r',d}^\calD(x) = f(x)]\right]&= 
\E_{r,r'}\left[\Pr_{\substack{x\in \01^n}, \calD}\big[\calP_{r,d}^\calD\big(\NW^f(s)_1,\ldots,\NW^f(s)_{j}\big)=f(x)\big]\right]
\nonumber
\\
&=
\Pr_{z \sim \NW^f(s),r,\calD}\big[\calP^\calD_{r,d}(z_1,\ldots,z_{j})= z_{j+1}\big]
 \geq \frac{1}{2} + \gamma/t. \label{eq:nw-average}
\end{align}
Consequently, we get that
\begin{align}
\Pr_{r,r'}\left[\Pr_{x,\calD}[\calA_{j,r,r',d}^\calD(x) \ne f(x)] \geq  \frac{1}{2} - \gamma/2t\right] 
& \;\leq\; \frac{
\E_{r,r'}\left[\Pr_{x,\calD}[\calA^\calD_{j,r,r',d}(x) \ne f(x)]\right]}{\frac{1}{2} - \gamma/2t}  \nonumber \\
& \;\leq\; \frac{
\frac{1}{2} - \gamma/t}{\frac{1}{2} - \gamma/2t}   \;=\; \frac{
1 - 2\gamma/t}{1 - \gamma/t}   \;\leq\; 1 - \gamma/t,  \label{eq:markov-nw}
\end{align}
where the first inequality comes from Markov's inequality, the second inequality comes from \Cref{eq:nw-average}, and the last inequality comes from the fact that $1-2x \leq (1-x)^2$ for every $x \in \mathbb{R}$.

Using the assumption about the size of $C_{i,r'}$, we have that the size of $\calA^{\mathcal{O}}$ is trivially $O(t^2)$. We prove next the existence of circuits $C_{i,r}(x)$ of the claimed size. First notice that $\NW^f(s)_{i}$ depends on at most $\log{t}$ bits of $s|_{S_{j+1}} = x$, by \Cref{l:NW_design}. Therefore, for fixed $i$ and $r'$,  we can define $C_{i, r'}(x)$ as a lookup-table of size $O(t)$ that stores the corresponding values of $f$ for all possible choices of the $\leq \log t$ relevant bits of $x$.

Finally, we define the uniform algorithm $\calB^f(1^t)$ as follows. First, it picks $j^* \in [t-1]$, $d^* \in \01$, $r \in \01^{t-j}$ and $r' \in \01^{m-n}$ uniformly at random. Then $\calB^f$ outputs the oracle circuit $\calA_{j^*,r,r',d^*}^{\mathcal{O}}$,  where the lookup table of each relevant circuit $C_{i, r'}$ can be computed by $\calB$ using membership queries to its oracle $f$. Note that the computation of $\mathcal{B}^f$ and the description of $\mathcal{A}^{\mathcal{O}}$ are indeed oblivious to the particular choice of $\mathcal{D}$.

Since for any fixed $\mathcal{D}$ as in the hypothesis of the result good choices of $j^*$ and $d^*$ are made with probability at least $\frac{1}{2t}$, and in this case $r$ and $r'$ yield a circuit $\calA^\calD_{j^*,r,r',d^*}$ with the desired properties with probability at least $\gamma/t$ by \Cref{eq:markov-nw}, we have that for any admissible $\mathcal{D}$, with probability at least $\Omega(\gamma/t^2)$ over its internal randomness $\mathcal{B}^f$ outputs a deterministic oracle circuit $\calA^{\mathcal{O}}_{j^*,r,r',d^*}$ that satisfies \Cref{eq:conclusion-nw}. (Note that the analysis shows that the construction succeeds with the desired probability for any fixed choice of the inherently probabilistic circuit $\mathcal{D}$, though different random choices of the parameters $d$, $r$, and $r'$ might be needed as a function of $\mathcal{D}$.)
\end{proof}

We now establish a quantum analogue of this result, stated in a way that is convenient for our application.

\begin{lemma}[Uniform Nisan-Wigderson reconstruction for quantum circuits]\label{lem:nw_quantum} Let $s_F, s_D, t \colon \mathbb{N} \to \mathbb{N}$ and $\gamma \colon \mathbb{N} \to [0,1]$ be constructive functions, where $n \leq t(n) \leq 2^n$ for every $n$. There exists a sequence $\{\mathcal{C}_n^{\mathsf{NW}}\}_{n \geq 1}$ of quantum circuits $\mathcal{C}_n^{\mathsf{NW}}$ such that:
\begin{itemize}
    \item[\emph{(}i\emph{)}] \emph{Input:} Each circuit $\mathcal{C}_n^{\mathsf{NW}}$ gets as input strings $1^n$, $1^{t(n)}$, $\mathsf{code}(D)$, and $\mathsf{code}(F)$, where $\mathsf{code}(D)$ encodes a quantum circuit $D$ of size $\leq s_D(n)$ over $t(n)$ input bits, and $\mathsf{code}(F)$ encodes a quantum circuit $F$ of size $\leq s_F(n)$ over $n$ input bits.
    \item[\emph{(}ii\emph{)}] \emph{Uniformity and Size:} Each circuit $\mathcal{C}_n^{\mathsf{NW}}$ is of size $S(n) = \mathsf{poly}(t(n), s_F(n), s_D(n))$, and there is a uniform deterministic algorithm that given $1^n$ runs in time $\mathsf{poly}(S(n))$ and prints the string $\mathsf{code}(\mathcal{C}_n^{\mathsf{NW}})$.
    \item[\emph{(}iii\emph{)}] \emph{Output and Correctness:} Suppose that the input circuit $F$ computes a Boolean function $f \colon \{0,1\}^n \to \{0,1\}$, and consider the associated function $\mathsf{NW}^f \colon \{0,1\}^{m(n)} \to \{0,1\}^{t(n)}$, where $m(n) = cn^2$. In addition, assume that the input circuit $D$ satisfies
    \begin{align}\label{eq:distinguisher-NW_quantum}
\left| \Pr_{s\in \01^{m(n)},\,D} \big [D(\NW^f(s)) = 1 \big ]
- \Pr_{y\in \01^{t(n)},\,D} \big [D(y) = 1\big ]
\right | \;\geq\; \gamma(n).
\end{align}
Then with probability $\Omega(\gamma(n)/t(n)^2)$ over its output measurement, $\mathcal{C}_n^{\mathsf{NW}}$ produces a string $\mathsf{code}(A)$ encoding a quantum circuit $A$ of size $O(t(n)^2 \cdot s_D(n))$ such that   
\begin{align}\label{eq:conclusion-nw-quantum}
\Pr_{x \in \{0,1\}^n,\,A}\left[A(x) = f(x)\right] \eqdef \E_{x \sim \{0,1\}^n,\; A}\Big[ \norm{\Pi_{f(x)}A\ket{x}\ket{0^{\ell}}}\Big ] 
 \;\geq\; \frac{1}{2} + \frac{\gamma(n)}{2t(n)}.
\end{align}
\end{itemize}
\end{lemma}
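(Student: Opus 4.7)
The plan is to reduce to the inherently probabilistic reconstruction already established in Lemma~\ref{lem:nw} by viewing the quantum distinguisher $D$ as an inherently probabilistic oracle. The key observation is that the algorithm $\mathcal{B}^f$ and the output circuit $\mathcal{A}^{\mathcal{O}}$ produced by Lemma~\ref{lem:nw} interact with the distinguisher only through \emph{classical} queries: $\mathcal{B}^f$ does not touch the oracle at all, and $\mathcal{A}^{\mathcal{O}}$ invokes $\mathcal{O}$ on a single classical input and uses the resulting classical output bit. Hence, by Lemma~\ref{lem:quantum_as_inherently_random}, the output distribution (and therefore the success probability) is unchanged if we replace the inherently probabilistic oracle by any quantum circuit $D$ whose output distribution on each input matches that of $\mathcal{O}$. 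In particular, the assumption (\ref{eq:distinguisher-NW_quantum}) on the quantum distinguisher $D$ is precisely the inherently probabilistic distinguishing assumption (\ref{eq:distinguisher-NW}) with $\mathcal{D}$ being the inherently probabilistic circuit associated with $D$.

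Concretely, $\mathcal{C}_n^{\mathsf{NW}}$ performs the following steps. First, read $1^n$, $1^{t(n)}$, $\mathsf{code}(D)$, and $\mathsf{code}(F)$, and build the garbage-free amplified version $\widetilde{F}$ of $F$ that computes $f$ with failure probability at most $2^{-3n}$ per query using standard majority-over-repetitions and the uncomputation trick described in Section~\ref{sec:prelim-quantum}. Second, sample uniformly at random $j^{*}\in \{0,\dots,t(n)-1\}$, $d^{*}\in \{0,1\}$, $r \in \{0,1\}^{t(n)-j^{*}}$, and $r'\in\{0,1\}^{m(n)-n}$ by measuring fresh Hadamard-prepared qubits. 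Third, for each $i \neq j^{*}+1$, use $\widetilde{F}$ to fill in the size-$O(t(n))$ lookup table that implements $C_{i,r'}$ on the at most $\log t(n)$ coordinates of $x$ on which $\mathsf{NW}^{f}(s)_i$ depends (with $s|_{S_{j^{*}+1}}=x$ and $s|_{\overline{S_{j^{*}+1}}}=r'$); by a union bound over the $O(t(n)^{2})$ queries the tables are correct with overwhelming probability. Finally, output the string $\mathsf{code}(A)$ of the quantum circuit $A$ that, on input $x$, classically evaluates the hard-wired $C_{i,r'}(x)$ to obtain $\mathsf{NW}^{f}(s)_1,\dots,\mathsf{NW}^{f}(s)_{j^{*}}$, classically invokes the sub-circuit $D$ (compiled from $\mathsf{code}(D)$ via the universal quantum circuit of Section~\ref{sec:prelim-quantum}) on $(\mathsf{NW}^{f}(s)_1,\dots,\mathsf{NW}^{f}(s)_{j^{*}},r_1,\dots,r_{t-j^{*}})$, and outputs $r_1 \oplus d^{*}$ or $r_1\oplus d^{*}\oplus 1$ according to the measurement result, exactly as in the predictor $\mathcal{P}_{r,d}^{\mathcal{D}}$ of Lemma~\ref{lem:nw}. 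By construction $|A|=O(t(n)^{2}\cdot s_{D}(n))$, and the description of $\mathcal{C}_n^{\mathsf{NW}}$ itself can be printed in deterministic time $\mathrm{poly}(t(n),s_{F}(n),s_{D}(n))$ using the polynomial-time design construction of Lemma~\ref{l:NW_design} and the efficient universal circuit.

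Correctness follows from Lemma~\ref{lem:nw} applied to the inherently probabilistic view of $D$: with probability $\Omega(\gamma(n)/t(n)^{2})$ over the sampled $(j^{*},d^{*},r,r')$, the oracle circuit $\mathcal{A}^{\mathcal{O}}_{j^{*},r,r',d^{*}}$ satisfies (\ref{eq:conclusion-nw}); invoking Lemma~\ref{lem:quantum_as_inherently_random} with the classical outer algorithm being ``run $\mathcal{A}^{\mathcal{O}}$ on a uniformly random $x$ and compare to $f(x)$'' transfers this bound verbatim to the quantum case and yields (\ref{eq:conclusion-nw-quantum}). The additional $2^{-3n}$-level error from using $\widetilde{F}$ in the preprocessing is absorbed into the $\Omega(\gamma(n)/t(n)^{2})$ slack. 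The only subtle point, and the one that requires the amplification step above, is that in the inherently probabilistic analysis the lookup tables $C_{i,r'}$ are computed from a \emph{deterministic} $f$, whereas here we only have a bounded-error quantum circuit $F$ for $f$; the main obstacle is therefore ensuring that a single consistent deterministic table is extracted from $F$ with overwhelming probability, which is exactly what amplification plus uncomputation achieves, so that subsequently $A$ behaves indistinguishably from the oracle circuit analyzed in Lemma~\ref{lem:nw}.
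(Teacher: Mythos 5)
Your proposal is correct and follows essentially the same route as the paper's proof: reduce to Lemma~\ref{lem:nw} by treating the quantum distinguisher $D$ as an inherently probabilistic oracle via Lemma~\ref{lem:quantum_as_inherently_random}, simulate the input circuit $F$ with amplified success probability to build the lookup tables $C_{i,r'}$, and output the code of a predictor circuit $A$ of size $O(t(n)^2\cdot s_D(n))$ that hardwires $D$, with the explicit construction giving the stated uniformity. The only cosmetic difference is that you amplify $F$ to per-query error $2^{-3n}$ and union-bound, whereas the paper only needs the total error over all simulated queries to be bounded away from $1$, absorbing it into the $\Omega(\gamma(n)/t(n)^2)$ slack exactly as you do.
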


\noindent \textbf{Note.}~We stress that the result is non-trivial because the size of $A$ is independent of $s_F(n)$ (otherwise it would be sufficient for $\mathcal{C}_n^{\mathsf{NW}}$ to output the string $\mathsf{code}(F)$, which encodes a quantum circuit that computes $f(x)$ on every input $x$ with probability $\geq 2/3$).

\begin{proof}[Proof of Lemma \ref{lem:nw_quantum}]
We follow the argument in the proof of Lemma \ref{lem:nw}. Here the circuit $\mathcal{C}_n^{\mathsf{NW}}$ takes the role of $\mathcal{B}^f$, the oracle $f$ is replaced by the input circuit $F$, and the input circuit $D$ behaves as the inherently probabilistic circuit $\mathcal{D}$.

Under the assumption that the quantum circuit $D$ distinguishes the output of $\mathsf{NW}^f(s)$ on a random seed $s$ from a random string $y$ (Equation \ref{eq:distinguisher-NW_quantum}), and proceeding as in the proof of Lemma \ref{lem:nw_quantum} while replacing $\mathcal{D}$ with $D$, it follows from Lemma \ref{lem:quantum_as_inherently_random} that the same probability analysis holds. Consequently, there is a quantum circuit $A$ that employs $D$ as a sub-routine and computes $f$ with advantage $1/2 + \gamma/2t$. The only relevant difference here is that in order to generate with the desired probability a correct description of $A$, it is necessary to evaluate the function $f$ on different input strings (recall that we hardwire the corresponding answers in the circuits $C_{i,r'}$, which appear as sub-circuits of $A$). To achieve that, $\mathcal{C}_n^{\mathsf{NW}}$ simulates the input quantum circuit $F(x)$ (using a universal quantum circuit), amplifying its success probability of computing $f(x)$ via standard techniques in a way that reduces the probability that $F$ errs on \emph{any} string employed during these simulations to less than $1/2$. Whenever the correct values of $f$ needed in the circuits $C_{i,r'}$ are produced, we get that with probability $\Omega(\gamma(n)/t(n)^2)$ over its output measurement,  $\mathcal{C}_n^{\mathsf{NW}}$ successfully generates a ``good'' quantum circuit $A$ from circuits $F$ and $D$ and its internal random choices, i.e.,  Equation \ref{eq:conclusion-nw-quantum} holds for $A$.

While in the proof of Lemma \ref{lem:nw} the size of each deterministic oracle circuit $\mathcal{A}^\mathcal{O}$ is $O(t^2)$, here the number of gates in the corresponding quantum circuit $A$ is $O(t(n)^2 \cdot s_D(n))$, since $A$ explicitly incorporates the computation of circuit $D$, which by assumption has size at most $s_D(n)$.

We now discuss the uniformity and size of each quantum circuit $\mathcal{C}_n^{\mathsf{NW}}$. Recall that  $\mathcal{C}_n^{\mathsf{NW}}$ operates in a way that is analogous to algorithm $B^f$ from Lemma \ref{lem:nw}. A bit more formally, $\mathcal{C}_n^{\mathsf{NW}}$ must output a string $\mathsf{code}(A)$ from strings $\mathsf{code}(F)$ and $\mathsf{code}(D)$ (and the auxiliary parameters $1^n$ and $1^{t(n)}$). The computation of $\mathcal{C}_n^{\mathsf{NW}}$ involves manipulating the codes of $F$ and $D$ to produce the code of $A$, and includes the simulation of the quantum circuit $F$ on at most $t(n) \cdot t(n)$ distinct input strings (we have at most $t(n)$ circuits $C_{i, r'}$, and each of them stores the value of $f$ on at most $t$ inputs), with an overhead for the amplification of the success probability. Since the descriptions of $F$ and $D$ have length $\mathsf{poly}(s_F(n))$ and $\mathsf{poly}(s_D(n))$, respectively, and each quantum simulation can be done using $\mathsf{poly}(s_F(n))$ gates, it follows that each quantum circuit $\mathcal{C}_n^{\mathsf{NW}}$ can be implemented with $S(n) = \mathsf{poly}(t(n),s_F(n), s_D(n))$ gates. Finally, it is not hard to see that the code of $\mathcal{C}_n^{\mathsf{NW}}$ is explicit and can be deterministically generated from $1^n$ in time $\mathsf{poly}(S(n))$, since the description of $B^f$ in the proof of Lemma \ref{lem:nw} is also explicit. 
\end{proof}

\subsection{Goldreich-Levin lemma in the quantum setting}
\label{sec:gl}
\begin{lemma}\label{lem:gl}
Let $f\colon \01^{kn} \to \01^k$. Suppose there is a quantum circuit  $\mathcal{A}$ \emph{(}that uses $m\geq 1$ workspace qubits\emph{)} satisfying 
\[
\E_{x\in \01^{kn}}\E_{r\in\01^k}\left[\norm{\Pi_{f(x) \cdot r}\calA\ket{x,r,0^m}}^2\right] 
 \geq \frac{1}{2} + \gamma.
\]
Then there is a quantum oracle circuit $\calB^{\calA}$ of size $O(kn)$ that has oracle access to $\calA$ \emph{(}and to its inverse $\calA^{\dagger}$\emph{)} such that 
\[
\E_{x,\calB^{\calA}}\left[\norm{\Pi_{f(x)}\calB^{\calA}\ket{x,0^{k+m+1}}}^2\right] 
 \geq \frac{\gamma^3}{2}.
\]
Moreover, a quantum circuit $\mathcal{B}$ of this form can be constructed from a quantum circuit $\mathcal{A}$ as above by a uniform sequence of circuits, which we formalise as follows. Let $k, s_{\mathcal{A}} \colon \mathbb{N} \to \mathbb{N}$ and $\gamma \colon \mathbb{N} \to [0,1/2]$ be constructive functions. In addition, let $f_n \colon \{0,1\}^{kn} \to \{0,1\}^k$ be a sequence of functions, where $k = k(n)$. Then there is a sequence $\{\mathcal{C}^{\mathsf{GL}}_n\}_{n \geq 1}$ of deterministic circuits $\mathcal{C}^{\mathsf{GL}}_n$ of size $\mathsf{poly}(n, k(n), s_{\mathcal{A}}(n))$ for which the following holds. If $\mathcal{C}^{\mathsf{GL}}_n$ is given as input strings $1^n$ and a description $\mathsf{code}(\mathcal{A})$ of a quantum circuit $\mathcal{A}$ of size $\leq s_{\mathcal{A}}(n)$ with the property above, then it outputs a string $\mathsf{code}(\mathcal{B})$ describing a quantum circuit $\mathcal{B}$ of size $O(n \cdot k(n) \cdot s_{\mathcal{A}}(n))$ such that
\[
\E_{x,\calB}\left[\norm{\Pi_{f_n(x)}\calB\ket{x,0^{k+m+1}}}^2\right] 
 \geq \frac{\gamma(n)^3}{2}.
\]
\end{lemma}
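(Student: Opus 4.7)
The plan is to follow the quantum Goldreich--Levin template of Adcock and Cleve by implementing $\calB^{\calA}$ as a Bernstein--Vazirani procedure in which $\calA$ is used ``once forward and once backward'' to realize a phase oracle. Concretely, on input $\ket{x, 0^k, 0^m, 0}$, $\calB^{\calA}$ (i) applies $H^{\otimes k}$ to the $k$-qubit $r$-register to create the uniform superposition $\frac{1}{\sqrt{2^k}}\sum_r \ket{r}$; (ii) applies $\calA$, then a Pauli $Z$ on the first output qubit of $\calA$ (call this unitary $Z_1$), then $\calA^\dagger$; (iii) applies $H^{\otimes k}$ again to the $r$-register; and (iv) measures the $r$-register to obtain a $k$-bit string. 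Intuitively, in the ideal case $p_x(r) = 1$ for all $r$, this is exactly Bernstein--Vazirani and returns $f(x)$ deterministically, and the goal of the analysis is to show that the average bias $\gamma$ survives this procedure with only a polynomial loss.

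The key computation is standard. Writing
$$
\calA\ket{x,r,0^m} \;=\; \sqrt{p_x(r)}\,\ket{\psi^0_{x,r}} + \sqrt{1-p_x(r)}\,\ket{\psi^1_{x,r}},
$$
where $p_x(r) := \norm{\Pi_{f(x)\cdot r}\calA\ket{x,r,0^m}}^2$ and $\ket{\psi^b_{x,r}}$ lies in the subspace on which the first output qubit equals $(f(x)\cdot r)\oplus b$, one checks
$$
\bra{x,r,0^m}\calA^{\dagger} Z_1 \calA \ket{x,r,0^m} \;=\; (-1)^{f(x)\cdot r}\bigl(2p_x(r) - 1\bigr).
$$
Summing over $r$ in the uniform superposition and applying $H^{\otimes k}$ to the $r$-register, the amplitude of $\ket{x, f(x), 0^m}$ in the pre-measurement state is
$$
\frac{1}{2^k}\sum_r (-1)^{(f(x)\oplus f(x))\cdot r}\bigl(2p_x(r) - 1\bigr) \;=\; 2\eta_x, \qquad \eta_x := \E_r[p_x(r)] - \tfrac{1}{2}.
$$
Since measuring the $r$-register traces out the workspace, the probability that the outcome is $f(x)$ is at least $|2\eta_x|^2 = 4\eta_x^2$ \emph{regardless} of the orthogonal contributions coming from workspace basis states different from $\ket{0^m}$.

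To conclude, I would combine this per-$x$ bound with the hypothesis $\E_x[\eta_x] \geq \gamma$ via a reverse-Markov argument. Using $\eta_x \leq 1/2$, the set $S := \{x : \eta_x \geq \gamma/2\}$ has density at least $\gamma/(1-\gamma) \geq \gamma$; on $S$ the success probability is at least $4(\gamma/2)^2 = \gamma^2$; hence $\E_x[\Pr[\calB^{\calA}(x) = f(x)]] \geq \gamma \cdot \gamma^2 = \gamma^3 \geq \gamma^3/2$. For the uniformity statement, $\calC_n^{\mathsf{GL}}$ simply emits $\code(\calB^{\calA})$ by instantiating the template above: it plugs $\code(\calA)$ and its syntactic inverse $\code(\calA^{\dagger})$ into the designated slots, adds $O(k(n))$ Hadamard gates and a single $Z_1$. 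Inverting the classical description of a quantum circuit is a deterministic polynomial-time operation, so $\calC_n^{\mathsf{GL}}$ runs in time $\poly(n, k(n), s_{\calA}(n))$ and the claimed size bound $O(n \cdot k(n) \cdot s_{\calA}(n))$ is a loose overestimate for this template.

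The main obstacle will be justifying the amplitude computation cleanly, since when $\calA$ is imperfect the composite $\calA^{\dagger} Z_1 \calA$ does \emph{not} return the workspace exactly to $\ket{0^m}$, and one might worry that the ``garbage'' in the workspace destructively interferes with the $\ket{f(x), 0^m}$-amplitude. The crucial observation that sidesteps this difficulty is that we only need a \emph{lower} bound on $\Pr[\calB^{\calA}(x) = f(x)]$: because measuring the $r$-register sums probabilities over \emph{all} workspace basis states, the diagonal contribution $|2\eta_x|^2$ from the $\ket{0^m}$-component is automatically a valid lower bound, no matter what the orthogonal workspace contributions look like.
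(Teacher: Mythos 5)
Your circuit is essentially the paper's circuit (the paper's controlled-$Z$ onto an ancilla prepared in $\ket{1}$ is exactly your $Z_1$ phase kick), but the central amplitude identity in your analysis has a genuine gap. The amplitude of $\ket{x,f(x),0^m}$ in the pre-measurement state is
\[
\frac{1}{2^k}\sum_{r,r'}(-1)^{f(x)\cdot r'}\,\bra{x,r',0^m}\calA^{\dagger}Z_1\calA\ket{x,r,0^m},
\]
and your computation keeps only the diagonal terms $r'=r$ (which indeed sum to $2\eta_x$) while silently discarding the off-diagonal matrix elements. Nothing in the hypothesis forces $\calA$ to leave its input registers intact, so $\calA^{\dagger}Z_1\calA$ need not be diagonal in the $r$-register; the discarded terms can be real, negative, and large enough to cancel $2\eta_x$ completely. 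Concretely, take $k=2$, $f\equiv(1,1)$ independent of $x$, a single workspace qubit which is also the output qubit, and let $\calA$ be any real unitary with $\calA^{\dagger}Z_1\calA = A\otimes\kb{0} - A\otimes\kb{1}$, where $A = 2\kb{\Phi^-}-I$ and $\ket{\Phi^-}=\tfrac{1}{\sqrt{2}}(\ket{00}-\ket{11})$ acts on the $r$-register (such $\calA$ exists since both sides are reflections with balanced $\pm1$ eigenspaces). Reading off the diagonal of $A$, this $\calA$ outputs $f(x)\cdot r$ with probability $1$ for $r\in\{01,10\}$ and $1/2$ for $r\in\{00,11\}$, so $\eta_x=1/4$; yet $A\ket{++}=-\ket{++}$, so after the final Hadamards your circuit outputs $00\neq f(x)$ with certainty, and the claimed per-$x$ bound $4\eta_x^2$ (hence your $\gamma^3/2$ conclusion) fails. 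Your closing "crucial observation" handles a different issue --- garbage left in the workspace, which indeed only hurts in the harmless direction for a lower bound --- but not this one, which concerns the value of the $\ket{0^m}$-component amplitude itself.

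The paper sidesteps exactly this by first assuming, without loss of generality, the normal form $\calA\ket{x,r,0^m}=\ket{x,r}\otimes\big(\alpha_{x,r,0}\ket{\psi_{x,r,0}}\ket{f(x)\cdot r}+\alpha_{x,r,1}\ket{\psi_{x,r,1}}\ket{1\oplus f(x)\cdot r}\big)$, i.e., that the input registers are preserved; this is legitimate and cheap (prepend $O(kn+k)$ CNOTs copying $(x,r)$ into fresh ancillas and treat the copy as the preserved input, which leaves the success probability unchanged). Under that normal form the off-diagonal terms vanish identically because $\bra{x,r'}\!\ket{x,r}=0$ for $r'\neq r$, and the paper's inner-product computation $|\braket{\sigma_x\mid\phi_x}|^2=\big|\E_r\big[|\alpha_{x,r,0}|^2-|\alpha_{x,r,1}|^2\big]\big|^2\geq\gamma^2$ on the good set is precisely your intended diagonal calculation, made valid. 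Your final averaging step and the uniformity discussion are fine; the missing normalization of $\calA$ (together with noting its small gate cost in the size bound) is the one essential ingredient you must add.
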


\begin{proof}
Without loss of generality, let us assume that $\calA$ measures the first qubit of $\calA\ket{x,r,0^m}$ and produces it as an output.
We define the algorithm $\calB^{\calA}$ on input $x$ as follows:
\begin{enumerate}
    \item Start with $\frac{1}{\sqrt{2^k}}\sum_{r}\ket{x,r,0^m,1}$    \label{item:first_step}
    \item Apply $\mathcal{A}$ on the first $kn+k+m$ qubits. 
    \item Apply a control-Z operation with the ``output of $\calA$'' (i.e., first qubit) as the control qubit and the last qubit as the target qubit.
    \item Apply $\mathcal{A}^\dagger$ on the first $kn+k+m$ qubits. 
    \item Apply Hadamard transform on the second register.
    \item Measure all qubits in the computational basis.
    \item If the outcome is of the form $\ket{x,a,0^{k+m},1}$, output $a$.  \label{item:last_step}
\end{enumerate}
To analyze its correctness, we first define $G = \{x : \E_{r\in\01^k}\left[\norm{\Pi_{f(x) \cdot r}\calA\ket{x,r,0^m}}^2\right] 
 \geq \frac{1}{2} + \frac{\gamma}{2} \}$. We will show that 
  \begin{align}\label{eq:size-g}
  |G| \geq \frac{\gamma}{2} \cdot 2^n,
 \end{align}
 and that for every $x \in G$, we have
\begin{align}\label{eq:implication-g}
\E_{\calB^{\calA}}
\left[\norm{\Pi_{f(x)}\calB^{\calA}\ket{x,0^{k+m+1}}}^2\right] 
 \geq \gamma^2.
 \end{align}
Before proving \Cref{eq:size-g,eq:implication-g}, notice that they directly imply our statement, since
\begin{align*}
\E_{x,\calB^{\calA}}
\left[\norm{\Pi_{f(x)}\calB^{\calA}\ket{x,0^{k+m+1}}}^2\right] 
&\geq
\frac{|G|}{2^n}\E_{x\in G,\calB^\calA}
\left[\norm{\Pi_{f(x)}\calB^\calA\ket{x,0^{k+m+1}}}^2\right] \\ &\geq   \frac{\gamma}{2} \E_{x\in G,\calB^\calA}
\left[\norm{\Pi_{f(x)}\calB^\calA\ket{x,0^{k+m+1}}}^2\right] 
\geq \frac{\gamma^3}{2},
\end{align*}
where in the first inequality we removed some non-negative values, in the second inequality we use \Cref{eq:size-g} and in the third inequality we use \Cref{eq:implication-g}.

Let us now show \Cref{eq:size-g}. Suppose toward a contradiction that 
 $|G| < \frac{\gamma}{2} \cdot 2^n$. Then we have
  \begin{align*}
\frac{1}{2} + \gamma &\leq \E_{x\in \01^{kn}}\E_{r\in\01^k}\left[\norm{\Pi_{f(x) \cdot r}\calA\ket{x,r,0^m}}^2\right]  \\
&= \frac{|G|}{2^n} \E_{x\in G}\E_{r\in\01^k}\left[\norm{\Pi_{f(x) \cdot r}\calA\ket{x,r,0^m}}^2\right] + 
\left(1-\frac{|G|}{2^n}\right)\E_{x\not\in G}\E_{r\in\01^k}\left[\norm{\Pi_{f(x) \cdot r}\calA\ket{x,r,0^m}}^2\right] \\
&< \frac{|G|}{2^n} + \left(1-\frac{|G|}{2^n}\right)\left(\frac{1}{2} + \frac{\gamma}{2}\right)\\
&< \frac{1}{2} + \gamma/2,
\end{align*}
which is a contradiction. Therefore, \Cref{eq:size-g} must be true.
 
Now we prove \Cref{eq:implication-g} and for that let us fix an arbitrary $x \in G$.
This part of the proof closely follows the proof by Cleve and Adcock~\cite{adcock2002quantum}. Without loss of generality, we can assume that $\calA$ on the input state acts as
$$
\mathcal{A}\ket{x,r,0^m}=
\ket{x,r} \otimes\left(\alpha_{x,r,0}\ket{\psi_{x,r,0}}\ket{f(x)\cdot r} + \alpha_{x,r,1}\ket{\psi_{x,r,1}}\ket{1\oplus f(x)\cdot r}\right).
$$
From the assumption that $x \in G$, we have that 
\[
\E_{r}\big[|\alpha_{x,r,0}|^2\big] \geq \frac{1}{2} + \frac{\gamma}{2} \quad \text{ and } \quad 
\E_{r}\big[|\alpha_{x,r,1}|^2\big] \leq \frac{1}{2} - \frac{\gamma}{2}.
\]
We define two quantum states. First, we start with $\frac{1}{\sqrt{2^k}}\sum_{r}\ket{x,r,0^m,1}$ and apply steps $2$ and $3$ of~$\calB^\calA$ to obtain
\begin{align*}
    \ket{\phi_x} = 
   \ket{x} \otimes  \frac{1}{\sqrt{2^k}}\sum_{r}
(-1)^{f(x)\cdot r}\big(\alpha_{x,r,0}\ket{r}\ket{\psi_{x,r,0}}\ket{f(x)\cdot r,1} - \alpha_{x,r,1}\ket{r}\ket{\psi_{x,r,1}}\ket{1\oplus f(x)\cdot r,1}\big).
\end{align*}
In order to analyze the probability of the measurement outcome $a$ (the output of $\calB^\calA$ in step (7)) equalling $f(x)$, consider the state where we start with  $\ket{x,f(x),0^{k+m},1}$, apply the Hadamard transform on the second register and then apply $\calA$ on the first $kn+k+m$ qubits to obtain
\begin{align}
    \ket{\sigma_x} = 
    \frac{1}{\sqrt{2^k}} \ket{x}\sum_{r}
(-1)^{f(x)\cdot r}\big(\alpha_{x,r,0}\ket{r}\ket{\psi_{x,r,0}}\ket{f(x)\cdot r,1} + \alpha_{x,r,1}\ket{r}\ket{\psi_{x,r,1}}\ket{1\oplus f(x)\cdot r,1}\big).
\end{align}
The probability that $\calB^{\calA}(x)$ outputs $f(x)$ is then given by
\begin{align}
    |\braket{\sigma_x \mid \phi_x}|^2 = \left|\E_{r}\left[|\alpha_{x,r,0}|^2 - |\alpha_{x,r,1}|^2\right]\right|^2 \geq \gamma^2,
\end{align}
where we use the fact that for $x \in G$, we have $\E_{r}\left[|\alpha_{x,r,0}|^2 - |\alpha_{x,r,1}|^2\right] \geq \gamma$. This concludes the proof of \Cref{eq:implication-g}.
Notice that the size of $\calB^\calA$ can be simply checked by inspection from its~description.

We discuss now the moreover part of our lemma. The circuit $\mathcal{C}_n^{\mathsf{GL}}$ receives the input $\mathsf{code}(\calA)$ (and the auxiliary parameter $1^n$) and outputs the circuit $\calB$ that executes the steps 
$\eqref{item:first_step}-\eqref{item:last_step}$ described above, with an oracle call to $\calA$ replaced by the execution of $\textsf{code}(\calA)$ (or its inverse).
It is straightforward from the previous calculations that $\calB$ has the desired properties and that it has size at most $O(k(n) \cdot n\cdot s_{\calA}(n))$. Notice that the circuit $\mathcal{C}_n^{\mathsf{GL}}$  that prints $\calB$ is deterministic and it can be implemented using $\mathsf{poly}(n, k(n), s_A(n))$ gates. We also notice that code of $\mathcal{C}_n^{\mathsf{GL}}$ is explicit and it can be deterministically generated from $1^n$ in time $\mathsf{poly}(n, k(n), s_{\calA}(n))$. 
\end{proof}

\subsection{Local list decoding and uniform hardness amplification for quantum circuits}\label{sec:IJKW}

\newcommand{\algoIJKW}{\calG}
\newcommand{\univ}{\mathfrak{U}}

In this section, we start with some arbitrary function $g : \01^n \to \01$, which we assume to be mildly hard, and our goal is to amplify its hardness using techniques from local list decoding of (classical) error-correcting codes. Specifically, we prove a quantum analogue of the direct product theorem of Impagliazzo, Jaiswal, Kabanets, and Wigderson \cite{impagliazzo2010uniform}. For simplicity of notation, we will fix $n \in \mathbb{N}$ and denote $\univ = \01^n$. Roughly speaking, we will prove that computation of the concatenation of $k$ independent copies of $g$ amplifies its hardness exponentially as a function of $k$. For that, we first define the domain of such concatenation.

\begin{definition}[$k$-sets]\label{def:ksets}
Let $\mathcal{S}_{n,k} = \{S \subseteq \univ : |S| = k\}$ be the set of all subsets of $\01^n$ of size $k$. When $n$ is fixed, we denote $\mathcal{S}_{k} = \mathcal{S}_{n,k}$.
\end{definition}

\begin{remark}\label{r:remark_input}
We consider the $n$-bit strings in a set $B \in \mathcal{S}_{n,k}$ in \emph{lexicographic order} when $B$ is given as input to a Boolean circuit.
\end{remark}

We consider then the $k$-\emph{direct product} of $g$, denoted $g^k : \mathcal{S}_k \to \{0,1\}^k$, where $g^k(B)$ is the concatenation of $g(x)$ for every $x \in B$ in a canonical order, i.e,
$$
g^k(x_1, \ldots, x_k) = \big(g(x_1), \ldots, g(x_k)\big).
$$ 
Whenever it is clear from the context, we abuse the notation with $g(B) = g^k(B)$, for $B \in \calS_k$. There are well known connections between hardness amplification and classical error-correcting (for example see~\cite [Chapter 19]{book_complexity}) and part of our terminology reflects them. For instance, the code can be viewed as corresponding to $g^k$ and decoding will correspond to computing $g(y)$ for some input $y \in \univ$ when given appropriate access to $g^k$.

As a preliminary step, we consider hardness amplification for inherently probabilistic computations, which we introduced in Section \ref{ss:inherently_probabilistic} as an intermediate model between classical and quantum circuits. 

\subsubsection{Inherently probabilistic circuits}

Recall that an \emph{inherently probabilistic circuit} $\algoIJKW$ for computing a function $g \colon \{0,1\}^m \to \{0,1\}^\ell$ with probability at least $\eps$ is a circuit that assigns to each input $z \in \{0,1\}^m$ a distribution $\algoIJKW(z)$ supported over $\{0,1\}^\ell$ such that
\begin{equation*}
    \Pr_{\substack{z \sim \{0,1\}^m,\\ v \sim \algoIJKW(z)}}[v = g(z)] \geq \varepsilon \:.
\end{equation*}

Our goal in this section is to show that an inherently probabilistic circuit $\algoIJKW$ that computes $g^k$ with small probability $\varepsilon>0$ can be used as a subroutine in a randomized circuit with oracle access to $\algoIJKW$ to compute $g$ with high probability $1 - \delta$. Moreover, we aim to design a uniform randomized algorithm that is able to generate, with non-trivial probability $\zeta$ as a function of $\varepsilon$, a description of this oracle circuit from a description of $\algoIJKW$.

We state the main theorem of this section. In the following, it might be instructive to think of the following setting of parameters as a function of the input size: $\eps = 2^{-\sqrt{n}}$, $\delta = 1/\poly(n)$, and $k = \poly(n)$.

\begin{theorem}[Local list decoding for inherently probabilistic circuits]\label{thm:ijkw}
There exists a universal constant $C \geq 1$ for which the following holds. Let $n \geq 1$ be a positive integer,  $k$ be an even integer, and let $\varepsilon, \delta > 0$ satisfy 
\begin{equation}\label{eq:IJKW_relation_eps_delta}
k \geq C \cdot \frac{1}{\delta} \cdot \left [ \log \! \left( \frac{1}{\delta} \right )  + \log \! \left  ( \frac{1}{\varepsilon} \right ) \right ].
\end{equation}
If $\algoIJKW$ is an inherently probabilistic circuit defined over $\mathcal{S}_{n,k}$ with $k$ output bits such that
\begin{align}\label{eq:assumption-ijkw}
\E_{B \sim \calS_{n,k},\; \algoIJKW}\left[\algoIJKW(B) = g^k(B)\right] 
 \geq \eps,
\end{align}
then there is a randomized oracle circuit $\mathcal{B}^\mathcal{O}$ of size $\mathsf{poly}(n,k,\log(1/\delta), 1/\varepsilon)$ such that
\begin{equation}\label{eq:IJKW_conclusion}
\E_{\substack{x \sim \univ,\\y \sim \{0,1\}^*,\;\algoIJKW}}\Big [\mathcal{B}^\algoIJKW(x,y) = g(x)\Big ] \geq 1 - \delta.
\end{equation}
Moreover, there is a uniform randomized algorithm $\mathcal{D}$ that, given $n$, $k$, $\varepsilon$, and $\delta$ satisfying the conditions of the theorem, access to a description of $\algoIJKW$, and the ability to run $\algoIJKW$ on a given input $B \in \mathcal{S}_{n,k}$, computes in time $\mathsf{poly}(n,k,\log(1/\delta), 1/\varepsilon)$  and outputs with probability $\zeta = \Omega(\varepsilon^2)$ over its internal randomness and the randomness of $\algoIJKW$ a description of a circuit $\mathcal{B}^{\mathcal{O}}$ with  this property.
\end{theorem}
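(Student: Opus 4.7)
The plan is to follow the local list-decoding template of Impagliazzo--Jaiswal--Kabanets--Wigderson \citep{impagliazzo2010uniform}, adapted to the fact that $\algoIJKW$ is inherently probabilistic: every invocation of $\algoIJKW(B)$ draws an independent sample from the distribution $\algoIJKW(B)$, and there is no mechanism to ``freeze'' its internal randomness across calls. The construction has two layers. First, the uniform algorithm $\mathcal{D}$ picks a uniformly random advice set $A \subseteq \univ$ of size $k/2$, a uniformly random $k$-set $B \supseteq A$, performs a single call to $\algoIJKW(B)$ to obtain $v \in \{0,1\}^k$, and emits the description of an oracle circuit $\mathcal{B}^{\mathcal{O}}$ with the pair $(A,a)$ hardwired, where $a \eqdef v|_A$. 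Second, $\mathcal{B}^{\algoIJKW}(x,y)$ uses $y$ to draw $T = \poly(1/\varepsilon,\log(1/\delta))$ independent $k$-sets $B'_1,\ldots,B'_T$ each containing $A \cup \{x\}$, queries $\algoIJKW(B'_i)$ to obtain $v'_i$, discards those $i$ for which $v'_i|_A$ disagrees with $a$ in more than a $\tau$-fraction of coordinates (for a small threshold $\tau$ to be chosen), and outputs the majority value of $\{(v'_i)_x\}$ over the surviving indices (defaulting arbitrarily if no $i$ survives).

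\textbf{Correctness analysis.} Call an advice pair $(A,a)$ \emph{good} if $a = g|_A$ and, for a $(1-\delta)$-fraction of $x \in \univ$, the probability over a fresh extension $B' \supseteq A\cup\{x\}$ and a fresh sample $v' \sim \algoIJKW(B')$ that $v'|_A$ matches $a$ on at least $(1-\tau)\cdot k/2$ coordinates and simultaneously $v'_x = g(x)$ exceeds the corresponding ``matches and errs at $x$'' probability by a margin $\Omega(\varepsilon)$. On a good advice pair, a Chernoff bound on the $T$ independent trials drives the error below $\delta$ using the relation (\ref{eq:IJKW_relation_eps_delta}) between $k$, $\varepsilon$ and $\delta$. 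The remaining task is to show that $\mathcal{D}$ hits a good $(A,a)$ with probability $\Omega(\varepsilon^2)$ over its coins and over the single call to $\algoIJKW$. Here I would follow the IJKW two-step recipe: (i) by averaging inside \eqref{eq:assumption-ijkw}, an $\Omega(\varepsilon)$-fraction of $B$'s are \emph{heavy} in the sense that $\Pr_{\algoIJKW}[\algoIJKW(B) = g^k(B)] \ge \Omega(\varepsilon)$; (ii) conditioned on $B$ heavy and on the event $\algoIJKW(B) = g^k(B)$ (so that the single sample indeed yields $a = g|_A$), a uniformly random half-sized subset $A \subseteq B$ is, with constant probability, \emph{typical}: for most $x$ a random extension $B' \supseteq A\cup\{x\}$ and fresh sample from $\algoIJKW(B')$ hits the target $(g|_A, g(x))$ with probability $\Omega(\varepsilon)$. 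Combining (i) and (ii) gives the $\Omega(\varepsilon^2)$ bound.

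\textbf{The main obstacle.} The substantive departure from \citep{impagliazzo2010uniform} lies in step (ii). In the deterministic-circuit setting, ``$C$ is correct on $B$'' is a fixed set-valued event, and one exploits a clean combinatorial symmetry: sampling $B$ with a random $A \subseteq B$ of size $k/2$ induces the same distribution on $A$ as sampling $A$ directly and then extending. Here ``$\algoIJKW(B) = g^k(B)$'' is a probabilistic event, and crucially the realization at advice-generation time (inside $\mathcal{D}$) is stochastically \emph{independent} of every realization at decoding time (inside $\mathcal{B}^{\mathcal{O}}$). Every combinatorial identity in IJKW must therefore be lifted to a statement about conditional expectations of the function $B \mapsto \Pr_{\algoIJKW}[\algoIJKW(B) = g^k(B)]$, and one must ensure that consistency with the advice on $A$ is a genuine Bayesian witness of correctness on new coordinates even though all samples are independent. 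The inability to ``reuse'' the single advice-generating sample is precisely what forces the extra $\varepsilon$ factor, yielding $\Omega(\varepsilon^2)$ rather than $\Omega(\varepsilon)$ success probability; verifying that this is the only loss, with tight control on the threshold $\tau$, is the main technical work.

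\textbf{Complexity and uniformity.} The size bound $\poly(n,k,\log(1/\delta),1/\varepsilon)$ for $\mathcal{B}^{\mathcal{O}}$ follows by inspection: storing $(A,a)$ uses $O(nk)$ bits, sampling a $k$-set containing a fixed subset costs $\poly(n,k)$, the consistency test is $O(k)$, and $T = \poly(1/\varepsilon,\log(1/\delta))$ trials suffice. The same bound governs the runtime of $\mathcal{D}$, which executes the same sampling operations plus a single call to $\algoIJKW$ via its given description. Since every interaction with $\algoIJKW$ in both $\mathcal{D}$ and $\mathcal{B}^{\mathcal{O}}$ is a classical query, \cref{lem:quantum_as_inherently_random} will later allow the argument to be transported verbatim to the quantum setting needed for the $\PRG$ construction.
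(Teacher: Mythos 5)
Your construction is essentially the paper's: the advice generator picks a random $k$-set $B$, a random half-set $A\subseteq B$, makes one call to $\algoIJKW(B)$ and hardwires $(A, \algoIJKW(B)|_A)$; the decoder repeatedly samples $B'\supseteq A\cup\{x\}$, filters by consistency with the advice on $A$, and reads off the value at $x$ (the paper uses exact consistency and outputs the first surviving answer, you use a $\tau$-threshold and a majority vote — a harmless variant). The complexity and uniformity accounting is also fine. The problem is that the two statements that carry all the weight of the theorem are asserted rather than proved. Your step (ii) — that conditioned on a heavy $B$ and on the single advice sample being correct, a random half-set $A$ is ``typical'' with constant probability — together with your claim that typicality/goodness lets a Chernoff bound finish the decoding analysis, is exactly the content of the paper's two main lemmas (excellence implies correctness, and excellent edges are abundant), and you explicitly defer it (``verifying that this is the only loss \ldots is the main technical work''). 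Since the whole point of the theorem is that the IJKW argument does \emph{not} lift routinely to inherently probabilistic circuits, leaving this step as ``follow the IJKW two-step recipe, lifting combinatorial identities to conditional expectations'' is a genuine gap, not a routine omission.

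Concretely, two things are missing. First, your notion of ``typical'' in (ii) only lower-bounds the probability of hitting the target $(g|_A, g(x))$; correctness of the majority (or first-answer) rule additionally needs an upper bound on the mass of samples that are consistent on $A$ yet wrong at $x$, and this bound must hold with respect to the distribution of the $B'$ that actually survive the filter — which weights each $B'$ by its probability of passing the consistency test, not uniformly. In the paper this is exactly what forces the definition of excellence via the consistency-weighted distribution $\mathcal{W}_\Is(A)$ and the $\mathsf{ErrCons}$ quantity, and the proof that the conditional error is $O(\alpha)$ requires the bucketing of $N_\Is(A)$ by $p_{\mathsf{cons}}$ (Claims such as the transfer from $\mathcal{W}_\Is(A)$ to uniform sampling on large buckets). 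Your margin-style definition of a good pair implicitly encodes this, but you never show such pairs exist in abundance. Second, the abundance statement itself — that an $\Omega(\varepsilon)$ fraction of pairs $(A,B)$ are good in the error-controlled sense, so that the overall success probability is $\Omega(\varepsilon^2)$ — is the other half of the paper's work (the argument via the event $\mathcal{E}$, $A$-heavy sets, and the claim relating heaviness under $\mathcal{W}_\Is(A)$ to heaviness under uniform sampling with $p_{\mathsf{cons}}\geq\varepsilon^\ast$); nothing in your sketch substitutes for it. Until you supply proofs of these two statements (or a genuinely different route to them), the argument is a plan rather than a proof.
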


\begin{remark}[Amplification] We  observe that the generating probability $\zeta$ can be amplified using standard techniques (repetition and hypothesis testing) if the uniform randomized algorithm $\mathcal{D}$ is also given oracle access to  the function $g$.
\end{remark}
 
 \begin{remark}[On the correlation between $\algoIJKW$ and $g^k$] Before we proceed with the proof of Theorem \ref{thm:ijkw}, it is worth pointing out different ways in which the inherently probabilistic circuit $\algoIJKW$ can be correlated with $g^k$. First, $\algoIJKW$ might behave as a \emph{deterministic} circuit, being correct on an arbitrary set $V \subseteq \mathcal{S}_k$ of relative size $\varepsilon$ (i.e., on each $B \in V$ we have $\algoIJKW(B) = g^k(B)$ with probability $1$), and being incorrect elsewhere. At the other extreme, it is also possible for $\algoIJKW$ to agree with $g^k(B)$ on each $B \in \mathcal{S}_k$ with probability about $\varepsilon$ over its internal randomness, spreading out its correlation with $g^k$ across all inputs. Since $\algoIJKW$ is inherently probabilistic, there is no simple way of reducing this case to the preceding case (e.g.,~by fixing $\algoIJKW$'s internal randomness). Finally, it is possible that $\algoIJKW$'s behavior combines in an arbitrary way the two aforementioned cases, while maintaining an overall advantage $\varepsilon$ with respect to $g^k$. A proof of Theorem \ref{thm:ijkw} needs to address all possible scenarios. 
 \end{remark}
 
 To present the local decoding algorithm with which we will prove \cref{thm:ijkw}, it will be convenient to refer to the bipartite graph (the incidence graph of the Johnson scheme) that contains all $k/2$-sets on the left and all $k$-sets on the right, where the edges correspond to incidence of the sets.

\begin{definition}[Edges and Neighbors]
We define the set of \emph{edges} $\Is = \{(A,B) \in 
\mathcal{S}_{k/2}  \times \mathcal{S}_{k}: A \subseteq~B\}$. We define the \emph{neighbors} of a set $A$ by $N_\Is(A) = \{B \in \mathcal{S}_k : (A,B) \in \Is\}$ and the neighbors of $A \in \mathcal{S}_{k/2}$ and $x \in \univ \setminus A$ as $N_\Is(A,x) = \{B \in \mathcal{S}_k : A \cup \{x\} \subseteq B\}$. 
\end{definition}

We proceed to present the key sub-procedure for the list-decoding algorithm, which is a circuit that, given a $k/2$-set $A$ and an assignment $w$ to $A$, attempts to compute $g(x)$ for a given $x$  by choosing a random neighbour $B'$ of $A$ and $x$, running $\algoIJKW$ on $B'$ and outputting the corresponding value for $x$ if the output of $\algoIJKW(B')$ is consistent with $w$ on $A$.

\begin{construction}[The decoding circuit $C_{A,w}$]
\label{const:CAw}
For a fixed $A \in \mathcal{S}_{k/2}$, $w \in \01^{k/2}$, and a parameter $T$, we define the randomized circuit $C_{A,w}$ with oracle access to $\algoIJKW$ that, on input $x \in \univ$, works as follows:
\begin{enumerate}
    \item If $x \in A$, output $w|_x$.\footnote{By $w|_x$, we mean the following: suppose $g(y_1,\ldots,y_{k/2})=(w_1,\ldots,w_{k/2})$ for $y_i\in \01^n$ and $w_i\in \01$, and $y_\ell=x$ for some $\ell\in [k/2]$, then $w|_x=w_\ell$.}
    \item Repeat for $T$ steps:
    \begin{enumerate}[label*=\arabic*.]
        \item Pick a uniformly random $B' \in N_\Is(A,x)$.
        \item Sample $v' \sim \algoIJKW(B')$.
        \item If $v'|_{A} = w$, output $v'|_x$.
    \end{enumerate}
    \item Output $\bot$.
\end{enumerate}
\end{construction}

Using \cref{const:CAw}, the list-decoding algorithm will decode by evaluating the input circuit~$\algoIJKW$ on a random $k$-set $B$ and outputting the circuit $C_{A,w}$ with respect to a random $k/2$-set $A \subset B$.

\begin{construction}[The list-decoding algorithm $\mathcal{D}$]
\label{const:decoder}
The uniform randomized algorithm $\mathcal{D}$ is given $n$, $k$, $\varepsilon$, and $\delta$ satisfying the conditions of \cref{thm:ijkw}, access to a description of $\algoIJKW$, and a parameter $T$. The algorithm $\mathcal{D}$ operates as follows:
\begin{enumerate}
    \item Pick a random $k$-set $B \in \mathcal{S}_{n,k}$ and a random $k/2$-subset $A \subseteq B$. 
    \item Sample $\algoIJKW(B)$ and use it to set $w = \algoIJKW(B)|_A$.
    \item Output the description of the circuit $C_{A,w}$ (defined in \cref{const:CAw}) with respect to the given parameter $T$.
\end{enumerate}
\end{construction}

Similar to~\cite{impagliazzo2010uniform}, we prove \cref{thm:ijkw} in two steps. First,  we show that if $A$ and $w$, which were randomly chosen by the list-decoder, have certain desired properties and the repetition parameter $T$ is sufficiently large, the circuit $C_{A,w}$ computes $g$ with high probability over a random input $x \in \univ$, its internal randomness, and the inherent randomness of $\algoIJKW$. (Therefore, the oracle circuit $\mathcal{B}^\mathcal{O}$ in the statement of the lemma will be $C_{A,w}$ for a good choice of $A$ and $w$, with its oracle $\mathcal{O}$ computing as $\algoIJKW$, and the input string $y$ used as a source of random bits.) Then, we argue that with probability at least $\zeta = \Omega(\varepsilon^2)$  the uniform randomized algorithm $\mathcal{D}$ is able to generate good parameters $A$ and $w$. 

Throughout this section, we define $\correctCircuitA = C_{A,g^{k/2}(A)}$, for simplicity.

\begin{remark}[Well-defined conditional probabilities]
\label{rmk:well-defined-cond}
In order to simplify our exposition, we will assume without loss of generality throughout this section that for every $B \in \mathcal{S}_k$, we have $\Pr_{\algoIJKW}[\algoIJKW(B) = g^k(B)] > 0$. Indeed, this can be easily obtained by defining a modified circuit $\algoIJKW'$ from $\algoIJKW$ that, say, with probability $2^{-n}$ outputs a uniformly random value in $\{0,1\}^k$, and otherwise runs $\algoIJKW$ on $B$. We have this assumption so that definitions and proofs become simpler when considering certain conditional probabilities. We stress that our argument would still work without this trick, with a slightly more complicated exposition.
\end{remark} 

We will need a few definitions to prove the theorem. Observe that some of them implicitly refer to $\algoIJKW$ and $g$, which are fixed for the remainder of this section. We start with the definition of correctness of the algorithm $\algoIJKW$ on a $k$-set. 
\begin{definition}
For $B \in \mathcal{S}_{k}$, we define 
$$
\mathsf{Corr}_\algoIJKW(B) = \E_{\algoIJKW}[\algoIJKW(B) = g^k(B)].
$$ 
We say that $B $ is $\eta$-correct if 
$\mathsf{Corr}_\algoIJKW(B) \geq \eta$.
\end{definition}
Notice that, using this notation, we can rewrite the assumption of \Cref{eq:assumption-ijkw} in \cref{thm:ijkw}~as
\[
\E_{B\in  \mathcal{S}_k}\left[ \mathsf{Corr}_\algoIJKW(B) \right] \geq \eps.
\]
Since $\algoIJKW$ is fixed throughout this section, we might write $\mathsf{Corr}(B)$ instead of $\mathsf{Corr}_\algoIJKW(B)$.

Next, we define good edges as (mostly) correct sets for which many of their neighbours are also (mostly) correct.

\begin{definition}[Good edges]
We say an edge $(A,B) \in \Is$ is $(\gamma, \eta)$-{\em good} if 
\begin{itemize}
    \item[\emph{(\emph{i})}]  $B$ is $\eta$-correct; and
    \item[\emph{(\emph{ii})}] A $\gamma$ fraction of the neighbors of $A$ are $\eta$-correct, i.e.,
$$
\E_{B' \in N_\Is(A)}[\mathsf{Corr}(B') \geq \eta]\geq \gamma.
$$
\end{itemize}
\end{definition}

We remark that with the above definition we already start deviating from \cite{impagliazzo2010uniform}. Since their version of $\algoIJKW$ is deterministic, its answers are either correct or wrong and therefore they can afford to have $\eta = 1$. In our case, we need to take into account the \emph{randomized} aspect of $\algoIJKW$ and therefore we need to be more flexible with the definition of goodness.  

\newcommand{\Corr}{\ensuremath{\mathsf{Corr}}}
\newcommand{\ErrCons}{\ensuremath{\mathsf{ErrCons}}}

As in \cite{impagliazzo2010uniform}, in order to prove the correctness of $C_{A,w}$, we need edges that satisfy a stronger property than the above, referred to as ``excellence''. For that, we first define the function that computes the probability that an edge $(A,B')$ leads to a wrong answer on a random $x \in B'\setminus A$ conditioned on a correct answer on $A$.
\begin{definition}
For an edge $(A,B') \in \Is$, we define 
\begin{align}
\ErrCons(A,B') = \E_{\substack{x \in \NB}}
\big[p_{\mathsf{err}}(x,B') \big],
\end{align}
where $p_{\mathsf{err}}(x,B') = \Pr_{y \sim \algoIJKW(B')}[y|_x \neq g(x) \mid y|_A = g^{k/2}(A)]$.
\end{definition}

Using the foregoing definition, we define an excellent edge $(A,B)$ as a good edge for which the expected fraction of errors in the neighbors of $A$ is small, where this expectation gives weight to the edges based on their probability of being correct on $A$.

\begin{definition}[Excellent edges]
\label{def:excellent}
We call an edge $(A,B)$ $(\eta , \gamma, \alpha)$-{\em excellent} if it is $(\eta, \gamma)$-good~and
\begin{align}\label{eq:def-excellent}
\E_{\substack{B' \sim \mathcal{W}_\Is(A)}} \left[ \ErrCons(A,B')
    \right] \leq \alpha,
\end{align}
where the distribution $\mathcal{W}_\Is(A)$, supported over $N_\Is(A)$, is defined as follows: for each $B' \in N_\Is(A)$, let $p_\mathsf{cons}(B') = \Pr_{\algoIJKW}\big[\algoIJKW(B')|_A = g^{k/2}(A)\big]$. Moreover, let $p_{\mathsf{tot}}(A) = \sum_{B' \in N_\Is(A)} p_{\mathsf{cons}}(B')$. Then each~$B'$ is assigned probability $p_\mathsf{cons}(B')/p_{\mathsf{tot}}(A)$.
\end{definition}

For the reader familiar with \cite{impagliazzo2010uniform}, we mention that the definition above is really crucial. It refers to a more general class of distributions $\mathcal{W}_\mathcal{I}(A)$ when contrasted with the analogous definition from their paper. (This distribution naturally appears in our error analysis when we condition on $C_{A,w}$ not outputting the error symbol $\perp$.) Jumping ahead, it ties the two main lemmas stated below and proved in the subsequent sections, and introduces significant difficulties when establishing each of them.

Given these definitions, we can now state the main two lemmas needed for the proof of \Cref{thm:ijkw}. The first lemma shows that if the decoding algorithm hits an excellent edge, then it will decode correctly with high probability. The second lemma shows that there are many excellent edges, and so the decoding algorithm will hit one with non-trivial probability (this is close to 0). 

\begin{lemma}[Excellence implies correctness]\label{lem:excellent-implies-correctness}
Fix some $0 \leq \beta \leq 1$ and let $\lambda = 2 e^{-\beta k/24}$. Moreover, assume that
$$
\gamma, \eta < 1/10 \quad \text{and} \quad 4 e^{-k\alpha/12} \;\leq\; (\gamma \cdot \eta)^5.
$$
If $(A,B)$ is a $(\gamma,\eta,\alpha)$-excellent edge, then 
\[\E_{\substack{x \sim \univ \setminus A,\\ \correctCircuitA ,\;\algoIJKW}}[\correctCircuitA(x) = g(x)] \geq  1 - \beta - (1-\eta(\gamma-\lambda)/2)^T - 16 \alpha, \]
where
$T$ is the number of iterations in $\correctCircuitA$.
\end{lemma}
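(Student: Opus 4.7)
The plan is to bound the total error of $\correctCircuitA$ on a random $x \in \univ \setminus A$ by decomposing each per-$x$ error into the event ``all $T$ iterations fail to be consistent on $A$'' (the $\bot$-output event) and the event ``some iteration is consistent but returns the wrong value''. For a fixed $x$, let $p_\mathsf{cons}(x) = \E_{B' \in N_\Is(A,x)}[p_\mathsf{cons}(B')]$ denote the probability that a single iteration produces an $A$-consistent $v'$, and let $q_\err(x) = \E_{B' \in N_\Is(A,x),\,\algoIJKW}[\mathbf{1}[\algoIJKW(B')|_A = g^{k/2}(A)\wedge \algoIJKW(B')|_x \neq g(x)]]$ denote the probability that a single iteration is consistent yet wrong. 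A geometric-series computation gives
\[
\Pr[\correctCircuitA(x) \neq g(x)] \;\leq\; (1-p_\mathsf{cons}(x))^T \;+\; \frac{q_\err(x)}{p_\mathsf{cons}(x)}\,\bigl(1 - (1-p_\mathsf{cons}(x))^T\bigr),
\]
so I can handle the two expected terms separately.

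For the $\bot$-term, I will establish a hypergeometric concentration statement leveraging $(\gamma,\eta)$-goodness of $(A,B)$: a $\gamma$-fraction of $B' \in N_\Is(A)$ lie in $\mathcal{G}_A := \{B' : \Corr(B') \geq \eta\}$. Because, for uniform $x \in \univ \setminus A$, a uniform $B' \in N_\Is(A,x)$ is equidistributed with a uniform $B' \in N_\Is(A)$ (by the standard symmetry of the Johnson scheme), a Chernoff-type deviation bound for sampling without replacement yields
\[
\Pr_{x}\!\Bigl[\Pr_{B' \in N_\Is(A,x)}[B' \in \mathcal{G}_A] < \gamma - \lambda\Bigr] \leq \beta,
\qquad \lambda = 2e^{-\beta k/24}.
\]
For every $x$ outside this exceptional set, $p_\mathsf{cons}(x) \geq \eta(\gamma-\lambda) \geq 2 p_0$ with $p_0 := \eta(\gamma-\lambda)/2$; bounding the error by $1$ on the remaining $\beta$-fraction, I obtain $\E_x[(1-p_\mathsf{cons}(x))^T] \leq \beta + (1-p_0)^T$, matching the first two terms of the statement.

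For the wrong-non-$\bot$ term I will exploit excellence through double counting. Unfolding $\sum_{B' \in N_\Is(A)} p_\mathsf{cons}(B')\cdot \ErrCons(A,B')$ over pairs $(x,B')$ with $x \in B'\setminus A$ and using that $|N_\Is(A,x)|/|N_\Is(A)| = (k/2)/|\univ \setminus A|$ produces the identity
\[
\E_{x \in \univ \setminus A}\bigl[q_\err(x)\bigr] \;\leq\; \alpha \cdot \bar{p},
\qquad \bar{p} := \E_{B' \in N_\Is(A)}[p_\mathsf{cons}(B')].
\]
Combining this with the concentration above (on ``good'' $x$ with $p_\mathsf{cons}(x) \geq p_0$) controls the wrong-non-$\bot$ contribution by $\bar{p}\alpha/p_0$, while the bad $x$ contribute at most the $\beta$ already absorbed into the $\bot$-term.

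The main obstacle is finalising the prefactor to obtain exactly $16\alpha$: the naive ratio $\bar{p}/p_0$ can be as large as $O(1/(\gamma\eta))$ when $\gamma,\eta < 1/10$, so the straightforward bound gives $O(\alpha/(\gamma\eta))$ rather than $O(\alpha)$. To overcome this I will run a second, finer concentration argument on $p_\mathsf{cons}(x)$ itself, showing that $p_\mathsf{cons}(x)$ not only exceeds $p_0$ but actually concentrates near $\bar{p}$ outside an exceptional set whose measure is controlled by $e^{-k\alpha/12}$. The hypothesis $4 e^{-k\alpha/12} \leq (\gamma\eta)^5$ is precisely what balances both concentration bounds simultaneously and lets me replace the denominator $p_0$ by a constant fraction of $\bar{p}$ on the relevant event, yielding the $16\alpha$ term (up to a routine constant absorption). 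Assembling the three estimates gives the stated bound $1 - \beta - (1-\eta(\gamma-\lambda)/2)^T - 16\alpha$.
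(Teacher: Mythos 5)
Your decomposition of the per-input error into the $\bot$ event and the ``consistent but wrong'' event matches the paper, and your double-counting identity $\E_{x}[q_{\mathsf{err}}(x)] \leq \alpha\cdot\bar p$ is a correct consequence of the excellence condition. The gap is the step you label a ``finer concentration argument on $p_{\mathsf{cons}}(x)$'': there is no mechanism to show that $p_{\mathsf{cons}}(x)$ is close to $\bar p$ outside a set of $x$ of measure controlled by $e^{-k\alpha/12}$. For a fixed $x$, the quantity $p_{\mathsf{cons}}(x)=\E_{B'\in N_\Is(A,x)}[p_{\mathsf{cons}}(B')]$ is a deterministic average, so no Chernoff/Hoeffding bound applies in the $x$-variable; the only generic facts available are $\E_x[p_{\mathsf{cons}}(x)]=\bar p$ (which by Markov controls the upper tail, not the lower tail you need) and the goodness-based bound $p_{\mathsf{cons}}(x)\geq \eta(\gamma-\lambda)/2$ off a $\beta$-measure set, which is exactly the estimate you already discarded because it only yields $O(\alpha/(\gamma\eta))$. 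Since each exceptional $x$ can contribute $h(x)$ as large as $1$, obtaining the $16\alpha$ term forces you to bound the exceptional measure by $O(\alpha)$, and that is the real content of the lemma, not a ``routine constant absorption''. The paper never argues pointwise in $x$: it compares $\mu=\E_x[h(x)]$ with the weighted average $\E_{B'\sim\mathcal{W}_\Is(A)}\big[\E_{x\in B'\setminus A}[h(x)]\big]\leq\alpha$, partitions $N_\Is(A)$ into dyadic buckets $\Gamma_i$ according to $p_{\mathsf{cons}}(B')$, uses goodness (namely $p_{\mathsf{tot}}(A)\geq\gamma\eta\,|N_\Is(A)|$) to show that each small bucket and the bottom bucket carry at most $(\gamma\eta)^4$ of the $\mathcal{W}_\Is(A)$-mass, and applies Hoeffding over a uniformly random $B'$ (the $k/2$ elements of $B'\setminus A$ behave like a random sample of $x$'s) to get $\E_{B'\sim\Gamma_i}[\mu(B')]\geq\mu/4$ within each large bucket; the hypothesis $4e^{-k\alpha/12}\leq(\gamma\eta)^5$ is used precisely there. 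The quantity $e^{-k\alpha/12}$ is a tail probability in $B'$-space (the fraction of uniform $B'$ whose local average of $h$ drops below $\mu/2$), and transplanting it to a measure bound on a set of $x$'s is not valid.

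A secondary issue concerns the $\bot$ term: your claimed bound $\Pr_x\big[\Pr_{B'\in N_\Is(A,x)}[B'\in\mathcal{G}_A]<\gamma-\lambda\big]\leq\beta$ is likewise not a ``Chernoff for sampling without replacement'' statement, since for fixed $x$ nothing is sampled; the assertion is about the measure of bad $x$'s and needs an argument. The paper proves the analogous statement only with threshold $(\gamma-\lambda)/2$, via a counting contradiction in which Hoeffding is applied over a uniformly random $B'$ to control how many neighbours of $A$ have atypically large intersection with the putative bad set. The halved threshold suffices, as it still gives per-iteration success probability $\eta(\gamma-\lambda)/2$ and hence the $(1-\eta(\gamma-\lambda)/2)^T$ term in the statement, so you should adopt that version rather than the stronger unproved one.
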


\begin{lemma}[Excellent edges are abundant]\label{lem:excellent-is-abundant}
For any $\alpha < \frac{1}{2}$,
if $\E_{B\in  \mathcal{S}_k}\left[ \mathsf{Corr}(B) \right] \geq \eps$ then at least an $(\varepsilon/3 - \frac{62208}{\alpha^3 \cdot \varepsilon^5} \cdot e^{-\frac{\alpha}{96}\cdot k})$-fraction of the edges $(A,B) \in \Is$ are $(\eps/3,\eps/3,\alpha)$-excellent.
\end{lemma}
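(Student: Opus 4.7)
The plan is to follow the two-stage strategy of Impagliazzo, Jaiswal, Kabanets, and Wigderson~\cite{impagliazzo2010uniform}, adapted to handle the inherent randomness of $\algoIJKW$ and the $p_{\mathsf{cons}}$-weighted distribution $\mathcal{W}_\Is(A)$ that appears in the definition of excellence. In the first stage I would show that an $\eps/3$-fraction of all edges are $(\eps/3,\eps/3)$-good. Applying Markov to $\mathsf{Corr}(B)$ for a uniform $B\in\mathcal{S}_k$ produces a set $H\subseteq\mathcal{S}_k$ of density at least $2\eps/3$ consisting of $(\eps/3)$-correct $B$'s. For uniform $A\in\mathcal{S}_{k/2}$, write $q(A)\eqdef\Pr_{B'\in N_\Is(A)}[B'\in H]$; symmetry of the Johnson scheme gives $\E_A[q(A)]=|H|/|\mathcal{S}_k|\ge 2\eps/3$. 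Since $q\le 1$, the fraction of good edges is
\[
\E_A\!\left[q(A)\cdot\mathbb{1}[q(A)\ge \eps/3]\right] \;\ge\; \E_A[q(A)]-\eps/3 \;\ge\; \eps/3.
\]

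For the second stage, observe that a good edge $(A,B)$ fails to be excellent only if $A$ itself violates the error condition. It therefore suffices to bound the fraction of $A\in\mathcal{S}_{k/2}$ with $\E_{B'\sim\mathcal{W}_\Is(A)}[\ErrCons(A,B')]>\alpha$ by $\tfrac{62208}{\alpha^3\eps^5}\,e^{-\alpha k/96}$ and then subtract from the Stage~1 count. Unfolding the definitions, this weighted expectation equals the conditional probability $\Pr[y|_x\neq g(x)\mid y|_A = g^{k/2}(A)]$, taken over uniform $B'\in N_\Is(A)$, $y\sim \algoIJKW(B')$, and uniform $x\in B'\setminus A$. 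The core of the analysis, adapting IJKW, is to argue that pairs $(B',y)$ with $y|_A = g^{k/2}(A)$ have agreement count $\ell(B',y)\eqdef|\{i\in B':y_i=g(i)\}|$ sharply concentrated near $k$: the hypergeometric tails of $\binom{\ell}{k/2}/\binom{k}{k/2}$ suppress any contribution from $\ell$ far below $k$, while the $(k-\ell)/(k/2)$ fraction of erroneous positions in $B'\setminus A$ is small for $\ell$ near $k$. Combined with a Chernoff bound on $\ell(B',y)$ over the joint randomness of $B'$ and $\algoIJKW$, this yields an exponential-in-$k$ upper bound on $\E_A[P(A)-Q(A)]$, where $P(A)\eqdef\Pr[y|_A = g^{k/2}(A)]$ and $Q(A)\eqdef\Pr[y|_{A\cup\{x\}} = g(A\cup\{x\})]$.

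For any good $A$ coming from Stage~1, every $(\eps/3)$-correct neighbour $B'$ contributes $p_{\mathsf{cons}}(B')\ge \eps/3$, so $P(A)=p_{\mathsf{tot}}(A)/|N_\Is(A)|\ge (\eps/3)\cdot q(A)\ge\eps^2/9$. Since $P(A)-Q(A)$ equals $P(A)$ multiplied by the forbidden conditional error, Markov's inequality then converts the global exponential bound on $\E_A[P(A)-Q(A)]$ into the required tail bound $\tfrac{62208}{\alpha^3\eps^5}\,e^{-\alpha k/96}$ on the fraction of good $A$'s that fail the error condition, which completes the argument.

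The hardest part is the Chernoff-style concentration in Stage~2. In~\cite{impagliazzo2010uniform} each $B'$ returns a single deterministic value, reducing the analysis to counting ``consistent'' $B'$'s; in our setting the distribution $\algoIJKW(B')$ assigns mass to arbitrary mixtures of partially-correct outputs, and the weights $p_{\mathsf{cons}}(B')$ inside $\mathcal{W}_\Is(A)$ cannot be renormalised to a uniform distribution over consistent neighbours without losing information. The analysis must therefore track the joint distribution of $(B',y,\ell(B',y))$ directly across the randomness of $\algoIJKW$, and this is the principal source of the polynomial $\eps^{-5}$ loss in the final bound relative to the classical result.
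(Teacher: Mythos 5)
Your Stage 1 is correct and matches the paper's counting of good edges, and two further ingredients of your sketch are both true and useful: the identity $\E_{B'\sim\mathcal{W}_\Is(A)}[\mathsf{ErrCons}(A,B')]=(P(A)-Q(A))/P(A)$ with $P,Q$ as you define them, and the lower bound $P(A)\geq\eps^2/9$ whenever a $\geq\eps/3$ fraction of $A$'s neighbours are $(\eps/3)$-correct. The gap is the central quantitative claim of Stage 2: there is no exponential-in-$k$ bound on $\E_A[P(A)-Q(A)]$, and there is no ``Chernoff bound on $\ell(B',y)$ over the joint randomness of $B'$ and $\mathcal{G}$'' to appeal to, because the inherently probabilistic circuit is adversarial and can place all of its output mass on strings with any prescribed agreement count. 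Concretely, let $\mathcal{G}(B')$ output $g^k(B')$ with probability $\eps$ and otherwise output $g^k(B')$ with a single coordinate flipped at a uniformly random position. The hypothesis $\E_B[\mathsf{Corr}(B)]\geq\eps$ holds, yet $\E_A[P(A)-Q(A)]=(1-\eps)\cdot\tfrac12\cdot\tfrac{2}{k}=\Theta(1/k)$: the one-error outputs are consistent on $A$ with probability $1/2$ and err on the sampled $x$ with probability $2/k$. Your Markov step then only yields that an $O\!\left(\tfrac{1}{\alpha\eps^2 k}\right)$ fraction of good $A$'s fail the error condition, which is neither the claimed $\tfrac{62208}{\alpha^3\eps^5}e^{-\frac{\alpha}{96}k}$ nor strong enough for the intended application, where $k\approx\tfrac1\delta\big(\log\tfrac1\delta+\log\tfrac1\eps\big)$ is far smaller than $1/\eps$ in the relevant regime.

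The missing idea is that the low-error outputs---exactly the ones that make $\E_A[P-Q]$ only polynomially small---are harmless for excellence and must be absorbed directly into the conditional error rather than pushed through Markov. This is what the paper does: it restricts the bad event to triples with $\mathsf{err}(B',v')>\alpha/4$, and for each \emph{fixed} such pair $(B',v')$ applies concentration to the random split, i.e., to the probability that a random $k/2$-subset $A\subseteq B'$ avoids all $>\alpha k/4$ error positions while $x$ lands on one of them; this is the sole source of the $2e^{-\frac{\alpha}{96}k}$ factor. Outputs with $\mathsf{err}(B',v')\leq\alpha/4$ contribute at most $\alpha/2$ to the conditional error, since once $v'|_A=g^{k/2}(A)$ all their errors lie in $B'\setminus A$. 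If you replace $P(A)-Q(A)$ by $R(A)\eqdef\Pr\big[v'|_A=g^{k/2}(A)\wedge v'|_x\neq g(x)\wedge\mathsf{err}(B',v')>\alpha/4\big]$, then your identity together with $P(A)\geq\eps^2/9$ does let you finish by Markov, yielding a streamlined variant of the paper's argument (which instead routes through ``$A$-heavy'' neighbours and the inequality $\Pr[\calE_1]\leq\Pr[\calE]/\Pr[\calE\mid\calE_1]$). As written, however, your proof rests on a false intermediate bound and does not go through.
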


We prove \Cref{lem:excellent-implies-correctness,lem:excellent-is-abundant} in \Cref{sec:excellent-implies-correctness,sec:excellent-is-abundant}, respectively. Assuming these results, we are ready to prove  \Cref{thm:ijkw}.

\begin{proof}[Proof of \Cref{thm:ijkw}]
We begin with the first part of the result; namely, showing that the algorithm $C_{A,w}$ defined in \cref{const:CAw}, parameterized as we specify below, is of size $\mathsf{poly}(n,k,\log(1/\delta), 1/\varepsilon)$ and satisfies
\begin{equation*}
\E_{\substack{x \sim \univ,\\y \sim \{0,1\}^*,\;\algoIJKW}}\Big [C_{A,w}^\algoIJKW(x,y) = g(x)\Big ] \geq 1 - \delta.
\end{equation*}
First, note that we can assume that $\varepsilon < 1/10$ without loss of generality. For this it is sufficient to redefine $\algoIJKW$ to output a random value with probability $1 - 10^{-3}$, and to compute as before otherwise. Its overall success probability drops at most by a constant factor, and now $\varepsilon < 1/100$. We also assume that $\delta \leq 1/2$,  by taking a smaller $\delta$ if necessary. 

Let $C$ be a large positive constant independent of the remaining parameters. Let
$$
\gamma, \eta \eqdef \varepsilon/3 < 1/10 \quad \text{and} \quad \beta \eqdef \delta/3 \quad \text{and} \quad \alpha \eqdef \delta/48 \quad \text{and} \quad T = C \cdot \log(1/\delta) \cdot (1/\varepsilon^2),
$$ 
where $T$ is the repetition parameter in \cref{const:decoder}. Note that
$$
k \;\geq\; C \cdot \frac{1}{\delta} \cdot \left [ \log \! \left( \frac{1}{\delta} \right )  + \log \! \left  ( \frac{1}{\varepsilon} \right ) \right ],
$$
and so $4 e^{-k\alpha/12} \;\leq\; (\gamma \cdot \eta)^5$, for a sufficiently large choice of $C$.

By applying \cref{lem:excellent-is-abundant} with respect to these $\alpha$, $\eps$, and $k$ we have that
 at least an $(\varepsilon/3 - \frac{62208}{\alpha^3 \cdot \varepsilon^5} \cdot e^{-\frac{\alpha}{96}\cdot k})$-fraction of the edges $(A,B) \in \Is$ are $(\eps/3,\eps/3,\alpha)$-excellent. Let $\lambda = 2 e^{-\beta k/24} < \varepsilon/6$. Note that our choice of parameters satisfies all the conditions of \cref{lem:excellent-implies-correctness}, and thus by invoking it we obtain that if $(A,B)$ is a $(\gamma,\eta,\alpha)$-excellent edge, then 
\[\E_{\substack{x \sim \univ \setminus A,\\ \correctCircuitA,\;\algoIJKW}}[\correctCircuitA = g(x)] \geq  1 - \beta - (1-\eta(\gamma-\lambda)/2)^T - 16 \alpha, \]
where $\correctCircuitA = C_{A,g^{k/2}}(x)$ as previously defined.

Now, observe that
$$
(1 - \eta (\gamma - \lambda)/2)^T \leq \left (1 - \frac{\varepsilon}{3} \left ( \frac{\varepsilon}{3} - \frac{\varepsilon}{6} \right ) \cdot \frac{1}{2} \right )^T = \left ( 1 - \frac{\varepsilon^2}{36} \right )^T \leq e^{-(\varepsilon^2/36) \cdot T} \leq \frac{\delta}{3},
$$
where we have used that $\varepsilon < 1/100$ and that the constant $C$ appearing in $T$ is large. As a consequence, if the randomized algorithm $\mathcal{D}$ succeeds in sampling a pair $(A,B)$ that is $(\varepsilon/3, \varepsilon/3, \delta/48)$-excellent and it also samples a value $w = \algoIJKW(B)|_A$ which agrees with $g^{k/2}(A)$, we get the randomised circuit $\correctCircuitA$ with oracle access to $\algoIJKW$ such that  %
\[
\E_{\substack{x \sim \univ \setminus A,\\\correctCircuitA ,\;\algoIJKW}}[\correctCircuitA(x) = g(x)] \;\geq\;  1  - \delta/3 - \delta/3 - \delta/3 = 1 - \delta. \]
It is not hard to see that the inequality above also holds when $x$ is sampled from the (marginally) larger set $\univ$, since $\correctCircuitA$ is always correct for $x \in A$. 

On the other hand, note that, due to our choice of parameters in Lemma \ref{lem:excellent-is-abundant} we have
$$
\frac{62208}{\alpha^3 \cdot \varepsilon^5} \cdot e^{-\frac{\alpha}{96}\cdot k} \;\leq\; \varepsilon/6.
$$
It follows from Lemma \ref{lem:excellent-is-abundant} that at least an $(\varepsilon/6)$-fraction of the pairs $(A,B) \in \Is$ are $(\varepsilon/3, \varepsilon/3, \delta/48)$-excellent. In particular, excellent edges of this form exist.

To sum up, the first part of the lemma follows by picking $\calB^{\mathcal{O}} = \correctCircuitA$ with $\mathcal{O} = \algoIJKW$ for any choice of $(A,B)$ as above. Note that, by our choice of $T$, the number of gates in the circuit $\mathcal{B}^{\mathcal{O}}$ satisfies the parameters of the lemma.  

We now prove that the uniform randomized algorithm $\mathcal{D}$ outputs a circuit with the desired properties with probability at least $\Omega(\eps^2)$ over its internal randomness and the randomness of $\algoIJKW$. It follows from our discussion in the paragraph above that, in order for the circuit output by $\mathcal{D}$ to have the desired accuracy, we need two properties from the  values produced by $\mathcal{D}$:
\begin{itemize}
    \item[(\emph{i})] the randomly selected edge $(A,B)$ is $(\eps/3,\eps/3,\delta/48)$-excellent; and
    \item[(\emph{ii})] the sampled value $w = \algoIJKW(B)|_A$ coincides with $g^{k/2}(A)$.
\end{itemize}
As we have already established, an $\Omega(\eps)$ fraction of edges $(A,B)$ are $(\eps/3,\eps/3,\delta/48)$-excellent, so by picking a random $k$-set $B$ and a random $k/2$-subset $A$ of $B$, which produces a uniformly random edge $(A,B)$, we have that $(A,B)$ is an excellent edge with probability $\Omega(\eps)$. Then, assuming that $(A,B)$ is $(\eps/3,\eps/3,\delta/48)$-excellent, it follows that $\algoIJKW(B) = g^k(B)$ with probability at least $\eps/3$, which implies that the probability that $\algoIJKW(B)|_A = g^{k/2}(A)$ is also at least $\eps/3$. Therefore, the probability that $\calD$ outputs a circuit with the desired properties is at least $\Omega(\eps^2)$.
\end{proof}

\subsubsection{Excellence implies correctness}
\label{sec:excellent-implies-correctness}

 In this section, we prove \Cref{lem:excellent-implies-correctness}, which shows that if an excellent edge $(A,B)$ is picked, then $\tilde{C}_A$ computes $g(x)$ with high probability, on average over the $x$'s. Since this proof is fairly technical, we first give an overview of its structure, before diving into the details.  Recall that the goal is to upper bound the quantity
\begin{align}
\label{eq:sketch0}    
\E_{\substack{x \in \univ \setminus A\\\correctCircuitA,\;\algoIJKW}}[\correctCircuitA(x) \ne g(x)].
\end{align}
It is easy to see that the event $\correctCircuitA(x) \ne g(x)$ occurs if either (1) the decoder outputs $\bot$ and aborts or (2)  the decoder does not output $g(x)$ conditioned on not outputting $\bot$.

We first show that the probability of (1) happening is small. More precisely, we show that given a fixed excellent edge $(A,B)$ and a sufficiently large number of iterations $T$, for most $x \in \univ \setminus A$ the probability that $\correctCircuitA$ outputs $\bot$ on $x$ is negligibly small. 

Then, we follow up showing that (2) happens with low probability, which turns out to be much more cumbersome than \cite{impagliazzo2010uniform}.
Here, we need to upper bound 
\begin{align}
    \label{eq:sketch2}
\mu \eqdef \E_{x \in \univ \setminus A }\hspace{1mm}\underbrace{\E_{y \sim \correctCircuitA(x),\;\algoIJKW} [y \ne g(x) \mid y \ne \bot]}_{:=h(x)}.
\end{align}

If we dive into the definition of $\tilde{C}_A(x)$, \Cref{eq:sketch2} computes, for a uniformly random $x \in \univ \setminus A$, the probability that at some iteration $\tilde{C}_A$ picks some $B' \sim N_{\Is}(A,x)$ and then $\algoIJKW(B')$ outputs some value $y$ that is incorrect on $x$ given that it is  correct on $A$. Alternatively (as a thought experiment), we could consider a different  sampling procedure that picks $B' \sim N_{\Is}(A)$ and then picks a uniformly random $x \in B' \setminus A$ and analyze the probability of being incorrect to $x$ conditioned on the fact of being correct to $A$.  
This latter procedure would give an expression similar to that for determining the excellence of the edge $(A, B)$ and could then be related to $\alpha$ as required. In fact, this is the approach taken by~\cite{impagliazzo2010uniform} where the sampling behaviour satisfied by $A, B$ and~$x$ (in their case), allows them to switch between both scenarios in a fairly direct way. This already becomes more complicated in our case. 

 For one thing, in our definition of excellence (see~\Cref{def:excellent}), we sample $B' \sim \mathcal{W}_\Is(A)$ and not uniformly at random. However, we succeed in indirectly relating the $\alpha$-excellence of $(A, B)$ to Eq.~\eqref{eq:sketch2} in two steps: (i) show that $\alpha$ upper bounds the expression 
 \begin{align}
 \label{eq:sketch3}
     \E_{B' \sim \mathcal{W}_I(A)}  \E_{x \sim \NB} [h(x)];
 \end{align}
 (ii) lower bound Eq.~\eqref{eq:sketch3} in terms of $\mu$. %
 In order to relate these two expressions, we still need to contend with the fact that $B' \in N_\Is(A)$ is sampled with probability proportional to $p_{\mathsf{cons}}(B)$ in Eq.~\eqref{eq:sketch3}. To this end, we first partition the set $N_\Is(A)$ as follows: let $\Gamma_i$ be the set of all $B'\in N_\Is(A)$ such that $p_{\textsf{cons}}(B')$ lies in the interval $(2^{-i},2^{-i+1}]$. Observe that in the scenario of~\cite{impagliazzo2010uniform}, if we were to do a similar partitioning, there would only be \emph{two} sets, $B'$s that are consistent with $A$ and the $B'$s that aren't consistent. For us, since each $B$ is associated with a different weight, each $B$ has a varying ``degree of consistency". We now decompose  Eq.~\eqref{eq:sketch3} into all the distinct buckets and write it as
 \begin{align}
 \label{eq:sketch4}
  \sum_{i} \E_{B' \sim \mathcal{W}_I(A)} \E_{x \sim \NB} \big [h(x) \mid B' \in \Gamma_i \big] \cdot \Pr_{B' \sim \mathcal{W}_I(A)} \big [B' \in  \Gamma_i \big].
 \end{align}
 At this point, we show with some straightforward calculations that most of the contribution in Eq.~\eqref{eq:sketch4} comes from the buckets $\Gamma_i$ that are heavy (i.e., contain many $B'\in N_\Is(A)$ inside them) and for such heavy buckets, we show that  %
 \begin{align}
     \label{eq:sketch5}
     \E_{x \sim \NB} \big [h(x) \mid B' \in \Gamma_i \big] \cdot \Pr_{B' \sim \mathcal{W}_I(A)} \big [B' \in  \Gamma_i \big]\geq \sigma \cdot \E_{\substack{x \in \univ \setminus A\\\correctCircuitA,\;\algoIJKW}}[\correctCircuitA(x) \ne g(x)],
 \end{align}
 for some universal constant $\sigma <1$. At this point, observe that the RHS of Eq.~\eqref{eq:sketch5} is exactly the quantity we wanted to upper bound, i.e., Eq.~\eqref{eq:sketch0} and we have shown that the LHS is at most $\alpha$. This shows that $\mu \leq O(\alpha)$ and concludes the proof of the theorem. We now make this sketch more rigorous below.

We start with the first point of showing that the probability of outputting $\bot$ is small. In order to show this, we will need to use that if $(A,B)$ is a $(\gamma,\eta)$-good edge (which is implied by $(\gamma,\eta,\alpha)$-excellence), we can bound the number of $x \in \univ \setminus A$ for which less than about a $\gamma/2$-fraction of the sets $B' \in N_\Is(A,x)$ have $\E\big[\algoIJKW(B')\big|_A = g^{k/2}(A)\big] \geq \eta$ (these are the $x$'s on which $\correctCircuitA$ is likely to output $\bot$). This claim is formalized as follows.

\begin{lemma}[Analogue of {\cite[Lemma~3.9]{impagliazzo2010uniform}}]
\label{lem:bound-algorithm-answers}
Let $\beta\in [0,1]$ and $\lambda = 2 e^{-\beta k/24}$. Fix a $(\gamma , \eta)$-good edge $(A,B) \in \Is$. Let  $F\subseteq \univ \setminus A$ be the set of elements $x$ such that strictly less than a $(\gamma - \lambda)/2$ fraction of $B' \in N_\Is(A,x)$~satisfy
$
\Pr_{\algoIJKW} \! \big[\algoIJKW(B')\big|_A = g^{k/2}(A)\big] \geq \eta
$.
Then the density $\rho(F) \eqdef |F|/| \univ \setminus A| < \beta$. %
\end{lemma}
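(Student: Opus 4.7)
The plan is to combine a Chernoff-type concentration bound for the intersection of a random $k/2$-subset with a fixed ``dense'' set together with a double-counting identity over $(x,B')$ incidences, in order to bound the density of $F$.

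I will first isolate the subset of $N_\Is(A)$ that is probabilistically consistent with $g$ on $A$. Define
\[
G \;\eqdef\; \Big\{B' \in N_\Is(A) : \Pr_{\algoIJKW}[\algoIJKW(B')|_A = g^{k/2}(A)] \geq \eta\Big\}.
\]
Every $\eta$-correct $B'\in N_\Is(A)$ lies in $G$ (since the event ``$\algoIJKW$ correct on all of $B'$'' refines ``$\algoIJKW$ correct on $A$''), so the $(\gamma,\eta)$-goodness of $(A,B)$ immediately gives $|G|/|N_\Is(A)| \geq \gamma$. Setting $p(x) = |G \cap N_\Is(A,x)|/|N_\Is(A,x)|$, the bad set is exactly $F = \{x \in \univ\setminus A : p(x) < (\gamma-\lambda)/2\}$, and the goal reduces to showing $\rho(F) < \beta$.

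The next step is where the concentration enters: via the bijection $B' \mapsto B'\setminus A$, a uniformly random $B' \in N_\Is(A)$ corresponds to a uniformly random $(k/2)$-subset of $\univ\setminus A$. Assuming $\rho(F) \geq \beta$ for contradiction, a Chernoff (sampling-without-replacement) bound on $|B' \cap F|$ (whose expectation is $\rho(F)\cdot k/2 \geq \beta k/2$) shows that only an $e^{-\Theta(\beta k)}$ fraction of $B' \in N_\Is(A)$ have intersection with $F$ of size below $\rho(F)\cdot (k/2)/2$. In particular most sets in $G$ still contain many elements of $F$, so
\[
\textstyle \sum_{B' \in G}|B'\cap F| \;\geq\; \big(|G| - |N_\Is(A)|\cdot e^{-\Theta(\beta k)}\big)\cdot \rho(F)(k/2)/2.
\]
A direct double-count gives the matching upper bound $\sum_{B' \in G}|B'\cap F| = \sum_{x \in F}|G \cap N_\Is(A,x)| < |F|\cdot (\gamma-\lambda)/2 \cdot |N_\Is(A,x)|$. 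Substituting the regularity identity $|N_\Is(A,x)|/|N_\Is(A)| = (k/2)/(2^n-k/2)$ and $|F|=\rho(F)(2^n-k/2)$, the $\rho(F)$ and edge-count factors cancel from both sides, and I am left with $\lambda < e^{-\Theta(\beta k)}$; tracking constants, the concrete exponent obtained is $\beta k/16$, which is dominated by the chosen $\lambda = 2e^{-\beta k/24}$ since $1/24 < 1/16$, yielding the contradiction.

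The main subtlety, and the place where the quantisation departs from \cite{impagliazzo2010uniform}, is that ``consistency on $A$'' is now a probabilistic event over the internal randomness of $\algoIJKW$ rather than a binary property, which is exactly why the lemma must be phrased in terms of the fuzzy set $G$ of probabilistically-$\eta$-consistent neighbours rather than the set of $\eta$-correct ones. Getting the concrete Chernoff exponent to comfortably clear the stated value of $\lambda$ is a calibration step that one should verify carefully; in particular the factor $2$ appearing in $(\gamma-\lambda)/2$, combined with the $\delta=1/2$ slack used in Chernoff, is what makes the constants line up.
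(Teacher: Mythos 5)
Your proposal is correct and follows essentially the same route as the paper's proof: both pass to the set of neighbours that are $\eta$-consistent with $g^{k/2}$ on $A$ (your $G$, the paper's $W$), use goodness to lower-bound its density by $\gamma$, apply a Hoeffding/Chernoff bound for random $k/2$-subsets to show at most a $\lambda$-fraction of neighbours meet $F$ in fewer than half the expected number of points, and reach a contradiction by counting $F$–neighbour incidences in two ways (your double-counting identity is just the paper's two-way sampling of the event $x\in F \wedge B'\in W$ written multiplicatively). The only cosmetic differences are that the paper normalizes to $\rho(F)=\beta$ via a subset and uses its random-subset Hoeffding bound directly (which is exactly calibrated to $\lambda = 2e^{-\beta k/24}$), whereas you keep $\rho(F)\geq\beta$ and invoke a without-replacement Chernoff with a slightly stronger exponent.
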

\begin{proof}
Assume toward a contradiction that $\rho(F) = \beta$ (the case where $\rho(F) > \beta$ follows by fixing some $F' \subseteq F$ with $\rho(F') = \beta$).  Define the set
$$
W = \big\{B' \in N_{I}(A) : \E \! \big[\algoIJKW(B')\big|_A = g^{k/2}(A)\big] \geq \eta \big\}.
$$
Observe that 
\begin{align}
\label{eq:excellenteq1}
&\Pr_{\substack{x \in \univ \setminus A \\ B' \supseteq \{x\} \cup A}} 
\left[x \in F \wedge B' \in W \right]
= \Pr_{x \in \univ \setminus A} 
\left[x \in F \right]\cdot \Pr_{\substack{x \in \univ \setminus A \\ B' \supseteq \{x\} \cup A}} 
\left[B' \in W \mid  x \in F \right]
< \beta \cdot (\gamma - \lambda)/2.
\end{align}
On the other hand, let
\begin{equation*}
    W' = \{B' \in W : \frac{|(B' \setminus A) \cap F|}{|B' \setminus A|} \geq \beta/2 \},
\end{equation*}
and observe that since $W'\subseteq W$, we have
\begin{align}
    \label{eq:excellenteq2}
\Pr_{\substack{x \in \univ \setminus A \\ B' \supseteq \{x\} \cup A}} 
\left[x \in F \wedge B' \in W \right]
&\geq \Pr_{\substack{x \in \univ \setminus A \\ B' \supseteq \{x\} \cup A}} \left[x \in F \wedge B' \in W' \right] \notag\\
&= \Pr_{B'\in N_{I}(A)}[B' \in W'] \cdot \Pr_{\substack{B' \in N_\Is(A) \\ x \in B' \setminus A}} \left[x \in F  \mid  B' \in W' \right],
\end{align}
where the equality uses the following simple fact: selecting $x$ uniformly from $\univ \setminus A$  then picking a uniform $B'\in N_\Is(A)$ that contains $\{x\}\cup A$ is equivalent to  selecting $B'$ uniformly from $N_\Is(A)$ then picking a uniform $x\in B' \setminus A$. To complete the proof, we argue next that the last expression is larger than the bound in Eq.~\eqref{eq:excellenteq1}.

Since $(A,B)$ is $(\gamma,\eta)$-good, we get that that $\frac{|W|}{|N_{I}(A)|} \geq \gamma$. Moreover, using that $\rho(F) = \beta$ and our choice of $\lambda$, by the Hoeffding bound (Lemma \ref{lem:hoeffding}) the density of $W \setminus W'$ inside $N_\Is(A)$ is at most $\lambda$. Consequently, we  have that $\frac{|W'|}{|N_{I}(A)|} \geq \gamma - \lambda$. Using this estimate along with Eq.~\eqref{eq:excellenteq2},
\begin{align}
 \Pr_{\substack{x \in \univ \setminus A \\ B' \supseteq \{x\} \cup A}} 
\left[x \in F \wedge B' \in W \right]\geq \Pr_{B'\in N_{I}(A)}[B' \in W'] \cdot 
\Pr_{\substack{B' \in N_\Is(A) \\ x \in B'\setminus A}} 
\left[x \in F   \mid B' \in W' \right]  \geq (\gamma - \lambda) \cdot \beta/2,
\end{align}
which stands in contradiction to Eq.~\eqref{eq:excellenteq1}.
\end{proof}

We now prove \Cref{lem:excellent-implies-correctness}, which shows that selecting an excellence edge leads to correctness.

\begin{proof}[Proof of \Cref{lem:excellent-implies-correctness}]
Let $\beta$, $\lambda$, and $T$ be as in the statement of the lemma, and suppose that $(A,B)$ is a $(\gamma,\eta,\alpha)$-excellent edge. We show that
\[\E_{\substack{x \sim \univ \setminus A,\\\correctCircuitA ,\;\algoIJKW}}[\correctCircuitA = g(x)] \geq  1 - \beta - (1-\eta(\gamma-\lambda)/2)^T - 16 \alpha. \]

We first decouple the errors that stem from failing to find a consistent neighbour (in which case thee algorithm outputs $\bot$ and aborts) and those in which a consistent neighbour was found yet still the algorithm output incorrectly. To this end, by a union bound, we have
\begin{equation}\label{eq:excel_implies_correct}
\Pr_{\substack{x \in \univ \setminus A\\\correctCircuitA,\;\algoIJKW}}[\correctCircuitA(x) \ne g(x)] \leq 
\Pr_{\substack{x \in \univ \setminus A\\\correctCircuitA,\;\algoIJKW}}[\correctCircuitA(x) = \bot]
+ \Pr_{\substack{x \in \univ \setminus A \\  y \sim \correctCircuitA(x),\;\algoIJKW}}[y \ne g(x) \mid y \ne \bot].
\end{equation}
We will first bound the first term of the RHS of Eq.~(\ref{eq:excel_implies_correct}). 

Let $F$ be the subset of $\univ \setminus A$ such that $x \in F$ iff strictly less than a $(\gamma - \lambda)/2$ fraction of $B' \in N_\Is(A,x)$ satisfy $\E[\algoIJKW(B')|_A = g^{k/2}(A)] \geq \eta$. We have from \Cref{lem:bound-algorithm-answers} that $\rho(F) < \beta$, where $\rho(F)$ denotes the measure of $F$ inside $\univ \setminus A$. 

We now compute the probability that $\correctCircuitA(x) = \bot$ for an arbitrary but fixed $x \not\in F$. Let
$W$ be the subset of $N_\Is(A,x)$, such that 
$B' \in W$ iff $\E[\algoIJKW(B')|_A = g^{k/2}(A)] \geq  \eta$. From the assumption that $x \not\in F$, we have that $\Pr_{B' \in N_\Is(A,x)}[B' \in W] \geq (\gamma - \lambda)/2$. Let $E_i$ be the event that on the $i$-th iteration of $\correctCircuitA$, $v'|_A \ne g^{k/2}(A)$, where $v'$ is sampled in Step 2.2 from the definition of the circuit $\correctCircuitA$. 
For a fixed $i \in \{1, \ldots, T\}$, and a fixed $x$, we have 
\begin{align*}
    \Pr[E_i] & = \Pr_{\correctCircuitA, \algoIJKW}[\algoIJKW(B')|_A \neq g^{k/2}(A) \mid B' \in N_{\Is}(A, x)] \\
    & = 1 - \Pr_{\correctCircuitA, \algoIJKW}[ \algoIJKW(B')|_A = g^{k/2}(A) \mid B' \in N_\Is(A, x)] \\
    & \leq 1 - \Pr_{\correctCircuitA, \algoIJKW}[ \algoIJKW(B')|_A = g^{k/2}(A) \mid  B' \in W'] \cdot \Pr_{B' \in N_{\Is}(A, x)}[B' \in W] \\
    & \leq 1 - \eta \cdot \frac{\gamma - \lambda}{2}.
\end{align*}
Using that in each iteration of $\correctCircuitA$ a fresh selection of $B' \sim N_\Is(A,x)$ and $v' \sim \algoIJKW(B')$ is made, we get that %
$$
\Pr_{\correctCircuitA,\;\algoIJKW}[\correctCircuitA(x) = \bot \mid x \notin F] 
\;\leq\; 
\Pr_{\correctCircuitA,\;\algoIJKW}[E_1 \wedge \cdots \wedge E_T] \;\leq\; (1 - \eta(\gamma-\lambda)/2)^T.
$$

Putting together the previous estimates, 
\begin{align*}
\Pr_{\substack{x \in \univ \setminus A\\\correctCircuitA,\;\algoIJKW}}[\correctCircuitA(x) = \bot]
& \leq \Pr_{\substack{x \in \univ \setminus A\\\correctCircuitA,\;\algoIJKW}}[x \in F]
+ \Pr_{\substack{x \in \univ \setminus A\\\correctCircuitA,\;\algoIJKW}}[\correctCircuitA(x) = \bot \mid  x \not\in F] \\
&\leq \beta + (1 - \eta(\gamma-\lambda)/2)^T.
\end{align*}

We now bound the second term of the RHS of Eq.~(\ref{eq:excel_implies_correct}). Assuming that the output of the circuit $\correctCircuitA(x)$ is not $\bot$, we have that at some iteration $\correctCircuitA(x)$ picks $B' \sim N_\Is(A,x)$ and $v' \sim \algoIJKW(B')$ such that $v'|_A = g^{k/2}(A)$.

For convenience, for $x \in \univ \setminus  A$ let us define $h(x)$ to be the conditional probability that $\correctCircuitA$ produces an incorrect answer on $x$ (over the randomness of $\correctCircuitA$ and $\algoIJKW$), given that it does not output $\bot$. Our goal is to show that 
\begin{equation}\label{eq:IJKW_error_bound}
\Pr_{\substack{x \in \univ \setminus A \\  y \sim \correctCircuitA(x),\;\algoIJKW}}[y \ne g(x) \mid y \ne \bot] = \E_{x \sim \univ \setminus A}[h(x)]
\end{equation}
is small.  To show this, it will be convenient to use the following notation. For~$B \in N_\Is(A)$,\\ 

\noindent $p_\mathsf{cons}(B) = \Pr_{\algoIJKW}\big[\algoIJKW(B)|_A = g^{k/2}(A)\big]$.\\ 
\noindent $p_{\mathsf{tot}}(A) = \sum_{B \in N_\Is(A)} p_{\mathsf{cons}}(B)$.\\ 
\noindent $p_{\mathsf{tot}}(x) = \sum_{B \in N_\Is(A,x)} p_{\mathsf{cons}}(B)$.\\ 
\noindent $p_{\mathsf{used}}(x,B) = p_{\mathsf{cons}}(B)/p_{\mathsf{tot}}(x)$.\\
\noindent $p_{\mathsf{err}}(x,B) = \Pr_{v \sim \algoIJKW(B)}[v|_x \neq g(x) \mid v|_A = g^{k/2}(A)]$.\\

\noindent Note that all these values  depend on $A$. It is not hard to see that 
\begin{align}
h(x) = \sum_{B \in N_\Is(A,x)} p_{\mathsf{err}}(x,B) \cdot p_{\mathsf{used}}(x,B).
\end{align}
Moreover, since
$$
p_{\mathsf{err}}(x,B)  = \frac{\Pr_{v \sim \algoIJKW(B)}[v|_x \neq g(x) \wedge v|_A = g^{k/2}(A)]}{\Pr_{\algoIJKW}[\algoIJKW(B)|_A = g^{k/2}(A)]} = 
\frac{\Pr_{v \sim \algoIJKW(B)}[v|_x \neq g(x) \wedge v|_A = g^{k/2}(A)]}{p_\mathsf{cons}(B)},
$$
we have
$$
h(x) = \frac{\sum_{B'' \in N_\Is(A,x)} \Pr_{ \algoIJKW}[\algoIJKW(B'')|_x \neq g(x) \wedge \algoIJKW(B'')|_A = g^{k/2}(A)]}{\sum_{B'' \in N_\Is(A,x)}P_{\algoIJKW}[\algoIJKW(B'')|_A = g^{k/2}(A)]},
$$
or equivalently,
\begin{equation}\label{eq:hx_alternate}
h(x) = \frac{\sum_{B'' \in N_\Is(A,x)} \Pr_{ v'' \sim \algoIJKW(B'')}[v''|_x \neq g(x) \wedge v''|_A = g^{k/2}(A)]}{p_{\mathsf{tot}}(x)}.
\end{equation}

Next, we upper bound the quantity on the RHS of Equation (\ref{eq:IJKW_error_bound}). For convenience, let 
$$\mu = \E_{x \sim \univ\setminus A}[h(x)].$$

As we are interested in bounding the probability that $\correctCircuitA$ outputs an erroneous value when it samples an excellent edge, we would like to relate $\mu$ to the excellence parameter $\alpha$. We do this by first relating $\alpha$ and then $\mu$ to the following intermediate expression:

\begin{equation}\label{eq:intermediate}
\E_{B' \sim \mathcal{W}_I(A)} \Big [ \E_{x \sim \NB}[h(x)] \Big ] = \frac{1}{p_{\mathsf{tot}(A)}} \cdot  \sum_{B' \in N_\Is(A)} p_{\mathsf{cons}}(B') \E_{x \sim \NB}[h(x)],
\end{equation}
where $\mathcal{W}_I(A)$ is the distribution supported on $N_\Is(A)$ with each $B'\in N_\Is(A)$ being sampled with probability $p_\mathsf{cons}(B')/p_{\mathsf{tot}}(A)$. 

On the one hand, using Equation (\ref{eq:hx_alternate}), we can rewrite the RHS of Equation (\ref{eq:intermediate}) as:
$$
\frac{1}{p_{\mathsf{tot}}(A)} \cdot \sum_{B' \in N_\Is(A)} p_{\mathsf{cons}}(B') \cdot \frac{1}{k/2} \cdot \sum_{x \in \NB} \frac{1}{p_{\mathsf{tot}}(x)} \cdot \sum_{B'' \in N_\Is(A,x)} \Pr_{\algoIJKW}[\algoIJKW(B'')|_x \neq g(x) \wedge \algoIJKW(B'')|_A = g^{k/2}(A)].
$$
Note that in the expression above, every fixed $(x, B'')$, where $x \in \univ\setminus A$ and $B'' \in N_\Is(A, x)$, contributes a value %
$$
\frac{1}{p_{\mathsf{tot}}(A)} \cdot \left ( \sum_{B' \in N_\Is(A,x)} p_{\mathsf{cons}}(B') \right) \cdot \frac{1}{k/2} \cdot \frac{1}{p_{\mathsf{tot}}(x)} \cdot \Pr_{\algoIJKW}[\algoIJKW(B'')|_x \neq g(x) \wedge \algoIJKW(B'')|_A = g^{k/2}(A)], 
$$
since it appears in the sum with a corresponding factor $p_{\mathsf{cons}}(B')$ for each $B'$ in $N_\Is(A,x)$. Since the corresponding sum of $p_{\mathsf{cons}}(B')$ is precisely $p_{\mathsf{tot}}(x)$, it follows that every $(x,B'')$ appears in Equation (\ref{eq:intermediate}) with a contribution of
\begin{equation}\label{eq:edge_contrib}
    \frac{1}{p_{\mathsf{tot}}(A)} \cdot \frac{1}{k/2} \cdot \Pr_{\algoIJKW}[\algoIJKW(B'')|_x \neq g(x) \wedge \algoIJKW(B'')|_A = g^{k/2}(A)],
\end{equation}
and Equation (\ref{eq:intermediate}) is precisely the sum of these contributions over all $(x, B'')$.

On the other hand, using that the edge $(A,B)$ from the statement of the lemma is $\alpha$-excellent, we know that
\begin{eqnarray} \nonumber
\alpha & \geq & \E_{B' \sim \mathcal{W}_I(A)} \left [ \E_{x \sim \NB}[p_{\mathsf{err}}(x,B') ] \right ] \\ \nonumber
& = & \frac{1}{p_{\mathsf{tot}}(A)} \cdot \sum_{B' \in N_\Is(A)} p_\mathsf{cons}(B') \E_{x \sim \NB}[p_{\mathsf{err}}(x,B')] \\ \nonumber
& = &  \frac{1}{p_{\mathsf{tot}}(A)} \cdot \sum_{B' \in N_\Is(A)} \frac{1}{k/2} \sum_{x \in \NB} \Pr_{\algoIJKW}[\algoIJKW(B')|_x \neq g(x) \wedge \algoIJKW(B')|_A = g^{k/2}(A)]. \nonumber
\end{eqnarray}
In the expression above, each edge $(x,B')$ also contributes a value equal to that in Equation (\ref{eq:edge_contrib}). 
Consequently, we get from the discussion above that
\begin{equation}\label{eq:alpha_upper_bound}
\alpha \;\geq\; \E_{B' \sim \mathcal{W}_I(A)} \Big [ \E_{x \sim \NB}[h(x)] \Big ].
\end{equation}
Next, we move on to relate $\mu = \E_{x \sim \univ \setminus A}[h(x)]$ to the RHS of this inequality. Unlike in the classical case, this calculation is more involved as the $B'$ in~\Cref{eq:alpha_upper_bound} is not sampled uniformly from $N_\Is(A)$ but from $\mathcal{W}_\Is(A)$ (where sampling $B'$ is proportional to $p_{\mathsf{cons}}(B')$). To deal with this, we partition the elements $B' \in N_\Is(A)$ into buckets based on their value $p_{\mathsf{cons}}(B') \in [0,1]$. Let $\ell = 5 \cdot \log (1/(\gamma \cdot \eta))$, and for $i \in \{0, 1, \ldots, \ell\}$, set 
$$
\Gamma_i = \{B' \in N_\Is(A) \mid p_{\mathsf{cons}}(B') \in (2^{-i - 1}, 2^{-i}]\,\}.
$$
We also define the exceptional bucket $\Gamma_{\ell + 1}$ as follows:
$$
\Gamma_{\ell + 1} = \{B' \in N_\Is(A) \mid p_{\mathsf{cons}}(B') \leq 2^{-\ell -1}\,\}. 
$$
Note that the buckets are disjoint, and that $N_\Is(A) = \bigcup_{i = 0}^{\ell + 1} \Gamma_i$. For $B' \in N_\Is(A)$, let $\mu(B') = \E_{x \sim \NB}[h(x)]$. Then
\begin{equation}\label{eq:prob_decomposition}
\begin{aligned}
\E_{B' \sim \mathcal{W}_I(A)} \Big [ \E_{x \sim \NB}[h(x)] \Big ] &= \E_{B' \sim \mathcal{W}_I(A)}\big [\mu(B') \big ] \\
&= \sum_{i = 0}^{\ell + 1} \E_{B' \sim \mathcal{W}_I(A)} \big [\mu(B') \mid B' \in \Gamma_i \big] \cdot \Pr_{B' \sim \mathcal{W}_I(A)} \big [B' \in  \Gamma_i \big].
\end{aligned}
\end{equation}

Our goal is to lower bound the expression in~\Cref{eq:prob_decomposition}. Towards that, we aim to bound each term in the expression individually. The following simple claim will be useful. It shows that, in our analysis, for $0 \leq i \leq \ell$, we can replace sampling a $B'$ from $\Gamma_i$ according to $\mathcal{W}_\Is(A)$ with sampling a uniformly random $B' \sim \Gamma_i$. %

\begin{claim}\label{c:transfer_dist}
For every $0 \leq i \leq \ell$,
$$
\E_{B' \sim \mathcal{W}_I(A)}\big [ \mu(B') \mid B' \in \Gamma_i \big ] \;\geq\; \frac{1}{2} \cdot \E_{B' \sim \Gamma_i}\big [\mu(B') \big].
$$
\end{claim}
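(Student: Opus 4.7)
The plan is to exploit the defining property of the buckets $\Gamma_i$: within each bucket, the values $p_{\mathsf{cons}}(B')$ are pinned to the dyadic interval $(2^{-i-1}, 2^{-i}]$, so they vary by at most a factor of $2$. Since $\mathcal{W}_I(A)$ assigns to each $B'$ a weight proportional to $p_{\mathsf{cons}}(B')$, conditioning this distribution on $B' \in \Gamma_i$ yields a distribution that is within a multiplicative factor of $2$ of the uniform distribution on $\Gamma_i$. Because $\mu(B') \geq 0$ for every $B'$, this distortion translates into a straightforward one-line bound.

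Concretely, I would start by unfolding the conditional expectation:
\[
\E_{B' \sim \mathcal{W}_I(A)}\big[\mu(B') \mid B' \in \Gamma_i\big] \;=\; \frac{\sum_{B' \in \Gamma_i} p_{\mathsf{cons}}(B')\, \mu(B')}{\sum_{B' \in \Gamma_i} p_{\mathsf{cons}}(B')}.
\]
For the numerator I would lower-bound $p_{\mathsf{cons}}(B') \geq 2^{-i-1}$ (using that $\mu(B') \geq 0$), and for the denominator I would upper-bound $p_{\mathsf{cons}}(B') \leq 2^{-i}$. Substituting these yields
\[
\E_{B' \sim \mathcal{W}_I(A)}\big[\mu(B') \mid B' \in \Gamma_i\big] \;\geq\; \frac{2^{-i-1} \sum_{B' \in \Gamma_i} \mu(B')}{2^{-i} \, |\Gamma_i|} \;=\; \frac{1}{2} \cdot \E_{B' \sim \Gamma_i}[\mu(B')],
\]
which is the desired inequality.

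There is really no obstacle here; the claim is just a bookkeeping statement that justifies the dyadic bucketing. The only thing worth flagging is why the exceptional bucket $\Gamma_{\ell+1}$ is excluded from the claim: inside $\Gamma_{\ell+1}$ the values of $p_{\mathsf{cons}}$ can range all the way down to $0$, so the ratio argument breaks and no analogous factor-$2$ bound holds. This is precisely why the main proof will handle the contribution from $\Gamma_{\ell+1}$ separately (using that its total weight under $\mathcal{W}_I(A)$ is small, by the choice $\ell = 5\log(1/(\gamma\eta))$), rather than via this claim.
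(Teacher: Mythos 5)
Your proof is correct and is essentially identical to the paper's argument: both unfold the conditional expectation as a ratio of weighted sums over $\Gamma_i$ and bound $p_{\mathsf{cons}}(B')$ below by $2^{-i-1}$ in the numerator and above by $2^{-i}$ in the denominator, yielding the factor $\tfrac{1}{2}$. Your remark about why $\Gamma_{\ell+1}$ is excluded also matches how the paper handles it (via the separate smallness bound on its weight under $\mathcal{W}_I(A)$).
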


\begin{proof}[Proof of Claim \ref{c:transfer_dist}]
Indeed,
\begin{align*}
\E_{B' \sim \mathcal{W}_I(A)}\big [ \mu(B') \mid B' \in \Gamma_i \big ] & =  \frac{\sum_{B' \in \Gamma_i} \frac{p_{\mathsf{cons}}(B')}{p_\mathsf{tot}(A)} \cdot \mu(B')}{(1/p_{\mathsf{tot}}(A)) \cdot \sum_{B' \in \Gamma_i} p_{\mathsf{cons}}(B')} \nonumber \\
& =  \frac{
\sum_{B' \in \Gamma_i} p_{\mathsf{cons}}(B') \cdot \mu(B')}
{\sum_{B' \in \Gamma_i} p_{\mathsf{cons}}(B')}  \nonumber \\
& \geq \frac{\sum_{B' \in \Gamma_i} 2^{-i-1} \cdot \mu(B')}{|\Gamma_i| \cdot 2^{-i}} \nonumber \\
& =  \frac{1}{2} \cdot \left ( \frac{1}{|\Gamma_i|} \cdot \sum_{B' \in \Gamma_i}\mu(B') \right ) \nonumber \\
& =  \frac{1}{2} \cdot \E_{B' \sim \Gamma_i} \big [  \mu(B') \big ] . \qedhere \nonumber
\end{align*}
\end{proof}

Notice that this bound does not work for $\Gamma_{\ell+1}.$ However, it suffices for us to show that $\Pr_{B' \sim \mathcal{W}_\Is(A)}[B' \in \Gamma_{\ell+1}]$ is ``small" and can be omitted while determining the lower bound for~\Cref{eq:prob_decomposition}. In fact, we are able to claim something slightly stronger as shown next. For $0 \leq i \leq \ell$, we say that bucket $\Gamma_i$ is \emph{large} if $|\Gamma_i| \geq (\gamma \cdot \eta)^5 \cdot |N_\Is(A)|$. Otherwise, we say that it is \emph{small}. For convenience, let $w_i = \sum_{B' \in \Gamma_i} p_{\mathsf{cons}}(B')$. Recall that $(A,B)$ is a $(\gamma, \eta)$-good edge, and therefore
\begin{equation}\label{eq:ptot_wi}
\sum_{i = 0}^{\ell + 1} w_i \;=\; p_{\mathsf{tot}}(A) \;\geq\; \gamma \cdot \eta \cdot |N_\Is(A)|.
\end{equation}

\begin{claim}\label{c:smallGammai} The following upper bounds hold.
\begin{itemize}
    \item[\emph{(}i\emph{)}] For $0 \leq i \leq \ell$, if $\Gamma_i$ is small then
$$
\Pr_{B' \sim \mathcal{W}_I(A)}[B' \in \Gamma_i] \leq (\gamma \cdot \eta)^4.
$$
\item[\emph{(}ii\emph{)}] Moreover, in the special case where $i = \ell + 1$, we have
$$
\Pr_{B' \sim \mathcal{W}_I(A)}[B' \in \Gamma_{\ell + 1}] \leq (\gamma \cdot \eta)^4.
$$
\end{itemize}
\end{claim}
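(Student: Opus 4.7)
The proof of both parts of the claim is essentially a bookkeeping exercise that combines three ingredients: the bucket definitions (which constrain $p_{\mathsf{cons}}(B')$ on $\Gamma_i$ from above), the goodness of the edge $(A,B)$ (which gives a lower bound on the normalizing quantity $p_{\mathsf{tot}}(A)$), and the choice $\ell = 5\log(1/(\gamma\eta))$ (which makes $2^{-\ell-1}$ much smaller than $\gamma\eta$). The common framework is that for every $i$,
\[
\Pr_{B'\sim \mathcal{W}_\Is(A)}[B' \in \Gamma_i] \;=\; \frac{w_i}{p_{\mathsf{tot}}(A)},
\]
and by \Cref{eq:ptot_wi} the denominator is at least $\gamma\eta \cdot |N_\Is(A)|$. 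Hence it suffices to show $w_i \leq (\gamma\eta)^5 \cdot |N_\Is(A)|$ in each of the two cases; dividing then gives the desired bound of $(\gamma\eta)^4$.

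For part (i), fix $0\le i \le \ell$ with $\Gamma_i$ small. Since $p_{\mathsf{cons}}(B') \leq 2^{-i} \leq 1$ for every $B' \in \Gamma_i$, I would simply upper bound
\[
w_i \;=\; \sum_{B' \in \Gamma_i} p_{\mathsf{cons}}(B') \;\leq\; |\Gamma_i| \;\leq\; (\gamma \eta)^5 \cdot |N_\Is(A)|,
\]
where the last inequality is the definition of smallness. Combined with the lower bound on $p_{\mathsf{tot}}(A)$ from the goodness of $(A,B)$, this yields the claim.

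For part (ii), the roles are reversed: here I have no nontrivial bound on the size of $\Gamma_{\ell+1}$ (which could be all of $N_\Is(A)$), but the per-element weight is tiny by construction. Using the bucket definition $p_{\mathsf{cons}}(B') \leq 2^{-\ell-1}$ for $B' \in \Gamma_{\ell+1}$, together with $2^{-\ell} = (\gamma\eta)^5$, I get
\[
w_{\ell+1} \;\leq\; |N_\Is(A)| \cdot 2^{-\ell-1} \;\leq\; \tfrac{1}{2}(\gamma\eta)^5 \cdot |N_\Is(A)|,
\]
and the same division by $p_{\mathsf{tot}}(A)$ finishes the bound. There is no real technical obstacle in this claim: the only subtlety is matching powers of $\gamma\eta$ between the hypothesis ($\gamma\eta$ in goodness) and the target bound ($(\gamma\eta)^4$), which forced the earlier choice $\ell = 5\log(1/(\gamma\eta))$ and the ``small'' threshold $(\gamma\eta)^5$ to have slack exactly $(\gamma\eta)^4$ in each case.
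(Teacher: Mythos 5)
Your proof is correct and follows essentially the same route as the paper: bound $w_i = \sum_{B' \in \Gamma_i} p_{\mathsf{cons}}(B')$ by $|\Gamma_i|$ (using $p_{\mathsf{cons}} \le 2^{-i} \le 1$) in case (i) and by $|N_\Is(A)| \cdot 2^{-\ell-1}$ in case (ii), then divide by $p_{\mathsf{tot}}(A) \ge \gamma\eta \cdot |N_\Is(A)|$ from goodness, exactly as in the paper's argument. The bookkeeping with $\ell = 5\log(1/(\gamma\eta))$ matches as well, so nothing further is needed.
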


\begin{proof}[Proof of Claim \ref{c:smallGammai}]
For the proof of Item (\emph{i}), we rely on Equation (\ref{eq:ptot_wi}) and on the smallness of~$\Gamma_i$:
$$
\Pr_{B' \sim \mathcal{W}_I(A)}[B' \in \Gamma_i] = \frac{w_i}{p_{\mathsf{tot}}(A)} \leq \frac{|\Gamma_i| \cdot 2^{-i}}{p_{\mathsf{tot}}(A)} \leq \frac{|\Gamma_i|}{p_{\mathsf{tot}}(A)} \leq \frac{(\gamma \cdot \eta)^5 \cdot |N_\Is(A)|}{\gamma \cdot \eta \cdot |N_\Is(A)} = (\gamma \cdot \eta)^4.
$$

For the proof of Item (\emph{ii}), we rely on Equation (\ref{eq:ptot_wi}) an on the upper bound on $p_{\mathsf{cons}}(B')$ for $B' \in \Gamma_{\ell + 1}$:
$$
\Pr_{B' \sim \mathcal{W}_I(A)}[B' \in \Gamma_{\ell + 1}] = \frac{w_{\ell + 1}}{p_{\mathsf{tot}}(A)} \leq \frac{|\Gamma_{\ell+1}| \cdot 2^{-\ell - 1}}{p_{\mathsf{tot}}(A)} \leq \frac{|N_\Is(A)| \cdot 2^{-\ell - 1} }{p_{\mathsf{tot}}(A)} \leq \frac{2^{-\ell - 1}}{\gamma \cdot \eta}  \leq (\gamma \cdot \eta)^4,
$$
where the last inequality uses $\ell = 5 \cdot (\log (1/\gamma \cdot \eta))$.
\end{proof}

Intuitively, Claim \ref{c:smallGammai} shows that the only significant terms in Equation (\ref{eq:prob_decomposition}) are the ones coming from buckets $\Gamma_i$ with $0 \leq i \leq \ell$ that are large. Moreover, since there are $O(\log(1/\gamma \eta))$ terms, the combined probability weight of the insignificant terms is also small. 

\begin{claim}\label{c:prob_in_large_bucket}
A random $B' \sim \mathcal{W}_I(A)$ is likely to belong to a large bucket $\Gamma_i$ with $0 \leq i \leq \ell$, i.e.,
$$
\Pr_{B' \sim \mathcal{W}_I(A)}[B'~\text{is in a large bucket}~\Gamma_i~\text{for some}~0 \leq i \leq \ell\,] \geq 1/2.
$$
\end{claim}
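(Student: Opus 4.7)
The plan is to reduce Claim~\ref{c:prob_in_large_bucket} to the previous Claim~\ref{c:smallGammai} via a straightforward union bound. Since $\{\Gamma_i\}_{i=0}^{\ell+1}$ partitions $N_\Is(A)$, the event that $B' \sim \mathcal{W}_\Is(A)$ does \emph{not} lie in a large bucket $\Gamma_i$ with $0 \leq i \leq \ell$ is precisely the event that $B'$ lies either in the exceptional bucket $\Gamma_{\ell+1}$ or in some small bucket $\Gamma_i$ with $0 \leq i \leq \ell$. So it suffices to show that the total $\mathcal{W}_\Is(A)$-mass of these ``bad'' buckets is at most $1/2$.

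The execution is as follows. First, apply Item (ii) of Claim~\ref{c:smallGammai} to bound $\Pr_{B' \sim \mathcal{W}_\Is(A)}[B' \in \Gamma_{\ell+1}] \leq (\gamma \eta)^4$. Next, apply Item (i) to each small bucket $\Gamma_i$ with $0 \leq i \leq \ell$ to obtain the same upper bound $(\gamma \eta)^4$ per bucket. A union bound over the at most $\ell + 2$ bad buckets then gives
\[
\Pr_{B' \sim \mathcal{W}_\Is(A)}\big[B' \text{ is in a bad bucket}\big] \;\leq\; (\ell + 2) \cdot (\gamma \eta)^4.
\]

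It remains to verify that $(\ell + 2) \cdot (\gamma \eta)^4 \leq 1/2$ under our assumption $\gamma, \eta < 1/10$. Recalling that $\ell = 5 \log(1/(\gamma \eta))$, we have $\ell + 2 = O(\log(1/(\gamma \eta)))$, which is overwhelmingly dominated by the factor $(\gamma \eta)^4$ when $\gamma \eta < 1/100$. Indeed, $x^4 \log(1/x)$ is monotonically increasing in $x \in (0, 1/e)$ and evaluates to a quantity well below $1/2$ at $x = 1/100$, so the bound holds comfortably. Taking the complementary event then yields the desired inequality $\Pr_{B' \sim \mathcal{W}_\Is(A)}[B' \text{ lies in a large } \Gamma_i \text{ with } 0 \leq i \leq \ell] \geq 1/2$. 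I don't anticipate any real obstacle here: the entire content of the argument has been offloaded to Claim~\ref{c:smallGammai}, and the only ``step'' is to verify that the logarithmic blow-up from the union bound does not overwhelm the polynomial savings from $(\gamma \eta)^4$, which is immediate from $\gamma, \eta < 1/10$.
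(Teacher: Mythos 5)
Your proof is correct and follows essentially the same route as the paper: a union bound over the at most $\ell+2$ ``bad'' buckets, each bounded by $(\gamma\eta)^4$ via Claim~\ref{c:smallGammai}, followed by the observation that $(\ell+2)\cdot(\gamma\eta)^4 \leq 1/2$ since $\gamma,\eta \leq 1/10$ and $\ell = 5\log(1/(\gamma\eta))$.
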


\begin{proof}[Proof of Claim \ref{c:prob_in_large_bucket}]
Using Claim \ref{c:smallGammai} and a union bound over buckets,
$$
\Pr_{B' \sim \mathcal{W}_I(A)}[B'~\text{is in a small bucket or}~B' \in \Gamma_{\ell + 1}] \leq (\ell + 2) \cdot (\gamma \eta)^4 \leq 6 \cdot \log(1/(\gamma \eta)) \cdot (\gamma \eta)^4 \leq 1/2,
$$
where the last inequality uses the assumption of the lemma that $\gamma , \eta \leq 1/10$.
\end{proof}

Finally, the next claim establishes that when $0 \leq i \leq \ell$ and $\Gamma_i$ is large, $\E_{B' \sim \Gamma_i}\big [\mu(B') \big] = \Omega(\mu)$, provided that $\mu$ is not too small and $k$ is large enough. (Recall that if $\mu$ is small, specifically less than the excellence parameter $\alpha$, we are already done.) 

\begin{claim}\label{c:IJKW_like_bound}
Suppose that $\mu \geq \alpha$. Let $0 \leq i \leq \ell$ and $\Gamma_i$ be a large bucket. Then
$$
\E_{B' \sim \Gamma_i} \big [  \mu(B') \big ] \;\geq\; \frac{\mu}{2}  \cdot \left ( 1 - \frac{2e^{-k \alpha/12}}{(\gamma \eta)^5}  \right ) \;\geq\; \frac{\mu}{4}. 
$$
\end{claim}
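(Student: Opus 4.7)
\medskip

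\noindent\textbf{Proof plan for Claim \ref{c:IJKW_like_bound}.} The plan is to view the claim as a straightforward sampling statement and prove it in three steps. The key observation is the following symmetry: averaging $h$ over a uniformly random $x \in \univ \setminus A$ is equivalent to first drawing $B'$ uniformly from $N_{\Is}(A)$ and then $x$ uniformly from $B' \setminus A$, since each $x \in \univ \setminus A$ belongs to the same number of $B' \in N_{\Is}(A)$. Therefore
\[
\mu \;=\; \E_{x \sim \univ \setminus A}[h(x)] \;=\; \E_{B' \sim N_{\Is}(A)}\big[\mu(B')\big],
\]
and $\mu(B') = \E_{x \sim B'\setminus A}[h(x)]$ is the empirical average of $h$ on the random $(k/2)$-subset $B'\setminus A$ of $\univ \setminus A$.

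\smallskip

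\noindent\emph{Step 1: tail bound for uniform $B'$.} Since $h$ takes values in $[0,1]$ and has mean $\mu \geq \alpha$, the Hoeffding bound of Lemma~\ref{lem:hoeffding} applied to $F = h$ with $t = k/2$ (so that the mean of the corresponding sum is $\mu_H = \mu k/2$) yields, after a standard choice of the deviation parameter and using $\mu \ge \alpha$,
\[
\Pr_{B' \in N_{\Is}(A)}\!\Big[\,\mu(B') < \tfrac{\mu}{2}\,\Big] \;\leq\; 2 e^{-\alpha k/12}.
\]

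\smallskip

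\noindent\emph{Step 2: transfer to the bucket $\Gamma_i$.} Because $\Gamma_i$ is a \emph{large} bucket we have $|\Gamma_i|/|N_{\Is}(A)| \geq (\gamma \eta)^5$. Hence, by the elementary inequality $\Pr[\mathcal{E} \mid \mathcal{F}] \leq \Pr[\mathcal{E}]/\Pr[\mathcal{F}]$ applied to the event $\mathcal{E} = \{\mu(B') < \mu/2\}$ and $\mathcal{F} = \{B' \in \Gamma_i\}$ under the uniform distribution on $N_{\Is}(A)$,
\[
\Pr_{B' \sim \Gamma_i}\!\Big[\,\mu(B') < \tfrac{\mu}{2}\,\Big] \;\leq\; \frac{2 e^{-\alpha k/12}}{(\gamma \eta)^5}.
\]

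\smallskip

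\noindent\emph{Step 3: conclude.} Since $\mu(B') \geq 0$ always,
\[
\E_{B' \sim \Gamma_i}\!\big[\mu(B')\big] \;\geq\; \frac{\mu}{2} \cdot \Pr_{B' \sim \Gamma_i}\!\big[\mu(B') \geq \tfrac{\mu}{2}\big] \;\geq\; \frac{\mu}{2}\cdot\left(1 - \frac{2 e^{-\alpha k/12}}{(\gamma \eta)^5}\right),
\]
which is the first inequality of the claim. The second inequality $\geq \mu/4$ follows because the hypothesis $4 e^{-\alpha k/12} \leq (\gamma \eta)^5$ of Lemma~\ref{lem:excellent-implies-correctness} is exactly what is needed to ensure that the parenthesized factor is at least $1/2$.

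\smallskip

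There is no substantial obstacle here: the combinatorial identity in the opening observation is routine, and the rest is a textbook Hoeffding-plus-conditioning argument. The only mild subtlety is matching the constant in the exponent of the tail bound to the one stated in the claim, which is handled by choosing the Hoeffding deviation parameter appropriately and invoking $\mu \ge \alpha$.
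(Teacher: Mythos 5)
Your proposal is correct and follows essentially the same route as the paper: apply the random-subset Hoeffding bound (Lemma~\ref{lem:hoeffding}) to show the set of $B'$ with $\mu(B') < \mu/2$ has density at most $2e^{-k\alpha/12}$ in $N_\Is(A)$, transfer this to $\Gamma_i$ via its largeness $|\Gamma_i| \geq (\gamma\eta)^5\,|N_\Is(A)|$, and conclude by discarding the bad event, with the hypothesis $4e^{-k\alpha/12} \leq (\gamma\eta)^5$ giving the final factor $1/2$. Your glossing over the exact exponent constant ("a standard choice of the deviation parameter") is no looser than the paper's own treatment, so nothing further is needed.
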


\begin{proof}[Proof of Claim \ref{c:IJKW_like_bound}]
Let 
$$
\mathsf{Bad} = \{B' \in N_\Is(A) \mid \mu(B') < \mu/2\}.
$$
Then, by the Hoeffding bound, $\Pr_{B' \sim N_\Is(A)}[B' \in \mathsf{Bad}] \leq 2e^{-k\mu/12} \leq 2e^{-k\alpha/12}$, using that $\mu \geq \alpha$. Note that, since $\Gamma_i$ is large, $|\Gamma_i| \geq (\gamma \eta)^5 \cdot |N_\Is(A)|$. For convenience, let $\lambda = 2e^{-k\alpha/12}$. Then
$$
\frac{|\mathsf{Bad}|}{|\Gamma_i|} \leq \frac{\lambda \cdot |N_\Is(A)|}{(\gamma \eta)^5 |N_\Is(A)|} = \frac{\lambda}{(\gamma \eta)^5}.
$$
Consequently,
\begin{eqnarray}
\E_{B' \sim \Gamma_i} \big [  \mu(B') \big ] & \geq & \E_{B' \sim \Gamma_i} \big [ \mu(B') \mid B' \notin \mathsf{Bad} \big ] \cdot \Pr_{B' \sim \Gamma_i} \big [ B' \notin \mathsf{Bad} \big ] \nonumber \\
& = & \E_{B' \sim \Gamma_i \setminus \mathsf{Bad}} \big [ \mu(B') \big ] \cdot \frac{|\Gamma_i| - |\mathsf{Bad}|}{|\Gamma_i|} \nonumber \\
& \geq & \frac{\mu}{2} \cdot \left ( 1 - \frac{|\mathsf{Bad}|}{|\Gamma_i|} \right ) \;\geq\;  \frac{\mu}{2} \cdot \left ( 1 - \frac{\lambda}{(\gamma \eta)^5} \right ),\nonumber
\end{eqnarray}
where the second inequality used the definition of $\mathsf{Bad}$.  Claim \ref{c:IJKW_like_bound} follows using the value of $\lambda$ and the hypothesis of the lemma.
\end{proof}

We are ready to conclude the proof of Lemma \ref{lem:excellent-implies-correctness}. If $\mu \leq \alpha$, we are done. So we assume from now on that $\mu \geq \alpha$. From Equations (\ref{eq:alpha_upper_bound}) and (\ref{eq:prob_decomposition}), we have
\begin{eqnarray}
\alpha & \geq &\E_{B' \sim \mathcal{W}_I(A)} \Big [ \E_{x \sim \NB}[h(x)] \Big ]\nonumber\\
& = & \sum_{i = 0}^{\ell + 1} \E_{B' \sim \mathcal{W}_I(A)} \big [\mu(B') \mid B' \in \Gamma_i \big] \cdot \Pr_{B' \in \mathcal{W}_I(A)} \big [B' \in  \Gamma_i \big] \nonumber \\
(\text{Omitting terms + Claim}~ \ref{c:transfer_dist}) & \geq & \sum_{\substack{0 \leq i \leq \ell \\ \Gamma_i~\text{is large}}} \frac{1}{2} \cdot \E_{B' \sim \Gamma_i} \big [\mu(B') \big ] \cdot \Pr_{B' \sim \mathcal{W}_I(A)} \big [B' \in \Gamma_i \big] \nonumber \\
(\text{By Claim}~\ref{c:IJKW_like_bound}) & \geq & \sum_{\substack{0 \leq i \leq \ell \\ \Gamma_i~\text{is large}}} \frac{\mu}{8} \cdot \Pr_{B' \sim \mathcal{W}_I(A)} \big [B' \in \Gamma_i \big] \nonumber \\ 
& = & \frac{\mu}{8} \cdot \Pr_{B' \sim \mathcal{W}_I(A)} \big [  B'~\text{is in a large bucket}~\Gamma_i~\text{for some}~0 \leq i \leq \ell \big ] \nonumber \\
(\text{By Claim}~\ref{c:prob_in_large_bucket}) & \geq & \frac{\mu}{16}. \nonumber
\end{eqnarray}
This shows that $\mu \leq 16 \alpha$, which completes the proof of Lemma \ref{lem:excellent-implies-correctness}.
\end{proof}

\subsubsection{Excellent edges are abundant}
\label{sec:excellent-is-abundant}
 In this section, we prove \Cref{lem:excellent-is-abundant}, which loosely speaking, shows that if the algorithm $\algoIJKW$ non-trivially agrees with $g^k$, then there are many edges that satisfy the excellence condition.
Recall that an edge is said to be $(\eta , \gamma, \alpha)$-{\em excellent} if it is: (1) $(\eta, \gamma)$-good, and (2) satisfies
\begin{equation*}
    \E_{\substack{B' \sim \mathcal{W}_\Is(A)}} \left[ \ErrCons(A,B')
    \right] \leq \alpha,
\end{equation*}
where $\mathcal{W}_\Is(A)$ gives each edge $(A,B')$ weight according to its probability of being correct on $A$ (see \cref{def:excellent}). Our first lemma shows that the first condition holds, i.e., that under the assumption that our algorithm $\algoIJKW$ computes $g^k$ with probability at least $\eps$, there are many good edges. %
\begin{lemma}
\label{lem:Agoodgoodsetmany}
If $\E_{B\in  \mathcal{S}_k}\left[ \mathsf{Corr}(B) \right] \geq \eps$, for any $0 < \eta,\gamma,\xi$ such that $\eta + \gamma + \xi = \eps$, we have that at least a $\xi$-fraction of $(A,B) \in \Is$ are $(\gamma,\eta)$-good.
\end{lemma}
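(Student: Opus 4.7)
The plan is to prove the contrapositive style bound by an averaging argument on $\mathsf{Corr}(B)$ over a uniformly random edge $(A,B) \in \Is$. First I would observe the symmetry identity
\begin{equation*}
  \E_{(A,B) \in \Is}[\mathsf{Corr}(B)] \;=\; \E_{B \in \mathcal{S}_k}[\mathsf{Corr}(B)] \;\geq\; \eps,
\end{equation*}
which holds because in the bipartite incidence graph of the Johnson scheme every $B \in \mathcal{S}_k$ is incident to the same number of $A$'s, so sampling a uniform edge and projecting onto its right endpoint yields the uniform distribution on $\mathcal{S}_k$. This converts the global assumption on $\algoIJKW$ into an expectation over edges.

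Next I would split the edges into good and bad, where the bad edges partition into the failure events $B_1 = \{\mathsf{Corr}(B) < \eta\}$ and $B_2 = \{p(A) < \gamma\}$, with $p(A) \eqdef \Pr_{B' \in N_\Is(A)}[\mathsf{Corr}(B') \geq \eta]$. Writing
\begin{equation*}
  \E_{(A,B)}[\mathsf{Corr}(B)] \;=\; \E[\mathsf{Corr}(B)\,\mathbf{1}[\text{good}]] \;+\; \E[\mathsf{Corr}(B)\,\mathbf{1}[B_1]] \;+\; \E[\mathsf{Corr}(B)\,\mathbf{1}[\neg B_1 \wedge B_2]],
\end{equation*}
I would bound each piece: the good contribution is at most $\Pr[\text{good}]$ since $\mathsf{Corr} \leq 1$; the $B_1$ contribution is at most $\eta$ since on this event $\mathsf{Corr}(B) < \eta$; and for the third piece, conditioning on $A$,
\begin{equation*}
  \E[\mathsf{Corr}(B)\,\mathbf{1}[\neg B_1 \wedge B_2]] \;=\; \E_A\!\left[\mathbf{1}[p(A) < \gamma]\cdot \E_{B \in N_\Is(A)}[\mathsf{Corr}(B)\,\mathbf{1}[\mathsf{Corr}(B)\geq \eta]]\right] \;\leq\; \E_A[\mathbf{1}[p(A)<\gamma]\cdot p(A)] \;<\; \gamma,
\end{equation*}
where I used $\mathsf{Corr}(B) \leq 1$ to pass from the indicator-weighted sum to $p(A)$, and then $p(A) < \gamma$ on the conditioning event. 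Here the useful reformulation is that once we condition on $A$, the inner expectation over neighbours $B'$ of $A$ is exactly the uniform distribution on $N_\Is(A)$, so $p(A)$ appears naturally.

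Putting the three bounds together yields $\eps \leq \Pr[\text{good}] + \eta + \gamma$, hence $\Pr[\text{good}] \geq \eps - \eta - \gamma = \xi$, which is the desired fraction of $(\gamma,\eta)$-good edges. There is no real obstacle here: the only mild subtlety is to avoid the lossy union bound $\Pr[B_1] + \Pr[B_2]$ and instead separate $B_1$ from $B_2 \setminus B_1$, which is what allows the clean $\eta + \gamma$ slack to match the identity $\eta + \gamma + \xi = \eps$. Note also that this lemma only establishes goodness, not excellence; the $\ErrCons$ condition in the definition of excellence will need to be handled by a subsequent argument.
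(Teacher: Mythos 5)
Your proof is correct, and at its core it is the same averaging argument as the paper's: both exploit biregularity of the incidence graph to identify $\E_{(A,B)\in\Is}[\mathsf{Corr}(B)]$ with $\E_{B\in\mathcal{S}_k}[\mathsf{Corr}(B)]$, and then split the edge-expectation into three pieces according to whether $\mathsf{Corr}(B)\geq\eta$ and whether $A$ has a $\gamma$-fraction of $\eta$-correct neighbours, bounding the pieces by (roughly) the good-edge mass, $\eta$, and $\gamma$. The organizational difference is that the paper argues by contradiction and takes as its first piece the event that $A$ lies in the set $G$ of ``heavy'' left vertices, asserting $\Pr_A[A\in G]<\xi$ from the assumed scarcity of good edges, whereas you argue directly and bound the contribution of the good edges by $\Pr[\text{good}]$ itself. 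Your bookkeeping is slightly cleaner at exactly this point: passing from an edge-count assumption to a density bound on $G$ implicitly involves a factor depending on $\gamma$ (only a $\gamma$-fraction of the edges leaving an $A\in G$ are guaranteed to be good), a subtlety your direct decomposition never has to confront, at the cost of obtaining the weak inequality $\Pr[\text{good}]\geq\eps-\eta-\gamma=\xi$, which is all the lemma claims. You are also right that this lemma only delivers goodness and that the $\mathsf{ErrCons}$ condition for excellence is handled by a separate argument in the paper.
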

\begin{proof}
We prove the contrapositive statement in the lemma. In this direction, let
\begin{align}\label{eq:assumption-proof-good}
|\big\{(A,B)\in \Is: B\in N_\Is(A) \text{ and } (A,B) \text{ is } (\gamma,\eta)\text{-good}\big\}| < \xi \cdot  |I|.
\end{align} 

 We define the set
\[G = \left\{A \in  \mathcal{S}_{k/2} : 
\E_{B' \in N_\Is(A)}[\mathsf{Corr}(B') \geq \eta] \geq \gamma 
\right\}\]
of $A$s who have at least an $\gamma$-fraction of $\eta$-correct neighbors.

By our assumption in \Cref{eq:assumption-proof-good}, we have that $|G| < \xi |\mathcal{S}_{k/2}|$ and thus
\begin{align}\label{eq:good1}
\E_{A \in \mathcal{S}_{k/2}}[A \in G]\cdot \E_{\substack{A \in \mathcal{S}_{k/2}, \\B \in N_{I}(A) }}\left[  \mathsf{Corr}(B) \middle| A \in G \right] < \xi \cdot 1 = \xi.
\end{align}
We also have that
\begin{align}\label{eq:good2}
    &\E_{\substack{A \in \mathcal{S}_{k/2} \\ B \in N_{I}(A)}}\left[A \not\in G \wedge  \mathsf{Corr}(B) < \eta \right]\cdot 
\E_{\substack{A \in \mathcal{S}_{k/2} \\ B \in N_{I}(A)} }\left[ \mathsf{Corr}(B) \middle| \substack{A \not\in G \\     \mathsf{Corr}(B) < \eta}
    \right]  \leq 1 \cdot \eta = \eta,
\end{align}
and
\begin{align}\label{eq:good3}
    &\E_{\substack{A \in \mathcal{S}_{k/2} \\ B \in N_{I}(A)}}\left[A \not\in G \wedge \mathsf{Corr}(B) \geq \eta \right]\cdot 
\E_{\substack{A \in \mathcal{S}_{k/2} \\ B \in N_{I}(A)} }\left[ \mathsf{Corr}(B) \middle| \substack{A \not\in G \\     \mathsf{Corr}(B) \geq \eta}
    \right]  < \gamma \cdot 1 = \gamma,
\end{align}
where the first inequality above used the fact that conditioned on $A\notin G$, then the probability of a uniformly random $B'\in N_\Is(A)$ being $\eta$-correct is at most $\gamma$ (by definition of $G$).

We have that 
\begin{align*}
    \E_{B\in  \mathcal{S}_k}\left[ \mathsf{Corr}(B) \right] 
    = \E_{\substack{A \in \mathcal{S}_{k/2},\\ B \in N_{I}(A) }}\left[  \mathsf{Corr}(B) \right] < \text{ Eq.}~(\ref{eq:good1}) +  \text{Eq.}~(\ref{eq:good2}) + \text{Eq.}~(\ref{eq:good3}) < \xi+\eta+\gamma = \eps,
\end{align*}
where the first equality above used that we can obtain the uniform distribution on $B\in S_k$ by uniformly picking $A\in S_{k/2}$ and then considering a random $B\in N_\Is(A)$; and the last equality is by the assumption of the lemma. This concludes the proof of the statement. 
\end{proof}

Next we will use \cref{lem:Agoodgoodsetmany} to strengthen the foregoing conclusion and show that if $\algoIJKW$ computes $g^k$ with probability at least $\eps$,
then not only is the number of good edges large, but rather the number of \emph{excellent} edges is also large.
Formally, we prove that if $\E_{B\in  \mathcal{S}_k}\left[ \mathsf{Corr}(B) \right] \geq \eps$ then at least an $(\varepsilon/3 - \frac{62208}{\alpha^3 \cdot \varepsilon^5} \cdot e^{-\frac{\alpha}{96}\cdot k})$-fraction of the edges $(A,B) \in \Is$ are $(\eps/3,\eps/3,\alpha)$-excellent. Note that the fraction of edges with this property is at least $\varepsilon/6$ if $k = \frac{100}{\alpha} \cdot (15 + 3 \log(1/\alpha) + 6 \log (1/\varepsilon))$, so this does show that a noticeable fraction of edges are excellent as long as $k$ is not too small.

Compared with its counterpart in \cite{impagliazzo2010uniform}, in the proof of \Cref{lem:excellent-is-abundant} we face additional difficulties due to the asymmetries in the definition of excellent edges, which in this paper refer to the more involved distribution $\mathcal{W}_\mathcal{I}(A)$.

\begin{proof}[Proof of \Cref{lem:excellent-is-abundant}] Consider random choices of $A \sim \mathcal{S}_{k/2}$ and $B \sim N_\Is(A)$. We would like to lower bound
$$
\Pr_{A, B}[(A,B)~\text{is}~(\varepsilon/3, \varepsilon/3, \alpha)\text{-excellent}].
$$
In order to show that an $(\eps/3, \eps/3)$-good edge $(A,B)\in \Is$ is also  $(\eps/3, \eps/3, \alpha)$-excellent, we need to show that
$$
\E_{B' \sim \mathcal{W}_I(A)}[\mathsf{ErrCons}(A,B')] \leq \alpha.
$$

It will be useful to introduce the following probability space and event. In addition to sampling $A \sim \mathcal{S}_{k/2}$ and $B \sim N_\Is(A)$, independently sample $B' \sim N_\Is(A)$, $v' = \algoIJKW(B')$, and $x \sim \NB$. Let $\mathsf{Err}(B',v') = \{x \in \NB \mid v'|_x \neq g(x)\}$ be the set of $x$s in $B'\setminus A$ for which $v'|_x$ disagrees with $g(x)$, and $\mathsf{err}(B',v') = |\mathsf{Err}(B',v')|/(k/2)$. We introduce the following event. \\

\noindent $\calE(A,B,B',v',x)$: The following conditions hold:\\~\\
\indent $v'|_A = g^{k/2}(A)$.\\
\indent $\mathsf{err}(B', v') > \alpha/4$.\\
\indent $v'|_x \neq g(x)$.\\

\noindent By symmetry, when analysing $\Pr[\calE]$ we can select $B'$ first (accordingly sampling  $v'$ and $x$ depending on it), followed by choices of a random $A$ contained in $B'$ and a random $B$ that contains $A$. Note that
\[
\Pr[\calE] = \frac{1}{|\calS_k|} \sum_{B' \in \calS_k} \Pr[\calE ~\mid~ B'~\text{fixed}],
\]
 by ``$B'$ fixed'' we mean the event that the random choice gives a fixed $B' \in \mathcal{S}_k$. We now decompose each term in the sum above according to the possible values of $\lambda = \mathsf{err}(B', v')$ for $v' \sim \algoIJKW(B')$. Since for $\calE$ to hold this value must be larger than $\alpha/4$, we get for each fixed $B'$:
$$
\Pr[\calE ~|~ B'~\text{fixed}] = \sum_{\lambda > \alpha/4} \Pr[\calE  ~|~ B'~\text{is fixed} \wedge \mathsf{err}(B', v') = \lambda] \cdot \Pr_{v'}[\mathsf{err}(B', v') = \lambda].
$$
For a fixed $B'$ and any choice of $v'$ with $\mathsf{err}(B',v') = \lambda$, if $\mathsf{Err}(B',v') \cap A \neq \emptyset$ then it cannot be the case that $v'|_A = g^{k/2}(A)$. Consequently, for $\calE$ to hold $A$ must avoid the set $\mathsf{Err}(B',v')$, and in addition, $x$ must be selected from $\mathsf{Err}(B',v')$. It follows from the Hoeffding bound using $\lambda > \alpha/4$~that
$$
\Pr[\calE ~|~ B'~\text{is fixed} \wedge \mathsf{err}(B', v') = \lambda] \leq  2 \cdot e^{-\frac{\alpha}{96} \cdot k} \cdot \lambda.
$$
Putting together these estimates, and using that $\lambda \leq 1$, we have 
\begin{eqnarray}
\Pr[\calE] & \leq & \frac{1}{|\mathcal{T}|} \sum_{B' \in \mathcal{T}} \sum_{\lambda > \alpha/4} 2 \cdot e^{-\frac{\alpha}{96} \cdot k} \cdot  \Pr_{v' \sim \algoIJKW(B')}[\mathsf{err}(B', v') = \lambda] \nonumber \\
& \leq & \frac{1}{|\mathcal{T}|} \sum_{B' \in \mathcal{T}} 2 \cdot e^{-\frac{\alpha}{96} \cdot k} \cdot \sum_{\lambda > \alpha/4} \Pr_{v' \sim \algoIJKW(B')}[\mathsf{err}(B', v') = \lambda] \nonumber \\
& \leq & \frac{1}{|\mathcal{T}|} \sum_{B' \in \mathcal{T}} 2 \cdot e^{-\frac{\alpha}{96} \cdot k}  \leq  2 \cdot e^{-\frac{\alpha}{96}  \cdot k}. \nonumber
\end{eqnarray}

Next, notice that
\begin{align*}
& \E_{B'' \sim \mathcal{W}_I(A)}[\mathsf{ErrCons}(A,B'')] \nonumber\\
& \quad = \frac{1}{p_{\mathsf{tot}}(A)} \sum_{B'' \in N_\Is(A)} p_{\mathsf{cons}}(B'') \E_{x \sim \NBp} \left [\Pr_{v'' \sim \algoIJKW(B'')} \left [v''|_x \neq g(x) \;\mid\; v''|_A = g^{k/2}(A) \right] \right] \\
& \quad = \frac{1}{p_{\mathsf{tot}}(A)} \sum_{B'' \in N_\Is(A)}  \E_{x \sim \NBp} \left [\Pr_{v'' \sim \algoIJKW(B'')} \left [v''|_x \neq g(x) \wedge v''|_A = g^{k/2}(A) \right] \right]\\
& \quad =  \frac{1}{p_{\mathsf{tot}}(A)} \sum_{B'' \in N_\Is(A)} \frac{1}{(k/2)} \sum_{x \in \NBp} \Pr_{v'' \sim \algoIJKW(B'')} \left [v''|_x \neq g(x) \wedge v''|_A = g^{k/2}(A) \right] \\
& \quad = \frac{1}{p_{\mathsf{tot}}(A)} \sum_{B'' \in N_\Is(A)} \Pr_{\substack{v'' \sim \algoIJKW(B''),\\x \sim \NBp}}[v''|_A = g^{k/2}(A) \wedge x \in \mathsf{Err}(B'',v'')] \\
& \quad = \frac{1}{p_{\mathsf{tot}}(A)} \sum_{B'' \in N_\Is(A)} p_{\mathsf{cons}}(B'') \Pr_{v'' \sim \algoIJKW(B''),\;x \sim \NBp}[x \in \mathsf{Err}(B'',v'') \mid v''|_A = g^{k/2}(A)] \displaybreak  \\ 
& \quad = \E_{B'' \sim \mathcal{W}_I(A)}\left [\Pr_{\substack{v'' \sim \algoIJKW(B''),\\x \sim \NBp}}\Big[x \in \mathsf{Err}(B'',v'') \mid v''|_A = g^{k/2}(A)\Big ] \right ],
\end{align*}
where in the second equality we use the fact that
\[p_{\mathsf{cons}}(B'') \Pr_{v'' \sim \algoIJKW(B'')} \left [v''|_x \neq g(x) \;\mid\; v''|_A = g^{k/2}(A) \right]  = \Pr_{v'' \sim \algoIJKW(B'')} \left [v''|_x \neq g(x) \wedge v''|_A = g^{k/2}(A) \right].\]

This motivates the following definition. We say that a set $B''$ is $A$-heavy if 
$$
\Pr_{\substack{v'' \sim \algoIJKW(B''),\\x \sim \NBp}}\Big [x \in \mathsf{Err}(B'',v'') \mid v''|_A = g^{k/2}(A)\Big ] > \alpha/2.
$$
Then, in order to show that an $(\varepsilon/3, \varepsilon/3)$-good edge $(A,B)$ is $(\varepsilon/3, \varepsilon/3, \alpha)$-excellent, it suffices to prove that
$$
\Pr_{B'' \sim \mathcal{W}_I(A)}[B''~\text{is}~A\text{-heavy}] \leq \alpha/2.
$$

\noindent Consider the following event, which only depends on $A$ and $B$, but can also be considered over the probability space of event $\calE$:\\

\noindent $\calE_1(A,B)$: The following conditions hold:\\~\\ 
\indent $(A,B)$ is $(\varepsilon/3, \varepsilon/3)$-good.\\ 
\indent $\Pr_{B'' \sim \mathcal{W}_I(A)}[B''~\text{is}~A\text{-heavy}] > \alpha/2.$ (Note that this is a property of $A$.)\\

\noindent We want to show that the event $\calE_1$ happens with small probability. In other words, we show that $\Pr[\calE_1] = \Pr[\calE]/\Pr[\calE \mid \calE_1]$ is small by arguing that $\Pr[\calE \mid \calE_1]$ is not too small. We make use of the following claim, whose proof is deferred:

\begin{claim}\label{c:claim_star}
For every $(\varepsilon/3, \varepsilon/3)$-good edge $(A,B)$, 
\begin{center}
    if $\Pr_{B'' \sim \mathcal{W}_I(A)}[B''~ \text{is}~A\text{-heavy}] > \alpha/2$ then $\Pr_{B' \sim N_\Is(A)}[B'~ \text{is}~A\text{-heavy} \wedge p_{\mathsf{cons}}(B') \geq \varepsilon^*] > \frac{\alpha}{108} \cdot \varepsilon^3$,
\end{center}
where $\varepsilon^* \eqdef (\alpha/8)\cdot (\varepsilon^2/9)$ and $p_{\mathsf{cons}}(B') = \Pr_{v' \sim \algoIJKW(B')}[v'|_A = g^{k/2}(A)]$.
\end{claim}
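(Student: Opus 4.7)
The plan is to reduce the weighted statement about $\mathcal{W}_I(A)$ to a counting statement about the uniform distribution on $N_\Is(A)$, by first establishing a lower bound on the total mass $p_{\mathsf{tot}}(A)$ and then truncating the low-weight $A$-heavy sets. I expect no serious obstacle here: once $p_{\mathsf{tot}}(A)$ is controlled, the remaining argument is just careful bookkeeping using that $p_{\mathsf{cons}}(B'') \leq 1$.

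First I would use the $(\varepsilon/3,\varepsilon/3)$-goodness of $(A,B)$ to lower bound $p_{\mathsf{tot}}(A)$. Specifically, if $\mathsf{Corr}(B') \geq \varepsilon/3$ then $p_{\mathsf{cons}}(B') \geq \mathsf{Corr}(B') \geq \varepsilon/3$, because the event that $\algoIJKW(B')$ agrees with $g^k(B')$ implies the weaker event $\algoIJKW(B')|_A = g^{k/2}(A)$. Goodness guarantees that at least an $\varepsilon/3$ fraction of $B' \in N_\Is(A)$ satisfy $\mathsf{Corr}(B') \geq \varepsilon/3$, so summing $p_{\mathsf{cons}}(B')$ only over these yields $p_{\mathsf{tot}}(A) \geq (\varepsilon^2/9)\,|N_\Is(A)|$.

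Next I would translate the hypothesis. Let $H = \{B'' \in N_\Is(A): B''\text{ is }A\text{-heavy}\}$. The assumption $\Pr_{B'' \sim \mathcal{W}_I(A)}[B'' \in H] > \alpha/2$ rewrites as $\sum_{B'' \in H} p_{\mathsf{cons}}(B'') > (\alpha/2)\, p_{\mathsf{tot}}(A)$. Setting $\varepsilon^\star = (\alpha/8)\cdot(\varepsilon^2/9)$ and defining $H^\star = \{B'' \in H: p_{\mathsf{cons}}(B'') \geq \varepsilon^\star\}$, I bound the contribution of the complement by
\[
\sum_{B'' \in H \setminus H^\star} p_{\mathsf{cons}}(B'') \;<\; \varepsilon^\star \cdot |N_\Is(A)| \;=\; \tfrac{\alpha}{8}\cdot\tfrac{\varepsilon^2}{9}\,|N_\Is(A)| \;\leq\; \tfrac{\alpha}{8}\, p_{\mathsf{tot}}(A),
\]
where the last inequality uses the lower bound on $p_{\mathsf{tot}}(A)$ from the previous step. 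Subtracting then gives $\sum_{B'' \in H^\star} p_{\mathsf{cons}}(B'') > (3\alpha/8)\, p_{\mathsf{tot}}(A) \geq (3\alpha/8)(\varepsilon^2/9)\,|N_\Is(A)|$.

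Finally, since each $p_{\mathsf{cons}}(B'') \leq 1$, the sum on the left is at most $|H^\star|$, so $|H^\star| \geq \frac{\alpha\varepsilon^2}{24}\,|N_\Is(A)|$. Therefore $\Pr_{B' \sim N_\Is(A)}[B' \in H^\star] \geq \frac{\alpha\varepsilon^2}{24}$. Since $\varepsilon \leq 1$, we have $\frac{\alpha\varepsilon^2}{24} \geq \frac{\alpha\varepsilon^3}{108}$ (as $108/24 = 4.5 \geq \varepsilon$), which yields the claimed bound. The step that required a moment of care is the transition in the third paragraph, where one has to use the \emph{goodness} lower bound on $p_{\mathsf{tot}}(A)$ precisely to absorb the truncation loss into a constant-factor reduction of the original $(\alpha/2)\,p_{\mathsf{tot}}(A)$ mass; everything else is a direct computation.
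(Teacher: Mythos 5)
Your proof is correct and rests on essentially the same ingredients as the paper's: the goodness-based lower bound $p_{\mathsf{tot}}(A) \geq (\varepsilon^2/9)\,|N_\Is(A)|$, truncation of the $A$-heavy sets with $p_{\mathsf{cons}}(B') < \varepsilon^*$ (whose total mass is absorbed as a constant fraction of $p_{\mathsf{tot}}(A)$), and the bound $p_{\mathsf{cons}}(B') \leq 1$ to pass from weighted mass to counting. The only difference is in packaging: the paper factors the target probability as a conditional probability times $\Pr_{B' \sim N_\Is(A)}[p_{\mathsf{cons}}(B') \geq \varepsilon^*] \geq \varepsilon/3$ and lands exactly on $\frac{\alpha}{108}\varepsilon^3$, whereas your direct count of $H^\star$ yields the slightly stronger bound $\frac{\alpha}{24}\varepsilon^2$, which you then relax to the claimed one.
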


Assuming this claim, we proceed as follows. Under event $\calE_1$, it follows from Claim \ref{c:claim_star} that 
$$
\Pr_{B' \sim N_\Is(A)}[B'~\text{is}~A\text{-heavy} \wedge p_{\mathsf{cons}}(B') \geq \varepsilon^*] \geq (\alpha/108) \cdot \varepsilon^3.
$$
Therefore, conditioning on $\calE_1$,  with probability at least $(\alpha/108) \cdot \varepsilon^3$ over the choices of $A$, $B$ and~$B'$ we get
\begin{equation}\label{eq:part1}
   p_{\mathsf{cons}}(B') \geq \varepsilon^* \quad \text{and} \quad \Pr_{v' \sim \algoIJKW(B'),\;x \sim \NB}[x \in \mathsf{Err}(B',v') \mid v'|_A = g^{k/2}(A) ] > \alpha/2. 
\end{equation}
We can write the latter probability  as follows:
$$
\Pr\big[x \in \mathsf{Err}(B',v') \wedge \mathsf{err}(B',v') \leq \alpha/4 \mid v'|_A = g^{k/2}(A)\big] + 
\Pr\big[x \in \mathsf{Err}(B',v') \wedge \mathsf{err}(B',v') > \alpha/4 \mid v'|_A = g^{k/2}(A)\big].
$$
Since the leftmost probability is at most $\alpha/4$, it follows that for a $B'$ of this form
$$
\Pr_{\substack{v' \sim \algoIJKW(B'),\\x \sim \NB}}[x \in \mathsf{Err}(B',v') \wedge \mathsf{err}(B',v') > \alpha/4 \mid v'|_A = g^{k/2}(A) ] \geq \alpha/4.
$$
Note that $x \in \mathsf{Err}(B',v')$ is equivalent to $v'|_x \neq g(x)$. Using this and rewriting the probability inequality above using the expression $\Pr[F_1 \mid F_2] = \Pr[F_1 \wedge F_2] / \Pr[F_2]$,
\begin{align*}
&\Pr_{\substack{v' \sim \algoIJKW(B'),\\x \sim \NB}}[v'|_x \neq g(x) \wedge \mathsf{err}(B',v') > \alpha/4 \wedge v'|_A = g^{k/2}(A)]\\
&=\Pr_{v' \sim \algoIJKW(B')}[v'|_x \neq g(x) \wedge \mathsf{err}(B',v') 
> \alpha/4 \mid v'|_A = g^{k/2}(A) ]
\Pr_{v' \sim \algoIJKW(B')}[v'|_A = g^{k/2}(A) ]
\\
&\geq (\alpha/4) \cdot p_{\mathsf{cons}}(B')\\
&\geq (\alpha/4) \cdot (\alpha/8) \cdot (\varepsilon^2/9),
\end{align*}
where we have used $p_{\mathsf{cons}}(B') \geq \varepsilon^*$ and $p_{\mathsf{cons}}(B') = \Pr_{v' \sim \algoIJKW(B')}[v'|_A = g^{k/2}(A)]$. Overall, combining this probability lower bound  and the probability that the conditions in Equation (\ref{eq:part1}) hold, it follows that
$$
\Pr[\calE \mid \calE_1] \geq (\alpha/108) \cdot \varepsilon^3 \cdot (\alpha/4) \cdot (\alpha/8) \cdot (\varepsilon^2/9) = \frac{\alpha^3 \cdot \varepsilon^5}{31104}. 
$$
Since $\Pr[\calE_1] \leq \Pr[\calE]/\Pr[\calE \mid \calE_1]$, 
$$
\Pr[\calE_1] \leq \frac{62208 \cdot e^{-\frac{\alpha}{96}\cdot k}}{\alpha^3 \cdot \varepsilon^5}.
$$

Finally, from the definition of event $\calE_1$ and the discussion above  we get 
$$
\Pr_{\substack{A \sim \mathcal{S}_{k/2},\\B \sim N_\Is(A)}}[(A,B)~\text{is}~(\varepsilon/3,\varepsilon/3,\alpha)\text{-excellent}] \geq \Pr_{A,\;B}[(A,B)~\text{is}~(\varepsilon/3,\varepsilon/3)\text{-good}] - \Pr_{A,\;B}[\calE_1].
$$
This implies using Lemma \ref{lem:Agoodgoodsetmany} and our probability estimate for $\calE_1$ that the probability that a random edge $(A,B)$ is $(\varepsilon/3, \varepsilon/3, \alpha)$-excellent is at least 
$$
\varepsilon/3 - \frac{62208 \cdot e^{-\frac{\alpha}{96}\cdot k}}{\alpha^3 \cdot \varepsilon^5}.
$$

In order to complete the argument, it remains to establish Claim \ref{c:claim_star}.\\

\begin{proof}[Proof of Claim \ref{c:claim_star}]
Let $(A,B)$ be an $(\varepsilon/3, \varepsilon/3)$-good edge, and assume that
$$
\Pr_{B'' \sim \mathcal{W}_I(A)}[B''~\text{is}~A\text{-heavy}] > \alpha/2.
$$
We need to prove that
$$
\Pr_{B' \sim N_\Is(A)}[B'~\text{is}~A\text{-heavy} \wedge p_{\mathsf{cons}}(B') \geq \varepsilon^*] > (\alpha/108) \cdot \varepsilon^3,
$$
where $\varepsilon^* = (\alpha/8)\cdot (\varepsilon^2/9)$. 

Note that the  probability of interest can be rewritten as 
\begin{align}
\label{eq:claimmainstatement426}    
\Pr_{B' \sim N_\Is(A)}[B'~\text{is}~A\text{-heavy} \mid p_{\mathsf{cons}}(B') \geq \varepsilon^*] \cdot \Pr_{B' \sim N_\Is(A)}[p_{\mathsf{cons}}(B') \geq \varepsilon^*].
\end{align}
Since $\varepsilon^* \leq \varepsilon/3$ and $(A,B)$ is $(\varepsilon/3, \varepsilon/3)$-good, the rightmost probability is at least $\varepsilon/3$. We lower bound the other probability next. 

Let $N_\Is(A, \geq \!\varepsilon^*) = \{B' \in N_\Is(A) \mid p_{\mathsf{cons}}(B') \geq \varepsilon^*\}$, and define $N_\Is(A, < \! \varepsilon^*)$ in a similar way. On the one hand,
\begin{align}
\label{eq:Aheavyconslarge}
\begin{aligned}
\Pr_{B' \sim N_\Is(A)}[B'~\text{is}~A\text{-heavy} \mid p_{\mathsf{cons}}(B') \geq \varepsilon^*] & =   \Pr_{B' \sim N_\Is(A, \geq \varepsilon^*)}[B'~\text{is}~A\text{-heavy}]  \\
& =   \frac{1}{|N_\Is(A,\geq \! \varepsilon^*)|} \cdot \sum_{B' \in N_\Is(A, \geq  \varepsilon^*)} \mathbf{1}_{[B'~\text{is}~A\text{-heavy}]}.
\end{aligned}
\end{align}

On the other hand, using the assumption of the claim,
\begin{eqnarray}
\alpha/2 & < & \frac{1}{p_{\mathsf{tot}}(A)} \left [ \sum_{B' \in N_\Is(A, \geq  \varepsilon^*)} p_{\mathsf{cons}}(B') \cdot \mathbf{1}_{[B'~\text{is}~A\text{-heavy}]} + \sum_{B' \in N_\Is(A, <  \varepsilon^*)} p_{\mathsf{cons}}(B') \cdot \mathbf{1}_{[B'~\text{is}~A\text{-heavy}]}  \right ] \nonumber \\
& \leq & \frac{1}{p_{\mathsf{tot}}(A)} \left [  \sum_{B' \in N_\Is(A, \geq  \varepsilon^*)} \mathbf{1}_{[B'~\text{is}~A\text{-heavy}]} + \sum_{B' \in N_\Is(A, <  \varepsilon^*)} p_{\mathsf{cons}}(B')   \right ]. \nonumber
\end{eqnarray}
This yields
$$
\sum_{B' \in N_\Is(A, \geq  \varepsilon^*)} \mathbf{1}_{[B'~\text{is}~A\text{-heavy}]} > (\alpha/2) \cdot p_{\mathsf{tot}}(A) - \sum_{B' \in N_\Is(A, <  \varepsilon^*)} p_{\mathsf{cons}}(B').
$$
In turn, thanks to our choice of $\varepsilon^*$,
\begin{eqnarray}
\sum_{B' \in N_\Is(A, <  \varepsilon^*)} p_{\mathsf{cons}}(B')  \leq  |N_\Is(A, < \! \varepsilon^*)| \cdot \varepsilon^* \leq |N_\Is(A)| \cdot \frac{\alpha}{8} \cdot \frac{\varepsilon^2}{9}< \frac{\alpha}{4} \cdot p_{\mathsf{tot}}(A), \nonumber
\end{eqnarray}
where the last inequality uses that $(A,B)$ is $(\varepsilon/3, \varepsilon/3)$-good, which yields $p_{\mathsf{tot}}(A) \geq |N_\Is(A)| \varepsilon^2/9$. As a consequence,
$$
\sum_{B' \in N_\Is(A, \geq  \varepsilon^*)} \mathbf{1}_{[B'~\text{is}~A\text{-heavy}]} > (\alpha/4) \cdot p_{\mathsf{tot}}(A) \geq \frac{\alpha}{4} \cdot \frac{\varepsilon^2}{9} \cdot |N_\Is(A)|.
$$
Overall, the probability we want to lower bound in Eq.~\eqref{eq:Aheavyconslarge} is at least
$$
\frac{1}{|N_\Is(A,\geq \! \varepsilon^*)|} \cdot  \frac{\alpha}{4} \cdot \frac{\varepsilon^2}{9} \cdot |N_\Is(A)| \geq \frac{\alpha}{4} \cdot \frac{\varepsilon^2}{9}.
$$
Combining our estimates, Eq.~\eqref{eq:claimmainstatement426} can be lower bound by $\frac{\alpha}{4} \cdot \frac{\varepsilon^2}{9}\cdot \varepsilon/3=\alpha/108\cdot \varepsilon^3$, which proves the claim. 
\end{proof}

This completes the proof of Lemma \ref{lem:excellent-is-abundant}. 
\end{proof}

\subsubsection{Extension to quantum circuits}

The next statement shares part of the terminology from Theorem \ref{thm:ijkw}, and we refer to the beginning of Section \ref{sec:IJKW} for more details. The part of the statement about uniformity refers to a fixed sequence of functions $g \colon \{0,1\}^n \to \{0,1\}$ indexed by $n$.

\begin{theorem}[Local list decoding for quantum circuits]\label{thm:ijkw_quantum}
There exists a universal constant $C \geq 1$ for which the following holds. Let $n \geq 1$ be a positive integer,  $k$ be an even integer, and let $\varepsilon, \delta > 0$ satisfy 
\begin{equation}\label{eq:IJKW_relation_eps_delta_quantum}
k \;\geq\; C \cdot \frac{1}{\delta} \cdot \left [ \log \! \left( \frac{1}{\delta} \right )  + \log \! \left  ( \frac{1}{\varepsilon} \right ) \right ].
\end{equation}
If $\algoIJKW$ is a quantum circuit of size at most $s$ defined over $\mathcal{S}_{n,k}$ with $k$ output bits such that
\begin{align}\label{eq:assumption-ijkw_quantum}
\E_{B \sim \calS_{n,k},\; \algoIJKW}\left[\algoIJKW(B) = g^k(B)\right] \eqdef \E_{B \sim \calS_{n,k},\; \algoIJKW}\left[\norm{\Pi_{g^{k}(B)}\algoIJKW\ket{B,0^m}}^2 \right]  
 \;\geq\; \eps,
\end{align}
then there is a quantum circuit $\mathcal{B}$ of size $\mathsf{poly}(n,k,s,\log(1/\delta), 1/\varepsilon)$ such that
\begin{equation}\label{eq:IJKW_conclusion_quantum}
\E_{x \sim \{0,1\}^n,\; \mathcal{B}}\Big [\mathcal{B}(x) = g(x)\Big ] \eqdef \E_{x \sim \{0,1\}^n,\; \mathcal{B}}\Big[ \norm{\Pi_{g(x)}\mathcal{B}\ket{x,0^{m'}}}\Big ] \;\geq\; 1 - \delta.
\end{equation}
Moreover, a quantum circuit $\mathcal{B}$ of this form can be constructed with noticeable probability from a quantum circuit $\algoIJKW$ as above by a uniform sequence of quantum circuits, a statement which is formalised as follows. For any choice of constructive functions $\varepsilon = \varepsilon(n)$, $\delta = \delta(n)$, and $k = k(n)$ satisfying the conditions of the theorem, and for any constructive function $s = s(n)$, there is a uniform family $\{\mathcal{C}_n^{\mathsf{IJKW}}\}_{n \geq 1}$ of quantum circuits $\mathcal{C}^{\mathsf{IJKW}}_n$ of size $\mathsf{poly}(n,k,s, \log(1/\delta), 1/\varepsilon)$ for which the following holds. If $\mathcal{C}^{\mathsf{IJKW}}_n$ is given as input a string $\mathsf{code}(\algoIJKW)$ describing a quantum circuit $\algoIJKW$ with the properties above, then when its output is measured it produces with probability $\zeta = \Omega(\varepsilon^2)$ a string $\mathsf{code}(\mathcal{B})$ describing a quantum circuit $\mathcal{B}$ of the desired size and success probability.  
\end{theorem}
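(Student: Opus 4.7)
The plan is to reduce the quantum case directly to the already-established inherently probabilistic case (Theorem \ref{thm:ijkw}) via Lemma \ref{lem:quantum_as_inherently_random}. Concretely, view the given quantum circuit $\mathcal{G}$ as inducing an inherently probabilistic circuit $\mathcal{R}$ where, on input $B\in\mathcal{S}_{n,k}$, $\mathcal{R}(B)$ is the distribution obtained by measuring the first $k$ output qubits of $\mathcal{G}\ket{B,0^m}$. By definition of $\mathcal{R}$, the hypothesis \eqref{eq:assumption-ijkw_quantum} on $\mathcal{G}$ is identical to \eqref{eq:assumption-ijkw} for $\mathcal{R}$, so Theorem \ref{thm:ijkw} applies and yields a randomized oracle circuit $\mathcal{B}^{\mathcal{O}}$ of size $\mathsf{poly}(n,k,\log(1/\delta),1/\varepsilon)$ that computes $g$ with probability at least $1-\delta$ when $\mathcal{O}$ is instantiated by $\mathcal{R}$. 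By Lemma \ref{lem:quantum_as_inherently_random}, replacing oracle calls to $\mathcal{R}$ by classical queries to $\mathcal{G}$ preserves all output distributions, so the success probability carries over unchanged.

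To produce $\mathcal{B}$ as a quantum circuit, first compile $\mathcal{B}^{\mathcal{O}}$ into a reversible classical circuit, then replace each random-bit gate by a Hadamard on an ancilla, and each oracle call by an invocation of the universal quantum circuit applied to $\ket{B',0^*}\otimes\ket{\mathsf{code}(\mathcal{G})}$ followed by a measurement of the $k$ output qubits in the computational basis (or, alternatively, by keeping the output in a fresh register and using standard garbage-removal when appropriate). Each of the $\mathsf{poly}(n,k,\log(1/\delta),1/\varepsilon)$ oracle calls contributes $\mathsf{poly}(s,n,k)$ gates to simulate $\mathcal{G}$ via the universal circuit, yielding a quantum circuit $\mathcal{B}$ of total size $\mathsf{poly}(n,k,s,\log(1/\delta),1/\varepsilon)$. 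Because intermediate measurements commute with classical post-processing, the resulting quantum computation samples from exactly the same joint distribution as $\mathcal{B}^{\mathcal{R}}(x,y)$ with $y$ uniformly random, so \eqref{eq:IJKW_conclusion_quantum} follows immediately from \eqref{eq:IJKW_conclusion}.

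For the uniform ``moreover'' part, the circuit $\mathcal{C}_n^{\mathsf{IJKW}}$ simulates the uniform randomized algorithm $\mathcal{D}$ from Theorem \ref{thm:ijkw} (Construction \ref{const:decoder}) on input $\mathsf{code}(\mathcal{G})$. Namely, $\mathcal{C}_n^{\mathsf{IJKW}}$ samples $B\in\mathcal{S}_{n,k}$ and a $k/2$-subset $A\subseteq B$ using Hadamard gates and computational-basis measurements on fresh ancillas, runs the universal quantum circuit on $\ket{B,0^*}\otimes\ket{\mathsf{code}(\mathcal{G})}$ and measures to obtain a value $v\in\{0,1\}^k$, sets $w=v|_A$, and classically writes $\mathsf{code}(C_{A,w})$ into the output register, where $C_{A,w}$ is explicitly assembled with $\mathsf{code}(\mathcal{G})$ hardwired in place of the oracle $\mathcal{O}$ (exactly as in the previous paragraph) and the repetition parameter $T$ set to the value specified in the proof of Theorem \ref{thm:ijkw}. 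By Lemma \ref{lem:quantum_as_inherently_random}, the probability that the measurement of this output register produces $\mathsf{code}(\mathcal{B})$ for a $\mathcal{B}$ satisfying \eqref{eq:IJKW_conclusion_quantum} equals the success probability of $\mathcal{D}$ on $\mathcal{R}$, which is $\Omega(\varepsilon^2)$. Deterministic generation of $\mathsf{code}(\mathcal{C}_n^{\mathsf{IJKW}})$ in time $\mathsf{poly}(n,k,s,\log(1/\delta),1/\varepsilon)$ is routine, since Construction \ref{const:decoder} and the compilation described above are explicit.

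The main obstacle is not conceptual but bookkeeping: one must argue carefully that replacing the classical oracle in $\mathcal{B}^{\mathcal{O}}$ by quantum executions of $\mathcal{G}$ really yields the same distribution, which requires the observation (Lemma \ref{lem:quantum_as_inherently_random}) that because only classical queries to $\mathcal{G}$ are used, each call is indistinguishable from a sample from $\mathcal{R}$ and in particular different calls on the same input $B$ are independent. One must also ensure that workspace qubits used inside the universal-circuit simulations of $\mathcal{G}$ are either measured and discarded or uncomputed between successive oracle calls so that no unintended entanglement persists between them; since each simulated oracle call ends with a computational-basis measurement of the output register (mirroring the classical query to $\mathcal{R}$), this is immediate. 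With these care points handled, the theorem follows.
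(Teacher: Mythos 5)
Your proposal is correct and follows essentially the same route as the paper: both reduce to Theorem \ref{thm:ijkw} by viewing $\algoIJKW$ as an inherently probabilistic circuit via Lemma \ref{lem:quantum_as_inherently_random}, obtain $\mathcal{B}$ by substituting (the code of) $\algoIJKW$ for the oracle gates at a $\mathsf{poly}(s)$ size overhead, and realise the uniform part by having $\mathcal{C}_n^{\mathsf{IJKW}}$ quantumly implement the decoder $\mathcal{D}$ of Construction \ref{const:decoder}, simulating $\algoIJKW$ with a universal quantum circuit to sample $w$ and outputting $\mathsf{code}(C_{A,w})$ with $\mathsf{code}(\algoIJKW)$ hardwired. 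The only cosmetic difference is that you invoke the universal circuit inside $\mathcal{B}$ itself where the paper inlines copies of $\algoIJKW$, which does not change the size bound or the analysis.
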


\begin{proof}
The proof is divided into two parts: \emph{existence} and \emph{uniformity}.
First, we argue that if there is quantum circuit $\algoIJKW$ of the above form, then a corresponding quantum circuit $\mathcal{B}$ exists. Then, for a fixed choice of constructive functions $\varepsilon$, $\delta$, and $k$ that depend on $n$ and satisfy the conditions of the result, and for any constructive function $s = s(n)$, we provide a deterministic algorithm $D$ that, given $1^n$, runs in time $t = \mathsf{poly}(n,k,s,\log(1/\delta), \varepsilon)$ and outputs the description of a quantum circuit $\mathcal{C}_n^{\mathsf{IJKW}}$ as in the statement.

Let $\algoIJKW$ be any quantum circuit of size $s$ satisfying the hypothesis of the theorem. Note that in the proof of Theorem \ref{thm:ijkw} the corresponding circuit $\algoIJKW$ is accessed in a classical way. By Lemma \ref{lem:quantum_as_inherently_random}, the analysis of the success probability of the circuit $\mathcal{B}^{\mathcal{O}}$ constructed in the proof of Theorem \ref{thm:ijkw} remains valid when the oracle $\mathcal{O}$ is replaced by the circuit $\algoIJKW$. However, in the quantum case, instead of getting an inherently random \emph{oracle} circuit $\mathcal{B}^\mathcal{O}$ of size $\mathsf{poly}(n,k,\log(1/\delta),1/\varepsilon)$, we obtain a quantum circuit $\mathcal{B}$ of size $\mathsf{poly}(n,k,s, \log(1/\delta),1/\varepsilon)$, where the size overhead comes from replacing each oracle gate $\mathcal{O}$ by a copy of the quantum circuit $\algoIJKW$ of size $s$. Finally, recall that $\univ = \{0,1\}^n$, and note that the random input $y$ of $\mathcal{B}^{\mathcal{O}}$ appearing in Equation \ref{eq:IJKW_conclusion} can be assumed to be part of the quantum computation of $\mathcal{B}$ by standard techniques. Consequently, we obtain a quantum circuit $\mathcal{B}$ for which Equation \ref{eq:IJKW_conclusion_quantum} holds.  

Next, we argue that after fixing functions $s(n)$, $k(n)$, $\varepsilon(n)$, and $\delta(n)$, there is a quantum circuit $\mathcal{C}^{\mathsf{IJKW}}_n$ that given the \emph{code} of a good quantum circuit $\algoIJKW$ outputs with probability $\Omega(\varepsilon^2)$ the \emph{code} of a quantum circuit $\mathcal{B}$ with the desired properties. The circuit $\mathcal{C}^{\mathsf{IJKW}}_n$ is simply the quantum analogue of the algorithm $\mathcal{D}$ from Construction \ref{const:decoder}. Let $T(n) = \mathsf{poly}(n,k,s,\log(1/\delta), 1/\varepsilon)$ be fixed as in the proof of Theorem \ref{thm:ijkw}. Then, given the code of $\algoIJKW$ and using that $n$, $k$, and $T(n)$ are fixed, $\mathcal{C}^{\mathsf{IJKW}}_n$ proceed as follows. It uses its internal randomness (simulated in a quantum way) to compute as $\mathcal{D}$ in Step 1, then 
it 
simulates $\algoIJKW$ using a \emph{universal quantum circuit} (see Section \ref{sec:prelim-quantum}) of size $\mathsf{poly}(T)$ in order to sample $w \sim \algoIJKW(B)|_A$,\footnote{This is where we use that all parameters are fixed, meaning that we can instantiate a universal quantum circuit for quantum computations containing a fixed number of gates.} and finally it outputs the \emph{description} of a quantum circuit $\mathcal{B}$ that computes as $C_{A,w}$, where $C_{A,w}$ incorporates the code of $\algoIJKW$. Since by the proof of Theorem \ref{thm:ijkw} algorithm $\mathcal{D}$ outputs a good circuit $\mathcal{B}^\mathcal{O}$ with probability $\Omega(\varepsilon^2)$ over its internal randomness and the randomness of $\algoIJKW$ (whenever $\algoIJKW$ satisfies the conditions of the theorem), it  follows that this is also true for $\mathcal{C}^{\mathsf{IJKW}}_n$ and the description $\mathsf{code}(\mathcal{B})$ that it generates from $\mathsf{code}(\algoIJKW)$. 

Observe that  $\mathcal{C}^{\mathsf{IJKW}}_n$ is defined over inputs of length $\mathsf{poly}(s)$, which represent the string $\mathsf{code}(\algoIJKW)$. In addition, $\mathcal{C}^{\mathsf{IJKW}}_n$ has size $\mathsf{poly}(T)$. This includes the time it takes to simulate $\algoIJKW$, and the time it requires to print an explicit description of $\mathcal{B}$, which contains $\mathsf{poly}(T)$ many gates.

Finally, note that the \emph{code} of the quantum circuit $\mathcal{C}^{\mathsf{IJKW}}_n$ is fully explicit, given a choice of parameters. In other words, there is a \emph{deterministic} algorithm that when given $1^n$ runs for at most $t = \mathsf{poly}(T) = \mathsf{poly}(n,k,s,\log(1/\delta), 1/\varepsilon)$ steps and prints $\mathcal{C}^{\mathsf{IJKW}}_n$.
\end{proof}

\subsection{Self-reducibility in the quantum setting}\label{sec:self_reduc}

In this section, we explain how to use the downward and random self-reducibility of a language $L$ to help us to produce a sequence of circuits computing $L$. In more detail, we show how to design a small quantum circuit $B_n$ to compute $L$ on $n$ bit inputs from a large quantum circuit $P_{n - 1}$ computing $L$ on $n-1$ bit inputs and a collection of small quantum circuits $A_1, \ldots, A_t$ with the following guarantee: some $A_i$ offers a good approximation of $L$ over $n$ bit inputs.
We will implement this idea with respect to the language $L^\star$ provided by Theorem \ref{t:TV_language}. Note that the downward self-reducibility and random-self-reducibility of this language holds with respect to \emph{classical} computation, while here we will rely on these structural properties in the context of \emph{quantum} circuits. This is not an issue for the following reasons.
In the case of downward self-reducibility, given a quantum circuit $P_{n - 1}$ for $L^\star_{n - 1}$ (i.e.~$L^\star$ restricted to $n-1$ bit inputs), we observe that its success probability on every input can be amplified to $1 - \mathsf{negl}(n)$. As a consequence, the classical reduction, which makes $\mathsf{poly}(n)$ classical queries, will obtain correct and consistent answers with overwhelming probability, even if $P_{n - 1}$ is a quantum~circuit.
On the other hand, in the case of random self-reducibility, on an input $x$ for $L^\star$, the reduction is implemented by a classical algorithm that makes $n^a$ queries to a classical oracle $A$, where each query is uniformly distributed over $\{0,1\}^n$ (but different queries can be correlated). If the oracle $A$ answers all queries according to a fixed function $\widetilde{f}_n \colon \{0,1\}^n \to \{0,1\}$ that is $1/n^b$-close to $L_n^\star$, the reduction gives a correct answer on $x$ with high probability. Now note that Definition \ref{def:random-selfreducible} does not offer a guarantee when $\widetilde{f}_n$ is not a classical oracle, e.g., if it is defined from a quantum circuit that does not provide a deterministic output. Fortunately, in order to quantize this reduction, we can reduce the analysis to the classical case. This is only possible because here we are in the regime where the correlation of interest is of the form $1 - o(1)$, while for instance in hardness amplification (Section \ref{sec:IJKW}) we consider correlations that are $o(1)$.
In more detail, here is one possible way of doing this. In our proof, $A$ is some quantum circuit that computes $L^\star_n$ with probability $p$, in the sense that the expected agreement between $A$ and $L_\star^n$ on a random input $x$ and the measurement of $A$'s ouput is $p$. It turns out that in our analysis we can take $p \geq 1 - 1/n^{c}$ for a convenient constant $c > a + b$. This allows us to show that there is a large set $S \subseteq \{0,1\}^n$ with $\Pr_x[x \in S] \geq 1 - 1/n^{a + b + 1}$ such that $A$ is correct on each $x \in S$ with high probability. Moreover, by amplifying the success probability of $A$, we can get a quantum circuit $A'$ that is correct on each $x \in S$ with probability at least, say, $1 - 2^{-n}$. Finally, since the reduction makes at most $n^a$ queries and each of them is uniform, by a union bound, with high probability all queries land in $S$. This can be used to show that with high probability all (classical) queries answered by the quantum circuit $A'$ agree with a classical function that is $1/n^b$-close to $L^\star_n$, which guarantees the correctness of the reduction also in our context. Using the ideas described above, it is not hard to implement the result explained at the beginning of this section. We provide the details next.

We start with the following: suppose we have a quantum circuit $U$ that computes a random self-reducible language $L$ with high probability for a uniformly random input, then one can construct a quantum circuit $U^*$ that, with high probability, computes $L$ on every $x\in \01^n$.

\begin{lemma}[Random self-reducibility and quantum circuits]\label{l:amplification-quantum-rsr}
  Let $L:\01^*\rightarrow \01$ be a random self-reducible language with parameters $a,b,c$ (as described in \Cref{def:random-selfreducible}). For every $n$, suppose we have the description of a quantum circuit $U$ such that 
  \begin{align}\label{eq:assumption-rsr}
    \E_{x \in \01^n} \left[ \norm{\Pi_{L(x)} U\ket{x,0^ q}  }^2 \right]\geq
    1-\frac{1}{n^{k}},
  \end{align}
  for some $k \geq 2b + a$.
  
There is a $O(|U| \cdot \poly(n))$-size quantum circuit $U^*$ that satisfies
  \[
    \norm{\tilde{\Pi}_{x} U^*\ket{0,x,0^{q^*}}}^2 \geq 1- 2^{-2n+1} \qquad \text{ for every } x\in \01^n,
  \]
  where $\tilde{\Pi}_{x} = \kb{L(x)} \otimes \kb{x} \otimes \kb{0^{q^*}}$ and $q^* = \poly(n)$.
\end{lemma}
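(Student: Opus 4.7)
The plan is to proceed in three stages: amplify $U$ to obtain reliable outputs on a dense set of good inputs, invoke the random self-reducibility of $L$ to build a reducer circuit $V$ that succeeds on every input with constant probability, and then amplify $V$ to reach the required pointwise error $2^{-2n+1}$. To begin, I would define the good set $S = \{x \in \{0,1\}^n : \Pr[U(x) = L(x)] \geq 3/4\}$. Markov's inequality applied to the hypothesis in~\Cref{eq:assumption-rsr} gives $|S^c|/2^n \leq 4/n^k$. Running $U$ independently $m = \Theta(n)$ times and taking the majority (a standard construction at an $O(m)$ multiplicative cost in size) produces $U''$ of size $O(|U| \cdot n)$ for which a Chernoff bound guarantees $\Pr[U''(x) = L(x)] \geq 1 - 2^{-3n}$ whenever $x \in S$.

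Next, I would introduce a classical reference function $\tilde{f}_n \colon \{0,1\}^n \to \{0,1\}$ by setting $\tilde{f}_n(y) = L(y)$ on $S$ and $\tilde{f}_n(y) = 0$ off $S$. Because $k \geq 2b + a$, the fraction $4/n^k$ is at most $1/n^b$ for large $n$, so $\tilde{f}_n$ is $(1/n^b)$-close to $L$. By the random-self-reducibility of $L$ (\Cref{def:random-selfreducible}), for every $x$ the identity $L(x) = h\bigl(x, r, \tilde{f}_n(g(1,x,r)), \ldots, \tilde{f}_n(g(n^a,x,r))\bigr)$ holds for $r \sim \calU_{n^c}$ with probability $\geq 1 - 2^{-2n}$. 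The circuit $V$ then samples $r$, computes all the queries $g(i,x,r)$, invokes $U''$ on each to obtain $v_i$, and outputs $h(x, r, v_1, \ldots, v_{n^a})$. A union bound over three bad events — the RSR identity failing for the sampled $r$ ($\leq 2^{-2n}$), some query $g(i,x,r)$ falling outside $S$ ($\leq n^a \cdot 4/n^k \leq 4/n^{2b}$ by the hypothesis $k \geq 2b + a$), and $U''$ erring on a query inside $S$ ($\leq n^a \cdot 2^{-3n}$) — yields $\Pr[V(x) = L(x)] \geq 2/3$ pointwise for every $x$ and sufficiently large $n$. The size of $V$ is $n^a \cdot |U''| = O(|U| \cdot \mathsf{poly}(n))$.

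Finally, running $V$ independently $\Theta(n)$ times and taking majority produces a circuit of pointwise success probability $\geq 1 - 2^{-2n+1}$ by a second Chernoff bound, still of size $O(|U| \cdot \mathsf{poly}(n))$. To meet the precise form of the conclusion, I would then apply the garbage-removal procedure described in~\Cref{sec:prelim-quantum} — copying the output qubit, uncomputing, and projecting onto the clean workspace — to obtain $U^*$ of the required shape with $q^* = \mathsf{poly}(n)$. The delicate point is the middle stage: the RSR property is stated for a fixed classical oracle, whereas the quantum circuit $U''$ does not deterministically answer its queries. The rescue is that each of the $n^a$ queries $g(i,x,r)$ is uniformly distributed over $\{0,1\}^n$, so a single union bound lets me condition on all of them landing in $S$ — where $U''$ emulates $\tilde{f}_n$ up to error $2^{-3n}$ — effectively reducing the quantum oracle to the fixed classical oracle demanded by the RSR guarantee.
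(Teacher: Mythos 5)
Your proposal is correct and follows essentially the same route as the paper's proof: restrict to the dense set of inputs where $U$ succeeds with constant probability, amplify $U$ there, use the uniformity of the $n^a$ queries plus a union bound to reduce the quantum oracle to a fixed classical function that is $1/n^b$-close to $L$ (your explicit $\tilde{f}_n$), invoke the random self-reducibility, amplify pointwise, and uncompute garbage. The only cosmetic difference is that the paper runs the reduction coherently, keeping $r$ in superposition via the unitaries $\tilde{U}_1,\tilde{U}_2,\tilde{U}_3$, whereas you sample $r$ and the query answers classically, which is equivalent here.
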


\begin{proof}
Let us define $Y = \{x \in \01^n : \norm{\Pi_{L(x)} U \ket{x,0^q}}^2 \geq \frac{2}{3}\}$. It follows from \Cref{eq:assumption-rsr}, that $|Y| \geq (1-\frac{3}{n^{k}})2^n$. In this case, we can consider the circuit $U^{\textup{amp}}$ that, on input $x$, computes $U$ on the input $O(n)$ times in parallel and answers with the majority of the outputs. It follows that in this case, for every $x \in Y$, we have that $\norm{\Pi_{L(x)} U^{\textup{amp}} \ket{x,0^{O(qn)}}}^2 \geq 1-2^{-n}$.

Let $g:\01^*\rightarrow \01^*$ be the polynomial-time computable {\em classical} function from \Cref{{def:random-selfreducible}} in the definition of the random self-reducibility of $L$. Before defining $U^*$, we first define the unitary $\tilde{U}_1$ 
  \begin{align}
      \label{eq:tildeU1}
    \widetilde{U}_1:\ket{x,0}\rightarrow \frac{1}{ \sqrt{2^{n^c}}}\underbrace{\sum_{r \in \01^{n^c}} \ket{x,r,0} \bigotimes_{i \in [n^a]}
  \ket{g(i,x,r),0}}_{:=\ket{\chi}}.
  \end{align} 

  We now use the circuit $U^{\textup{amp}}$ to compute
  $L$ for every $g(i,x,r)$, i.e., we apply $\widetilde{U}_2 = \id \otimes (U^{\textup{amp}})^{\otimes n^a}$ to the state $\ket{\chi}$ to obtain
  \begin{align}
      \label{eq:tildeU2}
    \ket{\psi}=\frac{1}{\sqrt{2^{n^c}}}\sum_{r \in \01^{n^c}} \ket{x,r,0} \bigotimes_{i \in [n^a]}
  \big(U^{\textup{amp}}\ket{g(i,x,r),0}\big).
  \end{align}

Let us fix some random $r$.
  Observe that by our assumption on $g$ (i.e. $g(i,x,r) \sim \calU_n$), it follows from a union bound that there exists some $i \in [n^a]$ such that $g(i,x,r) \not\in Y$ with probability at most $\frac{3}{n^{k-a}}$. 
  Assuming that for every $i \in [n^a]$ we have $g(i,x,r) \in Y$, it follows that for every   $i \in [n^a]$ 
  \[
  \Pr_r\left[\Big\|\Pi_{L(g(i,x,r))}
  U\ket{g(i,x,r),0}\Big\|^2  \right] 
  \geq 1 - 2^{-n}.
\] 

    Let $\widetilde{U}_3$ be a unitary that implements the   classical circuit $h:\01^*\rightarrow \01$ from \Cref{def:random-selfreducible}. The action of $\widetilde{U}_3$ on $\ket{\psi}$ can be written as
    \begin{align}
  \label{eq:first-output}
  \ket{\phi}=\frac{1}{\sqrt{2^{n^c}}}\sum_r \tilde{U}_3 \left(\ket{x,r,0} \bigotimes_{i \in [n^a]}
  U^{\textup{amp}}\ket{g(i,x,r),0}\right).
  \end{align}
  
Notice that if we assume that for every $i \in [n^a]$ we have $g(i,x,r) \in Y$, and that $U^{\textup{amp}}$ is correct for every $g(i,x,r)$, we have from \Cref{def:random-selfreducible} 
that 
\[\norm{\Pi_{L(x)}\ket{\phi}}^2 \geq 1 - 2^{-2n}.\]
  
It follows from a union bound that for every $x \in \01^n$, \[\norm{\Pi_{L(x)}\ket{\phi}}^2 \geq 1 - \frac{3}{n^{k-a}} - \frac{n^a}{2^{n}} - \frac{1}{2^{2n}} \geq 1 - \frac{1}{\poly(n)}.\]
  
    We can pick $U^*$ as the algorithm that runs $\tilde{U}_3\tilde{U}_2\tilde{U}_1$ in parallel $O(n)$ times and answers with the majority. Finally, to remove any garbage from the computation, we can copy the output register into a separate register and uncompute $U^*$ and still compute $L(x)$ with overwhelming probability.
\end{proof}

 We now show that if $L$ is downward self-reducible, then we can construct a quantum circuit $U^*$ that computes $L$ on inputs of size $n$ from a quantum circuit $U_{n-1}$ that computes $L$ on inputs with size $n-1$.

\begin{theorem}[Downward-self-reducibility of $L^\star$ and quantum circuits]\label{lem:dsr_quantum}
Let $s_P \colon \mathbb{N} \to \mathbb{N}$ be a constructive function. Let $L^\star$ be the language from Theorem \ref{t:TV_language}. There is a sequence $\{\mathcal{C}_n^{\mathsf{DR}}\}_{n \geq 1}$ of deterministic circuits $\mathcal{C}_n^{\mathsf{DR}}$ for which the following holds:
\begin{itemize}
    \item[\emph{(}i\emph{)}] \emph{Input:} Each circuit $\mathcal{C}_n^{\mathsf{DR}}$ gets as input $1^n$ and a string $\mathsf{code}(P_{n - 1})$ that describes a quantum circuit $P_{n-1}$ of size $\leq s_P(n-1)$.
    \item[\emph{(}ii\emph{)}] \emph{Uniformity and Size:} Each circuit $\mathcal{C}^{\mathsf{DR}}_n$ is of size $S(n) = \mathsf{poly}(n, s_P(n-1))$, and there is a deterministic algorithm that when given $1^n$ prints $\mathsf{code}(\mathcal{C}^{\mathsf{DR}}_n)$ in time $\mathsf{poly}(S(n))$.
    \item[\emph{(}iii\emph{)}] \emph{Output and Correctness:} If $P_{n - 1}$ computes $L^\star$ on inputs of length $n - 1$ then $\mathcal{C}^{\mathsf{DR}}_n$ outputs the description of a quantum circuit $P_n$ of size $\mathsf{poly}(n, s_P(n-1))$ that computes $L^\star$ on inputs of length $n$.
\end{itemize}
\end{theorem}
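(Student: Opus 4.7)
The plan is to implement the classical downward self-reduction of $L^\star$ (granted by Theorem \ref{t:TV_language}) by a quantum circuit that uses an amplified, garbage-free version of the given $P_{n-1}$ to answer all oracle queries. Since this self-reduction is classical, polynomial-time, and makes only queries of length $< n$, correctness boils down to ensuring that, with overwhelming probability, every one of the $\mathsf{poly}(n)$ queries is answered correctly; a union bound then yields $P_n$ with the usual bounded-error guarantee. This is essentially the same mechanism used in Lemma \ref{l:amplification-quantum-rsr} for random self-reducibility, and crucially it works in the quantum setting because the reduction accesses its oracle only \emph{classically}.

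More concretely, the first step is to amplify $P_{n-1}$ by running $O(n)$ independent copies in parallel and taking a majority vote, obtaining a quantum circuit $P'_{n-1}$ of size $O(n \cdot s_P(n-1))$ whose error on \emph{every} fixed input of length $n-1$ is at most $2^{-3n}$. Next, I would apply the standard un-computation trick from Section~\ref{sec:prelim-quantum} to produce a garbage-free version $\widetilde{P}_{n-1}$, so that invocations of $\widetilde{P}_{n-1}$ can be safely nested inside a larger quantum computation without spurious interference with a junk register. The circuit $P_n$ is then defined as the quantum implementation of the classical oracle algorithm $A^f$ from Definition \ref{def:downward-selfreducible}: run $A$ classically on $x \in \{0,1\}^n$, and replace each oracle call by a call to $\widetilde{P}_{n-1}$. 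Queries of length strictly less than $n-1$ can be routed through $\widetilde{P}_{n-1}$ by padding (since $L^\star$ is $\mathsf{PSPACE}$-complete, one may assume, after a suitable reformulation, that the reduction from length $n$ uses only queries of length $n-1$; alternatively, the same construction applied recursively produces sub-circuits for all smaller lengths, the size of which is absorbed into the $\mathsf{poly}(n, s_P(n-1))$ bound). By a union bound over the at most $\mathsf{poly}(n)$ queries, every oracle call is answered correctly with probability $\geq 1 - \mathsf{poly}(n) \cdot 2^{-3n} \geq 2/3$, so $P_n$ computes $L^\star(x)$ with the required probability.

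For the deterministic outer circuit $\mathcal{C}^{\mathsf{DR}}_n$, I would have it simply read $1^n$ and $\mathsf{code}(P_{n-1})$, emit the explicit code of $\widetilde{P}_{n-1}$ obtained by the amplification-plus-uncomputation template (a fixed textual transformation of $\mathsf{code}(P_{n-1})$), and then emit a fixed quantum wrapper that implements $A$ and routes each oracle call through $\widetilde{P}_{n-1}$. All of this is syntactic manipulation of circuit descriptions and can be carried out deterministically in time $\mathsf{poly}(n, s_P(n-1))$; the resulting $P_n$ has size $\mathsf{poly}(n, s_P(n-1))$, giving the uniformity and size conditions (i)--(ii).

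The main conceptual obstacle, as emphasized in Section \ref{sec:classical_vs_quantum}, is that the classical notion of self-reducibility presumes a deterministic and consistent oracle, whereas $\widetilde{P}_{n-1}$ is an inherently probabilistic quantum object whose randomness cannot be ``fixed'' globally. The amplification step is designed precisely to sidestep this: because the classical reduction issues each query as a fresh classical input, and $\widetilde{P}_{n-1}$ is correct on each such input with probability $1 - 2^{-3n}$, Lemma \ref{lem:quantum_as_inherently_random} allows us to treat the whole composed computation as if the oracle were a classical function, with total slack of at most $\mathsf{poly}(n) \cdot 2^{-3n}$. This is exactly the quantum analogue of the argument needed for random self-reducibility in Lemma \ref{l:amplification-quantum-rsr}, and it is the cleanest way to avoid the ``non-fixability of quantumness'' difficulty that makes naive quantizations of such reductions break down.
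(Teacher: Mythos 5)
Your proposal is correct and follows essentially the same route as the paper's proof: amplify $P_{n-1}$ by parallel repetition plus majority vote, implement the classical downward self-reduction $A^{L^\star}$ as a quantum circuit with each oracle call replaced by the amplified circuit, apply a union bound over the $\mathsf{poly}(n)$ classical queries, and remove garbage by copying the output and uncomputing; the deterministic $\mathcal{C}^{\mathsf{DR}}_n$ is likewise just a syntactic transformation of $\mathsf{code}(P_{n-1})$. Your extra remark about queries of length strictly less than $n-1$ is a point the paper glosses over, but it does not change the argument.
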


\begin{proof}
 By item ($i$), we have that
 $$
  \Big\|{\Pi}_{\Lstar(x)} P_{n-1}\ket{x, 0^{q}}\Big\|^2 \geq 2/3.
 $$
   We first amplify the success probability of $P_{n-1}$  from $2/3$ to $1-\negl(n)$. For this, we perform a standard majority vote on the outputs of $O(n)$ parallel copies of $P_{n-1}$ and construct $P_{n-1}^*$ of size $O(n\cdot s_P(n-1))$ such that %
   \begin{align}
        \label{eq:correctnessofselfred1ampl}
    \Big\|\widetilde{\Pi}_{x} P^*_{n-1}\ket{0, x,0^{q}}\Big\|^2 \geq 1 - \frac{1}{\negl(n)},
  \end{align} %
where $q=O(\poly(n))$ and $\widetilde{\Pi}_{x} = \kb{L^\star(x)} \otimes \kb{x} \otimes \kb{0^q}$. 
  Now, $P_n$ simulates the {\em
  classical} circuit $A^{L^\star}$ from \Cref{def:downward-selfreducible} that computes $L^{\star}(x)$ and answers $A$'s queries to $L^\star$ on instances of size $n-1$ by simulating $P_{n-1}^*$. 
  To remove garbage, $P_n$ copies the output of $A^{L^\star}$ onto the output qubit, and then {\em uncomputes} $A^{L^\star}$ (by also uncomputing the calls to $P_{n-1}^*$). Since each one of the polynomially many queries is correct with probability at least {$1 - 1/\negl(n)$}, we have that the output of $A^{L^\star}$ is correct with probability at least  $1 -  1/\negl(n) \geq 2/3$. Additionally, as $A$ is a $\poly(n)$ time circuit and $P_{n-1}^*$ is of size $O(n \cdot s_P(n-1))$, $P_n$ is of size $\poly(n, s_P(n-1))$, and therefore ($iii$) holds. 
  
  In order to show ($ii$), observe that the circuit $\calC_n^{\textsf{DR}}$ first generates $\code(P_{n-1}^*)$ of size $O(n \cdot s_P(n-1))$ by using $\code(P_{n-1})$. It then converts the classical circuit $A^{L^\star}$ into it's reversible form and replaces these reversible gates with corresponding unitary descriptions. It also replaces the queries made by $A^{L^\star}$ with $\code(P_{n-1}^*)$. All of this can be computed by $\calC_n^{\textsf{DR}}$ using $\poly(n, |\code(P_{n-1})|) = \poly(n, s_P(n-1))$ gates.
\end{proof}

\begin{theorem}[Self-reducibility of $L^\star$ and quantum circuits]\label{lem:quantum_self_reducibility_lemma}
Let $s_A, s_P, t \colon \mathbb{N} \to \mathbb{N}$ be constructive functions. Moreover, let $L^\star$ be the language from Theorem \ref{t:TV_language}, and $a_\star, b_\star$ be the associated constants. There is a sequence $\{\mathcal{C}^{\mathsf{SR}}_n\}_{n \geq 1}$ of quantum circuits $\mathcal{C}^{\mathsf{SR}}_n$ for which the following holds:
\begin{itemize}
    \item[\emph{(}i\emph{)}] \emph{Input:} Each circuit $\mathcal{C}^{\mathsf{SR}}_n$ gets as input $1^n$ and strings $\mathsf{code}(P_{n-1})$, $\mathsf{code}(A_1), \ldots, \mathsf{code}(A_{t(n)})$, where $P_{n-1}$ is a quantum circuit of size $\leq s_P(n-1)$ and each $A_i$ is a quantum circuit of size~$\leq s_A(n)$.
    \item[\emph{(}ii\emph{)}] \emph{Uniformity and Size:} Each circuit $\mathcal{C}^{\mathsf{SR}}_n$ is of size $S(n) = \mathsf{poly}(n,t(n),s_A(n),s_P(n-1))$, and there is a deterministic algorithm that when given $1^n$ prints $\mathsf{code}(\mathcal{C}^{\mathsf{SR}}_n)$ in time $\mathsf{poly}(S(n))$.
    \item[\emph{(}iii\emph{)}] \emph{Output and Correctness:} Assume that $P_{n - 1}$  computes $L^\star$ on inputs of length $n - 1$, and that there exists $i \in [t(n)]$ such that
    \begin{align}
    \label{eq:goodAi}
    \Pr_{x \sim \{0,1\}^n,\,A_i}[A_i(x) = L^\star(x)] \;\geq\; 1 - n^{-2b_\star-a_\star}.
    \end{align}
    Then with probability at least $1 - 1/500n^2$ over its output measurement, $\mathcal{C}^{\mathsf{SR}}_n$ generates the description $\mathsf{code}(B_n)$ of a quantum circuit $B_n$ of size $\mathsf{poly}(n,s_A(n))$ that correctly computes $L^\star$ on inputs of length $n$. In other words, for every $x \in \{0,1\}^n$,
    $$
    \Pr_{B_n}[B_n(x) = L^\star(x)] \geq 2/3.
    $$
\end{itemize}
\end{theorem}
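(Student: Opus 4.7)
The plan is to combine the downward self-reducibility of $L^\star$ (Theorem \ref{lem:dsr_quantum}) with its random self-reducibility (Lemma \ref{l:amplification-quantum-rsr}), mediated by a hypothesis-testing step that identifies a ``good'' candidate among $A_1, \ldots, A_{t(n)}$. The key design constraint is that the output $B_n$ must have size $\mathsf{poly}(n, s_A(n))$, independent of $s_P(n-1)$, so $P_{n-1}$ may only appear as \emph{scaffolding} inside $\mathcal{C}^{\mathsf{SR}}_n$ during testing, and it must not propagate into $\mathsf{code}(B_n)$.

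\textbf{Step 1 (slow correct oracle).} Using $\mathsf{code}(P_{n-1})$ and Theorem \ref{lem:dsr_quantum}, $\mathcal{C}^{\mathsf{SR}}_n$ first constructs $\mathsf{code}(P_n)$ of size $\mathsf{poly}(n, s_P(n-1))$ that computes $L^\star_n$ with probability $\geq 2/3$. A standard parallel-and-majority step then amplifies this to a circuit $P_n^\star$ that is correct on every $x\in\{0,1\}^n$ with probability $\geq 1 - n^{-100}$, at only a $\mathsf{poly}(n)$ multiplicative overhead. \textbf{Step 2 (hypothesis testing).} Set $\theta \eqdef 1 - 3/(2 n^{2b_\star + a_\star})$ and a sample count $m = \mathsf{poly}(n)$ large enough that a Chernoff bound (Lemma \ref{lem:chernoff2}) controls deviations of order $n^{-(2b_\star+a_\star)}$ up to error $1/(t(n)\cdot n^{10})$. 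For $i=1,\ldots,t(n)$ in turn, $\mathcal{C}^{\mathsf{SR}}_n$ samples uniform $x_1,\ldots,x_m\in\{0,1\}^n$ in a fresh register, evaluates $A_i(x_j)$ and $P_n^\star(x_j)$ via universal simulation (with garbage-removal as in Section \ref{sec:prelim-quantum}), measures the output bits, and records the empirical rate of agreement. It accepts the smallest index $i^\dagger$ whose empirical agreement is $\geq \theta$. Under the hypothesis that $P_n^\star$ errs with probability at most $n^{-100}$ per input, the true $P_n^\star$-vs-$A_{i^\star}$ agreement is at least $1 - n^{-(2b_\star+a_\star)} - n^{-100}$, so by Chernoff $i^\star$ passes with probability $\geq 1 - 1/(2000 n^2)$; symmetrically, any accepted $A_{i^\dagger}$ has true accuracy with respect to $L^\star$ of at least $1 - 2/n^{2b_\star + a_\star}$ with equally high probability.

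\textbf{Step 3 (random self-reducibility amplification).} Given $\mathsf{code}(A_{i^\dagger})$, $\mathcal{C}^{\mathsf{SR}}_n$ deterministically constructs $\mathsf{code}(B_n)$ by invoking Lemma \ref{l:amplification-quantum-rsr}. The resulting $B_n$ has size $\mathsf{poly}(n)\cdot|A_{i^\dagger}| = \mathsf{poly}(n, s_A(n))$ as required, and computes $L^\star(x)$ on every $x\in\{0,1\}^n$ with probability $\geq 1 - 2^{-2n+1} \geq 2/3$, establishing part (iii). Parts (i)-(ii) follow directly: the total cost inside $\mathcal{C}^{\mathsf{SR}}_n$ is dominated by (a) the single call to $\mathcal{C}^{\mathsf{DR}}_n$ from Theorem \ref{lem:dsr_quantum}, (b) the $O(t(n)\cdot m)$ universal-simulation evaluations of $A_i$ and $P_n^\star$, and (c) the code transformation from $\mathsf{code}(A_{i^\dagger})$ to $\mathsf{code}(B_n)$; each contributes $\mathsf{poly}(n, t(n), s_A(n), s_P(n-1))$ gates and each has an explicit deterministic generator, so $\mathsf{code}(\mathcal{C}^{\mathsf{SR}}_n)$ can be printed from $1^n$ in polynomial time.

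The main obstacle is reconciling the quantitative requirements of Lemma \ref{l:amplification-quantum-rsr} with what hypothesis testing can certify. The hypothesis in (iii) only guarantees $A_{i^\star}$ the borderline accuracy $1 - n^{-(2b_\star+a_\star)}$, whereas any finite-sample test loses a small constant factor. We absorb this loss by invoking Lemma \ref{l:amplification-quantum-rsr} with exponent $k \eqdef 2b_\star + a_\star$ and verifying that its proof goes through for hypothesis $1 - C/n^k$ for any absolute constant $C$ (the union bound in its proof only needs $3C/n^{k-a_\star}$ to be $o(1/n^{a_\star})$), which is the regime delivered by our threshold $\theta$; alternatively, one reduces to $k' = 2b_\star + a_\star + 1$ after restricting to a slightly smaller success set of $A_{i^\dagger}$. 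The second subtlety, already encountered in the preamble of Section \ref{sec:self_reduc}, is that the testing step treats the quantum oracles $A_i$ and $P_n^\star$ purely through classical measurement outcomes, so Lemma \ref{lem:quantum_as_inherently_random} lets us use standard randomized Chernoff arguments even though the internal state of the simulated circuits is quantum.
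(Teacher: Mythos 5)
Your proposal is correct and follows essentially the same route as the paper's proof: use the downward self-reducibility circuit $\mathcal{C}^{\mathsf{DR}}_n$ to build a trusted (large) circuit $P_n$, test each $A_i$ against it on random inputs to isolate a candidate that is accurate on average (treating the quantum circuits as inherently probabilistic via Lemma \ref{lem:quantum_as_inherently_random}), and then apply the random-self-reducibility amplification of Lemma \ref{l:amplification-quantum-rsr} to that candidate alone, so that $B_n$ has size $\mathsf{poly}(n,s_A(n))$ independent of $s_P(n-1)$. Your only deviations are quantitative refinements — a threshold at scale $1-\Theta(n^{-(2b_\star+a_\star)})$ with $\mathsf{poly}(n,\log t(n))$ samples and an explicit check that Lemma \ref{l:amplification-quantum-rsr} tolerates the constant-factor slack — which fit comfortably within the size budget and the $1-1/500n^2$ success requirement.
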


\noindent \textbf{Note.} Notice that from \Cref{lem:dsr_quantum}, we could achieve a circuit of size $\poly(n,s_P(n-1))$ that computes $L^*$ on inputs of size $n$. The non-trivial aspect of \Cref{lem:quantum_self_reducibility_lemma} is to achieve a circuit $P_n$ whose size depends only on $n$ and $s_A(n)$ and is {\em independent} of $s_{P}(n-1)$.

\begin{proof}
First, given $\code(P_{n-1})$, we use~\Cref{lem:dsr_quantum} to construct a circuit $P_n$ with $|\code(P_n)| = \poly(n, s_P(n-1))$ such that for every $x \in \01^n$ and for $q' = \poly(n)$, we have
\begin{equation}
\label{eq:closepn}
    \big\|\widetilde{\Pi}_{x} P_n\ket{0, x, 0^{q'}}\big\|^2 \geq 1- \negl(n)
\end{equation}
where $\widetilde{\Pi}_{x} = \kb{\Lstar(x)} \otimes \kb{x} \otimes \kb{0^{q'}}$. In the remainder of this proof, the usage of $\widetilde{\Pi}_{x}$ will denote the process of checking if the output qubit is $\ket{\Lstar(x)}$, the input qubits remain as $\ket{x}$ and all auxiliary qubits are set to $0$.
We proceed to construct $B_n$ in three steps using the random and downward self-reducibility of $\Lstar$. \\

\noindent\underline{Step 1:} Notice that, the theorem statement provides no guarantees on how well $A_\ell$ performs for $\ell \neq~i$. To identify the circuits (among $\{A_1,\ldots,A_{t(n)}\}$) which compute $\Lstar$ correctly with probability at least $1 - 1/\poly(n)$, under the promise that there exists one,  we carry out the following test. Let $R = O(\log (t(n)/\eta))$ where $\eta = 1/\poly(n)$. For every $\ell \in [t(n)]$, pick uniformly random $x^1_\ell, \ldots ,x^R_\ell \in \01^n$, and for every $r\in [R]$ run $A_\ell$ and $P_n$ (constructed at the start of the proof) separately on two copies of $\ket{0,x^r_\ell,0^{q'}}$, measure the first qubit and let the outputs be $b_{\ell}^r,c^r_\ell\in \01$ respectively. Consider the~set 
$$
\mathcal{J} = \left\{\ell \in [t(n)]:  \sum_{r=1}^R [b_{\ell}^r = c^r_\ell] \geq \frac{3R}{4}\right\}.
$$
Note that it is possible that there is an $\ell \in \mathcal{J}$ that passes the test but $A_\ell$ does not compute~$\Lstar$ correctly. However, it suffices for us to show that, with high probability (over the randomness of sampling $x^i_j$s) if $\ell\in \mathcal{J}$, then the quantum circuit $A_\ell$ computes~$\Lstar$. Towards this end, we define  $t_\star = 2b_\star + a_\star + 1$ and show that, supposing $\Pr [A_\ell(x)=P_n(x)] < 1 - n^{-t_\star}$ (where the probability is taken over uniformly random $x\in \01^n$ and randomness in $A_{\ell},P_n$) then with high probability $\ell \notin \mathcal{J}$. To prove this, first notice that if $\Pr [A_\ell(x)=P_n(x)]\leq  1 -  n^{-t_\star}$, then applying the Chernoff bound (Theorem~\ref{lem:chernoff}) with $\mu < 1 -  n^{-t_\star}$ and $\delta = O(1)$, we have
$$
\Pr\Big[\frac{1}{R}\sum_r[b^r_\ell=c^r_\ell]\geq \frac{3}{4}\Big]\leq e^{-O(R)},
$$
This implies that with probability $\geq 1-e^{-O(R)}$ (over the random samples $x^i_j$s), if  $\ell \in \mathcal{J}$, then we have $\Pr_x [A_\ell(x)=P_n(x)]\geq 1- 1/n^{t_\star}$. Moreover, along with~\Cref{{eq:closepn}}, this implies that  with probability $\geq 1-e^{-O(R)}$, if $\ell\in \mathcal{J}$ then
\begin{align*}
\Pr_x[A_\ell(x) = \Lstar(x)] & \geq \Pr_x [A_\ell(x)=\Lstar(x) | P_n(x) = \Lstar(x)] \cdot \Pr_x [P_n(x) = \Lstar(x)] \\
& \geq (1 - n^{-t_\star})\left( 1 - {\negl(n)} \right) \geq 1 - n^{-2b_\star-a_\star}
\end{align*}
where the second inequality uses the fact that the conditional probability refers to the event that $A_\ell(x) = P_n(x)$.  Hence, with probability at least $1-e^{-O(R)}$, if $\ell \in \mathcal{J}$ then  $\Pr [A_\ell(x)=\Lstar(x)] \geq 1- n^{-2b_\star-a_\star}$. 
By taking a union bound over all $\ell \in \mathcal{J}$, we now have
$$
\Pr\Big[\exists \ell\in\mathcal{J}: \Pr [A_\ell(x)=\Lstar(x)] < 1-n^{-2b_\star-a_\star}\Big]  \leq t(n)\cdot e^{-O(R)} \leq \eta. 
$$
Hence with probability at least $1-\eta$, every $\ell\in \mathcal{J}$ satisfies  $\Pr [A_\ell(x)=\Lstar(x)] \geq 1-n^{-2b_\star-a_\star}$.

  Before proceeding to the next step, we argue that with probability $\geq 1 -e^{-O(R)}$, $\mathcal{J}$ is non-empty.   By assumption, we have that for $i \in [t(n)]$ it follows that $\Pr_x[A_i(x) = \Lstar(x)] \geq (1 - n^{-2b_\star-a_\star}$). 
  Notice that since $\Pr_x[P_n(x) = \Lstar(x)] \geq 1 - \negl(n)$, we have that $\Pr_{x_1,...,x_R}[\exists r \in [R], c^r_i \ne \Lstar(x^r_i)] \leq \negl(n)$ by a union bound. Conditioned that for all $r$ we have $c^r_i = \Lstar(x^r_i)$, we have that  the probability that $A_i$ fails the test is very small i.e., 
  $$
\Pr\Big[\frac{1}{R}\sum_r[b^r_i=c_i^r]\leq \frac{3}{4}\Big] = 
\Pr\Big[\frac{1}{R}\sum_r[b^r_i=\Lstar(x^ r_i)]\leq \frac{3}{4}\Big] 
\leq e^{-O(R)},
$$
where the inequality follows by applying the Chernoff bound (Theorem~\ref{lem:chernoff}) on Eq.~\eqref{eq:goodAi},
Hence, with probability at least $1 - e^{-O(R)} -\negl(n) = 1 - e^{-O(R)}$, $\mathcal{J}$ contains $i$.\\

\noindent\underline{Step $2$:} In Step $1$, we showed that circuits $A_\ell$ for $\ell \in \mathcal{J}$ compute $\Lstar$ on \emph{average} $x$. Now we use the random self-reducibility property of $\Lstar$ to obtain a circuit that performs well for \emph{every} $x\in \01^n$ and not just for a uniformly random $x$. In this  direction, for every $\ell \in \mathcal{J}$, given $\code(A_\ell)$, we use~\Cref{l:amplification-quantum-rsr} to construct the circuit $A_\ell^*$ (with $|\code(A_\ell^*)| = \poly(|\code(A_\ell)|,n) = \poly(s_A(n), n)$). Then, for every $\ell \in \mathcal{J}$ such that $\E_x [ A_\ell(x) = \Lstar(x)] \geq 1 - n^{-2b_\star -a_\star}$ we also have %
 \begin{align}
 \label{eq:verygoodAell}
  \big\|\widetilde{\Pi}_{x}  A_\ell^*\ket{0,x,0^{\bar{q}}}\big\|^2 \geq 1- 2^{-2n+1}, \quad \text{for every } x \in \01^n.
 \end{align}

We now prove item ($iii$). Pick $\eta=1/500n^2$ and note that with probability at least $ 1 - \eta = 1 - 1/500n^2$, every $\ell \in \mathcal{J}$ satisfies Eq.~\eqref{eq:verygoodAell}. Hence, picking an arbitrary $\ell \in \mathcal{J}$ and setting $B_n = A_{\ell}^*$ gives us the desired quantum circuit with $|\code(B_n)| = |\code(A_{\ell}^*)| \leq \poly(s_A(n), n)$. \\
Finally, observe that Step $1$ can be described with $\poly( t(n) \cdot |\code(A_\ell^*)| + |\code(P_n)|) = \poly(n, t(n), s_P(n-1), s_A(n))$ gates. Similarly, Step $2$ can be described with $\poly(T, t(n), s_P(n-1), s_A(n)) = \poly(n, t(n), s_P(n-1), s_A(n))$ gates. Step $3$ uses $\poly(n, s_A(n))$ gates. Putting these together, $|\code(\calC_n^{\textsf{SR}})| = \poly(n, t(n), s_P(n-1), s_A(n))$. This proves item $(ii)$ and completes the proof of the theorem.
\end{proof}

\section{A conditional PRG against uniform quantum circuits}\label{sec:prg-construction}
In this section, we put together the results of Section \ref{sec:tech-tools}, and show that if polynomial-space classical algorithms cannot be simulated in sub-exponential time by quantum algorithms, then there exists a pseudorandom generator secure against uniform quantum computations.

\begin{theorem}[Conditional $\PRG$ against uniform quantum computations]\label{t:main_PRG_restated}
Suppose that $\mathsf{PSPACE} \nsubseteq \mathsf{BQSUBEXP}$. In other words, there is a language $L \in \mathsf{PSPACE}$ and $\gamma > 0$ such that $L \notin \mathsf{BQTIME}[2^{n^\gamma}]$. Then, for some choice of constants $\alpha \geq 1$ and $\lambda \in (0,1/5)$, there is an infinitely often $(\ell, m, s, \varepsilon)$-generator $G = \{G_n\}_{n \geq 1}$, where 
$\ell(n) \leq n^\alpha$, $m(n) = \lfloor 2^{n^\lambda} \rfloor$,  $s(m) = 2^{n^{2\lambda}} \geq \mathsf{poly}(m)$ \emph{(}for any polynomial\emph{)}, and $\varepsilon(m) = 1/m.$
\end{theorem}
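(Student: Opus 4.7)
The plan is to follow the hardness-versus-randomness paradigm, adapted to the uniform setting using the recursive approach of Impagliazzo and Wigderson, and powered by the quantum building blocks developed in \cref{sec:tech-tools}. I would define a family of candidate generators $G_n^\lambda$ parameterized by $\lambda \in (0, 1/5)$, where $G_n^\lambda \colon \{0,1\}^{\ell(n)} \to \{0,1\}^{m(n)}$ with $m(n) = \lfloor 2^{n^\lambda} \rfloor$, and argue via a win--win over $\lambda$ that for some fixed choice the generator is secure against uniform quantum circuits infinitely often. The construction of $G_n^\lambda$ starts from the $\mathsf{PSPACE}$-complete language $L^\star$ of \cref{t:TV_language}, which is both downward-self-reducible and random-self-reducible. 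Form its $k$-fold direct product $(L^\star)^k$ for a suitable $k = \mathsf{poly}(n)$, then apply a Goldreich--Levin inner-product with a fresh random string $r \in \{0,1\}^k$ to obtain a Boolean function $\mathsf{Amp}(L^\star)$ on $kn + k$ input bits. Finally, instantiate the Nisan--Wigderson generator of \cref{def:nwprg} with $\mathsf{Amp}(L^\star)$ as the hard function and a $(m(n), c(kn+k)^2, kn+k, \log m(n))$-design. The seed length is $\ell(n) = O((kn+k)^2) \leq n^\alpha$ for some constant $\alpha$, and the quick-computability requirement $G_n^\lambda(w) \in \mathsf{DTIME}[2^{O(\ell(n))}]$ holds because $L^\star \in \mathsf{PSPACE}$ can be evaluated in deterministic time $2^{\mathsf{poly}(n)}$.

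The crux of the proof is the security analysis. Suppose, toward contradiction and for a fixed $\lambda$, that there is a uniform sequence $\{C_m\}$ of quantum circuits of size $s(m) = 2^{n^{2\lambda}}$ that $\varepsilon(m)$-distinguishes $G_n^\lambda(\mathcal{U}_{\ell(n)})$ from $\mathcal{U}_{m(n)}$ for all sufficiently large $n$. I would compose the three uniform quantum reconstruction procedures of \cref{sec:tech-tools} in sequence: the quantum Nisan--Wigderson reconstruction of \cref{lem:nw_quantum} to transform $\mathsf{code}(C_m)$ into the description of a quantum circuit approximating $\mathsf{Amp}(L^\star) = (L^\star)^k \oplus r$ with advantage $\gamma(m)/t(m)$; then the quantum Goldreich--Levin reconstruction of \cref{lem:gl} to obtain a quantum circuit computing $(L^\star)^k$ with probability $\Omega(\gamma^3)$; and finally the quantum IJKW local list-decoder of \cref{thm:ijkw_quantum}, which, because $k$ is a sufficiently large polynomial in $n$, yields with non-trivial probability the description of a quantum circuit $A$ that approximates $L^\star$ on a $(1-\delta)$-fraction of length-$n$ inputs, for $\delta = 1/\mathsf{poly}(n)$ under our choice of $k$. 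Each stage is deterministically describable from the codes of its inputs and the parameter $1^n$, which is essential in what follows.

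The hardest step---and the one I expect to be the principal obstacle---is the recursive bootstrap that converts this one-shot reconstructor into a sub-exponential-time \emph{uniform deterministic} procedure printing quantum circuits for $L^\star$. Following Impagliazzo--Wigderson, I would build quantum circuits $P_1, P_2, \ldots, P_n$ inductively: given $\mathsf{code}(P_{n-1})$ computing $L^\star$ on $(n-1)$-bit inputs, run the reconstruction pipeline above, repeated $t(n) = \mathsf{poly}(n)$ times with fresh randomness, to produce a candidate list $\mathsf{code}(A_1), \ldots, \mathsf{code}(A_{t(n)})$ at least one of which approximates $L^\star$ on a $(1 - n^{-2b_\star - a_\star})$-fraction of $n$-bit inputs (boosting the success probability of the IJKW stage by repetition). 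Then apply \cref{lem:quantum_self_reducibility_lemma}, which combines $P_{n-1}$ (to truthfully evaluate $L^\star$ on length $n-1$, used as ground truth) with the candidate list and the random-self-reducibility of $L^\star$ to output a \emph{short}, worst-case-correct quantum circuit $P_n$ whose size depends only on $n$ and on $s_A(n)$, not on $s_P(n-1)$. The subtlety is that unlike in the classical case one cannot fix internal quantum randomness to collapse an averaging argument, so every intermediate simulation of $P_{i-1}$ feeding into the construction of $P_i$ must itself be embedded inside the later quantum circuit rather than executed classically; this is why all the lemmas in \cref{sec:tech-tools} have been stated in the meta-level ``code-to-code'' form that exposes an explicit deterministic description of the outer construction circuit.

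Assuming the recursion is implemented correctly, the overall deterministic algorithm producing $\mathsf{code}(P_n)$ runs in time $2^{n^{O(\lambda)}}$, since the sizes $s_A(n) = 2^{n^{O(\lambda)}}$ and the non-uniformity blowup between recursive levels is polynomial. This places $L^\star \in \mathsf{BQTIME}[2^{n^{O(\lambda)}}]$. For $\lambda$ sufficiently small relative to the promised $\gamma > 0$ with $L^\star \notin \mathsf{BQTIME}[2^{n^\gamma}]$, we obtain a contradiction, so for that $\lambda$ the generator $G_n^\lambda$ must be $(s,\varepsilon)$-pseudorandom against uniform quantum circuits infinitely often. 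If no single $\lambda$ works, applying the argument along a sequence $\lambda \to 0$ places $L^\star \in \bigcap_{\lambda > 0} \mathsf{BQTIME}[2^{n^{O(\lambda)}}] = \mathsf{BQSUBEXP}$, again contradicting the hypothesis. Either way, some $G_n^\lambda$ witnesses the theorem with $\alpha$, $\lambda$, $s$, and $\varepsilon$ as claimed.
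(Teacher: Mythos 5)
Your proposal matches the paper's proof essentially step for step: the same amplified function $\mathsf{Amp}(L^\star)$ obtained by Goldreich--Levin applied to the direct product $(L^\star)^k$ and fed into Nisan--Wigderson, the same composition of the quantum NW, Goldreich--Levin, and IJKW reconstructions (\cref{lem:nw_quantum}, \cref{lem:gl}, \cref{thm:ijkw_quantum}), and the same Impagliazzo--Wigderson-style recursion via \cref{lem:dsr_quantum} and \cref{lem:quantum_self_reducibility_lemma} that keeps the size of the circuit for $L^\star$ at length $j$ independent of the (large) size at length $j-1$, followed by the win--win over $\lambda$. The only quantitative slip is the repetition count in the bootstrap: a single pass of the NW$+$GL$+$IJKW pipeline succeeds only with probability $2^{-\Theta(n^\lambda)}$ (the NW reconstruction contributes $\Omega(1/m^3)$ and IJKW contributes $\Omega(\varepsilon'^2)$), so the candidate list must have size $2^{\Theta(n^\lambda)}$ rather than $\mathsf{poly}(n)$ --- which is what the paper does, and which still fits within your claimed $2^{n^{O(\lambda)}}$ time bound.
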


In the proof given below, one can even take a larger constant $\lambda$ closer to $1$. However, since we are not optimizing the choice of the constant $\alpha$ in the seed length, this is inessential.

\begin{proof}
Let $L^\star \subseteq \{0,1\}^*$ be the special language from Theorem \ref{t:TV_language}. For each $n \geq 1$, let $f_n \colon \{0,1\}^n \to \{0,1\}$ be the indicator Boolean function which agrees with the set $L^\star \cap \{0,1\}^n$. We  use these functions to define a set of candidate generators $\{G^{\alpha, \lambda}\}_{\alpha,\lambda}$, where $G^{\alpha, \lambda} = \{G^{\alpha, \lambda}_n\}_{n \geq 1}$ with parameters $\ell$, $m$, $s$, and $\varepsilon$ as in the statement of the result.  We then argue that if none of them is an infinitely often $(\ell, m, s, \varepsilon)$-generator, then for every $\gamma > 0$ we have $L^\star \in \mathsf{BQTIME}[2^{n^\gamma}]$. Since this language is complete for $\mathsf{PSPACE}$, as a consequence we get $\mathsf{PSPACE} \subseteq \mathsf{BQSUBEXP}$, contradicting the hypothesis of the theorem.

Each generator $G^{\alpha,\lambda} = \{G^{\alpha, \lambda}_n\}_{n \geq 1}$ for a large enough $\alpha \geq 1$ is defined as follows. Let $d_\star \geq 1$ be the constant appearing in Theorem \ref{t:TV_language}, and $a_\star, b_\star, c_\star \geq 1$ be the constants from Definition \ref{def:random-selfreducible} associated with the random-self-reducibility of $L^\star$. Next, for every $n \geq 1$ we set
$$
 n_1(n) \eqdef k \cdot n ~~(\text{for}\;k(n) \eqdef 2n^{2b_\star + a_\star + d_\star + 2}), \quad n_2(n) \eqdef kn + k, \quad \ell(n) \eqdef (kn + k)^2 \leq n^{\alpha}, \quad m(n) \eqdef \lfloor 2^{n^\lambda} \rfloor,
$$
and introduce functions
$$
g_n \colon \{0,1\}^{n_1} \to \{0,1\}^k, \quad h_n \colon \{0,1\}^{n_2} \to \{0,1\}, \quad G_n^{\alpha, \lambda} \colon \{0,1\}^{\ell} \to \{0,1\}^m,
$$
which are defined as follows:
\begin{eqnarray}
    g_n & \equiv &  f_n^k,~\text{i.e.},~g(x^1, \ldots, x^k) \eqdef (f_n(x^1), \ldots, f_n(x^k))~\text{for every}~\vec{x} \in \{0,1\}^{kn}, \nonumber \\
    h_n(\vec{x}, r) & \eqdef & g_n(\vec{x}) \cdot r = \sum_{i =1}^k f(x^i) \cdot r~\;(\mathsf{mod}~2),~\text{where}~r \in \{0,1\}^k,~\text{and} \nonumber \\
    G^{\alpha, \lambda}_n(z) & \eqdef & \mathsf{NW}^{h_n}(z)~\text{instantiated with}~m(n)~\text{output bits}. \nonumber
\end{eqnarray}
This completes the definition of the generator $G^{\alpha, \lambda} \eqdef \{G^{\alpha, \lambda}_n\}_{n \geq 1}$.

We argue next that if $G^{\alpha, \lambda}$ is {\em not} an infinitely often $(\ell, m, s, \varepsilon)$-generator, then $L^\star \in \mathsf{BQTIME}[2^{n^{3 \lambda}}]$. First, note that it has the correct \emph{stretch}. Moreover, $f_n$ can be computed in (deterministic) time $O(2^{n^{d_\star}})$, $g_n$ and $h_n$ can be computed in time $\mathsf{poly}(n) \cdot 2^{n^{d_\star}}$, and the sets in the combinatorial design can each be computed in time $\mathsf{poly}(n)$. Given these time bounds, it follows that $G_n^{\alpha, \lambda}$ can be computed in time $m(n) \cdot \mathsf{poly}(n) \cdot 2^{n^{d_\star}} = O(2^{\ell(n)})$, by our choice of $\ell(n)$. In other words, the generator also satisfies the \emph{uniformity and running time} requirements. Thus, if $G^{\alpha, \lambda}$ is not an infinitely often $(\ell, m, s, \varepsilon)$-generator it must be the case that its output distributions violate the \emph{pseudorandomness} condition. We explore this in what follows.

Let $\mathcal{D}_m \equiv G^{\alpha, \lambda}_n(\mathcal{U}_\ell)$ be the distribution induced by the output of the generator on a random input seed. Since the generator is \emph{not} infinitely often pseudorandom for parameters $s(m) = 2^{n^{2\lambda}}$ and $\varepsilon(m) = 1/m$, there is a deterministic algorithm $A(1^m)$ that runs in time $s(m)$, outputs a ``distinguisher'' quantum circuit $D_m$ over $m$ input variables and of size at most $s(m)$, and for \emph{every large enough} $n$ (say, $n \geq \kappa \in \mathbb{N}$),
 \[ 
\left |  \Pr_{x \sim \{0,1\}^m}[D_m(x) = 1] - \Pr_{y \sim \mathcal{D}_m}[D_m(y) = 1]  \right | > \eps(m).
 \]

Our next step is to argue that algorithm $A$ can be used to define a \emph{uniform} family of quantum circuits $Q_n$ of size at most $O(2^{n^{3\lambda}})$ that correctly decide $L^\star$ on $n$-bit inputs. In other words, according to our notation, for every $x \in \{0,1\}^n$ we have $\Pr_{Q_n}[Q_n(x) = f_n(x)] \geq 2/3$. By the discussion above, this completes the proof of the theorem.\\

\noindent \textbf{The quantum circuit $\bm{Q_n(x)}$.} We now describe each quantum circuit $Q_n$ and argue about its correctness. This circuit computes in $1 + (n - \kappa + 1) + 1$ sequential stages, which are delegated to sub-circuits $Q'_{0}$ followed by $Q'_\kappa, Q'_{\kappa + 1}, \ldots, Q'_n$ and a final sub-circuit $E_n$:\\

\noindent \emph{Initialization:} $Q'_0$  on input $\ket{0^{q_0}}$ prints the description $\mathsf{code}(P_{\kappa - 1})$ of a quantum circuit $P_{\kappa - 1}$ of size $O(1)$  that computes the Boolean function $f_{\kappa - 1}$ corresponding to $L^\star$ over $(\kappa - 1)$-bit strings. 

\vspace{0.3cm}

\noindent \emph{Core Stages:} For every $j \in \{\kappa, \ldots, n\}$, the circuit $Q'_j$ expects as input a description $\mathsf{code}(P_{j - 1})$ of a quantum circuit $P_{j - 1}$ that computes $f_{j - 1}$. The goal of $Q'_j$ is to output with high probability the code of a circuit $P_j$ that computes $f_j$.

\vspace{0.3cm}

\noindent \emph{Final Computation:} Quantum circuit $E_n$ expects input strings $\mathsf{code}(P_n)$ and $x \in \{0,1\}^n$, and outputs the simulation of $P_n$ on $x$.

\vspace{0.3cm}

\noindent (We omit in this description the amplification of the success probability of $Q_n$ from $1/2 + \Omega(1)$ to $\geq 2/3$ (see e.g.~Section \ref{sec:self_reduc}, where more elaborate amplifications are discussed.)\\

Next, we formalise this idea, analysing the size and uniformity of quantum circuits $Q'_j$ and $E_n$, the size of the involved circuits $P_j$, and the success probability of each $Q'_j$. This will allow us to bound the size of $Q_n$ and to show that it correctly computes $f_n$.

In order to define these circuits, we will make use of the uniform families of quantum circuits from Section \ref{sec:tech-tools} and of the uniform family of quantum circuits $\{D_{m(n)}\}_{n \geq 1}$ that violate the pseudorandomness of the generator whenever $n \geq \kappa$. Our main goal is to prove the following lemma.

\begin{lemma}\label{lem:main_PRG}
There exist universal constants $C_U \geq C_Q \geq C_P \geq 1$ for which the following holds. Let $s_P(n) \eqdef 2^{C_P \cdot n^{2 \lambda}}$ for every $n \geq 1$. For every $j \in \{\kappa, \ldots, n\}$, there is a  quantum circuit $Q'_j$ such~that:
\begin{itemize}
    \item[\emph{(}i\emph{)}] \emph{Input:} $Q'_j$ expects as input the description of a quantum circuit $P_{j - 1}$ of size $\leq s_P(j-1).$
    \item[\emph{(}ii\emph{)}] \emph{Size and Uniformity:} $Q'_j$ is a circuit of size $\leq 2^{C_Q  \cdot j^{2 \lambda}}$. Moreover, there is a deterministic algorithm that when given $1^j$ prints $\mathsf{code}(Q'_j)$ in time $\leq 2^{C_U \cdot j^{2 \lambda}}$.
    \item[\emph{(}iii\emph{)}] \emph{Output and Correctness:} If the input circuit $P_{j - 1}$ correctly computes $f_{j - 1}$, then with probability at least $1 - 1/100j^2$ over its output measurement, $Q'_j$ generates the description of a circuit $P_j$ of size $\leq s_P(j)$ that  computes $f_j$.
\end{itemize}
\end{lemma}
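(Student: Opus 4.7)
The plan is to construct $Q'_j$ by chaining the four uniform tools of Section~\ref{sec:tech-tools} in the order NW-reconstruction, Goldreich--Levin, IJKW, self-reducibility, and then to verify the three claims of the lemma. First, $Q'_j$ applies $\mathcal{C}_j^{\mathsf{DR}}$ (Theorem~\ref{lem:dsr_quantum}) to $\mathsf{code}(P_{j-1})$ to obtain $\mathsf{code}(P_j^{\mathsf{DR}})$ of size $\mathrm{poly}(j,s_P(j-1))$ computing $f_j$ on $j$-bit inputs, and then assembles from it a circuit $F_j$ of size $\mathrm{poly}(j,k,s_P(j-1))$ computing $h_j$ on $n_2(j)$-bit inputs. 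In parallel, $Q'_j$ runs the uniform algorithm $A(1^{m(j)})$ guaranteed by the failure of $G^{\alpha,\lambda}$ to be pseudorandom at length $j$, producing $\mathsf{code}(D_{m(j)})$ of size at most $s(m(j))=2^{j^{2\lambda}}$ that distinguishes $\mathsf{NW}^{h_j}(\mathcal{U}_\ell)$ from uniform with advantage $\geq \varepsilon(m(j))=1/m(j)$.

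Next, $Q'_j$ feeds $\mathsf{code}(F_j),\mathsf{code}(D_{m(j)})$ into $\mathcal{C}_{n_2(j)}^{\mathsf{NW}}$ (Lemma~\ref{lem:nw_quantum}); with probability $\Omega(1/m(j)^3)$ this yields a circuit $\tilde A$ of size $2^{O(j^{2\lambda})}$ computing $h_j(x,r)=g_j(x)\cdot r$ with expected advantage $\Omega(1/m(j)^2)$. It then pipes $\mathsf{code}(\tilde A)$ through the deterministic Goldreich--Levin circuit of Lemma~\ref{lem:gl}, obtaining $\mathsf{code}(\tilde B)$, where $\tilde B$ has size $2^{O(j^{2\lambda})}$ and satisfies $\E_x[\tilde B(x)=g_j(x)]\geq \varepsilon_j:=\Omega(1/m(j)^6)=2^{-O(j^\lambda)}$. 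This is fed into the IJKW circuit $\mathcal{C}_j^{\mathsf{IJKW}}$ (Theorem~\ref{thm:ijkw_quantum}) with target error $\delta:=j^{-(2b_\star+a_\star)}$ and agreement parameter $\varepsilon:=\varepsilon_j$; condition~(\ref{eq:IJKW_relation_eps_delta_quantum}) is easily verified because $k=2j^{2b_\star+a_\star+d_\star+2}$ dominates $C\delta^{-1}(\log(1/\delta)+\log(1/\varepsilon))=O(j^{2b_\star+a_\star+\lambda})$. With probability $\Omega(\varepsilon_j^2)=2^{-O(j^\lambda)}$ the IJKW step outputs a circuit of size $2^{O(j^{2\lambda})}$ meeting the average-case error bound of Eq.~(\ref{eq:goodAi}). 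The circuit $Q'_j$ executes this NW--GL--IJKW pipeline $T=2^{O(j^\lambda)}\log j$ times on independent randomness, producing candidates $A_1,\ldots,A_T$; by standard amplification, at least one $A_i$ is good except with probability $\leq 1/(1000 j^2)$.

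Finally, $Q'_j$ invokes the self-reducibility circuit $\mathcal{C}_j^{\mathsf{SR}}$ (Theorem~\ref{lem:quantum_self_reducibility_lemma}) on $\mathsf{code}(P_{j-1})$ together with $\mathsf{code}(A_1),\ldots,\mathsf{code}(A_T)$; provided some $A_i$ is good, this outputs with probability $\geq 1-1/(500 j^2)$ the description of a quantum circuit $P_j$ of size $\mathrm{poly}(j,2^{O(j^{2\lambda})})$ computing $f_j$ correctly on \emph{every} input. A union bound over the two failure events gives the stated $1-1/(100 j^2)$ success probability, so item~(iii) holds. For item~(ii), summing the sizes of all subroutines gives $|Q'_j|\leq 2^{O(j^{2\lambda})}\cdot\mathrm{poly}(s_P(j-1))=2^{O(j^{2\lambda})}\cdot 2^{O(C_P(j-1)^{2\lambda})}$, which is at most $2^{C_Q j^{2\lambda}}$ whenever $C_Q$ is a sufficiently large multiple of $C_P$; the deterministic printing overhead absorbs into $C_U\geq C_Q$, and uniformity is inherited from the uniformity of each building block. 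The principal obstacle — and the reason the self-reducibility machinery was set up in exactly the form of Theorem~\ref{lem:quantum_self_reducibility_lemma} — is to ensure that although the intermediate circuit $F_j$ and the internal work of $\mathcal{C}_j^{\mathsf{SR}}$ depend on $s_P(j-1)$, the \emph{output} circuit $P_j$ has size depending only on $s_A=2^{O(j^{2\lambda})}$. This is precisely what prevents a runaway blowup over the $n-\kappa+1$ recursive stages and allows the fixed exponent $C_P$ in $s_P(\cdot)$ to persist throughout the induction.
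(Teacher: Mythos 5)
Your construction is, stage for stage, the paper's own proof: downward self-reduction to get a (large) circuit for $f_j$ and hence $h_j$, the uniform distinguisher $D_{m(j)}$, then the NW-reconstruction of Lemma~\ref{lem:nw_quantum}, the Goldreich--Levin circuit of Lemma~\ref{lem:gl}, the IJKW circuit of Theorem~\ref{thm:ijkw_quantum}, followed by $2^{O(j^{\lambda})}$-fold repetition and the self-reducibility circuit of Theorem~\ref{lem:quantum_self_reducibility_lemma} fed with $\mathsf{code}(P_{j-1})$ and the candidate list, with the same parameter choices and the same key observation that the output size depends only on $s_A(j)=2^{O(j^{2\lambda})}$ and not on $s_P(j-1)$, which is what keeps $C_P$ fixed across the induction.

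The one step you skip is the interface between Goldreich--Levin and IJKW. Your circuit $\tilde B$ is guaranteed to agree with $g_j=f_j^k$ with probability $\varepsilon_j$ over a \emph{uniform ordered tuple} $\vec{x}\sim\{0,1\}^{jk}$ (repetitions allowed, arbitrary order), whereas the hypothesis in Equation~\eqref{eq:assumption-ijkw_quantum} of Theorem~\ref{thm:ijkw_quantum} is agreement over a random $k$-\emph{subset} $B\sim\mathcal{S}_{j,k}$ presented in lexicographic order. Feeding $\mathsf{code}(\tilde B)$ directly into $\mathcal{C}^{\mathsf{IJKW}}$ is therefore not literally justified: in principle all of $\tilde B$'s agreement could sit on non-sorted tuples, so its agreement on lexicographically ordered inputs could be essentially zero and the IJKW step would fail. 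The paper remedies this by wrapping $\tilde B$ in a circuit that first applies a uniformly random permutation to the $k$ blocks of its input (and discards the negligible contribution of tuples with repeated strings), which recovers agreement at least $\varepsilon_j/2$ on $\mathcal{S}_{j,k}$; this is Claim~\ref{cl:PRG_proof_claim}. The fix is routine and costs only a constant factor in $\varepsilon_j$ and a $\mathsf{poly}(j,k)$ additive size overhead, so nothing else in your argument changes, but it should be stated before invoking Theorem~\ref{thm:ijkw_quantum}.
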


Assuming Lemma \ref{lem:main_PRG}, we can complete the proof of Theorem \ref{t:main_PRG_restated} as follows. By the definition of $Q_n$ and its components, it follows from a union bound over all measurements in the core stages of $Q_n$ that the probability that the string $\mathsf{code}(P_n)$ output by $Q'_n$ does not describe a quantum circuit $P_n$ that computes $f_n$ is a most
$$
\sum_{j = \kappa}^n 1/100j^2 \;\leq\; \frac{1}{100} \cdot \sum_{j \geq 1} \frac{1}{j^2} \;=\; \frac{1}{100} \cdot \frac{\pi^2}{6} \;\leq\; \frac{1}{50}.  
$$
In this case, for every fixed $x \in \{0,1\}^n$, $\Pr_{P_n}[P_n(x) = f_n(x)] \geq 2/3$. Since in the final computation stage of $Q_n$ it uses $E_n$ to simulate the input circuit $P_n$ on $x$, we get by a union bound that on each input $x$,
$$
\Pr_{Q_n}[Q_n(x) \neq f_n(x)] \;\leq\; 1/50 + 1/3 \;<\; 2/5. 
$$
In other words, $Q_n$ computes $f_n$. The size of $Q_n$ is given by the sum of the sizes of each component. First, we can assume that $\mathsf{size}(Q'_0) \leq 2^{C_P \cdot (\kappa - 1)^{2 \lambda}}$ provided that $C_P$ is a large enough constant, given that $\kappa$ is constant. In addition, we have
$$
\sum_{j = \kappa}^n \mathsf{size}(Q'_j) \;\leq \; n \cdot \mathsf{size}(Q'_n) \;\leq\; n \cdot 2^{C_Q \cdot n^{2\lambda}}.
$$
Finally, the size of $E_n$ is at most polynomial in the size of the input circuit $P_n$. Overall, we get that $\mathsf{size}(Q_n) = 2^{O(n^{2 \lambda})} = O(2^{n^{3\lambda}})$, as desired. The uniformity of $Q_n$ follows from the uniformity of its components (the code of $Q'_0$ can be obtained using an exhaustive computation, since $\kappa$ is constant.). It follows from this discussion that $L^\star \in \mathsf{BQTIME}[2^{n^{3\lambda}}]$.

We now proceed to prove Lemma \ref{lem:main_PRG}, which finishes our proof.

\begin{proof}[Proof of \Cref{lem:main_PRG}]

 It will be evident from our proof that large enough constants $C_P$, $C_Q$, and $C_U$ can be chosen so that the argument works. Furthermore, from the proof given below it will be clear that every sub-circuit of $Q'_j$ can be uniformly constructed in deterministic time that is polynomial in the size of the sub-circuit. From this and using that it is easy to describe how these sub-circuits are connected, we get that the sequence $\{Q'_j\}_{j \geq \kappa}$ of quantum circuits $Q'_j$ is uniform. 

Let $j \in \{\kappa, \ldots, n\}$ be fixed, and assume that $Q'_j$ is given as input a string $\mathsf{code}(P_{j - 1})$ representing a quantum circuit $P_{j - 1}$ of size $\leq s_P(j - 1)$ that computes $f_{j - 1}$. First, $Q'_j$ invokes the deterministic circuit $\mathcal{C}^\mathsf{DR}_j$ from Lemma \ref{lem:dsr_quantum} on $\mathsf{code}(P_{j - 1})$, obtaining from it a string $\mathsf{code}(F_j)$ describing a quantum circuit $F_j$ of size $\mathsf{poly}(s_P(j-1))$ that computes $f_j$. Note that $F_j$ might contain more than $s_P(j)$ gates, so it cannot be used as the output circuit $P_j$. 
However, we show next that we can use $F_j$ to uniformly construct the circuit $P_j$ that computes $f_j$ with size $s_P(j)$. For that we need four steps: 
\\

\noindent \emph{1. From a large circuit for $f_j$ to a smaller approximate circuit for $h_j$ \emph{(}with non-trivial probability\emph{)}.} Circuit $Q'_j$ takes the code of $F_j$ and produces from it a string $\mathsf{code}(H_j)$ describing  a quantum circuit $H_j$ that computes the function $h_j$ defined above. Note that the definition of $H_j$ from $F_j$ is  elementary, and that $\mathsf{size}(H_j) = \mathsf{poly}(\mathsf{size}(F_j)) = \mathsf{poly}(s_P(j - 1))$. Recall that $h_j$ is defined over $\ell(j) = \mathsf{poly}(j)$ input bits, and that the corresponding function $\mathsf{NW}^{h_j}$ from above produces $m(j) = \lfloor 2^{j^{\lambda}} \rfloor$ output bits. We now invoke Lemma \ref{lem:nw_quantum} with  parameters associated with the Nisan-Wigderson generator for index value $j$: function $h_j$ and its corresponding quantum circuit $H_j$, stretch $m(j) = \lfloor 2^{j^{\lambda}} \rfloor$, and distinguisher circuit $D_{m(j)}$ of size $s(j) = 2^{j^{2\lambda}}$ and advantage $\gamma(j) = 1/m(j)$. From this lemma and our choice of parameters, it follows that $Q'_j$ has access to a quantum circuit $\mathcal{C}^{\mathsf{NW}}$ of size $\mathsf{poly}(m(j),s_{P}(j-1),s(j)) = \mathsf{poly}(s_P(j - 1)) = 2^{O(j^{2\lambda})}$ such that, when given $\mathsf{code}(H_j)$, $\mathsf{code}(D_{m(j)})$, the input length of $h_j$ and the stretch value $m(j)$, it outputs with probability at least
$\Omega(\gamma(j)/m(j)^2) = \Omega(1/m(j)^3) = \Omega(2^{-3 \cdot j^{\lambda}})$ the encoding $\mathsf{code}(A_j)$ of a quantum circuit $A_j$ of size $O(m(j)^2 \cdot s(j)) = O(2^{1.01 \cdot j^{2\lambda}})$ 
such that
\begin{equation}\label{eq:NW_reconstruction_good}
\Pr_{x,\,r,\,A_j}[A_j(x,r) = h_j(x)] \;\geq\; \frac{1}{2} + \frac{\gamma(j)}{2m(j)} \;=\; \frac{1}{2} + \frac{1}{2m(j)^2} \;\geq\; \frac{1}{2} + 2^{-3 \cdot j^{\lambda}}.
\end{equation}
Crucially, note that the size of $A_j$ does not depend on the sizes of $F_j$ and $P_{j - 1}$. In its next steps, $Q'_j$ tries to compute from $A_j$ a quantum circuit for $f_j$ of size $\leq 2^{C_P \cdot j^{2\lambda}}$, while maintaining its total number of gates $\leq 2^{C_Q \cdot j^{2\lambda}}$ (we will boost the success probability of $Q'_j$ later in the proof). Note that the constant $C_Q$ might depend on $C_P$ once we fix a large enough $C_P$, and indeed this is needed for this plan to work.\
~\\

\noindent \emph{2. From a circuit approximating $h_j$ to a non-trivial quantum circuit for $g_j$ \emph{(}with probability $1$\emph{)}.} Given the string $\mathsf{code}(A_j)$ produced by $\mathcal{C}^{\mathsf{NW}}$  in the step above, $Q'_j$ proceeds as follows. It instantiates the corresponding deterministic circuit $\mathcal{C}^{\mathsf{GL}}$ from Lemma \ref{lem:gl}, which computes from $\mathsf{code}(A_j)$ a string $\mathsf{code}(B_j)$ describing a quantum circuit $B_j$,
with
 $\mathsf{size}(B_j) = \mathsf{poly}(j,k(j)) \cdot \mathsf{size}(A_j) = O(2^{1.02 \cdot j^{2 \lambda}})$.
If we assume that $A_j$ satisfies Equation \eqref{eq:NW_reconstruction_good}, it follows from \Cref{lem:gl} that
\begin{eqnarray}\label{eq:GL_is_good}
    \Pr_{x,\,B_j}[B_j(x) = g_j(x)] \;\geq\; \frac{(2^{-3 \cdot j^{\lambda}})^{3}}{2} \;\geq\; 2^{-10 \cdot j^\lambda}. 
\end{eqnarray}
Moreover, note that $\mathsf{size}(\mathcal{C}^{\mathsf{GL}}) = \mathsf{poly}(j , k(j), \mathsf{size}(A_j)) = 2^{O(j^{2\lambda})}$.\\

\noindent \emph{3. From a non-trivial circuit for $g_j$ to an excellent circuit for $f_j$ \emph{(}with non-trivial probability\emph{)}.}
Given the string $\mathsf{code}(B_j)$ produced by $\mathcal{C}^{\mathsf{GL}}$ in the step above, and assuming for now that $B_j$ satisfies Equation \ref{eq:GL_is_good}, $Q'_j$ proceeds as follows.\footnote{To be more formal $Q'_j$ proceeds as we describe independently of the assumption that $B_j$ satisfies Equation \eqref{eq:GL_is_good}, but we keep this assumption for simplicity of exposition.} Let $\varepsilon'(j) \eqdef 2^{-10 \cdot j^\lambda}$ and $\delta(j) \eqdef j^{-2b^\star + a_\star}$, and consider $k(j)$ and the size bound for $B_j$ established above. Then, for this choice of parameters as a function of $j$, it is not hard to check that Equation \ref{eq:IJKW_relation_eps_delta_quantum} in Theorem \ref{thm:ijkw_quantum} holds. Indeed,
$$
k(j) = 2j^{2b_\star + a_\star + d_\star + 2}, \quad \text{while} \quad \frac{1}{\delta(j)} \cdot \left [ \log \left ( \frac{1}{\delta(j)} \right ) + \log \left(\frac{1}{\varepsilon'(j)} \right ) \right ] \;\ll\; j^{2b_\star + a_\star + \lambda + 1} \;\ll\; j^{2b_\star + a_\star + 2}. 
$$
Let $\mathcal{C}^{\mathsf{IJKW}}$ be the quantum circuit provided by Theorem \ref{thm:ijkw_quantum} for our choice of parameters. We rewrite Equation \ref{eq:GL_is_good} more explicitly as follows:
\begin{equation}\label{eq:GL_is_good_explicit}
    \Pr_{x^1, \ldots, x^k \sim (\{0,1\}^j)^k ,\,B_j}\Big [B_j\big (x^1, \ldots, x^k \big) = \big (f_j(x^1), \ldots, f_j(x^k)\big )\Big ] \;\geq\; 2^{-10 \cdot j^{\lambda}},
\end{equation}
where by assumption $0 < \lambda < 1/5$. Now note that there is a mismatch between the expression above and the assumption in Equation \ref{eq:assumption-ijkw_quantum}, because there we sample a $k$-tuple of $j$-bit strings according to $\mathcal{S}_{j, k}$,\footnote{Remember from \Cref{def:ksets} that we define $\mathcal{S}_{j, k} = \{S \subseteq \01^j : |S| = k\}$.} i.e., there is no repetition of strings and we assume a canonical order of the tuple when using it as an input string of length $k \cdot j$ bits. To remedy this situation, $Q'_j$ will not invoke $\mathcal{C}^{\mathsf{IJKW}}$ directly on $\mathsf{code}(B_j)$, as we explain next.

Define the quantum circuit $A'_j$ that attempts to compute $g_j$ on $\mathcal{S}_{j,k}$ based on $B_j$ as follows:
\begin{itemize}
    \item[1.] On an input $\vec{x} \in \mathcal{S}_{j,k}$, 
    \item[2.] Sample a random permutation $\pi \colon [k] \to [k]$.
    \item[3.] Permute the $k$ strings in $\vec{x}$ according to $\pi$, and let $\vec{y} \eqdef \pi(\vec{x})$ be the corresponding $kj$-bit string.
    \item[4.] Output $B_j(\vec{y})$.
\end{itemize}
\begin{claim}\label{cl:PRG_proof_claim} The following holds:
$$
\Pr_{\vec{x} \sim \mathcal{S}_{j,k},\,A'_j}\big [A'_j(\vec{x}) = g_j(\vec{x})\big ] \;\geq\; \frac{2^{-10 \cdot j^\lambda}}{2}.
$$
\end{claim}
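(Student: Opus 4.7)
The plan is straightforward: reduce sampling from $\mathcal{S}_{j,k}$ (unordered $k$-subsets of $\{0,1\}^j$) to sampling a uniformly random ordered $k$-tuple in $(\{0,1\}^j)^k$, and exploit the fact that, for our parameters, collisions in a random tuple are exponentially rare compared to the lower bound $2^{-10 j^\lambda}$ we are trying to preserve.

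First I would let $U$ denote the uniform distribution on ordered tuples $(x^1,\ldots,x^k) \in (\{0,1\}^j)^k$ and, via a standard birthday-style union bound, observe that the probability that all $k$ coordinates of a sample from $U$ are pairwise distinct is at least $1 - \binom{k}{2}/2^j \geq 1 - k^2/2^{j+1}$. Since $k(j) = 2 j^{2b_\star + a_\star + d_\star + 2}$ is polynomial in $j$ while $2^{10 j^\lambda}$ grows only sub-exponentially (as $\lambda < 1/5$), for all sufficiently large $j$ one has $k^2/2^{j+1} \leq \tfrac{1}{2} \cdot 2^{-10 j^\lambda}$, so the collision probability is negligible against the success probability guaranteed by Equation~\eqref{eq:GL_is_good_explicit}.

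Next I would note that the distribution obtained by sampling $\vec{x} \sim \mathcal{S}_{j,k}$ and applying a uniformly random permutation $\pi : [k] \to [k]$ to its lexicographic listing is \emph{exactly} the uniform distribution $U$ conditioned on distinctness. Combining this identification with~\eqref{eq:GL_is_good_explicit} and splitting the success probability under $U$ according to whether the sample has distinct coordinates yields
$$
\Pr_{\vec{x} \sim \mathcal{S}_{j,k},\, \pi,\, B_j}\!\big[B_j(\pi(\vec{x})) = g_j(\pi(\vec{x}))\big] \;\geq\; 2^{-10 j^\lambda} - \frac{k^2}{2^{j+1}} \;\geq\; \frac{2^{-10 j^\lambda}}{2}.
$$
Because $g_j = f_j^k$ acts coordinatewise, $g_j(\pi(\vec{x}))$ is just the coordinatewise $\pi$-permutation of $g_j(\vec{x})$; the definition of $A'_j$ can thus be read so that its output matches $g_j(\vec{x})$ exactly when $B_j(\vec{y})$ matches $g_j(\vec{y})$ (either by interpreting the equality in step~4 up to the known permutation $\pi$, or by appending a trivial un-permutation step that applies $\pi^{-1}$ to the $k$ output bits of $B_j$). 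This gives the bound claimed in Claim~\ref{cl:PRG_proof_claim}.

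There is really no hard step here: the whole argument is a simple birthday-bound/change-of-distribution calculation, and the only thing to check is the parameter inequality $k^2/2^{j+1} \leq 2^{-10 j^\lambda}/2$, which holds comfortably since $\lambda < 1/5$ and $k(j)$ is polynomial in $j$. The mildly delicate point worth stating cleanly in the written-out proof is the conceptual identification of $U \mid (\text{all distinct})$ with the pushforward of $\mathcal{S}_{j,k} \times \mathrm{Sym}(k)$ under $(\vec{x},\pi) \mapsto \pi(\vec{x})$, since this is what converts the Goldreich–Levin guarantee on ordered tuples into a guarantee on unordered $k$-sets, which is the input format required by Theorem~\ref{thm:ijkw_quantum} invoked in the next step.
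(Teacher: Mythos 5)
Your proposal is correct and follows essentially the same route as the paper's own proof: identify the distribution of $\pi(\vec{x})$ for $\vec{x} \sim \mathcal{S}_{j,k}$ with the uniform distribution on ordered tuples conditioned on distinctness, lower bound the conditional probability by the joint probability, subtract the birthday-bound collision probability $O(k^2/2^j)$, and check that this loss is at most half of $2^{-10\cdot j^\lambda}$ since $k(j)$ is polynomial in $j$ and $\lambda<1$. Your extra remark about un-permuting the output of $B_j$ (or reading the equality up to the known permutation $\pi$) is a fair point of care that the paper's one-line identification $\Pr[A'_j(\vec{x})=g_j(\vec{x})]=\Pr[B_j(\vec{y})=g_j(\vec{y})]$ treats implicitly, but it does not change the argument.
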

\begin{proof}[Proof of Claim \ref{cl:PRG_proof_claim}] Indeed, using the definition of $A'_j$, and recalling the definition of $g_j$,
\begin{eqnarray}
    \Pr_{\vec{x} \sim \mathcal{S}_{j,k},\,A'_j}\big [A'_j(\vec{x}) = g_j(\vec{x})\big ] & = & \Pr_{\substack{\vec{x} \sim \mathcal{S}_{j,k}\\\vec{y} = \pi(\vec{x}),\,B_j}}\big[ B_j(\vec{y}) = g_j(\vec{y}) \big]  \nonumber \\
    & = & \Pr_{\vec{x} \sim \{0,1\}^{jk}\,,B_j}\big [B_j(\vec{x}) = g_j(\vec{x}) \mid \text{strings in}~\vec{x}~\text{are distinct}\,\big ] \nonumber \\
    & \geq &  \Pr_{\vec{x} \sim \{0,1\}^{jk}\,,B_j}\big [B_j(\vec{x}) = g_j(\vec{x})~\wedge~\text{strings in}~\vec{x}~\text{are distinct}\,\big ] \nonumber \\
   (\Pr[E_1 \wedge E_2] \geq \Pr[E_1] - \Pr[\neg E_2]) & \geq & \Pr_{\vec{x} \sim \{0,1\}^{jk}\,,B_j}\big [B_j(\vec{x}) = g_j(\vec{x}) \big ] - \Pr_{\vec{x} \sim \{0,1\}^{jk}} \big [ \exists i_1 \neq i_2~\text{s.t.}~x^{i_1} = x^{i_2} \big ]\nonumber \\
    & \geq & 2^{-10 \cdot j^\lambda} - k^2 \cdot 2^{-j} \;\geq\; \frac{2^{-10 \cdot j^\lambda}}{2}, \nonumber 
\end{eqnarray}
where the last inequality used that $\lambda < 1$ and $k = k(j) = \mathsf{poly}(j)$.
\end{proof}
Circuit $Q'_j$ constructs $\mathsf{code}(A'_j)$ from $\mathsf{code}(B_j)$, then invokes $\mathcal{C}^{\mathsf{IJKW}}$ on $\mathsf{code}(A'_j)$. Assuming Equation \ref{eq:GL_is_good_explicit} holds, $\mathcal{C}^{\mathsf{IJKW}}$ outputs with probability $\Omega(\varepsilon'(j)^2) = \Omega(2^{-20 \cdot j^\lambda})$ a string $\mathsf{code}(B'_j)$ describing a quantum circuit $B'_j$ of size 
\begin{equation}\label{eq:PRG_size_bound}
\mathsf{size}(B'_j) = \mathsf{poly}(j,k(j), \mathsf{size}(B_j),\log(1/\delta(j)),1/\varepsilon'(j)) = \mathsf{poly}(\mathsf{size}(B_j)) =   2^{O(j^{2\lambda})}
\end{equation}
such that
$$
\Pr_{x \sim \{0,1\}^j, B'_j} \big [ B'_j(x) = f_j(x)  \big ] \;\geq\; 1 - \delta(j) \;=\; 1 - j^{-2b_\star - a_\star}.
$$
Note that $\mathsf{size}(\calC^{\mathsf{IJKW}}) = 2^{O(j^{2\lambda})}$.\\

\noindent \emph{Summary of Steps 1--3.} By composing Steps 1--3 described above, we get that for every sufficiently large constant $C_1$ (such that Equation \ref{eq:PRG_size_bound} holds) there is a constant $C_2 > C_1$ and a quantum circuit $Q'_j$ of size $2^{C_2 \cdot j^{2 \lambda}}$ that when given a description $\mathsf{code}(P_{j - 1})$ of a quantum circuit $P_{j-1}$ of size $2^{C_1 \cdot (j - 1)^{2 \lambda}}$ that computes $f_{j - 1}$, outputs with probability $\zeta(j) = \Omega(2^{-23 \cdot j^{\lambda}})$ the description of a quantum circuit $\widetilde{P_j}$ of size $\leq 2^{C_1 \cdot j^{2\lambda}}$ such that
$$
\Pr_{x \sim \{0,1\}^j, \widetilde{P_j}} \big [ \widetilde{P_j}(x) = f_j(x)  \big ] \;\geq\; 1 - j^{-2b_\star -a_\star}.
$$
(Crucially, the size of $\widetilde{P_{j}}$ does not depend on the exponent $C_2$ nor on the size of $P_{j - 1}$, thanks to the results from Section \ref{sec:tech-tools} and the existence of distinguisher circuits $D_{m(j)}$ for every $j \geq \kappa$.) In order to complete the proof of Lemma \ref{lem:main_PRG}, it remains for us to (1) amplify the success probability that $Q'_j$ generates an almost-correct circuit $\widetilde{P_{j}}$; and  (2) convert an almost-correct circuit $\widetilde{P_{j}}$ into a quantum circuit $P_j$ that computes $f_j$ on each input with probability at least $2/3$. These two goals are achieved next.\\

\noindent \emph{4. Amplifying the success probability of $Q'_j(\mathsf{code}(P_{j - 1}))$ and generating a correct circuit $P_j$ for $f_j$.} Our final version of  $Q'_j$ works as follows. This circuit takes its (classical) input $\mathsf{code}(P_{j - 1})$ and runs Steps 1--3 for a total of $t(j) \eqdef \mathsf{poly}(j, 1/\zeta(j)) = 2^{O(j^{\lambda})}$ times, obtaining from this a collection $\widehat{P}_1, \ldots, \widehat{P}_{t(j)}$ of candidate quantum circuits such that,
$$
\text{with probability}~\geq 1 - 1/500j^2,~\text{there is}~i \in [t(j)]~\text{s.t.}~\Pr_{x \sim \{0,1\}^j, \widehat{P_i}} \big [ \widehat{P_i}(x) = f_j(x)  \big ] \;\geq\; 1 - j^{-2b_\star -a_\star}.
$$
Now $Q'_j$ instantiates the corresponding circuit $\mathcal{C}^{\mathsf{SR}}$ from \cref{lem:quantum_self_reducibility_lemma} on inputs $1^j$,  $\mathsf{code}(\widehat{P}_i)_{i \in [t(j)]}$, and $\mathsf{code}(P_{j-1})$  in order to output with high probability a (single) circuit $P_j$ that computes $f_j$. Note that despite the blowup in the size of $Q'_j$ and $P_j$ due to the amplification, our size requirements for them are  maintained. In particular,  we have from \cref{lem:quantum_self_reducibility_lemma} that the size of the output circuit $P_j$ does not depend on $\mathsf{size}(P_{j-1})$.\\

This finishes the construction of  circuits $Q'_j$ satisfying the conditions of Lemma \ref{lem:main_PRG}, which completes the proof of Theorem \ref{t:main_PRG_restated}.
\end{proof}
\end{proof}

\bibliographystyle{alpha}	
\bibliography{refs}	

\appendix

\section{On trivial quantum learning algorithms} \label{app:Fsampling}
In this section, we explain in more detail that there are two quantum learners with different parameters that are ``trivial'', in the sense that they do not really exploit the structure of a concept class $\mathfrak{C}$. First observe that, even classically, there is always a ``brute-force'' learner that works for {\em all} possible functions: query the input function $f:\01^n\rightarrow \{0,1\}$ on all inputs, then store the outcomes in a lookup table which is used as the exponentially large output hypothesis.

The learner above also works in the quantum setting. However, notice that quantumly there exists a second ``trivial'' learner coming from \emph{Fourier sampling}. Before we describe this learner, we briefly discuss the basics of Fourier analysis on the Boolean cube (and refer the reader to~\cite{o2014analysis} for more details).

Given the space of functions $f:\01^n\rightarrow \mathbb{R}$, define the inner product between two functions in this space as $\langle f,g\rangle=\E_x [f(x)\cdot  g(x)]$ where the expectation is taken uniformly from $x\in \01^n$. In this space, one can define a set of \emph{orthonormal basis functions} as follows: for $S\in \01^n$, define $\chi_S(x)=(-1)^{S\cdot x}$ where $S\cdot x=\sum_i S_i\cdot x_i$. Hence, every function $f:\01^n\rightarrow \mathbb{R}$ can be written \emph{uniquely} as
$$
f(x)=\sum_S\widehat{f}(S)\chi_S(x),
$$
where $\widehat{f}(S)=\E_x [f(x)\cdot \chi_S(x)]$ is called a \emph{Fourier coeffient} of $f$. Moreover, it is not hard to see that by Parseval's identity, for every Boolean function $f:\01^n\rightarrow \{-1,1\}$,
$$
\sum_S\widehat{f}(S)^2=\E_x[f(x)^2]=1,
$$
hence the set of of squared Fourier coefficients $\{\widehat{f}(S)^2\}_S$ of a Boolean function $f$ forms a probability~distribution. 

It is well known that in the quantum learning model, given one uniform quantum example $\frac{1}{\sqrt{2^n}}\sum_x\ket{x,f(x)}$ for $f:\01^n\rightarrow \{-1,1\}$, with probability $1/2$ we can \emph{Fourier sample}, i.e., sample from the distribution $\{\widehat{f}(S)^2\}_S$.  Indeed, given $\frac{1}{\sqrt{2^n}}\sum_x\ket{x,f(x)}$, apply the one-qubit Hadamard gate on the last register and measure the last qubit: with probability $1/2$ we get the $1$ outcome, in which case the resulting state is $\frac{1}{\sqrt{2^n}}\sum_xf(x)\ket{x}\ket{1}$, then apply the $n$-qubit Hadamard transform on the first $n$ qubits to obtain the state $\sum_S\widehat{f}(S)\ket{S}\ket{1}$. Measuring this state produces a sample~$S$ from the distribution $\{\widehat{f}(S)^2\}_S$.  Let $\bm{S}$ be the random variable that outputs $S \subseteq [n]$ with probability~$\widehat{f}(S)^2$.

\begin{claim}\label{cl:fourier_sampling} For every $0 \leq \gamma \leq 1$ and $f:\01^n\rightarrow \{-1,1\}$, we have 
$$
\Pr_{\bm{S}}\big[\,|\hat{f}(\bm{S})| \geq \gamma \cdot 2^{-n/2}\,\big] \;\geq\; 1 - \gamma^2. 
$$
\end{claim}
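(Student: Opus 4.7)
The plan is a direct counting argument using Parseval's identity together with the fact that the sample space is of size $2^n$. Recall that $\bm{S}$ is distributed on subsets $S \subseteq [n]$ according to $\Pr[\bm{S}=S] = \widehat{f}(S)^2$, and by Parseval we have $\sum_S \widehat{f}(S)^2 = 1$.

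My approach is to bound the complementary probability. Define the ``bad'' collection
\[
\mathcal{B} \;=\; \Big\{ S \subseteq [n] \;:\; |\widehat{f}(S)| < \gamma \cdot 2^{-n/2} \Big\},
\]
so that for every $S \in \mathcal{B}$ one has $\widehat{f}(S)^2 < \gamma^2 \cdot 2^{-n}$. Then
\[
\Pr_{\bm{S}}\big[\,|\widehat{f}(\bm{S})| < \gamma \cdot 2^{-n/2}\,\big] \;=\; \sum_{S \in \mathcal{B}} \widehat{f}(S)^2 \;<\; |\mathcal{B}| \cdot \gamma^2 \cdot 2^{-n} \;\leq\; 2^n \cdot \gamma^2 \cdot 2^{-n} \;=\; \gamma^2,
\]
using $|\mathcal{B}| \leq 2^n$. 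Taking the complement gives the claim.

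There is no real obstacle here; the only subtlety is that the bound on $|\mathcal{B}|$ is the trivial one (there are only $2^n$ subsets of $[n]$ in total), and it is precisely this trivial count combined with the per-coordinate threshold $\gamma^2 \cdot 2^{-n}$ that produces the final factor $\gamma^2$. The proof requires no additional tools beyond Parseval and is essentially a one-line Markov-type argument on the Fourier weight distribution.
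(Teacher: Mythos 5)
Your proof is correct and takes essentially the same route as the paper: both arguments bound the Fourier weight of the ``bad'' set $\{S : \hat{f}(S)^2 < \gamma^2\cdot 2^{-n}\}$ by the trivial count $2^n$ of subsets times the per-set threshold $\gamma^2\cdot 2^{-n}$, then take the complement. (One cosmetic point: the strict inequality in your middle step fails in the degenerate case $\mathcal{B}=\emptyset$, e.g.\ $\gamma=0$; replacing it by a non-strict inequality handles this and the conclusion is unaffected.)
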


\begin{proof}
It is enough to show that $p \eqdef \Pr_{\bm{S}}[\,\hat{f}(\bm{S})^2 \geq \varepsilon \cdot 2^{-n}\,] \geq 1 - \varepsilon$. Note that with probability $1 - p$ over the choice of $\bm{S}$, we have $\hat{f}(\bm{S})^2 < \varepsilon \cdot 2^{-n}$. But then
$$
\sum_{S \colon \hat{f}(S)^2 \,<\, \varepsilon \cdot 2^{-n}} \hat{f}(S)^2 \;=\;1 - p.
$$
This in turn implies that 
$$
2^n \cdot (\varepsilon \cdot 2^{-n}) \;\geq\; \sum_{S \colon \hat{f}(S)^2 \,<\, \varepsilon \cdot 2^{-n}} \varepsilon \cdot 2^{-n} \;\geq\; 1 - p,
$$
which completes the proof.
\end{proof}

Let $\mathfrak{F}_n$ be the class of all Boolean functions $f \colon \{0,1\}^n \to \{0,1\}$.
Claim \ref{cl:fourier_sampling} implies that $\mathfrak{F}_n$ admits a quantum learning algorithm $\calA$ under the uniform distribution with the following property: for every  $f \in \mathfrak{F}_n$, $\calA$ runs in polynomial time and outputs with probability at least $0.249$ a Boolean circuit $C$ that computes $f$ with probability $\frac{1}{2} + \Omega(2^{-n/2})$. In order to see this, we use Fourier sampling to obtain a set $\bm{S}$, and then with probability $\frac{1}{2}$ we let $C = \chi_S$ and with probability $\frac{1}{2}$ we let $C = \neg \chi_S$. By Claim \ref{cl:fourier_sampling}, we are guaranteed that with probability at least $0.999$, we pick some $S$ such that $\chi_S$ or $\neg \chi_S$ computes $f$ with probability $\frac{1}{2} + \Omega(2^{-n/2})$, and we pick the correct one with probability $\frac{1}{2}$. Since the Fourier sampling routine described above samples from the correct distribution with probability $1/2$, we are done.

\end{document}